\newtheorem{thm}{Theorem}
\newtheorem{lemma}[thm]{Lemma}
\newtheorem{prop}[thm]{Proposition}
\newtheorem{cor}[thm]{Corollary}
\newtheorem{eg}{Example}
\newtheorem{rmk}{Remark}
\def\hat{\widehat}
\begin{document}
\openup .01em

\title{Nonparametric inference for ratios of densities via uniformly valid and powerful permutation tests}
\author{Alberto Bordino and Thomas B. Berrett}
\affil{Department of Statistics, University of Warwick, Coventry, CV47AL, United Kingdom \\ alberto.bordino@warwick.ac.uk \quad tom.berrett@warwick.ac.uk}
\date{}

\maketitle
\begin{abstract}
We propose the density ratio permutation test, a hypothesis test that assesses whether the ratio between two densities is proportional to a known function based on independent samples from each distribution. The test uses an efficient Markov Chain Monte Carlo scheme to draw weighted permutations of the pooled data, yielding exchangeable samples and finite sample validity. For power, if the statistic is an integral probability metric, our procedure is consistent under mild assumptions on the defining function class; specializing to a reproducing kernel Hilbert space, we introduce the shifted maximum mean discrepancy and prove minimax optimality of our test when a normalized difference between the densities lies in a Sobolev ball. We extend to the case of an unknown density ratio by estimating it on an independent training sample and derive type~I error bounds in terms of the estimation error as well as power results. This allows adapting our method to conditional two sample testing, making it a versatile tool for assessing covariate-shift and related assumptions, which frequently arise in transfer learning and causal inference. Finally, we validate our theoretical findings through experiments on both simulated and real-world datasets.
\end{abstract}

\begin{scriptsize}
Some key words: Conditional two-sample testing; Density ratio; Distributional shift; Minimax optimality; Permutation test.
\end{scriptsize}

\section{Introduction}
\label{Sec:Intro}

Density ratios are foundational objects in statistics. Commonly known as the importance in the importance sampling literature \citep[][]{tokdar2010importance}, they can be used to implement importance weighting techniques \citep[e.g.][]{kahn1951estimation, horvitz1952generalization, owen2000safe}, which assign varying weights to data points to correct for biases and emphasize relevant observations. These quantities remain central to contemporary statistics for several reasons. For instance, many modern analyses involve data drawn from different sources or environments \citep[][]{Koh2020WILDSAB}. When distributions differ across datasets, principled procedures are needed to transport information so that inferences target the population of interest. Density ratio tools provide exactly this mechanism, quantifying how one distribution departs from another and enabling methods that combine heterogeneous datasets for prediction and inference \citep[e.g.][]{storkey2009when, weiss2016survey}. In medical applications, practitioners might be interested in making predictions in a particular experimental setting or using specific equipment, while also utilizing data collected under different conditions or with alternative tools \citep[e.g.][]{guan2022domainadaptation}. Likewise, in the context of large language models, thoughtfully incorporating human-authored data alongside model-generated content during training helps maintain diversity and enhance overall performance \citep[e.g.][]{ji2025overviewlargelanguagemodels}. Beyond transportability, density ratios underpin a wider range of statistical problems that admit a classification formulation, where the Bayes optimal decision rule amounts to thresholding a class conditional ratio. A prominent example is simulation based inference, which refers to methodologies that leverage simulators to perform Bayesian analysis when the likelihood is intractable or unavailable  \citep[e.g.][]{Deistler2025SBIGuide}. In this context, many pipelines estimate density ratios, e.g.~via neural ratio estimators, to target posterior-to-prior or likelihood-to-evidence ratios \citep{Hermans2020LFMARE, Thomas2022LFIRE, Miller2022ContrastiveNRE, Delaunoy2022BNRE, leonte2025simulationbasedinferencetelescopingratio}, thereby enabling downstream tasks such as posterior approximation. Additional applications include nonparametric regression \citep{DRE_wainwright23}, efficient estimation with additional incomplete data \citep{berrett2024efficient}, testing under distributional shifts \citep{thams23jrrsb}, quantile function estimation \citep{quantile2013DRM}, reinforcement learning \citep[e.g.][]{sutton98barto}, and causal inference \citep{Robins2000MarginalSM}.

Although the statistical literature on density ratio estimation is well developed \citep[e.g.][]{Sugiyama2010review}, a persistent practical challenge is not only how to estimate the ratio but also whether a given estimate is adequate for the downstream task. Rather than plugging estimated ratios directly into importance weighting or transportability analyses, practitioners in many settings expect diagnostics before trusting the estimate. For example, common approaches to the estimation of causal treatment effects rely on the estimation of propensity scores, often assuming a parametric model, in order to balance covariate distributions between treatment and control groups. Best practice requires diagnostics to confirm the weights achieve balance. \citet[][Section~4]{Austin2015IPTWBestPractice} recommend assessing covariate balance using reweighted moments, graphical distributional comparisons via boxplots or empirical cumulative distribution functions, and summary metrics such as the weighted Kolmogorov–-Smirnov distance. Such diagnostics are also implemented in popular software; see, for example~\citet{zhou2022PSweight} and the references therein. In parallel, transportability analyses emphasize that suitability of weights must be evaluated, not assumed: as \citet{Westreich2017Transportability} write, ``In fact, the transportability of observational studies to a particular target population of interest is not guaranteed and must be evaluated carefully.''

Another setting where statistical tests for density ratios are required is simulation based inference, where modern procedures can yield miscalibrated posteriors leading to invalid downstream conclusions \citep{hermans2022a}. Although formal statistical tests exist for methods that target the posterior directly, as summarized in Section~3 and Appendix~A5 in \citet[][]{Deistler2025SBIGuide}, for neural ratio estimation the literature offers diagnostics rather than formal tests. The need for quantitative checks is evidenced both in the statistical literature, e.g.~\citet[][Figure~3]{Hermans2020LFMARE} and~\citet[][Figure~3]{Miller2022ContrastiveNRE}, and in real scientific workflows, such as in high energy physics \citep[]{ATLAS2025NSBI}. In particular, these works assess density ratio quality using the same diagnostics as listed above, alongside the receiver operating characteristic metric for a classifier that distinguishes the target distribution from a reweighted version of the reference distribution, trained on independent samples from each using a density-ratio-weighted empirical~loss. 

The foregoing discussion highlights the need to supplement density ratio estimation with inferential procedures that quantify uncertainty. In this paper, we focus on the foundational problem of testing
\begin{align}\label{eq:testing_problem}
    H_0:  g \propto r \, f,
\end{align}
based on independent data $(X_1,\ldots,X_n,Y_1,\ldots,Y_m) \sim P_f^{\otimes n} \otimes P_g^{\otimes m}$, where $P_f,P_g$ are distributions on a domain $\mathcal{X}$ 
with densities $f,g$ with respect to a common dominating measure, and where $r$ is a fixed function. This can be viewed as the most fundamental nonparametric hypothesis testing problem for an unknown density ratio $g/f$, where we test proportionality to a known function, analogous to goodness-of-fit testing for densities \citep[e.g.][]{ingster03} where we would test the equality of an unknown density and a fixed function. While simple, this problem itself is of interest in applications. For example, in simulation based inference the quality of a ratio estimator can be assessed using independent or held-out data; see \cite{rogozhnikov2016reweighting,ATLAS2025NSBI} for examples from the physics literature. The testing problem~\eqref{eq:testing_problem} can also be equivalently reformulated as a two-sample testing problem given biased data \citep[e.g.][]{KangNelsonDistr}, for which asymptotic results are available; see also the closely related problem of independence testing with biased data \citep{tenzer2022testing}. More generally, nonparametric inference under biased sampling is well studied, with known weight functions arising through size-biased sampling and many other practical mechanisms; see, for example, \cite{cox1969some,qin1993empirical,efromovich2004density} and the references therein. It can also be seen that~\eqref{eq:testing_problem} is equivalent to conditional two-sample testing~\citep[e.g.][]{kim2024conditional} when the ratio of covariate densities can be assumed to be known, as would be the case in a Model-X framework~\citep{candes2016panning} or randomized trials with perfect compliance~\citep{lei2021conformal}. Nonetheless, moving beyond this foundational problem, in Section~\ref{sec:family_shift} we extend our study to test hypotheses that $g/f$ lies in a restricted function class, for example parametric or depending only on a proper subset of variables. This, in turn, enables assessing a range of modelling assumptions, including~tests for distributional shifts such as covariate shift, a setting common in domain adaptation and transfer learning~\citep[e.g.][]{Koh2020WILDSAB, covariate_shift_ramsdas19, jin2023modelfree}.

Our assessment of \eqref{eq:testing_problem} is based on the novel density ratio permutation test defined in Algorithm~\ref{alg:star_algo}, which uses an appropriate resampling scheme to calibrate any chosen test statistic and provide a $p$-value. Permutation methods are widely used in hypothesis testing, due to their guaranteed validity and strong power properties in both classical asymptotic~\citep{hoeffding52permutation,lehmann2006testing} and non-asymptotic~\citep{berrett2020optimal,kim2022minimax} senses. These methods approximate the null distribution of the test statistic by uniformly relabelling observations, which in classical two-sample settings is sufficient to satisfy the randomization hypothesis, thus yielding exact type~I error control \citep{lehmann2006testing}. However, this conventional approach is not appropriate for~\eqref{eq:testing_problem} since the two samples are not exchangeable under the null hypothesis. We address this limitation by introducing a modified permutation sampling approach, adapting ideas from conditional independence testing~\citep{berrett2020conditional} to the problem~\eqref{eq:testing_problem}, which produces permutations of $(X_1, \ldots, X_n, Y_1, \ldots,  Y_m)$ according to a distribution dependent on the hypothesized density ratio $r$, restoring exchangeability of the true and resampled datasets under $H_0$ and thus ensuring finite-sample validity. When $r$ is constant, our method naturally reduces to the classical permutation test. 

This non-uniform permutation sampling approach represents a departure from standard methodology, but aligns with recent developments tackling various testing problems. The weight-dependent permutation method of \citet{KangNelsonDistr} tests $f=g$ using samples from the densities $w_1 f$ and $w_2 g$, where $w_1,w_2$ are known sampling bias functions, and is proved to have asymptotic validity and power. The conditional permutation test of \citet{berrett2020conditional} for testing $X \perp\!\!\!\!\perp Y \mid Z$ draws permutations based on the knowledge of the conditional law of~$X \mid Z$. \citet{Ramdas2022PermutationTU} studies a general framework for the use of weighted permutations to test data exchangeability. Recent advances in conformal prediction under distributional shifts modify procedures to maintain coverage with non-exchangeable data \citep{covariate_shift_ramsdas19,Hu2020conformalJASA,pmlr-v235-prinster24a}. Despite this great recent interest in non-uniform permutation methods, power results are rare. To the best of our knowledge, our work is the first to prove general consistency results and the minimax rate optimality of such a uniformly-valid testing procedure.

An alternative way to tackle~\eqref{eq:testing_problem} is to use knowledge of \(r\) to reweight the data and reduce the task to classical two-sample testing. For example, one can apply importance sampling to \((X_1,\ldots,X_n)\) to create a sample from a distribution proportional to \(r f\). This can then be compared with \((Y_1,\ldots,Y_m)\) using standard two-sample tests. This broad strategy has been employed in works such as \cite{thams23jrrsb} and \cite{schrab2024epistemic} for other statistical problems. We deliberately avoid this route because importance sampling typically yields a much smaller effective sample size, which in practice leads to less powerful tests. We validate this empirically in Section~\ref{sec:simulSynthetic}. Moreover, importance sampling requires density ratios to be bounded, whereas our methods do not require such an assumption to be valid and consistent.

Our contributions are as follows. We begin by introducing an appropriate permutation distribution and sampling algorithm, through which we define $p$-values that we prove in Theorem~\ref{prop:alg_validity} are uniformly valid for any choice of test statistic; the development here follows~\cite{berrett2020conditional} and our contribution is in adapting this non-uniform permutation methodology and type~I error theory to our problem. After laying these foundations we turn to our main technical novelty, which is in the construction of appropriate test statistics and the development of power results. Our new insights into the behaviour of our sampling algorithm under the alternative hypothesis lead to the introduction of a class of integral probability metrics and their sample versions as test statistics. We prove in Theorem~\ref{thm:general_consistency} that the associated hypothesis tests are consistent in wide generality, even when the sampling algorithm is run for a finite amount of time and $r$ is unbounded. In order to provide computable test statistics we specialize our class of integral probability metrics to those based on reproducing kernel Hilbert spaces, thus generalizing the maximum mean discrepancy \citep{JMLR:v13:gretton12a}. In the canonical special case of continuous data under Sobolev smoothness conditions, we establish the minimax optimal separation rates for our problem; this is based on a detailed power analysis of our new methodology and novel minimax lower bounds. We further discuss how our methodology and power results extend to testing larger null hypotheses where $r$ is an unknown member of a model.

We conclude this section with some notation that is used throughout the paper. We denote by $\mathbb{R}_+$ the set of positive real numbers and by $\mathbb{R}_{\geq 1}$ the set of real numbers greater than or equal to one. We further set $\mathbb{N}_+ = \mathbb{N}\setminus\{0\}$. The symmetric group of all permutations over the set $[n] := \{1, \ldots, n\}$ is denoted by $\mathcal{S}_{n}$. For a set $A$, we write $\#A$ to denote its cardinality. The maximum and minimum operators are sometimes denoted by $\vee$ and $\wedge$, respectively. We define the support of a function $f$ defined on a domain $\mathcal{X}$ to be the closure of the set where $f$ does not vanish, that is, $\operatorname{supp} f := \overline{\{x \in \mathcal{X} : f(x) \neq 0\}}$. The set of bounded and continuous functions on $\mathcal{X}$ is denoted by $\mathcal{C}^0_b(\mathcal{X})$, and $\delta_x$ denotes the Dirac delta measure centred at $x$. Given $1 \leq p < \infty$ and $d \geq 1$, we define the $L^p$ norm of a function $f$ as $\|f\|_p := \left( \int_{\mathbb{R}^d} |f(x)|^p dx \right)^{1/p}$, and the corresponding $L^p(\mathbb{R}^d)$ space as the set of all measurable functions for which this norm is finite. Furthermore, $\|\cdot\|_\infty$ denotes the essential supremum norm, that is, $\|f\|_\infty := \operatorname{ess\,sup}_{x \in \mathbb{R}^d} |f(x)|$, and $L^\infty(\mathbb{R}^d)$ refers to the set of functions that are bounded almost everywhere. Finally, we use $C^{\infty}(\mathcal{X})$ to denote the space of infinitely differentiable functions on a domain $\mathcal{X}$, and $\overline{z}$ for the complex conjugate of $z \in \mathbb{C}$.

\section{Permutation methodology and validity}\label{sec:methodology}
In this section we detail our general permutation procedure and its validity guarantees, which are largely adapted to our setting from the conditional independence testing problem studied in~\cite{berrett2020conditional}. Suppose we observe $Z =  (Z_1, \ldots, Z_{n+m}) = (X_1, \ldots, X_n, Y_1, \ldots, Y_m) \in \mathcal{X}^{n+m}$, consisting of $n + m$ independent random variables, where $X_1, \ldots, X_n \overset{\mathrm{i.i.d.}}{\sim} P_f$ and $Y_1, \ldots, Y_m \overset{\mathrm{i.i.d.}}{\sim} P_g$. Assume further that $P_f$ and $P_g$ are absolutely continuous with respect to a common dominating measure $\mu$ with densities $f,g$. We propose a refined permutation test designed to detect deviations from the null hypothesis~\eqref{eq:testing_problem} where $r : \mathcal{X} \rightarrow \mathbb{R}_+$ is a fixed, unnormalized function. By the definition of $\mathbb{R}_+$ given above, we have $r(x) > 0$ for all $x \in \mathcal{X}$. For simplicity, we also assume that the supports of $f$ and $g$ are the same. If this is not the case, i.e.~if $r(x) > 0$ for all $x \in \mathcal{X}$ but $\operatorname{supp} g \ne \operatorname{supp} f$, then the testing problem becomes easier, as we expect to eventually observe signals in the regions corresponding to the symmetric difference of the supports. Similarly, if $\operatorname{supp} r \ne \mathcal{X}$ but $\operatorname{supp} g \subseteq \operatorname{supp} r$, we can restrict the sample space to $\operatorname{supp} r$ and carry out the analysis as before. If this inclusion does not hold, the testing problem again becomes easier. Finally, we shall also require that $\int g/r \, d\mu < \infty$ and $\int rf \, d\mu < \infty$, which ensure that the null hypothesis is nontrivial.

In line with permutation testing practice, the methodology we propose is to create copies $Z^{(1)}, \ldots, Z^{(H)}$, with $H \geq 1$, that are exchangeable with $Z$ under $H_0$, and define a $p$-value by
\begin{equation}\label{eq:pvalue_DRPT}
    p  = \frac{1+ \sum_{h = 1}^H \mathbbm{1}\{T(Z^{(h)}) \geq T(Z) \}}{1+H},
\end{equation}
where $T:\mathcal{X}^{n+m} \to \mathbb{R}$ is an arbitrary test statistic. Classically, these copies are uniformly shuffled versions of~$Z$, but this is not appropriate here because the two samples are not exchangeable under the null. Intuitively, under the null hypothesis $g \propto r f$, points with larger $r(x)$ are more likely to originate from $g$ than from~$f$, hence we should favour permutations that assign observations with large $r(Z_i)$ to the last $m$ positions. Based on this, we set 
\begin{equation}\label{eq:sampling_1}
   Z^{(h)} = Z_{\sigma^{(h)}} \qquad\text{with}\qquad 
   \mathbb{P}\!\left\{\sigma^{(h)}=\sigma \mid Z \right\}
   = \frac{\prod_{i \in \{n+1,\ldots,n+m\}} r\!\big(Z_{\sigma(i)}\big)}
          {\sum_{\tilde{\sigma}\in\mathcal S_{n+m}} \prod_{i \in \{n+1,\ldots,n+m\}} r\!\big(Z_{\tilde{\sigma}(i)}\big)},
\end{equation}
where, for any vector $x = (x_1, \ldots, x_{n+m})$ and permutation $\sigma \in \mathcal S_{n+m}$, we define $x_\sigma := (x_{\sigma(1)}, \ldots, x_{\sigma(n+m)})$. As we see in the proof of Theorem~\ref{prop:DRPT_validity} below, if we condition on the unordered set of observed data values, then the distribution of $Z$ is the same as the distribution of $Z^{(h)}$ under $H_0$. This leads to the exchangeability of $(Z,Z^{(1)},\ldots,Z^{(H)})$ and the validity of our tests, as summarized in the following result.

\begin{thm}\label{prop:DRPT_validity}
Assume that $H_0: g \propto r \, f$ is true. Suppose that $\sigma^{(1)}, \ldots, \sigma^{(H)}$ are drawn i.i.d.~from \eqref{eq:sampling_1} and write $Z^{(h)} = Z_{\sigma^{(h)}}$ for all $h \in [H]$. Then the sequence $\left(Z, Z^{(1)}, \ldots, Z^{(H)} \right)$ is exchangeable. In particular, for any statistic $T: \mathcal{X}^{n+m} \to \mathbb{R}$, the $p$-value defined in~\eqref{eq:pvalue_DRPT} is valid, satisfying $\mathbb{P}\{p \leq \alpha\} \leq \alpha$ for all $\alpha \in [0,1]$.
\end{thm}

In practice, to conduct a statistical test we need to be able to sample permutations $\sigma^{(1)}, \ldots, \sigma^{(H)}$ from~\eqref{eq:sampling_1}. We now address the challenge of generating these samples efficiently through Algorithm \ref{alg:pairwise_sampler}. 

\begin{algorithm}[!ht]
\caption{Pairwise sampler}\label{alg:pairwise_sampler}
\begin{algorithmic}[1]
\State Initial permutation $\sigma_0$, integer $S \geq 1$. Call $K = \min\{n,m\}$.
\For{$t \in [S]$} 
\State \parbox[t]{\dimexpr\linewidth-\algorithmicindent}{Sample a vector of couples $\tau_t = \{(i_1^{t}, j_1^{t}), \ldots, (i_K^{t}, j_K^{t}) \}$ such that $(i_1^{t}, \ldots, i_K^{t})$ are sampled uniformly and without replacement from $[n]$, and $(j_1^{t}, \ldots, j_K^{t})$ are sampled uniformly and without replacement from $\{n+1, \ldots, n+m \}$, and initialize $\sigma_t$ to be a copy of $\sigma_{t-1}$.}
\For{$k \in [K]$}
\State \parbox[t]{\dimexpr\linewidth-\algorithmicindent}{Draw a Bernoulli random variable $B_{i_k,j_k}^t$ with 
\begin{equation}\label{eq:p_no_tilde}
\mathbb{P}\{B_{i_k,j_k}^t = 1\} = \frac{r\bigl(Z_{\sigma_{t-1}(i_k^t)}\bigr)}{r\bigl(Z_{\sigma_{t-1}(i_k^t)}\bigr) + r\bigl(Z_{\sigma_{t-1}(j_k^t)}\bigr)} \;:=\; p_{i_k, j_k}^t,
\end{equation}
and swap $\sigma_{t}(i_k^t)$ with $\sigma_{t}(j_k^t)$ if $B_{i_k,j_k}^t = 1$.}
\EndFor
\EndFor
\State \Return $\sigma_S$.
\end{algorithmic}
\end{algorithm}

Algorithm \ref{alg:pairwise_sampler} is easily parallelizable due to the disjoint structure of the pairs in $\tau_t$, and, most importantly, it accurately targets the distribution in \eqref{eq:sampling_1}, guaranteeing that the resulting Markov chain converges to the desired stationary distribution, as established in the following proposition.

\begin{prop}\label{prop:algo_sampler_valid}
For any initial permutation $\sigma_0$, the distribution \eqref{eq:sampling_1} of the permutation $\sigma$ conditional on~$Z$ is the unique stationary distribution of the Markov chain defined in Algorithm \ref{alg:pairwise_sampler}.
\end{prop}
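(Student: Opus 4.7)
My plan is to show that the distribution $\pi$ given in \eqref{eq:sampling_2} is stationary for the Markov chain via detailed balance, and then to argue uniqueness by establishing irreducibility (plus aperiodicity). Throughout, condition on the order statistics $Z_{()}$, so that $\pi$ is a fixed probability mass function on $\mathcal{S}_{n+m}$.

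\textbf{Step 1: detailed balance for a single pairwise swap.} Fix $i \in [n]$ and $j \in \{n+1,\ldots,n+m\}$, and let $T_{ij}$ denote the kernel that, from state $p$, proposes the transposition $p' = p \circ \tau_{ij}$ and accepts with probability $p_{ij}(p) = r(Z_{(p(i))})/\{r(Z_{(p(i))}) + r(Z_{(p(j))})\}$. Since $p$ and $p'$ agree on all indices except $i$ and $j$, and since only the factor indexed by $j \in \{n+1,\ldots,n+m\}$ enters the product defining $\pi$, one obtains $\pi(p)/\pi(p') = r(Z_{(p(j))})/r(Z_{(p(i))})$. Combined with $p_{ij}(p') = r(Z_{(p(j))})/\{r(Z_{(p(i))}) + r(Z_{(p(j))})\}$, this yields
\[
\pi(p)\, p_{ij}(p) \;=\; \pi(p')\, p_{ij}(p'),
\]
i.e.\ detailed balance for $T_{ij}$ with respect to $\pi$. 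Hence $\pi T_{ij} = \pi$.

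\textbf{Step 2: one iteration of the algorithm preserves $\pi$.} Condition on the randomly chosen vector of pairs $\tau_t = \{(i_1^t,j_1^t),\ldots,(i_K^t,j_K^t)\}$. Since these pairs are disjoint, the $K$ proposed swaps act on mutually disjoint positions, so the Bernoulli decisions commute and the joint update equals the composition $T_{i_1^t j_1^t} \circ \cdots \circ T_{i_K^t j_K^t}$; moreover, using $P_{t-1}$ versus the partially updated permutation to compute each acceptance probability gives the same value, precisely because of disjointness. As each $T_{i_k^t j_k^t}$ preserves $\pi$ by Step~1, so does their composition. Marginalising over $\tau_t$ (whose distribution does not depend on the current state) then shows that the full transition kernel of Algorithm~\ref{alg:pairwise_sampler} leaves $\pi$ invariant.

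\textbf{Step 3: irreducibility and aperiodicity.} The self-loop probability is strictly positive at every state (simply reject all proposed swaps, which has positive probability since $r > 0$ implies each $p_{ij} \in (0,1)$), giving aperiodicity. For irreducibility, note that any one-pair pattern $\tau$ containing the single pair $(i,j)$ is selected with positive probability (since we can accept exactly one swap among the $K$ and reject the others with positive probability), so each ``cross transposition'' $\tau_{ij}$ with $i \in [n]$, $j \in \{n+1,\ldots,n+m\}$ is achievable in one step. The subgroup of $\mathcal{S}_{n+m}$ generated by such cross transpositions contains every transposition: for $i,i' \in [n]$ and any $j \in \{n+1,\ldots,n+m\}$ one has $\tau_{i i'} = \tau_{ij}\tau_{i'j}\tau_{ij}$, and similarly within $\{n+1,\ldots,n+m\}$. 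Hence the cross transpositions generate $\mathcal{S}_{n+m}$, so any two permutations communicate. Together with Step~2, this gives that $\pi$ is the unique stationary distribution.

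The main technical point to be careful about is the commutativity argument in Step~2: one must verify that updating the acceptance probabilities with $P_{t-1}$ (as written in the algorithm) instead of with the partially updated permutation is justified, and that the joint $K$-swap kernel therefore coincides with the composition of single-swap kernels. Everything else is a direct verification of detailed balance plus a standard generating-set argument for $\mathcal{S}_{n+m}$.
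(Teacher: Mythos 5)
Your proof is correct, but it takes a genuinely different decomposition of the argument than the paper. The paper works with the full one-step kernel (conditional on $\tau_t=\tau$, then averaged over $\tau$) at once: it introduces the equivalence relation $\sim_\tau$ on permutations reachable by swapping a subset of the selected pairs, computes the odds ratio of moving from $p$ to any two $p',p''\sim_\tau p$, shows this equals $\pi(p')/\pi(p'')$, and from that derives detailed balance for the averaged kernel. You instead factor the $\tau$-conditional kernel into a composition of single-swap Metropolis-type kernels $T_{i_k j_k}$ on disjoint pairs, verify detailed balance pair-by-pair, and conclude stationarity of the composition. Your route is more modular, and it makes explicit a point the paper treats implicitly: that because the pairs in $\tau_t$ are disjoint, the acceptance probability at index $k$ is the same whether computed with $P_{t-1}$ or the partially updated permutation, so the joint move really is the commuting composition of the $K$ single-swap kernels. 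Your irreducibility argument is also more detailed than the paper's bare assertion, supplying the standard generating-set identity $\tau_{ii'}=\tau_{ij}\tau_{i'j}\tau_{ij}$ to show that one-step-achievable cross transpositions generate $\mathcal{S}_{n+m}$. One thing worth noting: the paper's proof of this proposition additionally establishes \emph{reversibility} of the full kernel, which is then invoked in the proof of Proposition~\ref{prop:alg_validity}. Your Step~2 concludes only stationarity of the composition; this suffices for Proposition~\ref{prop:algo_sampler_valid} as stated, but to serve the later application you would want to add that the $T_{i_k j_k}$'s commute (which you already observe) and that a product of commuting self-adjoint kernels is self-adjoint, yielding reversibility of each $T_\tau$ and hence of the $\tau$-mixture.
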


\begin{rmk}\label{rmmk:hat_lambda_alg}
By examining the proof of Proposition~\ref{prop:algo_sampler_valid} in Appendix~\ref{appendix:proof2}, we see that other proposal mechanisms can also target the invariant distribution~\eqref{eq:sampling_1}. Specifically, Step 5 in Algorithm~\ref{alg:pairwise_sampler} switches the indices $i_k^t$ and~$j_k^t$ with probability $p_{i_k, j_k}^t$ as in~\eqref{eq:p_no_tilde}, but the key property used in the proof is $\mathbb{P}\{B_{i_k,j_k}^t = 1\}/\mathbb{P}\{B_{i_k,j_k}^t~=~0\} = r\bigl(Z_{\sigma_{t-1}(i_k^t)}\bigr)/r\bigl(Z_{\sigma_{t-1}(j_k^t)}\bigr)$. Thus, any alternative distribution on $B_{i_k,j_k}^t$ satisfying this ratio would work equally well. For example, for the later power analysis of the methodology, it is more convenient to consider
\begin{align}\label{eq:tilde_p}
    \tilde{p}_{i_k, j_k}^t := \frac{\hat{\lambda}mn \, r\bigl(Z_{\sigma_{t-1}(i_k^t)}\bigr)}
{\bigl\{n + \hat{\lambda}mr\bigl(Z_{\sigma_{t-1}(i_k^t)}\bigr)\bigr\}\bigl\{n + \hat{\lambda}mr\bigl(Z_{\sigma_{t-1}(j_k^t)}\bigr)\bigr\}}
\end{align}
with normalizing constant $\hat{\lambda}$ such that $\sum_{i = 1}^{n+m}\{n + \hat{\lambda}m r(Z_i)\}^{-1} = 1$.
\end{rmk}

\begin{rmk}\label{rmk:reciprocal}
Testing $H_0: g \propto rf$ is equivalent to testing $H_0': f \propto \tfrac{1}{r} \, g$. We now show that our method is similarly invariant under taking reciprocals of $r$ and relabelling the samples, using either $p_{i_k, j_k}^t$ or $\tilde{p}_{i_k, j_k}^t$. For~$p_{i_k, j_k}^t$, observe that the appropriate quantity after relabelling is
\[
(p_{i_k, j_k}^t)^\prime \;:=\; \frac{1/r\bigl(Z_{\sigma_{t-1}(j_k^t)}\bigr)}
{1/r\bigl(Z_{\sigma_{t-1}(i_k^t)}\bigr) + 1/r\bigl(Z_{\sigma_{t-1}(j_k^t)}\bigr)}
\;=\;
\frac{r\bigl(Z_{\sigma_{t-1}(i_k^t)}\bigr)}{r\bigl(Z_{\sigma_{t-1}(i_k^t)}\bigr) + r\bigl(Z_{\sigma_{t-1}(j_k^t)}\bigr)},
\]
which matches $p_{i_k, j_k}^t$ exactly. In the case of $\tilde{p}_{i_k, j_k}^t$, define
\[
(\tilde{p}_{i_k, j_k}^t)^\prime 
\;:=\; 
\frac{\hat{\theta}  nm /r\bigl(Z_{\sigma_{t-1}(j_k^t)}\bigr)}
{\bigl\{m + \hat{\theta}  n/r\bigl(Z_{\sigma_{t-1}(i_k^t)}\bigr)\bigr\}\bigl\{m + \hat{\theta}  n/r\bigl(Z_{\sigma_{t-1}(j_k^t)}\bigr)\bigr\}},
\]
with $\hat{\theta}$ such that $\sum_{i = 1}^{n+m}\{m + n \hat{\theta} / r(Z_i)\}^{-1} = 1$. Here, $n$ and $m$ are switched because the roles of the $X$'s and $Y$'s are interchanged. By algebraic manipulation, one finds that $\hat{\theta} = \hat{\lambda}^{-1}$. Substituting back into $(\tilde{p}_{i_k, j_k}^t)^\prime$ shows $(\tilde{p}_{i_k, j_k}^t)^\prime = \tilde{p}_{i_k, j_k}^t$, confirming the invariance of our algorithm under reciprocal transformations of~$r$.
\end{rmk}

By Proposition \ref{prop:algo_sampler_valid}, Algorithm \ref{alg:pairwise_sampler}, when executed for sufficiently many steps $S$, generates a copy $Z_{\sigma_S}$ which acts as an appropriate control for $Z$ in testing $H_0$. In fact, we can make a much stronger statement. Under the null hypothesis, conditionally on the unordered set of observed data values, the distribution of $Z$ coincides with that of $Z_\sigma$ when $\sigma$ is sampled from~\eqref{eq:sampling_1}. This implies that $Z_{\sigma_S}$ represents a draw from the exact target distribution for any value of $S$, thereby providing a valid control for $Z$ regardless of the number of steps taken. However, exact type I error control relies on the stronger requirement of the exchangeability of $(Z,Z^{(1)},\ldots,Z^{(H)})$. This can be achieved through a star-shaped sampling scheme, following the methodology introduced by \cite{besag89star_sampler} and subsequently applied in permutation-based approaches by \cite{berrett2020conditional} and \cite{Ramdas2022PermutationTU}. This results in the general testing procedure of Algorithm~\ref{alg:star_algo}, which is the basis of our later methodological and theoretical contributions.

\begin{algorithm}[!htbp]
\caption{Density ratio permutation test}\label{alg:star_algo}
\begin{algorithmic}[1]
\State Data $Z =  (Z_1, \ldots, Z_{n+m}) = (X_1, \ldots, X_n, Y_1, \ldots, Y_m)$, initial permutation $\sigma_0$, integers $S \geq 1$, $H \geq 1$, and test statistic $T$.
\State Let $\sigma_*$ be the output of Algorithm~\ref{alg:pairwise_sampler} after $S$ steps, initialized at $\sigma_0$.
\For{$h \in [H]$, independently} 
\State Let $\sigma^{(h)}$ be the output of Algorithm~\ref{alg:pairwise_sampler} after $S$ steps, initialized at $\sigma_*$.
\EndFor
\State Compute $Z^{(h)} := Z_{\sigma^{(h)}}$ for all $h \in [H]$.
\State \Return $p := (1+H)^{-1}(1+ \sum_{h = 1}^H \mathbbm{1}\{T(Z^{(h)}) \geq T(Z)\})$.

\end{algorithmic}
\end{algorithm}

Algorithm \ref{alg:star_algo}, when initialized with $\sigma_0 = \mathrm{id}_{n+m}$, provides an exchangeable sampling mechanism, since the permutation $\sigma_*$ lies $S$ steps away from each of the permutations $(\sigma_0, \sigma^{(1)}, \ldots, \sigma^{(H)})$ and the Markov chain associated to Algorithm~\ref{alg:pairwise_sampler} is reversible, as shown in the proof of Proposition~\ref{prop:algo_sampler_valid}. The following result verifies exchangeability and ensures that the results of Theorem~\ref{prop:DRPT_validity} remain satisfied when the permuted vectors $Z^{(1)}, \ldots, Z^{(H)}$ are obtained via Algorithm~\ref{alg:star_algo}, thereby showing that the density ratio permutation test is~valid.

\begin{thm}\label{prop:alg_validity}
Let $p$ be the output of Algorithm \ref{alg:star_algo} when initialized at $\sigma_0 = \mathrm{id}_{n+m}$. If $H_0: g \propto r \, f$ is true, then $\mathbb{P}\{p \leq \alpha\} \leq \alpha$ for all $\alpha \in [0,1]$.
\end{thm} 

Finally, although the permutation distribution is generally complex and Algorithm~\ref{alg:pairwise_sampler} is required to generate a sample from it, in Appendix~\ref{appenidix:discreteDRPT} we explicitly characterize it using Fisher's noncentral hypergeometric distribution in the case of discrete data with finite support, significantly reducing computational costs.

\section{Integral probability metrics and consistency theory}\label{sec:IPM+consistency}
\subsection{Integral probability metrics}\label{sec:integral probability metric}

We begin by establishing the  consistency of our test under mild assumptions. A sequence of tests is consistent against a given class of alternatives if, as the sample size tends to infinity, the test rejects the null with probability~1 under every fixed alternative.  In the classical $r \equiv 1$ case, this follows from \cite{hoeffding52permutation}: 
letting~$\sigma$ be uniformly distributed over the symmetric group and assuming for simplicity $n/m \to \tau>0$, we have, for all $\varphi \in C_b^0(\mathcal{X})$,
\begin{equation}\label{eq:stdPermutation_is_consistent}
    n^{-1} \sum_{i=1}^n \varphi\, \!\bigl(Z_{\sigma(i)}\bigr) 
    \xrightarrow{\mathbb{P}} \int \varphi\, h_1 \, d\mu 
    \quad \text{and} \quad 
    m^{-1} \sum_{j=n+1}^{n+m} \varphi\, \!\bigl(Z_{\sigma(j)}\bigr) 
    \xrightarrow{\mathbb{P}} \int \varphi\, h_1 \, d\mu,
\end{equation}
where $C_b^0(\mathcal{X})$ denotes the space of bounded continuous functions on $\mathcal{X}$, and $h_1 = \tau(1+\tau)^{-1}\, f + (1+\tau)^{-1}\, g$. This means that the empirical distributions of both the first $n$ and the last $m$ permuted samples converge to the same mixture of $f$ and $g$, so that $Z_\sigma$ asymptotically satisfies the null. As a result, if we base the $p$-value~\eqref{eq:pvalue_DRPT} on an appropriate non-negative test statistic that is approximately zero only under the null, we shall see that $p \approx 0$ under every alternative, thus yielding consistency.

We will now extend this for general density ratios $r$. To this end, we first show that there exists a unique density $h$ which satisfies the null hypothesis in finite samples while preserving the distribution of the combined data.

\begin{lemma}\label{lemma:unique_h}
There exists a unique density $h$ such that \[
\frac{n}{n+m} h + \frac{m}{n+m}\frac{r h}{\int r h  d\mu} = \frac{n}{n+m} f + \frac{m}{n+m}g, 
\]
and it is of the form $ h = (nf + mg)/(n + \lambda_0 m r)$ for a suitable constant $\lambda_0 > 0$ such that $\int h d\mu = 1$.
\end{lemma}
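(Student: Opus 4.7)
The plan is to rewrite the functional equation as a fixed-point condition that determines the normalising constant $\lambda_0$, and then use monotonicity and the intermediate value theorem. First I would set $\lambda = 1/\int rh\,d\mu$ and rearrange the equation multiplicatively: the identity becomes
\[
nh + \lambda m r h = nf + mg,
\]
so any density $h$ satisfying the display must take the form $h = (nf+mg)/(n + \lambda m r)$. Next I would check the consistency between the two required conditions, namely that $h$ integrates to one and that $\lambda$ equals $1/\int rh\,d\mu$. Integrating $n h + \lambda m r h = nf + mg$ gives $n + \lambda m \int rh\,d\mu = n+m$, which shows these two conditions are equivalent. So the whole problem reduces to finding $\lambda_0 > 0$ such that
\[
\phi(\lambda_0) := \int \frac{nf + mg}{n + \lambda_0 m r}\,d\mu = 1.
\]

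I would then study the map $\phi:(0,\infty) \to (0,\infty)$. Since $r(x) > 0$ everywhere and $nf + mg$ is non-negative with positive integral, the integrand is strictly decreasing in $\lambda$ pointwise, and dominated uniformly on any set $\lambda \geq \lambda_* > 0$ by the integrable function $(nf + mg)/n$. Dominated (and monotone) convergence gives $\phi$ continuous on $(0,\infty)$, strictly decreasing, with $\phi(\lambda) \to (n+m)/n > 1$ as $\lambda \to 0^+$ and $\phi(\lambda) \to 0$ as $\lambda \to \infty$ (the integrand tends to zero pointwise and is dominated). The intermediate value theorem therefore gives a unique $\lambda_0 \in (0,\infty)$ with $\phi(\lambda_0) = 1$, and the corresponding $h = (nf + mg)/(n + \lambda_0 m r)$ is automatically a non-negative function with total mass one, hence a valid density.

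For the uniqueness of $h$ itself, I would start from an arbitrary density $\tilde h$ satisfying the displayed identity and set $\tilde\lambda := 1/\int r\tilde h\,d\mu$, which is well-defined and positive (since $r > 0$ on the support and $\tilde h$ is a density). The rearrangement in the first paragraph forces $\tilde h = (nf + mg)/(n + \tilde\lambda m r)$, and the normalisation $\int \tilde h\,d\mu = 1$ yields $\phi(\tilde\lambda) = 1$. By the uniqueness of $\lambda_0$ already established, $\tilde\lambda = \lambda_0$ and hence $\tilde h = h$.

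The main obstacle is minor but worth being careful about: verifying the two limiting values of $\phi$ rigorously. The $\lambda \to 0^+$ limit is by monotone convergence applied to the increasing family as $\lambda \downarrow 0$, while the $\lambda \to \infty$ limit requires the dominating function $(nf+mg)/n$ and pointwise convergence of the integrand to zero, which uses exactly the hypothesis $r(x) > 0$ for all $x$. Strict monotonicity of $\phi$—needed for uniqueness—also relies on $r > 0$ almost everywhere with respect to the measure $(nf+mg)\,d\mu$, which follows from the assumption that $f$ and $g$ share the same support and $r$ is strictly positive on $\mathcal{X}$.
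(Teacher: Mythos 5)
Your proof is correct and takes essentially the same approach as the paper: both reduce the problem to studying the map $\lambda \mapsto \int \frac{nf+mg}{n+\lambda m r}\,d\mu$, establish that it is continuous and strictly decreasing with limits $1+m/n$ and $0$, apply the intermediate value theorem to obtain a unique $\lambda_0$, and exploit the algebraic equivalence between $\int h\,d\mu = 1$ and $\lambda_0 = 1/\int rh\,d\mu$ to close the argument. Your version is slightly more explicit about the dominated-convergence details for the limits, but the substance is identical.
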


We will use $\lambda_0$ to refer to such a normalizing constant from now on. The previous result, together with the fact that $\lambda_0 = (\int r h d\mu)^{-1}$, derived in the proof of Lemma~\ref{lemma:unique_h} in Appendix~\ref{appendix:proof3.1}, implies that $H_0$ is equivalent to $f = h$ almost surely. This motivates the introduction, at the population level, of an integral probability metric of the form 
\[
T_{\mathcal{F},r}(f,g) := \sup_{\varphi \in \mathcal{F}} \left| \int \frac{n + m}{n + \lambda_0 mr} \, (\lambda_0 r f - g) \, \varphi \, d\mu \right| =  \frac{n+m}{m} \, \sup_{\varphi \in \mathcal{F}} \left| \int \left(f-h \right) \, \varphi \, d\mu \right|, 
\]
for a suitable function class $\mathcal{F}$. It follows from Lemma~\ref{lemma:unique_h} that we have $T_{\mathcal{F},r}(f,g) = 0$ under $H_0$, but in order to have the reverse implication for a full characterization of the null we need additional assumptions on $\mathcal{F}$. Following similar lines to \cite{JMLR:v13:gretton12a}, we prove the following lemma. 
\begin{lemma}\label{lemma:TFcharacterisesNull}
    Let $\mathcal{F} = \{\varphi \in \mathcal{H} : \|\varphi \|_\mathcal{H} \leq 1\}$, where $\mathcal{H}$ is a dense subset of $C^0_b(\mathcal{X})$ with respect to $\| \cdot \|_\infty$, and $\| \cdot \|_\mathcal{H}$ is a norm on $\mathcal{H}$. Then $T_{\mathcal{F},r}$ characterizes $H_0$, meaning that $T_{\mathcal{F},r} = 0$ if and only if $H_0$ is true.
\end{lemma}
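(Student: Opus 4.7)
The plan is to establish the two implications of the characterisation separately, with the real work lying in the converse direction.

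\textbf{Forward direction.} Suppose $H_0: g \propto rf$ holds, so $g = c\, r f$ with $c = 1/\int rf\, d\mu$. By Lemma~\ref{lemma:unique_h}, the choice $h = f$ satisfies the fixed-point relation and is therefore the unique such density, which forces $\lambda_0 = 1/\int r f\, d\mu = c$. Then $\lambda_0 r f - g = 0$ $\mu$-a.e., so the integrand in the definition of $T_{\mathcal{F},r}$ vanishes and $T_{\mathcal{F},r}(f,g) = 0$ regardless of $\mathcal{F}$.

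\textbf{Reverse direction.} Suppose $T_{\mathcal{F},r}(f,g) = 0$. Introduce
\[
\psi := \frac{\lambda_0 r f - g}{n/m + \lambda_0 r}.
\]
Since $n, m \geq 1$ and $r > 0$, the denominator is bounded below by $n/m$, so $|\psi| \leq (m/n)(\lambda_0 r f + g)$. The first term is integrable because $r f / (n/m + \lambda_0 r) \leq f/\lambda_0$ (equivalently, one can bound it crudely) and $g$ is a density; hence $\psi \in L^1(\mu)$. The hypothesis gives $\int \psi \varphi\, d\mu = 0$ for every $\varphi \in \mathcal{F}$, and positive homogeneity in $\varphi$ extends this to every $\varphi \in \mathcal{H}$. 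The key step is then to extend this to $C_b^0(\mathcal{X})$: given $\varphi \in C^0_b(\mathcal{X})$ and $\varepsilon > 0$, the density hypothesis supplies $\tilde\varphi \in \mathcal{H}$ with $\|\varphi - \tilde\varphi\|_\infty < \varepsilon$, and then
\[
\Bigl|\int \psi \varphi\, d\mu\Bigr| = \Bigl|\int \psi(\varphi - \tilde\varphi)\, d\mu\Bigr| \leq \varepsilon\, \|\psi\|_{L^1(\mu)},
\]
so that $\int \psi \varphi\, d\mu = 0$ for all $\varphi \in C^0_b(\mathcal{X})$.

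The last step is to conclude $\psi = 0$ $\mu$-a.e.\ from this integral vanishing. Writing $\psi = \psi_+ - \psi_-$ and defining finite positive measures $d\nu_\pm := \psi_\pm\, d\mu$, the identity $\int \varphi\, d\nu_+ = \int \varphi\, d\nu_-$ for every $\varphi \in C^0_b(\mathcal{X})$ forces $\nu_+ = \nu_-$ because bounded continuous functions are measure-determining on the sample spaces under consideration, exactly as in the classical MMD characterisation argument of \cite{JMLR:v13:gretton12a}. Therefore $\psi = 0$ $\mu$-a.e., i.e.\ $\lambda_0 r f = g$ $\mu$-a.e., and since $\lambda_0 > 0$ this is precisely $H_0$.

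\textbf{Expected obstacle.} The routine parts are the algebraic identification of $\lambda_0$ via Lemma~\ref{lemma:unique_h} and the homogeneity/density extension from $\mathcal{F}$ to $C^0_b(\mathcal{X})$. The substantive step is the measure-theoretic conclusion that $\int \psi \varphi\, d\mu = 0$ for all $\varphi \in C^0_b(\mathcal{X})$ entails $\psi = 0$ a.e.; this is where the topological assumption on $\mathcal{X}$ implicit in invoking a dense subset of $C_b^0(\mathcal{X})$ is doing real work, and it must be handled carefully (via e.g.\ Radon/Polish-space arguments or uniform approximation of indicators by continuous functions on compact sets and a monotone class argument) so that the result covers the generality needed for the sequel.
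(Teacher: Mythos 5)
Your proof is correct and follows essentially the same route as the paper: the forward implication via Lemma~\ref{lemma:unique_h}, the approximation of $\phi \in C^0_b(\mathcal{X})$ by $\tilde\varphi \in \mathcal{H}$ using density in $\|\cdot\|_\infty$, and the conclusion that $\psi=0$ a.e.\ because bounded continuous functions are measure-determining (the paper invokes Lemma 9.3.2 of Dudley (2002) for exactly this, where you decompose $\psi$ into its positive and negative parts). The only cosmetic difference is that you first show $\psi \in L^1(\mu)$ and treat the perturbation $\phi-\tilde\varphi$ in one step, whereas the paper splits the triangle inequality into three pieces and uses $T_{\mathcal{F},r}(f,g)=0$ directly; the content is identical.
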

Although this population measure of discrepancy depends on the sample sizes, we show in Proposition~\ref{prop:limitingTF} in Appendix~\ref{app:additionalResults} that $T_{\mathcal{F},r}(f,g) \to \sup_{\varphi \in \mathcal{F}}|\int \{f - (g/r)(\int g/r d\mu)^{-1} \} \, \varphi \, d\mu|$ when $n/m \to 0$ and  $T_{\mathcal{F},r}(f,g) \to \sup_{\varphi \in \mathcal{F}}|\int \{(rf)(\int rf d\mu)^{-1} - g \} \, \varphi \, d\mu|$ when $n/m \to \infty$. This means that $T_{\mathcal{F},r}(f,g)$ is of constant order at fixed alternatives, even with imbalanced $n,m$. While it would be possible to use either of these simple limiting values as a population measure of discrepancy, we find $T_{\mathcal{F},r}$ to be a natural choice due to its invariance as in Remark~\ref{rmk:reciprocal} to relabelling of the samples; in~Proposition~\ref{prop:invPsi} in Appendix~\ref{app:additionalResults} we prove that $\psi_r := (n+m) m^{-1} \, (f - h)$ is unchanged under this transformation. Furthermore, from a technical perspective, it allows a more elegant and straightforward analysis of Algorithm~\ref{alg:pairwise_sampler}, which in turn leads to the consistency of the test given in Algorithm~\ref{alg:star_algo} based on the empirical version~\eqref{eq:empricalMMD} of $T_{\mathcal{F},r}$, as shown in the next section.

\subsection{General consistency theory}\label{sec:consistency}

In the following, we will denote by $N(A, \delta, \| \cdot \|_*)$ the $\delta$-covering number \citep[e.g.][]{wainwright2019high}  of the set~$A$ with respect to the norm $\| \cdot \|_*$.

\begin{thm}\label{thm:general_consistency}
    Fix $\alpha \in (0,1)$ and $H > \lceil 1/\alpha - 1\rceil$. Let $\mathcal{H}$ be a dense subset of $C^0_b(\mathcal{X})$  with respect to $\| \cdot \|_\infty$, and suppose there exists fixed $\gamma > 0$ such that $\| \cdot \|_\infty \leq \gamma \| \cdot \|_\mathcal{H}$. Further, suppose $N\left(\{\|\varphi\|_\mathcal{H} \leq 1\}, \delta, \|\cdot \|_\infty \right)$ is finite for all $\delta > 0$ and define
    \begin{equation}\label{eq:empricalMMD}
        T(z_1, \ldots, z_{n+m}) := \sup_{\|\varphi\|_\mathcal{H} \leq 1} \frac{n+m}{nm} \left|\sum_{i = 1}^{n} \frac{\hat{\lambda} m r(z_i)}{n + \hat{\lambda} m r(z_i)}\varphi(z_i)  - \sum_{j = n+1}^{n+m} \frac{n}{n + \hat{\lambda} m r(z_j)}\varphi(z_j)\right|,
    \end{equation}
    where $\hat{\lambda}$ satisfies $\sum_{i = 1}^{n+m}\{n + \hat{\lambda}m r(z_i)\}^{-1} = 1$. Fix $f,g$ such that $H_0$ is not true and, letting $W \sim n(n+m)^{-1} f + m(n+m)^{-1} g$, suppose that we have $\mathbb{E}[\left\{  r(W) \right\}^2 ] \,\mathbb{E}[\left\{  r(W) \right\}^{-2}]  \leq V_0$
    for some fixed $V_0 \geq 1$. Further suppose that, writing $\tau=\max(m/n,n/m)$, we have $S/\{\tau \log(m+n)\} \rightarrow \infty$. Then  the test associated to the $p$-value returned by Algorithm~\ref{alg:star_algo} with parameters $S, H$ and test statistic~\eqref{eq:empricalMMD} is consistent, i.e.~for all $\alpha \in (0,1)$ we have $\mathbb{P}\{p \leq \alpha \} \to 1$ as $n, m \to \infty$.
\end{thm}

Here $\hat \lambda$ acts as an estimator of $\lambda_0$; theoretical results quantifying the estimation error can be found in~the proof of this theorem and in Lemma~\ref{lemma:hat_lambda} in Appendix~\ref{appendix:proof3.1}. Note that $\sum_{i = 1}^{n+m}\{n + \hat{\lambda}m r(Z_i)\}^{-1} = 1$ is equivalent~to 
\begin{equation}
\label{Eq:EqualSumsLambda}
\sum_{i = 1}^n \frac{\hat{\lambda}m r(Z_{\sigma(i)})}{n + \hat{\lambda}m r(Z_{\sigma(i)})} = \sum_{j = n+1}^{n+m} \frac{n}{n + \hat{\lambda}m r(Z_{\sigma(j)})}
\end{equation}
for all permutations $\sigma \in \mathcal{S}_{n+m}$; this fact is useful throughout our proofs and allows us to see $\hat{\lambda}$ as a normalization factor for $r$ in any division of the empirical distribution $\hat{H}_{n,m}$ into two samples. As for the second moment condition on $r$, this is satisfied with $V_0 = C^2/c^2$ if $c \leq r(x) \leq C$ for all $x \in \mathcal{X}$.  However, we require something significantly~weaker. To illustrate this point, let $\phi(\cdot)$ be the density of the standard Gaussian distribution and take $f(x) = \phi(x)$ and $g(x) = \phi(x - \mu)$ with $\mu > 0$. Then, for $r(x) = g(x)/f(x) = \exp\left\{\mu x - \mu^2/2 \right\}$ we have $\lambda_0 = 1$ and $\mathbb{E}\left\{ r^2(Z) \right\} \, \mathbb{E}\left\{ 1/r^2(Z) \right\} = (n+m)^{-2} (ne^{\mu^2} + m e^{3\mu^2})(ne^{3\mu^2} + m e^{\mu^2}) \leq e^{6 \mu^2}$, so that the assumption holds by taking $V_0 =  e^{6 \mu^2}$. Also, the assumption $\| \cdot \|_\infty \leq \gamma \| \cdot \|_\mathcal{H}$ is not overly restrictive and it is satisfied, e.g.~when $\mathcal{H}$ is a reproducing kernel Hilbert space based on a uniformly bounded kernel. 

The proof of Theorem \ref{thm:general_consistency} is based on the following~lemma, which gives an approximation to the empirical distribution of our permuted data.
\begin{lemma}
\label{lemma:convergence_in_d}
Fix $S \in \mathbb{N}$ and let $\sigma_S$ be the result of running Algorithm~\ref{alg:pairwise_sampler}, with acceptance probabilities given in~\eqref{eq:tilde_p}, for $S$ steps. Assume that $\max(m,n) \geq 3$ and that we have
\begin{align}\label{eq:boundedness_kappa}
        \biggl\{ \frac{1}{n+m} \sum_{i=1}^{n+m} r(Z_i) \biggr\}\biggl\{ \frac{1}{n+m} \sum_{i=1}^{n+m} \frac{1}{r(Z_i)} \biggr\} \leq \kappa
\end{align}
for some $\kappa \geq 1$. Further, define the empirical measure $\hat{H}_{n,m} = \sum_{i=1}^{n+m} \{n + \hat{\lambda} m r(Z_i)\}^{-1} \delta_{Z_i}$, where $\hat{\lambda} > 0$ is chosen such that $\sum_{i = 1}^{n+m}\{n + \hat{\lambda}m r(Z_i)\}^{-1} = 1$. Writing $\tau=\max(m/n,n/m)$, for all $\varphi \in \mathcal{C}^0_b(\mathcal{X})$ we have
\begin{equation}
\label{Eq:LemmaRHS}
    \mathbb{E}\left[\left(\frac{1}{n} \sum_{i = 1}^n \varphi(Z_{\sigma_S(i)}) - \int \varphi d\hat{H}_{n,m}\right)^2 \biggm| Z \right] \leq \frac{64 \kappa \|\varphi\|_\infty^2 \min(m,n)}{n^2} + \|\varphi\|_\infty^2 \exp\biggl( - \frac{S}{8\tau \kappa} \biggr).
\end{equation}
\end{lemma}

\begin{rmk}\label{lemma:runningChainLongerEnough}
We see that, provided $S \gtrsim \tau \kappa \log n$, the approximation error due to running the chain only for a finite number of steps is of the same order as the error we would have if the chain had been run for an infinite number of steps. This notion of the convergence of the Markov chain is sufficient for our testing context and allows us to prove that tests that run in finite time have high power. 
\end{rmk}

Besides being crucial in proving consistency, Lemma~\ref{lemma:convergence_in_d} shows us that the empirical distribution of the first $n$ permuted samples converges in distribution to the limiting version of $h d\mu$ defined in Lemma~\ref{lemma:unique_h}. This extends~\eqref{eq:stdPermutation_is_consistent} for general density ratios~$r$ satisfying the moment assumptions of Theorem~\ref{thm:general_consistency}. Finally, it is instructive to elaborate on the proof of this result, as the underlying strategy is novel and serves as a key component in the proof of other results, such as Theorem~\ref{thm:UB_minimax} below. To this end, let $S_n^t:=\sum_{i = 1}^n \varphi(Z_{\sigma_t(i)}) - n\int \varphi d\hat{H}_{n,m}$, where $\sigma_t$ is the permutation at time $t$ generated by Algorithm~\ref{alg:pairwise_sampler}. The proof strategy is to show that this can be seen as Lyapunov function~\citep[e.g.][]{hairer2021convergence} satisfying the geometric drift condition $\mathbb{E}[(n^{-1} S_n^{t+1})^2 \mid Z, \sigma_t] - (n^{-1} S_n^{t})^2 \leq A - B(n^{-1}S_n^t)^2$ for appropriate constants $A,B$. In proving this bound it is very convenient to use acceptance probabilities $\tilde{p}^t_{i_k, j_k}$ as defined in~$\eqref{eq:tilde_p}$. Given this drift condition the result follows using standard machinery, with the first term on the right-hand side of~\eqref{Eq:LemmaRHS} corresponding to $A/B$ and the second term capturing the convergence of the chain to stationarity.

\section{Shifted maximum mean discrepancy and minimax optimality}\label{sec:sMMD+optimality}

\subsection{Shifted maximum mean discrepancy}\label{sec:shifted maximum mean discrepancy}

While Theorem \ref{thm:general_consistency} holds under mild assumptions, its practical utility depends on whether the supremum can be explicitly computed and a closed-form expression for $T_{\mathcal{F},r}$ can be derived. To make this feasible, we focus on the specific case where $\mathcal{H}$ is a reproducing kernel Hilbert space. In this setting, we establish the following representation for the integral probability metric $T_{\mathcal{F},r}$.
\begin{prop}\label{prop:TF_explicit}
    Let $k(\cdot, \cdot)$ be a measurable kernel satisfying $\int_{\mathcal{X}} \sqrt{k(x, x)} \frac{(n+m)(\lambda_0 r(x) f(x) + g(x))}{n + m\lambda_0 r(x)} d \mu(x) < \infty$ and let~$\mathcal{H}$ be the associated reproducing kernel Hilbert space.   \begin{align*}
         T_{\mathcal{F},r}^2&(f,g) = \left(\sup_{\|\varphi\|_{\mathcal{H}} \leq 1} \left| \int \frac{n+m}{n + \lambda_0 m r} (\lambda_0 r f - g) \varphi \, d\mu \right| \right)^2 \\
        & = \mathbb{E}_{X,X' \sim f}\left[\frac{\lambda_0^2 (n+m)^2 r(X)r(X') k(X,X')}{\{n + \lambda_0 m r(X) \} \{n  + \lambda_0 m r(X')\}} \right] + \mathbb{E}_{Y,Y' \sim g}\left[\frac{(n+m)^2 k(Y,Y')}{\{n + \lambda_0m r(Y)  \} \{n + \lambda_0 m r(Y')\}} \right]  \\
        & - 2 \,  \mathbb{E}_{\substack{X \sim f \\ Y \sim g}}\left[\frac{\lambda_0 (n+m)^2 r(X) k(X,Y)}{\{n  + \lambda_0 m r(X) \} \{n + \lambda_0 m r(Y)\}} \right].
    \end{align*}
\end{prop}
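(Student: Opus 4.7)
The plan is to follow the classical mean-embedding argument used by~\cite{JMLR:v13:gretton12a} for the standard MMD, suitably adapted to our weighted signed measure. Let me write $d\nu := \frac{\lambda_0 r f - g}{n/m + \lambda_0 r} \, d\mu$, viewed as a finite signed measure on $\mathcal{X}$; the supremum in the statement is then $\sup_{\|\varphi\|_\mathcal{H} \le 1} |L(\varphi)|$ with $L(\varphi) := \int \varphi \, d\nu$. The proposal has three ingredients: (i) show that $L$ is a bounded linear functional on $\mathcal{H}$; (ii) identify its Riesz representer as a Bochner integral of $k(\cdot, x)$ against $\nu$; (iii) compute the norm of this representer and expand it into the three kernel expectations in the statement.

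For (i)–(ii), I would use the reproducing property $\varphi(x) = \langle \varphi, k(\cdot, x) \rangle_\mathcal{H}$ together with the bound $|\varphi(x)| \le \|\varphi\|_\mathcal{H}\sqrt{k(x,x)}$ inside the integral $L(\varphi)$. The hypothesis
\[
\int \sqrt{k(x,x)} \, \frac{\lambda_0 r(x) f(x) + g(x)}{n/m + \lambda_0 r(x)} \, d\mu(x) < \infty
\]
dominates the total variation $\int \sqrt{k(x,x)} \, d|\nu|(x)$, which is exactly the Bochner integrability condition needed for the $\mathcal{H}$-valued integral $\mu_\nu := \int k(\cdot, x) \, d\nu(x)$ to exist. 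Then for every $\varphi \in \mathcal{H}$ one has, by the reproducing property and the defining property of the Bochner integral,
\[
L(\varphi) = \int \langle \varphi, k(\cdot, x)\rangle_\mathcal{H} \, d\nu(x) = \langle \varphi, \mu_\nu \rangle_\mathcal{H},
\]
and Cauchy–Schwarz together with the attaining choice $\varphi = \mu_\nu / \|\mu_\nu\|_\mathcal{H}$ yields $\sup_{\|\varphi\|_\mathcal{H} \le 1} |L(\varphi)| = \|\mu_\nu\|_\mathcal{H}$.

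For (iii), I would compute $\|\mu_\nu\|_\mathcal{H}^2 = \langle \mu_\nu, \mu_\nu\rangle_\mathcal{H}$ by passing the inner product under both Bochner integrals (justified once more by the integrability assumption and Fubini):
\[
\|\mu_\nu\|_\mathcal{H}^2 = \iint k(x, x') \, d\nu(x) \, d\nu(x').
\]
Splitting $d\nu$ into its $f$- and $g$-components and expanding the product gives three pieces, which rewrite respectively as the $\mathbb{E}_{X,X' \sim f}$ term, the $\mathbb{E}_{Y,Y' \sim g}$ term, and (after combining the two identical cross terms) the $-2\,\mathbb{E}_{X \sim f, Y \sim g}$ term in the displayed formula. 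The only delicate step is the justification of the Bochner integral and of swapping it with the inner product/second integral; both are standard consequences of the stated integrability hypothesis, so I do not expect a genuine obstacle, merely bookkeeping to check that the weights $\lambda_0 r/(n/m + \lambda_0 r)$ and $1/(n/m + \lambda_0 r)$ are absolutely bounded by the dominating integrand in the assumption.
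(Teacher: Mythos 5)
Your argument is correct and is essentially the paper's own proof, just phrased with a single Bochner integral of the signed measure $\nu$ rather than as the difference of two Riesz representers $m_{rf}$ and $m_g$ obtained separately for the $rf$- and $g$-weighted functionals. The integrability check, the identification $\sup_{\|\varphi\|_\mathcal{H}\le 1}|L(\varphi)|=\|\mu_\nu\|_\mathcal{H}$, and the expansion of $\|\mu_\nu\|_\mathcal{H}^2$ into the three kernel expectations all match the paper's argument step for step.
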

First, when $r \equiv 1$ we have $\lambda_0 = 1$, leading to $T_{\mathcal{F},r \equiv 1}^2(f,g) = \operatorname{MMD}^2_k(f,g)$, where $\operatorname{MMD}_k(\cdot, \cdot)$ denotes the maximum mean discrepancy based on the kernel $k(\cdot, \cdot)$;  see Equation~(1) and Lemma~6 in \citet{JMLR:v13:gretton12a}. Consequently, our integral probability metric $T_{\mathcal{F},r}$, which we will now refer to as the shifted maximum mean discrepancy and denote by $\operatorname{MMD}_{r, k}(f,g)$, generalizes the classical one to account for the presence of a shift factor, while reducing to it in the case of a constant~$r$. A similar quantity to $\operatorname{MMD}_{r, k}(f,g)$, but without the denominators and $\lambda_0$, was proposed in \cite{kim2024conditional} for conditional two-sample testing, where its definition was motivated by importance weighting. In contrast, we naturally derive $\operatorname{MMD}_{r, k}(f,g)$ from Lemma \ref{lemma:unique_h}, and see that the presence of the denominator plays a crucial role, as it is linked to the invariance under relabelling the samples in a similar spirit to Remark~\ref{rmk:reciprocal}.

To ensure that $\operatorname{MMD}_{r, k}(f,g)$ characterizes the null hypothesis, in line with Lemma \ref{lemma:TFcharacterisesNull}, restrictions on $\mathcal{H}$ must ensure that $\operatorname{MMD}_{r, k}(f,g) = 0$ if and only if $H_0$ holds. This requirement has been extensively studied in the statistical literature, and reproducing kernel Hilbert spaces satisfying this property are referred to as characteristic, with their defining kernels termed universal. For a comprehensive discussion see \cite{Sriperumbudur2010} and references therein. Notably, reproducing kernel Hilbert spaces induced by Gaussian and Laplacian kernels on $\mathbb{R}^d$ are characteristic. 

Turning now to sample versions of the shifted maximum mean discrepancy, applying the same arguments as in Proposition \ref{prop:TF_explicit} to empirical measures shows that the test statistic in \eqref{eq:empricalMMD} becomes\begin{align}\label{eq:V-stat}
    &V(x_1, \ldots, x_n, y_1, \ldots, y_m) = \frac{(n+m)^2}{n^2m^2} \sum_{i, j = 1}^n \frac{\hat{\lambda}^2 r(x_i)r(x_j)k(x_i,x_j)}{\{n/m + \hat{\lambda} r(x_i) \} \{n/m + \hat{\lambda} r(x_j)\}}  \nonumber \\
    & \quad + \frac{(n+m)^2}{m^4} \sum_{i, j = 1}^m \frac{ k(y_i,y_j)}{\{n/m + \hat{\lambda} r(y_i) \} \{n/m + \hat{\lambda} r(y_j)\}} - \frac{2 (n+m)^2}{nm^3} \sum_{i = 1}^n \sum_{j = 1}^m \frac{\hat{\lambda} r(x_i) k(x_i,y_j)}{\{n/m + \hat{\lambda} r(x_i) \} \{n/m + \hat{\lambda} r(y_j)\}},
\end{align} 
where $\hat{\lambda}$ satisfies~\eqref{Eq:EqualSumsLambda}. Ignoring the dependence on $\hat{\lambda}$, this estimator is a V-statistic and is asymptotically unbiased for $\operatorname{MMD}^2_{r, k}(f,g)$ since $\hat{\lambda}\overset{\mathbb{P}}{\rightarrow} \lambda_0$, as shown in the proof of Theorem~\ref{thm:general_consistency} in Appendix~\ref{appendix:proof3.1}. Following Theorem~\ref{thm:general_consistency}, the test given in Algorithm~\ref{alg:star_algo} using~\eqref{eq:V-stat} with $k(\cdot, \cdot)$ from a characteristic reproducing kernel Hilbert space is consistent, provided that~$\|\cdot\|_\infty\leq\gamma\,\|\cdot\|_\mathcal{H}$ and $N\left({\|\varphi\|_\mathcal{H} \leq 1}, \delta, \|\cdot \|_\infty \right)$ is finite for all $\delta > 0$. The former condition is verified for uniformly bounded kernels, as the Cauchy--Schwarz inequality and the representer theorem imply $|\varphi(x)| = |\langle \varphi, k(\cdot, x) \rangle_\mathcal{H}| \leq \sqrt{k(x,x)} \| \varphi \|_\mathcal{H}$. This is a common assumption in the reproducing kernel Hilbert space literature~\citep[e.g.][Definition~30]{JMLR:v13:gretton12a}, and it is satisfied by many standard choices of $k(\cdot, \cdot)$. The latter condition is also mild, and it is satisfied for common kernels such as the exponential kernel~\citep[e.g.][Lemma~D.2]{yang2020functionapproximationreinforcementlearning}.

\subsection{Minimax rate optimality}\label{sec:optimality}
While Theorem \ref{thm:general_consistency} shows the consistency of our test under mild assumptions, we can further show its minimax rate optimality when $\mathcal{X} = \mathbb{R}^d$ under the extra assumption that $\psi_r := (n+m) \, m^{-1} \, (f-h)$ lies in a Sobolev ball. We shall also require that $r$ is uniformly bounded, i.e.~there exist $c,C>0$ such that $c \leq r(x) \leq C$ for all $x \in \mathcal{X}$, and balanced sample sizes; these are stronger assumptions than those needed to prove consistency and might not be necessary, but simplify an already involved analysis. It is also more convenient to consider the test statistic 
\begin{align}\label{eq:U_stat}
    &U(x_1, \ldots, x_n, y_1, \ldots, y_m) = \frac{(n+m)^2}{n^2m^2} \sum_{i \neq j = 1}^n \frac{\hat{\lambda}^2 r(x_i)r(x_j)k_\zeta(x_i,x_j)}{\{n/m + \hat{\lambda} r(x_i) \} \{n/m + \hat{\lambda} r(x_j)\}}  \nonumber \\
    &  + \frac{(n+m)^2}{m^4} \sum_{i \neq j = 1}^m \frac{ k_\zeta(y_i,y_j)}{\{n/m + \hat{\lambda} r(y_i) \} \{n/m + \hat{\lambda} r(y_j)\}} - \frac{2(n+m)^2}{nm^3} \sum_{i = 1}^n \sum_{j = 1}^m \frac{\hat{\lambda} r(x_i) k_\zeta(x_i,y_j)}{\{n/m + \hat{\lambda} r(x_i) \} \{n/m + \hat{\lambda} r(y_j)\}},
\end{align} 
where $k_\zeta(\cdot, \cdot)$ is a multivariate kernel with bandwidth $\zeta \geq 1$ which will be defined later. Note that apart from a slightly unusual normalization and the dependence on~$\hat{\lambda}$, \eqref{eq:U_stat} is essentially a U-statistic which serves as an estimator of $\operatorname{MMD}^2_{r, k_\zeta}(f,g)$. This is a slight modification of~\eqref{eq:V-stat}, with diagonal terms removed, that we expect to have similar performance but which is simpler to analyze. Coming back to the definition of~$k_\zeta$, let $K:\mathbb{R}\to\mathbb{R}$ be a bounded, continuous, even, characteristic positive semidefinite kernel with $K\in L^{1}(\mathbb{R})\cap L^{4}(\mathbb{R})$, $\int_{\mathbb{R}}K(u)\,du=1$, and $0<K(0)<\infty$. Given a bandwidth $\zeta \geq 1$, we define a characteristic kernel on $\mathbb{R}^d$ by
$k_\zeta(x, y)
\;:=\;
\zeta^d \prod_{i=1}^d  K\left(\zeta(x_i - y_i)\right)$ for $x,y \in \mathbb{R}^d$.  For example, choosing $K(u) \;=\; \tfrac{1}{\sqrt{\pi}}e^{-u^2}$ recovers the Gaussian kernel in $\mathbb{R}^d$, while choosing $K(u) \;=\;  \tfrac{1}{2} e^{-|u|}$ yields the Laplace kernel. Further useful properties of $k_\zeta$ can be found in the proof of Theorem \ref{thm:UB_minimax}. Extensions to separate kernels $K_i(\cdot)$ with distinct bandwidths $\zeta_i$ along each dimension are straightforward and can be obtained by following similar arguments as in~\cite{schrab2023mmd}.

We now characterize the optimality of the testing procedure within the minimax framework, aiming to identify the smallest separation between the null and alternative hypotheses that allows for reliable discrimination. Smoothness is assumed under the alternative but not under the null, as we already showed in Section \ref{sec:methodology} that Algorithm~\ref{alg:star_algo} test guarantees uniform, non-asymptotic control of the type~I error. For fixed $r: \mathcal{X} \rightarrow \mathbb{R}_+$ such that $ 0< c \leq r(x) \leq C$ for all $x \in \mathcal{X}$, and for $\rho > 0$, we are interested in testing 
\begin{equation}\label{eq:l2_separation}
    H_0 : g \propto rf \quad \text{ vs. } \quad H_1^r(\rho): \| \psi_r \|_2 > \rho,
\end{equation}
and aim to find the smallest value of $\rho$ such that there exists a test with uniform error control. Arguing as in Section~\ref{sec:integral probability metric}, it appears more natural in our setting to define the separation in terms of $\psi_r$.  Nonetheless, when $r$ is uniformly bounded, alternative choices such as $\| g - \bar{r}f\|_2$ with $\bar{r} = r(\int r f d\mu)^{-1}$ are also valid; we prove in Proposition~\ref{eq:prop:equivalenceSeparations} in Appendix~\ref{appendix:proof4} that these norms are equivalent when $\mathcal{X}$ is compact.

Let $\Psi$ denote the collection of randomized tests based on the combined dataset $(X_1, \ldots, X_n, Y_1, \ldots, Y_m)$. Each test is represented by a function $\varphi: \mathcal{X}^{n+m} \to [0,1]$, where, upon observing $(x_1, \ldots, x_n, y_1, \ldots, y_m)$, the null hypothesis $H_0$ is rejected with probability $\varphi(x_1, \ldots, x_n, y_1, \ldots, y_m)$. Additionally, define $\Psi(\alpha)$ as the subset of $\Psi$ consisting of tests with size $\alpha$, where $\alpha \in (0,1)$. Define the $d$-dimensional Sobolev ball with smoothness parameter $s>0$ and radius $L>0$ as
\begin{align}\label{eq:Sobolev_ball}
    \mathcal{S}_d^s(L):=\left\{p \in L^1(\mathbb{R}^d) \cap L^2(\mathbb{R}^d):\int_{\mathbb{R}^d}\|\xi\|_2^{2 s}|\widehat{p}(\xi)|^2 \mathrm{~d} \xi \leq(2 \pi)^d L^2\right\},
\end{align}
where $\hat{p}$ denotes the Fourier transform of $p$, i.e.~$\widehat{p}(\xi):=\int_{\mathbb{R}^d} p(x) e^{-i \langle x, \xi \rangle} d x$ for all $\xi \in \mathbb{R}^d$. Throughout this section we will also assume $ m \leq n \leq \tau \, m$ for some $\tau \in [1,\infty)$, which is to say that $n$ and $m$ are of the same order. This assumption is commonly employed in the literature on two-sample testing~\citep[e.g.][]{schrab2023mmd,LiYuanJMLROptimal}. For fixed $r:\mathcal{X} \rightarrow \mathbb{R}_+$ such that $0 < c \leq r(x) \leq C$ for all $x \in \mathcal{X}$, we may then define the minimax separation to be $\rho^*_r \equiv \rho^*_r\left(n, m, \theta, \alpha, \beta \right):=\inf \left\{\rho>0: \inf_{\varphi \in \Psi(\alpha)} \sup_{(f,g) \in \mathcal{S}^r_\theta(\rho)} \mathbb{E}_P(1-\varphi) \leq \beta \right\}$, where $P = P_{f}^{\otimes n}\otimes P_{g}^{\otimes m}$, $\theta = (d, \tau, M, s, L) \in \mathbb{N}_+ \times \mathbb{R}_{\geq 1} \times \mathbb{R}_+^3$ and 
\begin{align}
    \mathcal{S}^r_\theta(\rho) := & \left\{(f, g): \max (\|f\|_{\infty},\|g\|_{\infty}) \leq M, \, \| \psi_r \|_2 > \rho \text{ and } \psi_r \in \mathcal{S}_d^s(L) \right\}. 
\end{align}
We can prove an upper bound on the minimax separation by showing that the test in Algorithm~\ref{alg:star_algo}, based on the test statistic~\eqref{eq:U_stat} with bandwidth $\zeta = n^{\frac{2}{4s+d}}$, has good power when $\rho$ is sufficiently large. 

\begin{thm}\label{thm:UB_minimax}
    Let $\mathcal{X} = \mathbb{R}^d$ and fix $\alpha, \beta \in (0,1)$ such that $\alpha + \beta < 1$. There exists $C_1 = C_1(c, C, \theta, \alpha, \beta) > 0$ such that $    \rho^*_r \leq C_1 \,  n^{-2 s /(4 s+d)}$.
\end{thm}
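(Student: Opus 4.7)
The plan is to exhibit the DRPT based on $U$ in~\eqref{eq:U_stat} with bandwidth $\zeta = n^{2/(4s+d)}$ as a level-$\alpha$ test achieving Type-II error at most $\beta$ uniformly over $\mathcal{S}^r_\theta(\rho)$ whenever $\rho \geq C_r n^{-2s/(4s+d)}$; Type-I control comes for free from Theorem~\ref{prop:DRPT_validity}, so the work is in the power bound. Let $q^\star_{1-\alpha}$ denote the conditional $(1-\alpha)$-quantile of $\{U(Z^{(h)})\}_{h=1}^H$ given $Z_{()}$. A two-moment argument reduces the power analysis to three estimates: (i) a bias lower bound $\mathbb{E}_P[U] \gtrsim \|\psi_r\|_2^2 - L^2 \zeta^{-2s}$; (ii) a variance bound $\mathrm{Var}_P(U) \lesssim \zeta^d/(nm)^2 + \|\psi_r\|_2^2/n$; and (iii) a permutation-quantile bound $q^\star_{1-\alpha} \lesssim_\alpha \zeta^{d/2}/(nm)$ holding with high $P$-probability.

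For step (i), I would first invoke Lemma~\ref{lemma:hat_lambda} to replace $\hat{\lambda}$ by $\lambda_0$ at the cost of lower-order fluctuations, so that $U$ behaves as a two-sample U-statistic with kernel $k_\zeta$ whose population analogue is proportional to $\iint \varphi_\zeta(x-y)\,\psi_r(x)\,\psi_r(y)\,dx\,dy$. By Plancherel's theorem and the Fourier-domain characterisation of $\mathcal{S}_d^s(L)$, a standard convolution calculation yields $|\mathbb{E}_P[U] - c_0\|\psi_r\|_2^2| \lesssim L^2 \zeta^{-2s}$ for a positive constant $c_0$ depending only on $K$ and the limiting sample-size ratio $\tau$. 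Step (ii) follows from the Hoeffding decomposition of a two-sample U-statistic, using the uniform bounds $c \leq r \leq C$, $\|f\|_\infty, \|g\|_\infty \leq M$, and $K \in L^2(\mathbb{R}) \cap L^4(\mathbb{R})$ to control each cumulant.

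The main obstacle is step (iii): the copies $Z^{(h)}$ are produced by the non-uniform MCMC of Algorithm~\ref{alg:star_algo} rather than by uniform permutations, so classical moment bounds for permutation statistics do not directly apply. Since initialising Algorithm~\ref{alg:pairwise_sampler} at the stationary distribution~\eqref{eq:sampling_2} preserves it, each permuted copy is marginally distributed as $Z_\sigma$ with $\sigma$ drawn from~\eqref{eq:sampling_1}, and it suffices to bound $\mathbb{E}[U(Z_\sigma)^2 \mid Z_{()}]$ on an event of high $P$-probability. I would adapt the one-step drift argument sketched after Lemma~\ref{lemma:convergence_in_d}: decompose the centred statistic $V_t := U(Z_{\sigma_t}) - \mathbb{E}[U(Z_{\sigma_t}) \mid Z_{()}]$ into pairwise contributions, and use the alternative proposal probability $\tilde{p}_{i_k,j_k}^t$ in~\eqref{eq:tilde_p} to linearise the one-step increment and obtain a drift inequality $\mathbb{E}[V_{t+1}^2 \mid Z_{()}, \sigma_t] - V_t^2 \leq A - B\,V_t^2$ with $A \lesssim \zeta^d/(nm)^2$ and $B$ bounded below by a positive constant. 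Stationarity forces the left-hand side to have zero mean, giving $\mathbb{E}[V^2 \mid Z_{()}] \lesssim A/B$, and Markov's inequality then delivers the required quantile bound.

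Combining (i)--(iii) via Chebyshev's inequality and choosing $\zeta = n^{2/(4s+d)}$ balances the bias term $\zeta^{-2s}$ against the permutation standard deviation $\zeta^{d/2}/n$, yielding the advertised separation $\rho^*_r \lesssim n^{-2s/(4s+d)}$. All constants depend only on $c$, $C$, $\theta$, $\alpha$ and $\beta$, so they can be absorbed into a single $C_r$, completing the proof provided $H > \lceil 1/\alpha - 1 \rceil$ so that the empirical quantile is nontrivial.
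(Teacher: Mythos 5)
Your proposal follows essentially the same strategy as the paper's proof: bound the permuted statistic via the stationary one-step drift argument using $\tilde p^t_{i_k,j_k}$, handle the randomness of $\hat\lambda$ via Lemma~\ref{lemma:hat_lambda} (the paper does this with a second-order Taylor expansion of the $\hat\lambda$-dependent statistic around $\lambda_0$), bound $\mathrm{Var}_P(U)$ via a Hoeffding-type decomposition, relate $\mathrm{MMD}^2_{r,k_\zeta}$ to $\|\psi_r\|_2^2$ through the convolution/Sobolev identity, and optimise $\zeta = n^{2/(4s+d)}$. The paper replaces your conditional-quantile formulation with an unconditional two-moment criterion for $U_\sigma - U_{\mathrm{id}}$ verified by a double application of Markov's inequality, which avoids having to control $\mathbb{E}[U_\sigma^2\mid Z_{()}]$ on a high-probability event and also makes it unnecessary to centre the drift: the paper bounds $\mathbb{E}[U_\sigma^2]$ directly, which controls both the mean and the variance of $U_\sigma$ at once.

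Two quantitative slips in your intermediate estimates deserve flagging, even though they do not affect the final balancing (which you state correctly). First, the claim $\mathrm{Var}_P(U) \lesssim \zeta^d/(nm)^2 + \|\psi_r\|_2^2/n$ should read $\zeta^d/(nm) + \cdots$, i.e.\ $\zeta^d/n^2$ up to constants under $n\asymp m$; the squared $(nm)^2$ would be off by a factor of $n^2$ and is internally inconsistent with the standard deviation $\zeta^{d/2}/n$ you later use in the balancing. Second, in the drift inequality $\mathbb{E}[V_{t+1}^2\mid Z_{()},\sigma_t] - V_t^2 \leq A - B V_t^2$ the coefficient $B$ is not bounded below by a constant: the one-step transition selects a pair with probability $1/(nm)$, so $B \asymp 1/m$ and $A \asymp \zeta^d/(n^2 m)$, giving $\mathbb{E}[V^2] \lesssim A/B \asymp \zeta^d/n^2$ after stationarity is invoked. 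Moreover, the paper's drift relation is not of this clean quadratic-drift form: one branch of the resulting bound involves $\sqrt{\mathbb{E}[U_\sigma^2]\,\zeta^d}$, so the argument proceeds by case analysis rather than the simple $\mathbb{E}[V^2]\le A/B$ conclusion. None of this changes the strategy or the conclusion, but if you write it out you will need to keep careful track of these factors of $n$ and $m$.
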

Regarding optimality, a matching lower bound has been established for the classical two-sample testing problem over Sobolev balls,  showing that there exists a constant $c_1 = c_1(\theta, \alpha, \beta)$ such that $\rho^*_{r \equiv 1} \;\geq\; c_1\,n^{-2s/(4s + d)}$; for example, see Theorems 3 and 5 in~\citet[][]{LiYuanJMLROptimal}. In our setting, the minimax separation rate $\rho_r^*$ will typically depend on the specific choice of $r$. Nevertheless, we derive the following matching lower bound under the assumptions that $r$ is bounded above and below. 

\begin{thm}\label{thm:LB_minimax}
 Fix $\alpha, \beta \in (0,1)$ such that $\alpha + \beta < 1$ and suppose that $s \in \mathbb{N}_+$. There exists $c_1 = c_1(c, C, \theta, \alpha, \beta) > 0$ such that for all $r: \mathcal{X} \to \mathbb{R}_+$ satisfying $0 < c \leq r(\cdot) \leq C$, we have $\rho^*_r \geq c_1 \,  n^{-2 s /(4 s+d)}$.
\end{thm}
When $r$ is uniformly bounded, the combined results of Theorems \ref{thm:UB_minimax} and \ref{thm:LB_minimax} demonstrate that $\rho^*_r \asymp n^{-2 s /(4 s+d)}$, confirming that Algorithm~\ref{alg:star_algo} with test statistic \eqref{eq:U_stat} achieves optimality. Future work could study the dependence of $\rho_r^*$ on $r$, though by analogy with the goodness-of-fit testing problem for densities, this is likely to be technically demanding. Indeed, it is known~\citep[e.g.][]{balakrishnan2019hypothesis} that minimax rates of convergence for testing the null hypothesis that $f=f_0$ given $X_1,\ldots,X_n \overset{\mathrm{i.i.d.}}{\sim} f$ intricately depend on the specific choice of $f_0$, with the hardest version of the problem being when $f_0$ is a uniform density. Although the simulations in Section \ref{sec:simulSynthetic} and the results for binary data in Appendix \ref{appenidix:discreteDRPT} seem to suggest that the hardest shifts to test are those closer to $r \equiv 1$, it is still unclear if this is the case in full generality. Here, hardness is meant in the information-theoretic sense; the impact of $r$ on the computational runtime of our procedures is discussed in Remark~\ref{lemma:runningChainLongerEnough}.

Finally, we note that the assumption $s \in \mathbb{N}_+$ is common in the statistical literature \citep{LiYuanJMLROptimal}, and plays a critical role in the lower bound construction. In particular, it ensures that the bump functions introduced in Lemma \ref{lem:smooth_bumps_measurable_r} in Appendix~\ref{appendix:proof3.1} are orthogonal with respect to the Sobolev inner product $\langle \phi_1, \phi_2 \rangle_{\mathcal{S}^s_d}:=\int_{\mathbb{R}^d} \|\xi\|_2^{2s} \, \widehat{\phi_1}(\xi) \, \overline{\widehat{\phi_2}(\xi)}\,d\xi$. This orthogonality holds when $s \in \mathbb{N}_+$ because the bump functions are supported on disjoint sets. However, this argument breaks down for general $s > 0$, as disjoint support no longer guarantees orthogonality in the Sobolev inner product. Extending the construction to non-integer smoothness may still be possible using techniques from \cite{butucea2007goodness} and~\cite{meynaoui2019adaptive}.

\section{Extensions to unknown density ratios}\label{sec:family_shift}
We move beyond the case of a simple null, and consider the harder problem of testing
\begin{align}\label{eq:H_0Family}
    H_0^\mathcal{R}: \text{there exists  } r_\star \in  \mathcal{R} \text{ such that } g \propto r_\star \, f,
\end{align}
where $\mathcal{R}$ is a suitable function class. If $\mathcal{R}$ is compact in a topology with respect to which $r \mapsto \|\psi_r\|_2$ is continuous, this is equivalent to saying that $\inf_{r \in \cal R} \|\psi_r\|_2 = 0$. We will focus on cases where $\cal R$ is either a parametric family or a class whose members act only on a proper subset of the variables. In the former case,~\eqref{eq:H_0Family} can serve as an important preliminary tool for model selection. For instance, it is common to assume that the density ratio belongs to a specific parametric family of functions \citep{Qin1998InferencesFC, sugiyamaDensityRatioAnalysis}, with maximum likelihood estimation used for inference. See \cite{Qin1998InferencesFC} and the references therein for a comprehensive overview of applications that rely on this modelling assumption.

Furthermore, when $\mathcal R$ consists of functions acting on a proper subset of variables, i.e.~$r(x)=\tilde r(x_S)$ for $S\subset [d]$, \eqref{eq:H_0Family} specializes to tests for structured shifts such as marginal or conditional changes, thus providing useful diagnostics for practitioners. As a concrete example in this setting, we consider the  conditional two-sample testing problem \citep[e.g.,][]{kim2024conditional}, where we observe two independent samples $(X_i^{(1)},Y_i^{(1)})_{i=1}^{n_1}$ drawn independently from a joint density $f_{XY}^{(1)}$, and $(X_i^{(2)},Y_i^{(2)})_{i=1}^{n_2}$ drawn independently from $f_{XY}^{(2)}$, and we test the equality of conditional densities $f_{Y|X}^{(1)}(y\mid x)=f_{Y|X}^{(2)}(y\mid x)$ for $f_X^{(1)}$-almost every $x$. Because this is equivalent to $ f_{XY}^{(2)}(x,y)= (f_X^{(2)}(x) /f_X^{(1)}(x)) \,f_{XY}^{(1)}(x,y)$, conditional two-sample testing can be cast as a hypothesis test on density ratios. Since the marginal density ratio is unknown and must be estimated in practice, this yields an instance of the composite problem~\eqref{eq:H_0Family}. Even if this ratio were known, as would be the case in a trial with perfect compliance~\citep[e.g.][]{lei2021conformal} or a Model-X setting~\citep{candes2016panning}, the resulting test would reduce to the simpler null \eqref{eq:testing_problem}, which remains non-trivial and can be addressed via Algorithm~\ref{alg:star_algo}. 

To tackle \eqref{eq:H_0Family}, we first estimate the density ratio on a training sample using density-ratio estimation methods \citep[][]{sugiyamaKuLSIF-statistical,Sugiyama2012bregman}, and then feed this estimate into Algorithm~\ref{alg:star_algo}, which is run on an independent sample. At the testing stage, however, plugging an estimated ratio into Algorithm~\ref{alg:star_algo} can inflate the type~I error. In this regard, we show that if $H_0^\mathcal{R}$ is satisfied and $\hat{r}$ is a good approximation of the true $r_{\star}$, then the excess type~I error of Algorithm~\ref{alg:star_algo} is bounded by the total variation distance between the product measures of the normalized versions of $\hat{r} \, f$ and $r_{\star} \, f$. For any two distributions $P_1, P_2$ defined on the same probability space, the total variation distance is defined as $\operatorname{TV}\left(P_1, P_2\right)=\sup _A\left|P_1(A)-P_2(A)\right|$, where the supremum is taken over all measurable sets. In the result below, we assume that both the approximation $\hat{r}$ of the true $r_\star$ and the test statistic $T: \mathcal{X}^{n+m} \to \mathbb{R}$ are deterministic, meaning that they are selected independently of $Z$.

\begin{prop}\label{prop:robustness}
  Assume $H_0^\mathcal{R}$ is true, and let $r_\star$ be such that $g \propto r_\star \, f$. Write $\hat r$ for an approximation of $r_\star$, and define  $p_{\hat{r}}$ to the output of Algorithm \ref{alg:star_algo} based on $\hat r$ with fixed parameters $S \geq 1, H \geq 1$. Calling $\check{r} = \left(\int \hat{r} f  d\mu\right)^{-1} \hat{r}$, $\bar{r}_\star = \left(\int r_\star f d\mu \right)^{-1} r_\star$, for all $\alpha \in [0,1]$ we have $\mathbb{P}\{p_{\hat r} \leq \alpha \} \leq \alpha + \operatorname{TV}\left(\{\check{r} \, f\}^{\otimes{m}}, \{\bar{r}_{\star} \, f\}^{\otimes{m}} \right)$.
\end{prop}

For Proposition~\ref{prop:robustness} to have practical implications, we need $\operatorname{TV}\left(\{\check{r} \, f\}^{\otimes{(n-N)}}, \{\bar{r}_{\star} \, f\}^{\otimes{(n-N)}} \right) \overset{\mathbb P}{\to} 0$. We discuss conditions under which this holds for conditional two-sample testing in Example~\ref{ex:DRE} in Appendix~\ref{app:additionalResults}, where we construct separate training and test samples via sample splitting. As in Theorem~4 of \citet{berrett2020conditional}, however, Proposition~\ref{prop:robustness} provides a worst-case guarantee taken over an arbitrary test statistic~$T$, which may be chosen adversarially to maximize sensitivity to errors in estimating $r_\star$. In practice, for the concrete choices of $T$ considered in Subsection~\ref{sec:shifted maximum mean discrepancy}, the procedure appears substantially more robust than this worst-case bound suggests; see Figure~\ref{fig:combined!}(c) for supporting simulations. Nonetheless, without structural assumptions ensuring that the total variation term is negligible, Proposition~\ref{prop:robustness} does not yield universal finite-sample validity. In the case of conditional two-sample testing this limitation is intrinsic: Theorem 1 in \citet[][]{kim2024conditional} shows that any valid conditional two-sample test has no power against any alternative, echoing the hardness result of \cite{shah2020hardness} for conditional independence testing. Thus, in this setting the lack of universal validity is not a deficiency of our method, but a consequence of the problem’s inherent difficulty. Whether analogous hardness phenomena hold for other instances of~\eqref{eq:H_0Family} remains an open question.

For power results, we can use Theorem~\ref{thm:general_consistency} to show that whenever~\eqref{eq:H_0Family} does not hold, the test given in Algorithm~\ref{alg:star_algo} and based on an approximation $\hat r \in \cal R$ has high power, provided that the key moment condition of Theorem~\ref{thm:general_consistency} is satisfied.
\begin{prop}\label{prop:alternativeRobustness2}
Assume that $\hat r \in \mathcal R$ satisfies the moment condition of Theorem~\ref{thm:general_consistency} with $\hat r$ in place of $r$, and that the remaining conditions hold. Write $p_{\hat r}$ for the output of Algorithm~\ref{alg:star_algo} with parameters $S \geq 1, H \geq 1$, run with $\hat r$ and based on~\eqref{eq:empricalMMD}. If~\eqref{eq:H_0Family} does not hold, for all $\alpha \in (0,1)$ we have $\mathbb{P}\{p_{\hat{r}} \leq \alpha \} \to 1$~as $n,m \to \infty$.
\end{prop}
This result is an immediate consequence of Theorem~\ref{thm:general_consistency}. In particular, if~\eqref{eq:H_0Family} does not hold, no element of $\cal R$ is proportional to $g/f$. Therefore, applying Theorem~\ref{thm:general_consistency} to the test based on $\hat r \in \cal R$ shows that our procedure has high power whenever the theorem’s assumptions are met. In practice, if $\hat r$ is an estimate based on a held out sample that satisfies $\mathbb{P}( \mathbb{E}[\left\{  \hat r(W) \right\}^2 \mid \hat{r} \, ] \,\mathbb{E}[\left\{  \hat r(W) \right\}^{-2} \mid \hat{r}  \, ]  \leq V_0 ) \geq 1 - \eta$ for $\eta \in (0,1)$, $V_0 \geq 1$ and $W \sim n(n+m)^{-1} f + m(n+m)^{-1} g$, we can use the previous result to show that for all $\alpha \in (0,1)$ we have $\liminf_{n,m \to \infty}\mathbb{P}\{p_{\hat{r}} \leq \alpha \} \geq 1 - \eta$. Now, the applicability of Proposition~\ref{prop:alternativeRobustness2} simply requires that $\hat r$ lies in $\cal R$ and satisfies the two-sided second-moment condition. If we also assume that $\hat r$ is close to a suitable $r_\star \in \cal R$, we can bound the excess type II error in finite samples. This is based on the following result. 
\begin{prop}\label{prop:alternativeRobustness}
    Write $\hat r$ for a deterministic approximation of~$r_\star$, and define $\eta(Z) := \max_{i \in [n+m]} |\log \hat r(Z_i) - \log r_\star (Z_i)|$. Let $p_{\hat{r}}$ denote the p-value in~\eqref{eq:pvalue_DRPT} computed from
    $(Z_{\hat{r}}^{(1)}, \ldots, Z_{\hat{r}}^{(H)})$, where $Z_{\hat{r}}^{(h)} := Z_{\sigma^{(h)}}$ and $(\sigma^{(1)}, \ldots, \sigma^{(H)})$ are i.i.d.\ samples from~\eqref{eq:sampling_1} with $\hat{r}$ used as the density ratio. Define $p_{r_\star}$ analogously, replacing $\hat r$ by $r_\star$ throughout. For all $\alpha \in (0,1)$ we have $\mathbb{P}\{p_{\hat r} \leq \alpha\} 
        \;\geq\; 
        \mathbb{P}\{p_{r_\star} \leq \alpha\} - H\,\mathbb{E}_Z[\sqrt{1 - \exp\{-2 m \eta(Z)\}} \, ]$.
\end{prop}
Proposition~\ref{prop:alternativeRobustness} immediately leads to the consistency of our procedure provided that $r_\star$ satisfies $\|\psi_{r_\star}\|_2 > 0$ and the assumptions of Theorem~\ref{thm:general_consistency}, and the estimator $\hat r$, computed on an held-out training data of size $N_\mathrm{train}$, is such that $H\,\mathbb{E}_Z[\sqrt{1 - e^{-2 m \eta(Z)}} \mid \hat r \,] \xrightarrow{\mathbb{P}}~0$. For example, for a univariate logistic regression model with Gaussian data we expect $\mathbb{E}_Z[\sqrt{\eta(Z)} \mid \hat r \,] \lesssim \log^{1/4}(n+m) \, N_\mathrm{train}^{-1/4}$ with high probability, suggesting that the error term vanish as long as $N_\mathrm{train} \gg H^4 m^2 \log(n+m)$. The assumption that $\hat r$ remains close to~$r_\star$ even when the model~$\mathcal R$ is misspecified parallels existing theory on density-ratio estimation, which ensures that, under suitable regularity conditions, density-ratio estimators converge to the projection of the true density ratio $g/f$ onto $\mathcal R$ with respect to an appropriate divergence~\citep{sugiyamaDensityRatioAnalysis, Sugiyama2012bregman}. Analogous consistency guarantees continue to hold when the $p$-values are
computed via Algorithm~\ref{alg:star_algo}. More importantly, under additional smoothness assumptions on both the null and the alternative, Propositions~\ref{prop:robustness} and~\ref{prop:alternativeRobustness} provide finite-sample control of the excess type~I and type~II errors for the composite null~\eqref{eq:H_0Family}; combining this with our finite-sample guarantees available for the simple null, such as those in Section~\ref{sec:optimality}, can lead to stronger non-asymptotic power results. This general approach was adopted in \cite{zheng2025generativeconditionaldistributionequality} for the conditional two-sample testing problem.

\section{Simulations and real-data applications}\label{sec:simul}
\subsection{Synthetic data}\label{sec:simulSynthetic}
Code and dataset access for reproducing the simulations are available at \url{https://github.com/abordino/DRPT}. An implementation of Algorithm~\ref{alg:star_algo} is provided in the R-package \texttt{DRPT} \citep{R-DRPT}. We now define variants of Algorithm~\ref{alg:star_algo}, each with $S=50$ and $H=99$, that differ in their choice of test statistic, as well as classical permutation tests applied to resampled data. (E1) corresponds to Algorithm~\ref{alg:star_algo} with the empirical shifted maximum mean discrepancy~\eqref{eq:U_stat} and Gaussian kernel $k(x,y) = \zeta^d e^{-\zeta^2 \|x- y\|_2^2}$; the bandwidth $\zeta$ is chosen via the median heuristic. (E2) uses rejection sampling on $X_{1:n}$ to obtain $W_{1:n'} \sim \bar r f$, then applies a uniform permutation test based on the classical maximum mean discrepancy with Gaussian kernel on $(W_{1:n'}, Y_{1:m})$, using the median heuristic for the bandwidth.  (E3) corresponds to Algorithm~\ref{alg:star_algo} with V-statistic~\eqref{eq:V-stat} and collision kernel $k(x,y)=\sum_{j=0}^J \mathbbm{1}\{x=j\}\mathbbm{1}\{y=j\}$ for fixed $J \geq 1$.

In this section, we empirically validate the performance of Algorithm~\ref{alg:star_algo} on synthetic data. Fig.~\ref{fig:combined!}(a) shows the performance of Algorithm~\ref{alg:star_algo} in the case of bivariate data. Here, we let $P_q$ be an absolutely continuous distribution on $[0,1]$ with density $q(x) = 2x$, and choose $P_f = \operatorname{Unif}([0,1]^2)$ and $P_g = (1+\eta)^{-1}P_q^{\otimes 2} +  \eta (1+\eta)^{-1}\operatorname{Beta}(\frac{1}{2},\frac{1}{2})^{\otimes{2}}$ for $\eta \in \{0, \ldots, 0.80\}$. When $\eta = 0$, $g$ satisfies the null hypothesis with $r(x, y) = 4xy$, while larger values of $\eta$ correspond to greater departures from the null. The purple curve corresponds to (E1), and we compare it against (E2), shown in green. We also consider a similar setting, shown in orange and blue, where we fix $P_f = \operatorname{Unif}([0,1]^2)$ and $P_g = (1+\eta)^{-1}P_q \otimes \operatorname{Unif}([0,1]) +  \eta (1+\eta)^{-1}\operatorname{Beta}(\frac{1}{2},\frac{1}{2})^{\otimes{2}}$; this implies that the null is satisfied for $r^\prime(x,y) = 2x$ when $\eta = 0$. We fix $n=m=150$ and repeat each test $500$ times in each setting. Algorithm~\ref{alg:star_algo} demonstrates superior performance compared to the rejection sampling-based procedure, which is to be expected due to the reduction in effective sample size inherent in such resampling methods. This justifies our choice to forego rejection-sampling schemes and utilize the full sample directly. Furthermore, less extreme shifts, such as $r^\prime$, seem harder to test, as Algorithm~\ref{alg:star_algo} shows greater power with more peaked density ratios; this trend is also reflected in the empirical results for binary data in Appendix~\ref{appenidix:discreteDRPT}.

In the case of binary data, we fix $n = 100$ and $m \in \{100, 200, 500, 2000\}$, $r = (1,3)$ and choose $f =  \operatorname{Ber}(1/2)$ and $g =  \operatorname{Ber}\left(3(1-\eta)/4+ \eta/4 \right)$ with $\eta \in \{0, \ldots, 0.90\}$.  The null hypothesis holds for $\eta = 0$, with increasing values of $\eta$ representing greater deviations from the null. We run (E3) $5000$ times in each setting and report the average decision. The results, illustrated in Fig.~\ref{fig:combined!}(b), indicate that our methodology performs well even with imbalanced sample sizes, achieving greatest power in the most imbalanced case, which corresponds to the largest sample sizes in this setting.

We also empirically verify that Proposition~\ref{prop:robustness} characterizes a worst-case scenario, while in practice the type~I error increases only mildly when $\hat{r}$ is sufficiently accurate and $T$ is chosen appropriately. Let $\phi$ denote the density of a standard Gaussian and $\Phi$ its cumulative distribution function, and set $f(x) = \phi(x)$ and $g(x) = \phi(x - \mu)$ with $\mu > 0$ so that $r_\star = \exp\{\mu x  - \mu^2/2\}$. If we run Algorithm~\ref{alg:star_algo} using $\hat r(x) = \exp\{\nu x  - \nu^2/2\}$ with $\nu >0$, meaning that the mean is misspecified, theory predicts that the excess type~I error is upper bounded by $\operatorname{TV}(\mathcal{N}(\mu, 1)^{\otimes m}, \mathcal{N}(\nu, 1)^{\otimes m}) = 2\Phi(\sqrt{m} \,|\mu - \nu|/2) - 1$. An approximation of this quantity when $n = m = 250$, $\mu = 1$, $\nu \in \mu + \{0.05, 0.10, 0.15, 0.20, 0.25\}$ is shown in red in Fig.~\ref{fig:combined!}(c). In contrast, running (E1) under the same setup and using the average decision over $300$ repetitions in each setting to estimate the size yields a much smaller inflation of the type~I error in practice, as shown by the purple dots. 

Finally, we study a scenario motivated by propensity weighting in causal inference. Let covariates $Z \sim \mathcal{N}(0, \Sigma)$ with $\Sigma \in \mathbb{R}^{10 \times 10}$, and assign treatment $D \in \{0,1\}$ according to $\mathbb{P}(D = 1 \mid Z) = \{1+\exp(\eta_\gamma(Z))\}^{-1}$ with $\eta_\gamma(z) = \beta_0 + \beta^\top z + \gamma\{\sin(10 z_1) + z_2 z_3\}$ and $\gamma \in \{0, 0.25, 0.5, 1, 2\}$. Writing $f(z)$ and $g(z)$ for the conditional densities of $Z$ given $D = 0$ and $D=1$, respectively, and $\pi := \mathbb{P}(D = 1) \in (0,1)$, the density ratio is $r(z) := g(z)/f(z) = \frac{1-\pi}{\pi}\exp\{\eta_\gamma(z)\}$. In medical applications, such ratios are often used to reweight untreated patients when estimating causal effects. For example, let $n_0 := \#\{i : D_i = 0\}$ and $n_1 := \#\{i : D_i = 1\}$, and assume well-defined potential outcomes $(Y(0), Y(1))$, $Y = D Y(1) + (1-D) Y(0)$, and $(Y(0), Y(1)) \perp \!\!\!  \perp D \mid Z$. Then $n_1^{-1} \sum_{i : D_i = 1} Y_i - n_0^{-1} \sum_{i : D_i = 0} r(Z_i) Y_i$ is an unbiased estimator of the average treatment effect on the treated, $\mathbb{E}[Y(1) - Y(0) \mid D = 1]$. In the simulation study, we estimate $r$ for each $\gamma \in \{0, 0.25, 0.5, 1, 2\}$ using linear logistic or kernel logistic regression \citep[e.g.][]{Sugiyama2010review} on a training sample of size $1000$, apply (E1) to an independent test sample of size $200$, repeat this $200$ times, and report the average rejection rate as an estimate of power. The results in Fig.~\ref{fig:combined_2}(a) show that kernel logistic regression performs well for all values of $\gamma$, as expected from its flexibility, whereas linear logistic regression is misspecified for $\gamma > 0$, and our test has high power in detecting the inadequacy of this estimator.

\begin{figure}
    \centering
    \includegraphics[width=1\linewidth]{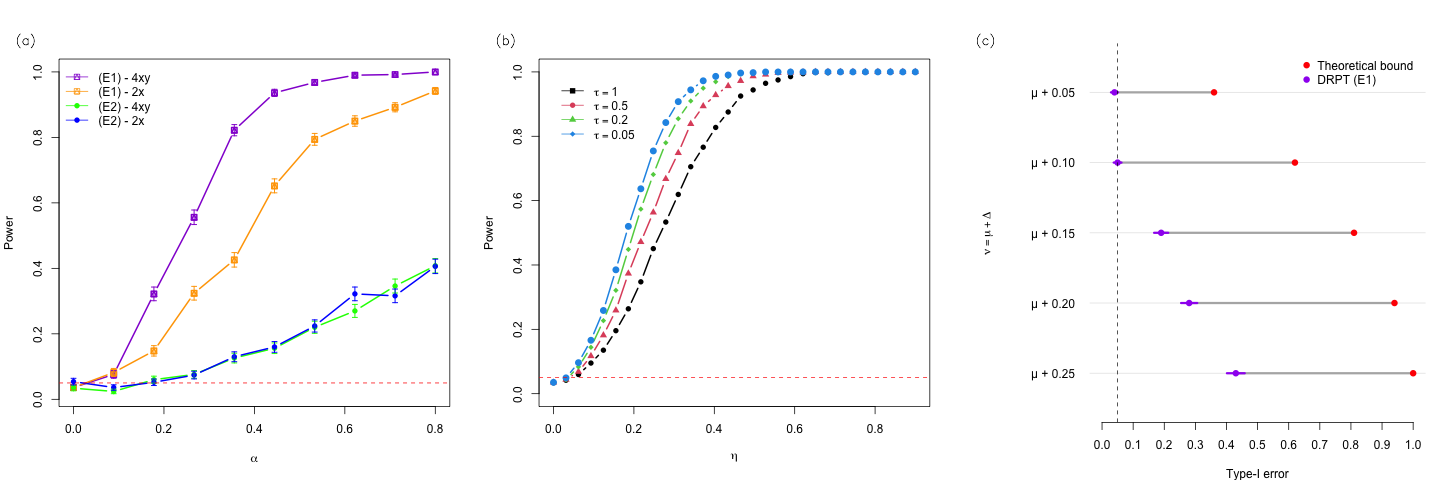}
    \caption{(a) Empirical power of (E1) for bivariate data on the unit square: purple, $r(x,y)=4xy$; orange, $r'(x,y)=2x$. Green and blue: alternative approaches based on (E2). (b) Estimated power of (E3) for binary data with varying sample sizes and $r=(1,3)$; $\tau := n/m$. (c) Empirical validation of Proposition~\ref{prop:robustness} for Gaussian data with misspecified mean: red, theoretical bound $\alpha + \operatorname{TV}(\mathcal{N}(\mu,1)^{\otimes m}, \mathcal{N}(\nu,1)^{\otimes m}) = \alpha + 2\Phi(\sqrt{m}|\mu-\nu|/2) - 1$; purple, estimate of type~I error over $300$ runs of (E1) based on $\hat r$. In all panels, error bars denote $\pm 1$ standard error.}
    \label{fig:combined!}
\end{figure}

\subsection{Real-world application}\label{sec:simulReal}
\subsubsection{New York-frisk and Stroop-effect datasets}\label{sec:simul_stroop}
When working with simple classes of distributional shifts, our hypothesis tests can be inverted in the usual way to furnish confidence sets, offering useful insights to practitioners. Formally, writing $C_\alpha$ for the set of all functions $r$ that are not rejected by our test at level $\alpha$, we have $\mathbb{P}(g/f \in C_\alpha) \geq 1-\alpha$. To illustrate this, we consider two practical scenarios, starting with the New York-frisk dataset for the years 2011 and 2012, which contains detailed records of police stop-and-frisk encounters. Restricting attention to stops labelled ``criminal possession of a weapon'' and using a binary indicator for whether any weapon was found, we partition the sample into $X_i$'s, Black individuals with $n_B=82 \, 626$, and $Y_i$'s, White individuals with $n_W=6 \, 383$. A conjecture in the literature, illustrated in Fig.~1(b) of \cite{goel2016precinct} and recalled in Section 8.1 in \cite{Koh2020WILDSAB}, suggests that Black individuals are approximately five times less likely to possess a weapon compared to White individuals. We examine this claim using (E3) and test
$w_1/w_0 = r b_1/b_0$ for $r \in \{0.1, \ldots, 7\}$, where $w_1$ and~$b_1$ denote the probabilities of weapon possession for White and Black individuals respectively. The blue curve in Fig.~\ref{fig:combined_2}(b) shows that (E3) does not reject for $r\in[5,6]$, consistent with Black individuals being about five times less likely than White individuals to be found carrying a weapon conditional on being frisked. Repeating the analysis for the years 2015–2016, where $n_W^\prime {=}138$ and $n_B^\prime{=}2\,298$, (E3) does not reject for $r \in [2,4]$, potentially suggesting a change in practices over time. To illustrate the informativeness of confidence intervals based on Algorithm~\ref{alg:star_algo}, we also report Wald-type confidence intervals computed as follows. Define $z_{1-\alpha/2}$ to be the $(1-\alpha/2)$ quantile of the standard normal distribution and $\operatorname{logit}(p):=\log\{p/(1-p)\}$ for $p\in(0,1)$. Let $\hat w_1 := n_W^{-1}\sum_i Y_i$ with $\sqrt{n_W}(\hat w_1-w_1)\xrightarrow{d} \mathcal{N}\!\big(0,(1-w_1)w_1\big)$, so by the delta method $\operatorname{Var}\{\operatorname{logit}(\hat w_1)\}\approx\{n_W w_1(1-w_1)\}^{-1}\approx\{n_W\hat w_1\}^{-1}+\{n_W(1-\hat w_1)\}^{-1}$. Similarly, for $\hat b_1:= n_B^{-1}\sum_i X_i$, we have $\operatorname{Var}\{\operatorname{logit}(\hat b_1)\}\approx\{n_B\hat b_1\}^{-1}+\{n_B(1-\hat b_1)\}^{-1}$. Since $\log r=\operatorname{logit}(w_1)-\operatorname{logit}(b_1)$, we estimate $\widehat{\log r}:=\operatorname{logit}(\hat w_1)-\operatorname{logit}(\hat b_1)$ and form the Wald confidence interval  $\widehat{\log r}\pm z_{1-\alpha/2}\sqrt{\{n_W\hat w_1\}^{-1}+\{n_W(1-\hat w_1)\}^{-1}+\{n_B\hat b_1\}^{-1}+\{n_B(1-\hat b_1)\}^{-1}}$. Exponentiating yields the intervals for $r$ shown in the figure; its asymmetry reflects the log-scale construction. We see that the Wald intervals essentially coincide with ours in this simple setting, confirming the practical utility of our test.

\begin{figure}
    \centering
    \includegraphics[width=1\linewidth]{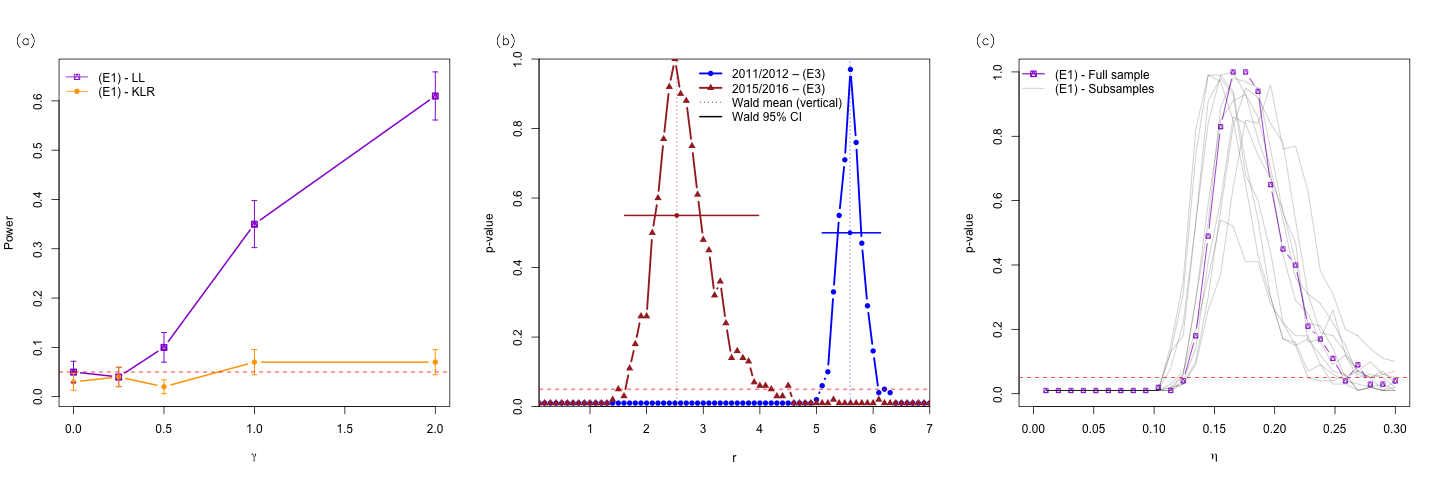}\caption{%
(a) Empirical power of (E1) in the causal inference setting, using estimates of the true density ratio
$r(z) = \frac{1-\pi}{\pi}\exp\{\beta_0 + \beta^\top z + \gamma(\sin(10 z_1) + z_2 z_3)\}$
obtained via linear logistic (LL) and kernel logistic (KLR) regression; error bars show $\pm 1$ standard error.
(b) $p$-values of (E3) for testing $w_1/w_0 = r\, b_1/b_0$ for $r \in \{0.1,\ldots,7\}$ on the New York-frisk
datasets for 2011/2012 (blue) and 2015/2016 (brown); horizontal lines give Wald-type confidence intervals.
(c) $p$-values of (E1) for testing $g(y)\propto e^{y/\eta} f(y)$ for $\eta \in \{0.01,\dots,0.3\}$ on the Stroop data;
sensitivity analysis over $10$ random subsamples of size $n=m=100$.}
    \label{fig:combined_2}
\end{figure}

In a similar spirit, we also consider an application dealing with the well-known Stroop effect \citep{stroop1935interference}. This refers to the cognitive interference observed when an individual attempts to name the colour of the ink of a word that spells a different colour, compared to naming the colour of the ink when the word and ink colour match. Typically, reaction times and error rates differ significantly between these congruent and incongruent conditions, indicating distinct underlying distributions for concordant and non-concordant stimuli. The data set consists of observations from 131 individuals, each with two recorded values: $X$, representing the time taken to name a set of concordant pairs and $Y$, the time taken to name an equal number of discordant pairs. We standardize the data, and test the hypothesis $g(y) \propto e^{y/\eta} f(y)$ for varying $\eta \in \{0.01, \ldots, 0.30\}$ using (E1). The results in Fig.~\ref{fig:combined_2}(c) show that there are values for which such modelling assumptions cannot be ruled out, especially in a neighbourhood of $0.20$. Given the small sample size compared to the New York frisk dataset, we conduct a sensitivity analysis by running (E1) on $10$ random subsamples of size 
$n=m=100$. The resulting thin black lines align closely with the purple line corresponding to the full sample.

\subsubsection{Simulation Based Inference on the Two Moons dataset}\label{sec:simulation based inference}
We illustrate how our methodology provides a practical diagnostic for simulation based inference, yielding formal decision rules for the goodness-of-fit of posterior-to-prior/likelihood-to-evidence estimates. As noted in the Introduction, when the likelihood is intractable yet a simulator allows sampling from \(p(x\mid\theta)\), many simulation based inference pipelines estimate the ratio $ r(\theta, x):= p(\theta \mid x)/p(x) = p(x \mid \theta)/p(x) = p(\theta, x)/p(\theta) p(x)$, for example by training a classifier to distinguish pairs $(\theta,x)$ drawn from the joint $p(\theta,x)$ versus the product $p(\theta)p(x)$. We now show how to use Algorithm~\ref{alg:star_algo} to formally test the quality of estimates of~$r(\theta, x)$. Concretely, simulate $\{(\theta_i,X_i)\}_{i=1}^N \overset{\mathrm{i.i.d.}}{\sim} p(\theta,x)$ by drawing $\theta_i \sim p(\theta)$ and $X_i \mid \theta_i$ from the simulator, and simulate $\{(\tilde\theta_i,\tilde X_i)\}_{i=1}^N \overset{\mathrm{i.i.d.}}{\sim} p(\theta)p(x)$ by repeating the same two-step procedure independently and retaining one component from each run. Given an approximation $\hat r(\theta,x)$ of $r(\theta,x)$, trained on an independent split of data, we test the null $p(\theta,x) \propto \hat r(\theta,x)\,p(x)\,p(\theta)$. Applying Algorithm~\ref{alg:star_algo} only requires $\hat r(\theta,x)$ being pointwise evaluable for fixed $(\theta,x)$, in addition to having access to the simulator, which is standard in this setting.

We validate the procedure on the Two Moons Dataset, a standard simulation based inference benchmark~\citep{greenberg2019apt, lueckmann2021benchmarking}. Samples are of the form $(\theta_1, \theta_2, X_1, X_2)$, where $(\theta_1,\theta_2)^\top\sim\mathrm{Unif}([-1,1]^2)$ and  $(X_1, X_2) \mid (\theta_1, \theta_2) =\big(r\cos\alpha+0.25, \, r\sin\alpha\big)^\top+\big(-2^{-1/2}|\theta_1+\theta_2|,\, 2^{-1/2}(-\theta_1+\theta_2)\big)^\top$, with $\alpha\sim\mathrm{Unif}(-\pi/2,\pi/2)$ and $r\sim \mathcal{N}(0.1,0.01^2)$. This simulator induces a highly
non-Gaussian posterior over $(\theta_1,\theta_2)$, which is bimodal with two separated crescent-shaped modes for typical observations \citep[][Figure~7]{chattejee24}, making it a
challenging benchmark for posterior approximation methods. We consider two ratio estimators, \(\hat r_{\mathrm{NRE}}(\theta,x)\) and \(\hat r_{\mathrm{BNRE}}(\theta,x)\) \citep{Delaunoy2022BNRE}, trained on \(N_{\text{train}}\in\{2^4,2^7,2^9,2^{12},2^{15}\}\) independent samples. Details on the the training phase are provided in Appendix~\ref{appendix:trainingNRE}. For each of these, we run (E1) based on such estimates of the density ratio on $n = m = 200$ independent test samples, repeat this $300$ times per setting, and report the proportion of rejections as an estimate of power. Fig.~\ref{fig:Moon}(a) shows our test rejects the null for $\hat r_\mathrm{BNRE}$ more often, in line with its conservative, prior-leaning design that shrinks estimates toward 1, trading accuracy for coverage \citep{Delaunoy2022BNRE}. For completeness, Fig.~\ref{fig:Moon}(b) reports receiver operating characteristic curves for classifiers distinguishing $5 \cdot 10^4$ joint samples from an equal number of product-of-marginals samples weighted by $\hat r_\mathrm{NRE}$ and $\hat r_\mathrm{BNRE}$. Better estimators produce curves closer to the diagonal; no option clearly dominates, while Algorithm~\ref{alg:star_algo} offers clearer guidance in this setting.

\begin{figure}
    \centering
    \includegraphics[width=0.80\linewidth]{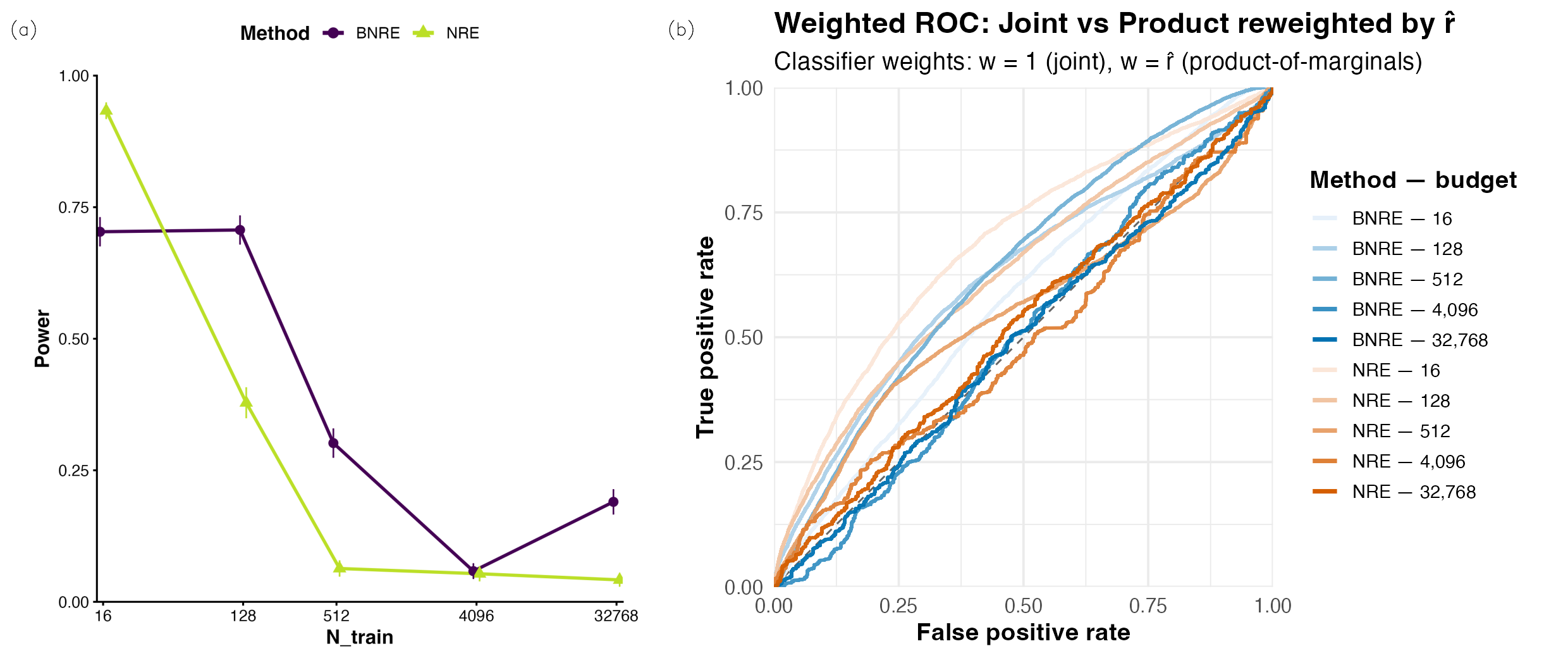}
\caption{%
(a) Empirical power of (E1) for testing
$p(\theta,x)\propto \hat r(\theta,x)\,p(x)\,p(\theta)$ with
$\hat r\in\{\hat r_\mathrm{NRE},\hat r_\mathrm{BNRE}\}$, the estimators in~\cite{Delaunoy2022BNRE}
trained on $N_\mathrm{train}\in\{2^4,2^7,2^9,2^{12},2^{15}\}$ independent samples; error bars show $\pm 1$ standard error.
(b) Receiver operating characteristic curves for classifiers distinguishing joint samples from
product-of-marginals samples weighted by $\hat r_\mathrm{NRE}$ and $\hat r_\mathrm{BNRE}$; curves closer to the diagonal indicate better estimators.%
}
    \label{fig:Moon}
\end{figure}

The same idea yields a test for an approximate posterior $\hat p(\theta\mid x)$, rather than~$\hat r(\theta,x)$, as it is sufficient to replace $\hat r(\theta,x)$ by $\hat p(\theta\mid x)/p(\theta)$. Again, Algorithm~\ref{alg:star_algo} only requires pointwise evaluation of $\hat p(\theta\mid x)$. There are many formal tests for assessing the quality of posterior approximation. Compared to these, Algorithm~\ref{alg:star_algo} avoids training a classifier on held out data \citep{LopezPazOquab2017C2ST, Linhart2023lC2ST} and does not require sampling from the approximate posterior, as in \cite{chattejee24}. While sampling from a learned posterior is often feasible, when this is impractical Algorithm~\ref{alg:star_algo} provides a viable alternative.

\subsubsection{Conditional two-sample testing on the diamonds datasets}\label{sec:simul_diamonds}
Finally, we assess the performance of Algorithm~\ref{alg:star_algo} in the conditional two-sample testing problem, as outlined in Section \ref{sec:family_shift}. Our analysis utilizes the diamonds dataset, which is available in the R-package \texttt{ggplot2}, and contains 53,490 observations with 10 features. Following \cite{kim2024conditional}, we designate the price variable as~$Y$ and use the six numerical variables (carat, depth, table, x, y, z) as predictors $X$. Prior to the analysis, we standardize both $X$ and $Y$. To introduce covariate shift, we implement biased sampling procedures: $X^{(1)}$ is sampled uniformly from the original feature space, while $X^{(2)}$ is sampled with probabilities proportional to $\exp(-x_1^2)$, with $x_1$ representing the first feature of $X$. Under the null hypothesis, the response variable $Y$ is uniformly sampled for both $Y^{(1)}$ and $Y^{(2)}$. Under the alternative hypothesis, $Y^{(1)}$ remains uniformly sampled, while $Y^{(2)}$ is sampled with probabilities proportional to $\exp(-y)$, where $y$ corresponds to the dataset's $Y$ values. We compare our method with two tests from \cite{kim2024conditional}: the single-split classifier test and the linear-time maximum mean discrepancy. More extensive simulations with additional alternative methods are provided in Appendix~\ref{appendix:extraSimul}. In our simulation, we used an 80/20 split, allocating 80\% of the total sampling budget $N \in \{200, 400, 800, 1200, 1600, 2000\}$ to marginal density ratio estimation and 20\% to testing; the other methods use an even split, as in their simulation study. Consistently with the other methods, the marginal density ratio is estimated either with the linear logistic or the kernel logistic regression. The testing phase was performed using (E1) and the average decision over $200$ repetitions is reported. All three methods control type~I error, with the possible exception of the classifier-based test, likely due to high estimation error in the classifier. The maximum mean discrepancy method matches ours in size but shows slightly lower power, as expected from its linear-time statistic, which trades power for computational~speed-up.

\begin{table}[!ht]
\centering
\caption{Results for the conditional two-sample testing problem on the diamonds dataset using (E1), compared with alternative methods. LL: linear logistic regression; KLR: kernel logistic regression; CLF: single-split classifier test; MMD-$l$: linear-time maximum mean discrepancy test. $N$ denotes the total sampling budget.}
\label{table:diamonds_main}
\scalebox{0.9}{%
\begin{tabular}{lllrrrrrr}
\toprule
\centering
Estimator & Hypothesis & Test & $N= 200$ & $N=400$ & $N=800$ & $N=1200$ & $N=1600$ & $N=2000$ \\
\midrule
LL & Null & CLF & 0.0900 & 0.0650 & 0.0750 & 0.0450 & 0.0675 & 0.0425 \\
LL & Null & MMD-$l$ & 0.0750 & 0.0650 & 0.0750 & 0.0500 & 0.0575 & 0.0575 \\
LL & Null & DRPT (E1) & 0.0600 & 0.0550 & 0.0850 & 0.0600 & 0.0500 & 0.0350 \\
\midrule
LL & Alternative & CLF & 0.1575 & 0.2050 & 0.2650 & 0.3600 & 0.3925 & 0.4800 \\
LL & Alternative & MMD-$l$ & 0.0650 & 0.0675 & 0.0850 & 0.0825 & 0.0975 & 0.0850 \\
LL & Alternative & DRPT (E1) & 0.1500 & 0.1800 & 0.4150 & 0.5450 & 0.6350 & 0.7150 \\
\midrule
KLR & Null & CLF & 0.0750 & 0.0575 & 0.0675 & 0.0350 & 0.0475 & 0.0450 \\
KLR & Null & MMD-$l$ & 0.0600 & 0.0625 & 0.0725 & 0.0575 & 0.0550 & 0.0600 \\
KLR & Null & DRPT (E1) & 0.0400 & 0.0350 & 0.0600 & 0.0500 & 0.0600 & 0.0300 \\
\midrule
KLR & Alternative & CLF & 0.0975 & 0.1525 & 0.2600 & 0.3550 & 0.3675 & 0.4450 \\
KLR & Alternative & MMD-$l$ & 0.0725 & 0.0675 & 0.0800 & 0.0900 & 0.0950 & 0.1050 \\
KLR & Alternative & DRPT (E1) & 0.1250 & 0.1750 & 0.3750 & 0.5000 & 0.6300 & 0.6650 \\
\midrule
\bottomrule
\end{tabular}
}
\end{table}

\bigskip

\section*{Acknowledgements} The research of the second author was supported by European Research Council Starting Grant 101163546.

\bibliography{bib}

@article{tenzer2022testing,
  title={{T}esting independence under biased sampling},
  author={Tenzer, Yaniv and Mandel, Micha and Zuk, Or},
  journal={Journal of the American Statistical Association},
  volume={117},
  number={540},
  pages={2194--2206},
  year={2022},
  publisher={Taylor \& Francis}
}

@article{rogozhnikov2016reweighting,
  title={{R}eweighting with boosted decision trees},
  author={Rogozhnikov, Alex},
  journal={Journal of Physics: Conference Series},
  volume={762},
  number={1},
  pages={012036},
  year={2016},
  organization={IOP Publishing}
}

@article{zhou2022PSweight,
  author = {Zhou, Tianhui and Tong, Guangyu and Li, Fan and Thomas, Laine E. and Li, Fan},
  title = {{P}{S}weight: {A}n {R} {P}ackage for {P}ropensity {S}core {W}eighting {A}nalysis},
  journal = {The R Journal},
  year = {2022},
  note = {https://doi.org/10.32614/RJ-2022-011},
  volume = {14},
  issue = {1},
  issn = {2073-4859},
  pages = {282-300}
}

@article{cox1969some,
  title={{S}ome sampling problems in technology},
  author={Cox, David R},
  journal={New Developments in Survey
Sampling },
  volume={1},
  pages={506--527},
  year={1969}
}

@article{efromovich2004density,
  title={{D}ensity estimation for biased data},
  author={Efromovich, Sam},
  journal={The Annals of Statistics},
  volume={32},
  number={3},
  pages={1137--1161},
  year={2004}
}

@article{qin1993empirical,
  title={{E}mpirical likelihood in biased sample problems},
  author={Qin, Jing},
  journal={The Annals of Statistics},
  volume={21},
  number={3},
  pages={1182--1196},
  year={1993},
  publisher={Institute of Mathematical Statistics}
}

@article{owen2000safe,
  author  = {Owen, Art B. and Zhou, Yi},
  title   = {{S}afe and {E}ffective {I}mportance {S}ampling},
  journal = {Journal of the American Statistical Association},
  volume  = {95},
  number  = {449},
  pages   = {135--143},
  year    = {2000}
}

@misc{ji2025overviewlargelanguagemodels,
      title={{A}n {O}verview of {L}arge {L}anguage {M}odels for {S}tatisticians}, 
      author={Wenlong Ji and Weizhe Yuan and Emily Getzen and Kyunghyun Cho and Michael I. Jordan and Song Mei and Jason E Weston and Weijie J. Su and Jing Xu and Linjun Zhang},
      year={2025},
      eprint={2502.17814},
      archivePrefix={arXiv},
      primaryClass={stat.ML},
      url={https://arxiv.org/abs/2502.17814}, 
}

@techreport{kahn1951estimation,
  author       = {Kahn, Herman and Harris, Theodore E.},
  title        = {{E}stimation of particle transmission by random sampling},
  institution  = {National Bureau of Standards},
  series       = {Applied Mathematics Series},
  number       = {12},
  pages        = {27--30},
  year         = {1951},
}

@incollection{storkey2009when,
  author       = {Storkey, Amos},
  title        = {{W}hen {T}raining and {T}est {S}ets {A}re {D}ifferent: {C}haracterizing {L}earning {T}ransfer},
  booktitle    = {{D}ataset {S}hift in {M}achine {L}earning},
  series       = {Neural Information Processing Series},
  pages        = {3--28},
  publisher    = {MIT Press},
  year         = {2009},
}

@article{weiss2016survey,
  author  = {Weiss, Karl and Khoshgoftaar, Taghi M. and Wang, Dingding},
  title   = {{A} {S}urvey of {T}ransfer {L}earning},
  journal = {Journal of Big Data},
  volume  = {3},
  number  = {1},
  pages   = {9},
  year    = {2016}
}

@article{tokdar2010importance,
  author  = {Tokdar, Surya T. and Kass, Robert E.},
  title   = {{I}mportance {S}ampling: {A} {R}eview},
  journal = {Wiley Interdisciplinary Reviews: Computational Statistics},
  volume  = {2},
  number  = {1},
  pages   = {54--60},
  year    = {2010}
}

@article{guan2022domainadaptation,
  author  = {Guan, Hao and Liu, Mingxia},
  title   = {{D}omain {A}daptation for {M}edical {I}mage {A}nalysis: {A} {S}urvey},
  journal = {IEEE Transactions on Biomedical Engineering},
  volume  = {69},
  number  = {3},
  pages   = {1173--1185},
  year    = {2022}
}

@article{horvitz1952generalization,
  author  = {Horvitz, D. G. and Thompson, D. J.},
  title   = {{A} generalization of sampling without replacement from a finite universe},
  journal = {Journal of the American Statistical Association},
  volume  = {47},
  number  = {260},
  pages   = {663--685},
  year    = {1952}
}

@article{Sugiyama2010review,
  author  = {Sugiyama, Masashi and Suzuki, Taiji and Kanamori, Takafumi},
  title   = {{D}ensity {R}atio {E}stimation: {A} {C}omprehensive {R}eview},
  journal = {RIMS Kokyuroku},
  year    = {2010},
  pages   = {10--31},
}

@article{ingster1987minimax,
  author    = {Yu. I. Ingster},
  title     = {{M}inimax {T}esting of {N}onparametric {H}ypotheses on a {D}istribution {D}ensity in the $L_p$ {M}etrics},
  journal   = {Theory of Probability and Its Applications},
  volume    = {31},
  number    = {2},
  pages     = {333--337},
  year      = {1987},
  publisher = {SIAM}
}

@article{butucea2007goodness,
  author       = {Cristina Butucea},
  title        = {{G}oodness-of-fit testing and quadratic functional estimation from indirect observations},
  journal      = {Annals of Statistics},
  volume       = {35},
  number       = {5},
  pages        = {1907--1930},
  year         = {2007},
  eprint       = {math/0612361},
  archivePrefix= {arXiv},
  primaryClass = {math.ST}
}

@article{goel2016precinct,
  title={{P}recinct or prejudice? {U}nderstanding racial disparities in {N}ew {Y}ork {C}ity's stop-and-frisk policy},
  author={Goel, Sharad and Rao, Justin M. and Shroff, Ravi},
  journal={The Annals of Applied Statistics},
  volume={10},
  number={1},
  pages={365--394},
  year={2016},
  publisher={Institute of Mathematical Statistics}
}

@InProceedings{pmlr-v235-prinster24a,
  title = 	 {{C}onformal {V}alidity {G}uarantees {E}xist for {A}ny {D}ata {D}istribution (and {H}ow to {F}ind {T}hem)},
  author =       {Prinster, Drew and Stanton, Samuel Don and Liu, Anqi and Saria, Suchi},
  booktitle = 	 {{P}roceedings of the 41st {I}nternational {C}onference on {M}achine {L}earning},
  pages = 	 {41086--41118},
  year = 	 {2024},
  volume = 	 {235},
  series = 	 {Proceedings of Machine Learning Research}
}

@inproceedings{Koh2020WILDSAB,
  title={{W}{I}{L}{D}{S}: {A} {B}enchmark of in-the-{W}ild {D}istribution {S}hifts},
  author={Pang Wei Koh and Shiori Sagawa and Henrik Marklund and Sang Michael Xie and Marvin Zhang and Akshay Balsubramani and Weihua Hu and Michihiro Yasunaga and Richard Lanas Phillips and Irena Gao and Tony Lee and Etiene David and Ian Stavness and Wei Guo and Berton A. Earnshaw and Imran S. Haque and Sara Meghan Beery and Jure Leskovec and Anshul Kundaje and Emma Pierson and Sergey Levine and Chelsea Finn and Percy Liang},
  booktitle={{I}nternational {C}onference on {M}achine {L}earning},
  year={2020}
}

@article{LiYuanJMLROptimal,
  author  = {Tong Li and Ming Yuan},
  title   = {{O}n the {O}ptimality of {G}aussian {K}ernel {B}ased {N}onparametric {T}ests against {S}mooth {A}lternatives},
  journal = {Journal of Machine Learning Research},
  year    = {2024},
  volume  = {25},
  number  = {334},
  pages   = {1--62}
}

@article{Nguyen2010minimax,
  title={{E}stimating {D}ivergence {F}unctionals and the {L}ikelihood {R}atio by {C}onvex {R}isk {M}inimization},
  author={XuanLong Nguyen and Martin J. Wainwright and Michael I. Jordan},
  journal={IEEE Transactions on Information Theory},
  year={2008},
  volume={56},
  pages={5847-5861}
}

@article{sugiyamaDensityRatioAnalysis,
  title={{T}heoretical {A}nalysis of {D}ensity {R}atio {E}stimation},
  author={Takafumi Kanamori and Taiji Suzuki and Masashi Sugiyama},
  journal={IEICE Transactions on Fundamentals of Electronics, Communications
and Computer Sciences},
  year={2010},
  volume={93-A},
  pages={787-798}
}

@article{sugiyamaKuLSIF-statistical,
  title={{S}tatistical analysis of kernel-based least-squares density-ratio estimation},
  author={Takafumi Kanamori and Taiji Suzuki and Masashi Sugiyama},
  journal={Machine Learning},
  year={2012},
  volume={86},
  pages={335-367}
}

@article{DRE_wainwright23,
author = {Cong Ma and Reese Pathak and Martin J. Wainwright},
title = {{Optimally tackling covariate shift in {R{K}{H}{S}}-based nonparametric regression}},
volume = {51},
journal = {The Annals of Statistics},
number = {2},
publisher = {Institute of Mathematical Statistics},
pages = {738 -- 761},
keywords = {covariate shift, kernel ridge regression, Nonparametric regression, reproducing kernel Hilbert spaces, transfer learning},
year = {2023}
}

@article{huangBiometrics,
    author = {Huang, Ming-Yueh and Qin, Jing and Huang, Chiung-Yu},
    title = {{E}fficient data integration under prior probability shift},
    journal = {Biometrics},
    volume = {80},
    number = {2},
    pages = {ujae035},
    year = {2024}
}

@article{KangNelsonDistr,
author = {Qing Kang and Paul I. Nelson},
title = {{P}ermutation tests from biased samples for the equality of two distributions},
journal = {Journal of Nonparametric Statistics},
volume = {21},
number = {3},
pages = {305--319},
year = {2009},
publisher = {Taylor \& Francis}
}

@article{chenLei24,
      title={{D}e-{B}iased {T}wo-{S}ample {U}-{S}tatistics {W}ith {A}pplication {T}o {C}onditional {D}istribution {T}esting}, 
      author={Yuchen Chen and Jing Lei},
  journal   = {Machine Learning},
  year      = {2025},
  volume    = {114},
  number    = {2},
  pages     = {33}}

@inproceedings{Delaunoy2022BNRE,
  title     = {{T}owards {R}eliable {S}imulation-{B}ased {I}nference with {B}alanced {N}eural {R}atio {E}stimation},
  author    = {Delaunoy, Arnaud and Hermans, Joeri and Rozet, Fran{\c{c}}ois and Wehenkel, Antoine and Louppe, Gilles},
  booktitle = {{A}dvances in {N}eural {I}nformation {P}rocessing {S}ystems},
  editor    = {Oh, Alice H. and Agarwal, Alekh and Belgrave, Danielle and Cho, Kyunghyun},
  year      = {2022},
  volume    = {35},
  pages     = {20025--20037},
  eprint    = {2208.13624},
  archivePrefix = {arXiv},
  primaryClass  = {stat.ML}
}

@inproceedings{LopezPazOquab2017C2ST,
  title     = {{R}evisiting {C}lassifier {T}wo-{S}ample {T}ests},
  author    = {Lopez-Paz, David and Oquab, Maxime},
  booktitle = {{I}nternational {C}onference on {L}earning {R}epresentations},
  year      = {2017},
  note      = {}
}

@misc{zheng2025generativeconditionaldistributionequality,
      title={{A} {G}enerative {C}onditional {D}istribution {E}quality {T}esting {F}ramework and {I}ts {M}inimax {A}nalysis}, 
      author={Siming Zheng and Meifang Lan and Tong Wang and Yuanyuan Lin},
      year={2025},
      eprint={2509.17729},
      archivePrefix={arXiv},
      primaryClass={cs.LG},
      url={https://arxiv.org/abs/2509.17729}, 
}

@article{Thomas2022LFIRE,
  title   = {{L}ikelihood-{F}ree {I}nference by {R}atio {E}stimation},
  author  = {Thomas, Owen and Dutta, Ritabrata and Corander, Jukka and Kaski, Samuel and Gutmann, Michael U.},
  journal = {Bayesian Analysis},
  year    = {2022},
  volume  = {17},
  number  = {1},
  pages   = {1--31},
}

@misc{chattejee24,
      title={{A} {K}ernel-{B}ased {C}onditional {T}wo-{S}ample {T}est {U}sing {N}earest {N}eighbors (with {A}pplications to {C}alibration, {R}egression {C}urves, and {S}imulation-{B}ased {I}nference)}, 
      author={Anirban Chatterjee and Ziang Niu and Bhaswar B. Bhattacharya},
      year={2024},
      eprint={2407.16550},
      archivePrefix={arXiv},
      primaryClass={stat.ME},
      url={https://arxiv.org/abs/2407.16550}, 
}

@article{Hu2020conformalJASA,
  title={{A} {T}wo-{S}ample {C}onditional {D}istribution {T}est {U}sing {C}onformal {P}rediction and {W}eighted {R}ank {S}um},
  author={Xiaoyu Hu and Jing Lei},
  journal={Journal of the American Statistical Association},
  year={2020},
  volume={119},
  pages={1136 - 1154}
}

@article{sugiyiamaKLIEPrkhs,
  title={{D}irect importance estimation for covariate shift adaptation},
  author={Masashi Sugiyama and Taiji Suzuki and Shinichi Nakajima and Hisashi Kashima and Paul von B{\"u}nau and Motoaki Kawanabe},
  journal={Annals of the Institute of Statistical Mathematics},
  year={2008},
  volume={60},
  pages={699-746}
}

@article{Sugiyama2012bregman,
  title={{D}ensity-ratio matching under the {B}regman divergence: a unified framework of density-ratio estimation},
  author={Masashi Sugiyama and Teruyuki Suzuki and Takafumi Kanamori},
  journal={Annals of the Institute of Statistical Mathematics},
  year={2012},
  volume={64},
  pages={1009-1044}
}

@article{ATLAS2025NSBI,
  author  = {{ATLAS Collaboration}},
  title   = {{A}n implementation of neural simulation-based inference for parameter estimation in {A}{T}{L}{A}{S}},
  journal = {Reports on Progress in Physics},
  year    = {2025},
  volume  = {88},
  pages   = {067801},
}

@inproceedings{Hermans2020LFMARE,
  author    = {Hermans, Joeri and Begy, Volodimir and Louppe, Gilles},
  title     = {{L}ikelihood-free {M}{C}{M}{C} with {A}mortized {A}pproximate {R}atio {E}stimators},
  booktitle = {{P}roceedings of the 37th {I}nternational {C}onference on {M}achine {L}earning},
  series    = {Proceedings of Machine Learning Research},
  volume    = {119},
  year      = {2020}
}

@inproceedings{Miller2022ContrastiveNRE,
  author    = {Miller, Benjamin Kurt and Weniger, Christoph and Forr{\'e}, Patrick},
  title     = {{C}ontrastive {N}eural {R}atio {E}stimation},
  booktitle = {{A}dvances in {N}eural {I}nformation {P}rocessing {S}ystems},
  year      = {2022}
}

@Manual{R-DRPT,
  title  = {{D}{R}{P}{T}: {D}ensity {R}atio {P}ermutation {T}est},
  author = {Alberto Bordino and Thomas B. Berrett},
  year   = {2025},
  note   = {R package version 1.1},
  url    = {https://CRAN.R-project.org/package=DRPT}
}

@article{
hermans2022a,
title={{A} {C}risis {I}n {S}imulation-{B}ased {I}nference? {B}eware, {Y}our {P}osterior {A}pproximations {C}an {B}e {U}nfaithful},
author={Joeri Hermans and Arnaud Delaunoy and Fran{\c{c}}ois Rozet and Antoine Wehenkel and Volodimir Begy and Gilles Louppe},
journal={Transactions on Machine Learning Research},
issn={2835-8856},
year={2022},
note={}
}

@misc{leonte2025simulationbasedinferencetelescopingratio,
      title={{S}imulation-based inference via telescoping ratio estimation for trawl processes}, 
      author={Dan Leonte and Raphaël Huser and Almut E. D. Veraart},
      year={2025},
      eprint={2510.04042},
      archivePrefix={arXiv},
      primaryClass={stat.ML},
      url={https://arxiv.org/abs/2510.04042}
}

@inproceedings{Linhart2023lC2ST,
  author    = {Linhart, Julia and Gramfort, Alexandre and Rodrigues, Pedro L. C.},
  title     = {$\ell$-{C}2{S}{T}: {L}ocal {D}iagnostics for {P}osterior {A}pproximations in {S}imulation-{B}ased {I}nference},
  booktitle = {{A}dvances in {N}eural {I}nformation {P}rocessing {S}ystems 36},
  year      = {2023}
}

@article{Austin2015IPTWBestPractice,
  author  = {Austin, Peter C. and Stuart, Elizabeth A.},
  title   = {{M}oving {T}owards {B}est {P}ractice {W}hen {U}sing {I}nverse {P}robability of {T}reatment {W}eighting ({I}{P}{T}{W}) {U}sing the {P}ropensity {S}core to {E}stimate {C}ausal {T}reatment {E}ffects in {O}bservational {S}tudies},
  journal = {Statistics in Medicine},
  year    = {2015},
  volume  = {34},
  pages   = {3661--3679},
}

@misc{Deistler2025SBIGuide,
  author       = {Deistler, Michael and Boelts, Jan and Steinbach, Peter and Moss, Guy and Moreau, Thomas and Gloeckler, Manuel and Rodrigues, Pedro L. C. and Linhart, Julia and Lappalainen, Janne K. and Miller, Benjamin Kurt and Gon{\c c}alves, Pedro J. and Lueckmann, Jan-Matthis and Schr{\"o}der, Cornelius and Macke, Jakob H.},
  title        = {{S}imulation-{B}ased {I}nference: {A} {P}ractical {G}uide},
  year         = {2025},
  eprint       = {2508.12939},
  archivePrefix= {arXiv},
  primaryClass = {stat.ML}
}

@article{Westreich2017Transportability,
  author  = {Westreich, Daniel and Edwards, Jessie K. and Lesko, Catherine R. and Stuart, Elizabeth and Cole, Stephen R.},
  title   = {{T}ransportability of {T}rial {R}esults {U}sing {I}nverse {O}dds of {S}ampling {W}eights},
  journal = {American Journal of Epidemiology},
  year    = {2017},
  volume  = {186},
  number  = {8},
  pages   = {1010--1014},
}

@article{quantile2013DRM,
author = {Jiahua Chen and Yukun Liu},
title = {{Quantile and quantile-function estimations under density ratio model}},
volume = {41},
journal = {The Annals of Statistics},
number = {3},
publisher = {Institute of Mathematical Statistics},
pages = {1669 -- 1692},
keywords = {Asympotic efficiency, Bahadur representation, Confidence interval, empirical likelihood},
year = {2013}
}

@misc{kim2024conditional,
      title={{G}eneral {F}rameworks for {C}onditional {T}wo-{S}ample {T}esting}, 
      author={Seongchan Lee and Suman Cha and Ilmun Kim},
      year={2024},
      eprint={2410.16636},
      archivePrefix={arXiv},
      primaryClass={stat.ML},
      url={https://arxiv.org/abs/2410.16636}, 
}

@article{Robins2000MarginalSM,
  title={{M}arginal {S}tructural {M}odels and {C}ausal {I}nference in {E}pidemiology},
  author={James M. Robins and Miguel A. Hern{\'a}n and Babette A. Brumback},
  journal={Epidemiology},
  year={2000},
  volume={11},
  pages={550-560}
}

@book{sutton98barto,
author = {Sutton, Richard S. and Barto, Andrew G.},
title = {{R}einforcement {L}earning: {A}n {I}ntroduction},
year = {2018},
publisher = {A Bradford Book},
abstract = {The significantly expanded and updated new edition of a widely used text on reinforcement learning, one of the most active research areas in artificial intelligence. Reinforcement learning, one of the most active research areas in artificial intelligence, is a computational approach to learning whereby an agent tries to maximize the total amount of reward it receives while interacting with a complex, uncertain environment. In Reinforcement Learning, Richard Sutton and Andrew Barto provide a clear and simple account of the field's key ideas and algorithms. This second edition has been significantly expanded and updated, presenting new topics and updating coverage of other topics. Like the first edition, this second edition focuses on core online learning algorithms, with the more mathematical material set off in shaded boxes. Part I covers as much of reinforcement learning as possible without going beyond the tabular case for which exact solutions can be found. Many algorithms presented in this part are new to the second edition, including UCB, Expected Sarsa, and Double Learning. Part II extends these ideas to function approximation, with new sections on such topics as artificial neural networks and the Fourier basis, and offers expanded treatment of off-policy learning and policy-gradient methods. Part III has new chapters on reinforcement learning's relationships to psychology and neuroscience, as well as an updated case-studies chapter including AlphaGo and AlphaGo Zero, Atari game playing, and IBM Watson's wagering strategy. The final chapter discusses the future societal impacts of reinforcement learning.}
}

@misc{berrett2024efficient,
      title={{E}fficient estimation with incomplete data via generalised {A{N}{O}{V}{A}} decomposition}, 
      author={Thomas B. Berrett},
      year={2024},
      eprint={2409.05729},
      archivePrefix={arXiv},
      primaryClass={math.ST},
      url={https://arxiv.org/abs/2409.05729}, 
}

@article{kim2022minimax,
  title={{M}inimax optimality of permutation tests},
  author={Kim, Ilmun and Balakrishnan, Sivaraman and Wasserman, Larry},
  journal={Ann. Statist.},
  volume={50},
  number={1},
  pages={225--251},
  year={2022},
  publisher={Institute of Mathematical Statistics}
}

@book{wainwright2019high,
  title={{H}igh-dimensional {S}tatistics: {A} {N}on-asymptotic {V}iewpoint},
  author={Wainwright, Martin J},
  year={2019},
  publisher={Cambridge University Press}
}

@article{candes2016panning,
  title={{P}anning for gold: {M}odel-{X} knockoffs for high-dimensional controlled variable selection},
  author={Cand\`es, Emmanuel and Fan, Yingying and Janson, Lucas and Lv, Jinchi},
  journal={J. Roy. Statist. Soc. Ser. B},
  volume={80},
  number={3},
  pages={551--577},
  year={2018}
}

@article{balakrishnan2019hypothesis,
  title={{H}ypothesis testing for densities and high-dimensional multinomials: {S}harp local minimax rates},
  author={Balakrishnan, Sivaraman and Wasserman, Larry},
  journal={Ann. Statist.},
  volume={47},
  pages={1893--1927},
  year={2019},
  publisher={Institute of Mathematical Statistics}
}

@article{meynaoui2019adaptive,
  title={{A}daptive test of independence based on {H{S}{I}{C}} measures},
  author={Albert, M{\'e}lisande  and Laurent, B{\'e}atrice  and Marrel, Amandine and Meynaoui, Anouar},
  journal   = {The Annals of Statistics},
  year      = {2022},
  volume    = {50},
  number    = {2},
  pages     = {858--879},
publisher = {Institute of Mathematical Statistics}
}

@article{shah2020hardness,
  title={{T}he hardness of conditional independence testing and the generalised covariance measure},
  author={Shah, Rajen D and Peters, Jonas},
  journal={Ann. Statist.},
  volume={48},
  pages={1514--1538},
  year={2020},
  publisher={Institute of Mathematical Statistics}
}

@article{berrett2020conditional,
  title={{T}he conditional permutation test for independence while controlling for confounders},
  author={Berrett, Thomas B and Wang, Yi and Barber, Rina Foygel and Samworth, Richard J},
  journal={J. Roy. Statist. Soc. Ser. B},
  volume={82},
  pages={175--197},
  year={2020},
  publisher={Wiley Online Library}
}

@book{lehmann2006testing,
  title={{T}esting {S}tatistical {H}ypotheses},
  author={Lehmann, Erich L and Romano, Joseph P},
  year={2005},
  publisher={Springer Science \& Business Media}
}

@book{lee1990ustatistics,
  author    = {A. J. Lee},
  title     = {{U}-{S}tatistics: {T}heory and {P}ractice},
  year      = {1990},
  publisher = {Routledge},
  address   = {New York}
}

@article{berrett2020optimal,
  title={{O}ptimal rates for independence testing via {${U}$}-statistic permutation tests},
  author={Berrett, Thomas B and Kontoyiannis, Ioannis and Samworth, Richard J},
  journal={Ann. Statist.},
  volume={49},
  pages={2457--2490},
  year={2021}
}

@inproceedings{yang2020functionapproximationreinforcementlearning,
      title={{O}n {F}unction {A}pproximation in {R}einforcement {L}earning: {O}ptimism in the {F}ace of {L}arge {S}tate {S}paces}, 
      author={Zhuoran Yang and Chi Jin and Zhaoran Wang and Mengdi Wang and Michael I. Jordan},
 booktitle = {{A}dvances in {N}eural {I}nformation {P}rocessing {S}ystems},
 pages = {13903--13916},
 volume = {33},
 year = {2020}
}

@article{lei2021conformal,
  title   = {{C}onformal inference of counterfactuals and individual treatment effects},
  author  = {Lei, Lihua and Cand{\`e}s, Emmanuel J.},
  journal = {Journal of the Royal Statistical Society: Series B (Statistical Methodology)},
  volume  = {83},
  number  = {5},
  pages   = {911--938},
  year    = {2021},
}

@InProceedings{greenberg2019apt,
  title     = {{A}utomatic {P}osterior {T}ransformation for {L}ikelihood-{F}ree {I}nference},
  author    = {Greenberg, David and Nonnenmacher, Marcel and Macke, Jakob},
  booktitle = {{P}roceedings of the 36th {I}nternational {C}onference on {M}achine {L}earning},
  series    = {Proceedings of Machine Learning Research},
  volume    = {97},
  pages     = {2404--2414},
  year      = {2019},
  publisher = {PMLR}
}

@inproceedings{lueckmann2021benchmarking,
  title     = {{B}enchmarking {S}imulation-{B}ased {I}nference},
  author    = {Lueckmann, Jan-Matthis and Boelts, Jan and Greenberg, David and Gon{\c{c}}alves, Pedro J. and Macke, Jakob H.},
  booktitle = {{P}roceedings of the 24th {I}nternational {C}onference on {A}rtificial {I}ntelligence and {S}tatistics},
  series    = {Proceedings of Machine Learning Research},
  volume    = {130},
  pages     = {343--351},
  year      = {2021},
  editor    = {Banerjee, Arindam and Fukumizu, Kenji},
  publisher = {PMLR},
}

@article{Ramdas2022PermutationTU,
  title={{P}ermutation {T}ests {U}sing {A}rbitrary {P}ermutation {D}istributions},
  author={Aaditya Ramdas and Rina Foygel Barber and Emmanuel J. Cand{\`e}s and Ryan J. Tibshirani},
  journal={Sankhya A},
  year={2022},
  volume={85},
  pages={1156 - 1177}
}

@article{hoeffding52permutation,
author = {Wassily Hoeffding},
title = {{The {L}arge-{S}ample {P}ower of {T}ests {B}ased on {P}ermutations of {O}bservations}},
volume = {23},
journal = {The Annals of Mathematical Statistics},
number = {2},
publisher = {Institute of Mathematical Statistics},
pages = {169 -- 192},
year = {1952}
}

@article{besag89star_sampler,
 abstract = {Simple Monte Carlo significance testing has many applications, particularly in the preliminary analysis of spatial data. The method requires the value of the test statistic to be ranked among a random sample of values generated according to the null hypothesis. However, there are situations in which a sample of values can only be conveniently generated using a Markov chain, initiated by the observed data, so that independence is violated. This paper describes two methods that overcome the problem of dependence and allow exact tests to be carried out. The methods are applied to the Rasch model, to the finite lattice Ising model and to the testing of association between spatial processes. Power is discussed in a simple case.},
 author = {Julian Besag and Peter Clifford},
 journal = {Biometrika},
 number = {4},
 pages = {633--642},
 publisher = {[Oxford University Press, Biometrika Trust]},
 title = {{G}eneralized {M}onte {C}arlo {S}ignificance {T}ests},
 urldate = {2024-06-08},
 volume = {76},
 year = {1989}
}

@article{jin2023modelfree,
  title   = {{M}odel-free selective inference under covariate shift via weighted conformal p-values},
  author  = {Jin, Ying and Cand{\`e}s, Emmanuel J.},
  journal = {Biometrika},
  year    = {2025},
  note    = {Advance article}
}

@inproceedings{covariate_shift_ramsdas19,
 author = {Tibshirani, Ryan J and Foygel Barber, Rina and Candes, Emmanuel and Ramdas, Aaditya},
 booktitle = {{A}dvances in {N}eural {I}nformation {P}rocessing {S}ystems},
 pages = {},
 title = {{C}onformal {P}rediction {U}nder {C}ovariate {S}hift},
 volume = {32},
 year = {2019}
}

@article{Qin1998InferencesFC,
  title={{I}nferences for case-control and semiparametric two-sample density ratio models},
  author={Yongsong Qin},
  journal={Biometrika},
  year={1998},
  volume={85},
  pages={619-630}
}

@book{mccullagh1989generalized,
  added-at = {2012-11-18T14:31:40.000+0100},
  author = {McCullagh, P. and Nelder, J.A.},
  biburl = {https://www.bibsonomy.org/bibtex/23a882b729f3f7333923cd713869d3f5f/peter.ralph},
  interhash = {1c22cedb7c518df1a6b0999d3f04f629},
  intrahash = {3a882b729f3f7333923cd713869d3f5f},
  keywords = {GLM reference statistics},
  lccn = {99013896},
  publisher = {Chapman \& Hall},
  series = {Chapman and Hall/CRC Monographs on Statistics and Applied Probability Series},
  timestamp = {2012-11-18T14:31:40.000+0100},
  title = {{G}eneralized {L}inear {M}odels, {S}econd {E}dition},
  year = 1989
}

@book{ingster03,
author = {Ingster, Yuri and Suslina, Irina},
year = {2003},
pages = {},
title = {{N}onparametric {G}oodness-of-{F}it {T}esting {U}nder {G}aussian {M}odels},
publisher={Springer Science \& Business Media},
volume = {169}
}

@article{schrab2023mmd,
  author  = {Antonin Schrab and Ilmun Kim and Mélisande Albert and Béatrice Laurent and Benjamin Guedj and Arthur Gretton},
  title   = {{M}{M}{D} {A}ggregated {T}wo-{S}ample {T}est},
  journal = {Journal of Machine Learning Research},
  year    = {2023},
  volume  = {24},
  number  = {194},
  pages   = {1--81}
}

@article{JMLR:v13:gretton12a,
  author  = {Arthur Gretton and Karsten M. Borgwardt and Malte J. Rasch and Bernhard Sch{{\"o}}lkopf and Alexander Smola},
  title   = {{A} {K}ernel {T}wo-{S}ample {T}est},
  journal = {Journal of Machine Learning Research},
  year    = {2012},
  volume  = {13},
  number  = {25},
  pages   = {723-773}
}

@inproceedings{schrab2024epistemic,
      title={{C}redal {T}wo-{S}ample {T}ests of {E}pistemic {I}gnorance}, 
  author    = {Chau, Siu Lun and Schrab, Antonin and Gretton, Arthur and Sejdinovic, Dino and Muandet, Krikamol},
  booktitle = {{P}roceedings of {T}he 28th {I}nternational {C}onference on {A}rtificial {I}ntelligence and {S}tatistics ({A}{I}{S}{T}{A}{T}{S} 2025)},
  series    = {Proceedings of Machine Learning Research},
  volume    = {258},
  pages     = {127--135},
  year      = {2025},
  editor    = {Li, Yingzhen and Mandt, Stephan and Agrawal, Shipra and Khan, Emtiyaz}
}

@misc{yan2024distancekernel,
      title={{D}istance and {K}ernel-{B}ased {M}easures for {G}lobal and {L}ocal {T}wo-{S}ample {C}onditional {D}istribution {T}esting}, 
      author={Jian Yan and Zhuoxi Li and Xianyang Zhang},
      year={2024},
      eprint={2210.08149},
      archivePrefix={arXiv},
      primaryClass={stat.ME},
      url={https://arxiv.org/abs/2210.08149}, 
}

@article{thams23jrrsb,
    author = {Thams, Nikolaj and Saengkyongam, Sorawit and Pfister, Niklas and Peters, Jonas},
    title = "{Statistical testing under distributional shifts}",
    journal = {Journal of the Royal Statistical Society Series B: Statistical Methodology},
    volume = {85},
    number = {3},
    pages = {597-663},
    year = {2023},
    abstract = "{We introduce statistical testing under distributional shifts. We are interested in the hypothesis P*∈H0 for a target distribution P*, but observe data from a different distribution Q*. We assume that P* is related to Q* through a known shift τ and formally introduce hypothesis testing in this setting. We propose a general testing procedure that first resamples from the observed data to construct an auxiliary data set (similarly to sampling importance resampling) and then applies an existing test in the target domain. We prove that if the size of the resample is of order o(n) and the resampling weights are well behaved, this procedure inherits the pointwise asymptotic level and power from the target test. If the map τ is estimated from data, we maintain the above guarantees under mild conditions on the estimation. Our results extend to finite sample level, uniform asymptotic level, a different resampling scheme, and statistical inference different from testing. Testing under distributional shifts allows us to tackle a diverse set of problems. We argue that it may prove useful in contextual bandit problems and covariate shift, show how it reduces conditional to unconditional independence testing and provide example applications in causal inference.}"
}

@article{stroop1935interference,
  author    = {Stroop, J. Ridley},
  title     = {{S}tudies of {I}nterference in {S}erial {V}erbal {R}eactions},
  journal   = {Journal of Experimental Psychology},
  year      = {1935},
  volume    = {18},
  number    = {6},
  pages     = {643--662}
}

@article{Sriperumbudur2010,
author = {Sriperumbudur, Bharath K. and Gretton, Arthur and Fukumizu, Kenji and Sch\"{o}lkopf, Bernhard and Lanckriet, Gert R.G.},
title = {{H}ilbert {S}pace {E}mbeddings and {M}etrics on {P}robability {M}easures},
year = {2010},
publisher = {JMLR.org},
volume = {11},
abstract = {A Hilbert space embedding for probability measures has recently been proposed, with applications including dimensionality reduction, homogeneity testing, and independence testing. This embedding represents any probability measure as a mean element in a reproducing kernel Hilbert space (RKHS). A pseudometric on the space of probability measures can be defined as the distance between distribution embeddings: we denote this as γk, indexed by the kernel function k that defines the inner product in the RKHS.We present three theoretical properties of γk. First, we consider the question of determining the conditions on the kernel k for which γk is a metric: such k are denoted characteristic kernels. Unlike pseudometrics, a metric is zero only when two distributions coincide, thus ensuring the RKHS embedding maps all distributions uniquely (i.e., the embedding is injective). While previously published conditions may apply only in restricted circumstances (e.g., on compact domains), and are difficult to check, our conditions are straightforward and intuitive: integrally strictly positive definite kernels are characteristic. Alternatively, if a bounded continuous kernel is translation-invariant on ℜd, then it is characteristic if and only if the support of its Fourier transform is the entire ℜd. Second, we show that the distance between distributions under γk results from an interplay between the properties of the kernel and the distributions, by demonstrating that distributions are close in the embedding space when their differences occur at higher frequencies. Third, to understand the nature of the topology induced by γk, we relate γk to other popular metrics on probability measures, and present conditions on the kernel k under which γk metrizes the weak topology.},
  journal = {Journal of Machine Learning Research},
pages = {1517–1561},
numpages = {45}
}

@misc{hairer2021convergence,
    author = {Hairer, Martin},
    year = {2021},
    title = {{C}onvergence of {M}arkov processes},
    publisher = {Lecture notes},
}

@book{Dudley_2002, place={Cambridge}, edition={2}, series={Cambridge Studies in Advanced Mathematics}, title={{R}eal {A}nalysis and {P}robability}, publisher={Cambridge University Press}, author={Dudley, R. M.}, year={2002}, collection={Cambridge Studies in Advanced Mathematics}}

@incollection{BretagnolleHuber1978,
  author    = {Bretagnolle, J. and Huber, C.},
  title     = {Estimation des densit{\'e}s: risque minimax},
  booktitle = {S{\'e}minaire de Probabilit{\'e}s XII},
  series    = {Lecture Notes in Mathematics},
  volume    = {649},
  pages     = {342--363},
  publisher = {Springer},
  address   = {Berlin},
  year      = {1978},
  note      = {Univ. Strasbourg, Strasbourg, 1976/1977}
}

\begin{appendices}
Appendix~\ref{appendix:proof} contains the proofs of all results stated in the main text, while Appendix~\ref{app:additionalResults} presents and proves some results that were only briefly mentioned there. Appendix~\ref{appendix:extraSimulSec} expands on the simulation studies, and Appendix~\ref{appenidix:discreteDRPT} analyzes Algorithm~\ref{alg:star_algo} in the discrete, finite-support setting.

\section{Proofs}\label{appendix:proof}
\subsection{Proofs for Section \ref{sec:methodology}}\label{appendix:proof2}
\begin{proof}[Proof of Theorem \ref{prop:DRPT_validity}]
To establish exchangeability under the null, we adapt ideas from~\cite{berrett2020conditional} and consider an equivalent formulation of the permutation scheme. Let $Z_{()}=\left(Z_{(1)}, \ldots, Z_{(n+m)}\right)$ be the order statistics of $Z$.  When $\mathcal{X} \subseteq \mathbb{R}$, we naturally use the standard ordering on $\mathbb{R}$. In the general case, we can select any total ordering on $\mathcal{X}$; the specific choice does not matter, as its sole purpose is to let us examine the set of 
$Z$-values without needing to know which value is associated with which data point. Define also $Z_{(p)}=\left(Z_{(p(1))}, \ldots, Z_{(p(n+m))}\right)$ for each $p \in \mathcal{S}_{n+m}$, and let $P \in \mathcal{S}_{n+m}$ be the permutation given by the ranks of the true observed vector $Z$, so that $Z=Z_{(P)}$. Under $g \propto r \, f$, we can use Bayes theorem to show that the distribution of the true ranks $P$, conditional on the order statistics $Z_{()}$, is given by
\begin{equation}\label{eq:sampling_2}
    \mathbb{P}\left\{P = p \mid Z_{()} \right\}=\frac{\prod_{i \in \{n+1, \ldots, n+m \}}  r(Z_{(p(i))})}{\sum_{\tilde{p} \in \mathcal{S}_{n+m}}\prod_{i \in \{n+1, \ldots, n+m \}}  r(Z_{(\tilde{p}(i))})}.
\end{equation}
As a result, if $P^{(1)}, \ldots, P^{(H)}$ are drawn independently from~\eqref{eq:sampling_2}, under $H_0$ the vectors $Z,\, Z_{(P^{(1)})}, \ldots, Z_{(P^{(H)})}$ are conditionally independent and identically distributed given $Z_{()}$. Consequently, after marginalising over $Z_{()}$, the sequence $\left(Z, Z_{(P^{(1)})}, \ldots, Z_{(P^{(H)})}\right)$ is exchangeable.

We now conclude the proof by showing that \eqref{eq:sampling_1} and \eqref{eq:sampling_2} are equivalent. For fixed $Z$, and thus $P$, the map $\sigma \mapsto p:=\sigma \circ P$ is a bijection over $\mathcal{S}_{n+m}$. This, together with $Z_{\sigma(i)} = Z_{(\sigma \circ P(i))} = Z_{(p(i))}$, allows showing that the weight $\prod_{i = n+1}^{n+m} r(Z_{\sigma(i)})$ in~\eqref{eq:sampling_1} can be rewritten as $\prod_{i = n+1}^{n+m} r(Z_{(p(i))})$, which corresponds to the weight in~\eqref{eq:sampling_2}. Therefore, under the change of variables $p = \sigma \circ P$, the distribution of $\sigma^{(h)}$ given $Z$ in~\eqref{eq:sampling_1} induces the distribution of $P^{(h)}  = \sigma^{(h)} \circ P$ given $Z_{()}$ in~\eqref{eq:sampling_2}, and conversely. In particular, 
$(\sigma^{(1)},\ldots,\sigma^{(H)})$ and $(P^{(1)},\ldots,P^{(H)})$ have corresponding laws via this bijection, and since $Z_{\sigma^{(h)}} = Z_{( \sigma^{(h)} \circ P)} = Z_{(P^{(h)})}$ for each $h \in [H]$, the joint law of $(Z, Z_{\sigma^{(1)}}, \ldots, Z_{\sigma^{(H)}})$ is the same as that of 
$(Z, Z_{(P^{(1)})}, \ldots, Z_{(P^{(H)})})$. As we have already shown that the latter is exchangeable, this
implies the finite-sample validity of~\eqref{eq:pvalue_DRPT} for every test statistic $T$.
\end{proof}

\begin{proof}[Proof of Proposition \ref{prop:algo_sampler_valid}]
The proof consists of simply checking the detailed balance equations for the Markov chain defined by the algorithm.
Denote with $\mathcal{P}$ the collection of all $K$ couples of indices $ \{(i_1, j_1), \ldots, (i_K, j_K) \}$ such that $(i_1, \ldots, i_K)$ contains distinct elements from $[n]$, and $(j_1, \ldots, j_K)$ contains distinct elements from $\{n+1, \ldots, n+m \}$. For any $\tau \in \mathcal{P}$ and any permutations $p, p'$, we write $p \sim_\tau p'$ if $p$ can be transformed to $p'$ by swapping any subset of the pairs in $\tau$. We now compute the transition probability matrix of the Markov chain defined by Algorithm \ref{alg:pairwise_sampler}. Every probability sign has to be intended conditionally on $Z$. For all $t \in \mathbb{N}_+$ and any permutations $p, p^{\prime}$, we have
$$
\mathbb{P}\left\{\sigma_t = p^{\prime} \mid \sigma_{t-1} = p \right\}=\frac{1}{|\mathcal{P}|} \sum_{\tau \in \mathcal{P}} \mathbb{P}\left\{\sigma_t = p^{\prime} \mid \sigma_{t-1} = p, \tau_t = \tau \right\},
$$
since at each time $t$, Step 3 of the algorithm corresponds to drawing $\tau_t \in \mathcal{P}$ uniformly at random. Next, given $\tau_t = \tau:= \{(i_1, j_1), \ldots, (i_K, j_K) \}$ and $\sigma_{t-1} = p$, it must be the case that $\sigma_t$ satisfies $\sigma_t \sim_\tau p$ by definition of Steps 4-5 of the algorithm. In light of the definition of the odds ratio for each $B_{i_k, j_k}^t$ in~\eqref{eq:p_no_tilde}, we see that for any $p^{\prime}, p^{\prime \prime} \sim_\tau p$, we have
\begin{align}\label{eq:oddsRatio}
    &\frac{\mathbb{P}\left\{\sigma_t = p^{\prime} \mid \sigma_{t-1} = p, \tau_t = \tau \right\}}{\mathbb{P}\left\{\sigma_t = p^{\prime \prime} \mid \sigma_{t-1} = p, \tau_t = \tau \right\}}= \prod_{j \in \{j_1, \ldots, j_K \}} \frac{  r(Z_{p^\prime(j)})}{  r(Z_{p^{\prime \prime}(j)})} = \prod_{j \in \{j_1, \ldots, j_K \}} \frac{  r(Z_{p^\prime(j)})}{  r(Z_{p^{\prime \prime}(j)})} \prod_{j \notin \{j_1, \ldots, j_K \}} \frac{  r(Z_{p^\prime(j)})}{  r(Z_{p^{\prime \prime}(j)})} \nonumber \\
    & = \prod_{j \in \{n+1, \ldots, n+m \}} \frac{  r(Z_{p^\prime(j)})}{  r(Z_{p^{\prime \prime}(j)})} = \frac{\mathbb{P}\left\{\sigma=p^{\prime}\right\}}{\mathbb{P}\left\{\sigma=p^{\prime \prime}\right\}},
\end{align}
where in the second equality we used the fact that $r(Z_{p^\prime(j)})/r(Z_{p^{\prime \prime}(j)}) = 1$ for all $j \notin \{j_1, \ldots, j_K \}$, while in the last step we used the definition of the distribution \eqref{eq:sampling_1}. Therefore,
\[
\mathbb{P}\left\{\sigma_t = p^{\prime} \mid \sigma_{t-1} = p \right\} = \frac{1}{|\mathcal{P}|} \sum_{\tau \in \mathcal{P}} \frac{\mathbbm{1}\left\{p^{\prime} \sim_\tau p \right\} \cdot \mathbb{P}\left\{\sigma=p^{\prime}\right\}}{\sum_{p^{\prime \prime}} \mathbbm{1}\left\{p^{\prime \prime} \sim_\tau p\right\} \cdot \mathbb{P}\left\{\sigma=p^{\prime \prime}\right\}} .
\]
This, together with the fact that $\sim_\tau$ defines an equivalence relation on $\mathcal{P}$, shows that
\begin{align}\label{eq:detailedBalance}
& \mathbb{P}\{\sigma = p\} \cdot \mathbb{P}\left\{\sigma_t = p' \mid \sigma_{t-1} = p \right\} =\frac{1}{|\mathcal{P}|} \sum_{\tau \in \mathcal{P}} \mathbb{P}\{\sigma = p\} \cdot \frac{\mathbbm{1}\left\{p' \sim_\tau p \right\} \cdot \mathbb{P}\left\{\sigma = p'\right\}}{\sum_{p^{\prime \prime}}  \mathbbm{1}\left\{p^{\prime \prime} \sim_\tau p \right\} \cdot \mathbb{P}\left\{\sigma = p^{\prime \prime}\right\}} \nonumber\\
& =\frac{1}{|\mathcal{P}|} \sum_{\tau \in \mathcal{P}} \mathbb{P}\left\{\sigma = p^{\prime}\right\} \cdot \frac{\mathbbm{1}\left\{p \sim_\tau p^{\prime}\right\} \cdot \mathbb{P}\{\sigma =p\}}{\sum_{p^{\prime \prime}} \mathbbm{1}\left\{p^{\prime \prime} \sim_\tau p^{\prime}\right\} \cdot \mathbb{P}\left\{\sigma=p^{\prime \prime}\right\}} =\mathbb{P}\left\{\sigma=p^{\prime}\right\} \cdot \mathbb{P}\left\{\sigma_t=p \mid \sigma_{t-1}=p^{\prime}\right\} .
\end{align}
This verifies the detailed balance equations, and so the Markov chain is reversible and has \eqref{eq:sampling_1} as stationary distribution. Moreover, since $r(x)$ is assumed to be positive for all $x \in \mathcal{X}$, it follows that the chain is aperiodic and irreducible, ensuring the uniqueness of the stationary distribution.
\end{proof}

\begin{proof}[Proof of Theorem~\ref{prop:alg_validity}]
The proof relies fundamentally on the reversibility of the Markov chain defined in Algorithm \ref{alg:pairwise_sampler}, a property established in the proof of Proposition \ref{prop:algo_sampler_valid}. This reversibility permits an alternative but distributionally equivalent sampling procedure whenever the input $\sigma$ is drawn from the stationary distribution: we may first sample $\sigma_*$ from distribution \eqref{eq:sampling_1} conditional on $Z$, and subsequently generate $(\sigma, \sigma^{(1)}, \ldots, \sigma^{(H)})$ through $H + 1$ independent applications of Algorithm \ref{alg:pairwise_sampler}, each running for $S$ steps and initialized at $\sigma_*$. The independence of these runs, combined with their shared initialization point $\sigma_*$, ensures that $(\sigma, \sigma^{(1)}, \ldots, \sigma^{(H)})$ are independently and identically distributed when conditioned on $\sigma_*$ and $Z$. This conditional independence directly implies the exchangeability of $\left(Z, Z^{(1)}, \ldots, Z^{(H)} \right)$ since, under $H_0$, the distribution of $Z_{\sigma_0} \mid Z_{()}$, with $\sigma_0 = \mathrm{id}_{n+m}$, coincides with the distribution of $Z_{\sigma} \mid Z_{()}$, where $\sigma$ is sampled from~\eqref{eq:sampling_1}.  This concludes the proof.
\end{proof}
 
 \subsection{Proofs for Section \ref{sec:IPM+consistency}}\label{appendix:proof3.1}
 \begin{proof}[Proof of Lemma \ref{lemma:unique_h}]
 We begin by proving that there exists a unique $\lambda_0$ such that $h = \frac{nf + mg}{n + \lambda_0 m r}$ is a density function. Define the function
\begin{align}\label{eq:F(lambda)}
     F :
     \begin{cases}
        \mathbb{R}_+ \to \mathbb{R}_+ \\
        \lambda \mapsto  \int \frac{nf + mg}{n + \lambda m r} d\mu
     \end{cases}
\end{align}
which can easily be seen to be continuous. It also straightforward to see that $F$ is strictly decreasing, since if $\lambda_2 > \lambda_1 > 0$ we have 
\[
     \frac{nf(x) + mg(x)}{n + \lambda_2 m r(x)} \leq
     \frac{nf(x) + mg(x)}{n + \lambda_1 m r(x)} \quad \text{ for all } x \in \mathcal{X},
\] 
and this inequality is strict in the support of $f$ and $g$ as we are assuming that $r(x)>0$ for all $x \in \mathcal{X}$. It is clear that $\lim_{\lambda \rightarrow 0^+} F(\lambda) = 1 + m/n > 1$, while $\lim_{\lambda \rightarrow \infty} F(\lambda) = 0<1$. Thus, by the intermediate value theorem and the strict monotonicity of $F$ we have established our first claim that $\lambda_0$ exists and is unique.

It can now be seen that $h$ satisfies the requirements of the result. Indeed, we have 
     \begin{align*}
         1 & = \int h d\mu = \int \frac{nf + mg}{n + \lambda_0 m r} d\mu = \frac{1}{n} \int \frac{n}{n + \lambda_0 m r}(nf + mg) d\mu \\
         & = \frac{1}{n} \int \left( 1 - \frac{\lambda_0 m r}{n + \lambda_0 m r}\right)(nf + mg)d\mu = \frac{1}{n} \left(\int (nf + mg)d\mu - \int (\lambda_0 m r) \frac{nf + mg}{n + \lambda_0 m r} d\mu \right) \\
         & = \frac{1}{n} \left(n+m - \lambda_0 m  \int r h d\mu \right) = 1 + \frac{m}{n} - \lambda_0 \frac{m}{n} \int r h d\mu,
     \end{align*}
     which implies that $\lambda_0 = (\int r h d\mu)^{-1}$. This shows us that $n h + m (\int r h d\mu)^{-1} r h  = n h + \lambda_0 m r h  = nf + m g$, so that $h$ gives rise to null distributions for which the distribution of the combined sample matches that of our data, concluding the existence part of the result.

     As for its uniqueness, observe that any density $h$ preserving the combined distribution under the null hypothesis must be of the form $h$ for some $\lambda_0$ positive, since  
     \begin{align*}
    n f + m g & = n h + m\frac{r h}{\int r f d\mu} = \left\{n + m r\left(\int r h d\mu\right)^{-1}\right\}h = \left(n + \lambda_0 m r \right)h.
     \end{align*}
     But the $\lambda_0$ such that $h$ integrates to $1$ is unique. This completes the proof.
 \end{proof}

\begin{proof}[Proof of Lemma \ref{lemma:TFcharacterisesNull}]
Suppose that the null hypothesis holds, so that $g=rf/\int rf d\mu$. It follows from Lemma~\ref{lemma:unique_h} and its proof that $f=h$ and that $\lambda_0=(\int rf d\mu)^{-1}$. We therefore see that $\lambda_0 r f - g =0$, so indeed $T_{\mathcal{F},r}(f,g)=0$.

We now turn to the reverse implication. Assuming that $f$ and $g$ are such that $T_{\mathcal{F},r}(f,g)=0$, we will show that $H_0$ must be true. By assumption, $\mathcal{H}$ is dense in $C^0_b(\mathcal{X})$ with respect to $\| \cdot \|_\infty$, so that for all $\phi \in C^0_b(\mathcal{X})$ and all $\epsilon > 0$ there exists $\varphi \in \mathcal{H}$ such that $\|\varphi - \phi \|_{\infty} \leq \epsilon$. 
Thus \begin{align*}
   \frac{n+m}{m}&\bigg| \int \frac{\lambda_0 r f - g}{n/m + \lambda_0 r} \phi\,  d\mu \bigg| \\
    &\leq \frac{n+m}{m} \left| \int \frac{\lambda_0 r f}{n/m + \lambda_0 r} (\phi - \varphi) d\mu \right| + \frac{n+m}{m}\left| \int \frac{\lambda_0 r f - g}{n/m + \lambda_0 r} \varphi d\mu \right| + \frac{n+m}{m} \left| \int \frac{g}{n/m + \lambda_0 r} (\phi - \varphi) d\mu \right| \\
    & \leq \int \frac{n+m}{n+\lambda_0 m r} \, \lambda_0 r f \, |\phi - \varphi| d\mu  + \|\varphi\|_\mathcal{H} \,T_{\mathcal{F},r}(f,g) + \int \frac{n+m}{n+ \lambda_0 m r} \,g \,|\phi - \varphi| d\mu  \\
    & = \int \frac{n+m}{n+\lambda_0 m r} \, \lambda_0 r f \, |\phi - \varphi| d\mu   + \int \frac{n+m}{n+ \lambda_0 m r} \,g \, |\phi - \varphi| d\mu \\
    & \leq \epsilon \, \int \frac{\lambda_0 r}{\min(1, \lambda_0 r)} \, f d\mu  + \epsilon \, \int \frac{1}{\min(1, \lambda_0 r)} \,g d\mu = \epsilon \left(\mathbb{E}[\max\{1, \lambda_0 r(X)\}] + \mathbb{E}[\max\{1, 1/\lambda_0 r(Y)\}] \right) \\
    & \leq \epsilon \, \left(2 + \mathbb{E}[\lambda_0 r(X)] + \mathbb{E}[1/\lambda_0 r(Y)]  \right) = \epsilon \, \left(2 + \lambda_0 \int rf \, d\mu + \lambda_0^{-1} \int g/r \, d\mu  \right),
\end{align*}
where in the first equality we used our assumption that $T_{\mathcal{F},r}(f,g) = 0$.  Because $\epsilon>0$ was arbitrary, and since $\int r f\,d\mu<\infty$, $\int g/r\,d\mu<\infty$ by assumption, and $\lambda_0<\infty$ by Proposition~\ref{prop:limitingTF} in Appendix~\ref{app:additionalResults}, we now see that $ \frac{n+m}{m}\int \frac{\lambda_0 r f - g}{n/m + \lambda_0 r} \phi \, d\mu = 0$ for all $\phi \in C^0_b(\mathcal{X})$, which further implies 
\[
 \frac{mg}{n + \lambda_0 m r} = \frac{\lambda_0 m r f}{n + \lambda_0 m r} \text{ a.s.}
\]
by Lemma 9.3.2 in \cite{Dudley_2002}. Simplifying this inequality we see that $g=\lambda_0 rf \propto rf$, as required.
 \end{proof}

\begin{proof}[Proof of Theorem \ref{thm:general_consistency}]
We assume throughout the proof that $H_0$ does not hold and show that $ \mathbb{P}\{p > \alpha \} \rightarrow 0$ as $n,m \rightarrow \infty$. Recall our notation $\bar{R}$ and $\check{R}$ for the arithmetic and harmonic means of the $r$ values, respectively, and write $A$ for the event that $\bar{R} \leq 2 \mathbb{E}(\bar{R})$ and $1/\check{R} \leq 2\mathbb{E}(1/\check{R})$. On this event, we have
\[
    \bar{R}/\check{R} \leq 4\mathbb{E}(\bar{R}) \mathbb{E}(\check{R}^{-1}) \leq 4 \sqrt{ \mathbb{E}\{ r(Z)^2\} \mathbb{E}\{r(Z)^{-2}\}} \leq 4\sqrt{V_0},
\]
which will allow us to apply Lemma~\ref{lemma:convergence_in_d}. Since we assume that $H > \lceil 1/\alpha - 1\rceil$ we have $\alpha(1+H)-1 > 0$, and we can therefore use Markov's inequality to see that
 \begin{align*}
     \mathbb{P}&\{p > \alpha \}  =  \mathbb{P}\left\{1+\sum_{h=1}^H \mathbbm{1}\{T(Z_{\sigma^{(h)}}) \geq T(Z)\} > \alpha(1+H) \right\} \\
     & \leq \frac{\mathbb{E}\left[ \sum_{h=1}^H \mathbbm{1}\{T(Z_{\sigma^{(h)}}) \geq T(Z)\} \right]}{\alpha(1+H)-1} = \frac{H}{\alpha(1+H)-1}\mathbb{P}\{T(Z_{\sigma^{(1)}})\geq T(Z) \} \\
     & \leq \frac{H}{\alpha(1+H)-1} \bigl[ \mathbb{P}\{T(Z_{\sigma^{(1)}})\geq T(Z), A \} + \mathbb{P}(A^\complement) \bigr],
 \end{align*}
 where the penultimate step follows from exchangeability. We begin by controlling $\mathbb{P}(A^\complement)$ by combining Chebyshev's inequality with the fact that $\max(\mathbb{E}[\left\{  \lambda_0 r(Z) \right\}^2 ], \,\mathbb{E}[\left\{ \lambda_0 r(Z) \right\}^{-2}] ) \leq V_0$. In particular, this follows from $\mathbb{E}[\left\{  r(Z) \right\}^2 ] \,\mathbb{E}[\left\{  r(Z) \right\}^{-2}]  \leq V_0$ and \[ \frac{1}{\mathbb{E}(1/\check{R})} = \frac{n+m}{n \int f/r \, d\mu + m\int g/r \, d\mu} \leq \frac{1}{\lambda_0} \leq \frac{1}{n+m} \biggl( n \int rf \, d\mu + m \int rg d \mu\biggr) = \mathbb{E}(\bar{R}),\]
 where the latter holds by Jensen's inequality. Together, these allow showing
\begin{align}\label{eq:ProbA_Complement}
     \mathbb{P}(A^\complement) 
     & \leq \mathbb{P} \biggl( \frac{\bar{R}}{\mathbb{E}(\bar{R})} > 2 \biggr) + \mathbb{P} \biggl( \frac{1}{\check{R} \, \mathbb{E}(1/\check{R})} > 2 \biggr) \leq  \frac{\mathrm{Var}(\bar{R})}{\mathbb{E}^2(\bar{R})} + \frac{\mathrm{Var}(1/\check{R})}{\mathbb{E}^2(1/\check{R})} \leq  \mathrm{Var} \bigl( \lambda_0 \bar{R} \bigr) + \mathrm{Var} \biggl( \frac{1}{\lambda_0 \check{R}} \biggr) \leq \frac{2V_0}{m+n}.
 \end{align}
Using the shorthand $\sigma=\sigma^{(1)}$ it now suffices to consider $T(Z_{\sigma})$ and $T(Z)$ on the event $A$.

We will first prove that $\mathbb{E}[|T(Z_{\sigma})| \mathbbm{1}_A] = \mathbb{E}[T(Z_{\sigma})\mathbbm{1}_A] \rightarrow 0$. Write 
\[
W_\varphi^\sigma := \frac{1}{n} \left(\sum_{i  = 1}^{n} \frac{\hat{\lambda}m r(Z_{\sigma(i)})}{n + \hat{\lambda}m r(Z_{\sigma(i)})} \varphi(Z_{\sigma(i)}) - \sum_{j  = n+1}^{n+m} \frac{n}{n + \hat{\lambda}m r(Z_{\sigma(j)})} \varphi(Z_{\sigma(j)})\right),
\]
so that we have $T(Z_\sigma) = \frac{n+m}{m}\sup_{\| \varphi \|_\mathcal{H} \leq 1} |W_\varphi^\sigma|$. In particular, $W_\varphi^\sigma$ coincides with the term inside the conditional expectation in the statement of Lemma~\ref{lemma:convergence_in_d}. Write $N \equiv N(\delta) \equiv N\left(\{\|\varphi\|_\mathcal{H} \leq 1\}, \delta, \|\cdot \|_\infty \right)$ for the covering number of $\{\|\varphi\|_\mathcal{H} \leq 1\}$ with respect to $\| \cdot \|_\infty$ and let $\{\psi_1, \ldots, \psi_N\}$ be an associated $\delta$-cover. Now, for a generic function $\varphi_0 \in \mathcal{H}$ such that $\|\varphi_0\|_\mathcal{H} \leq 1$ we have
\begin{align}
\label{Eq:ConsistencyDecompositionNew}
    \mathbb{E}&\left[\sup_{\| \varphi \|_\mathcal{H} \leq 1} |W_\varphi^\sigma| \, \mathbbm{1}_A\right] = \mathbb{E}\left[\sup_{\| \varphi \|_\mathcal{H} \leq 1} (|W_\varphi^\sigma| - |W_{\varphi_0}^\sigma|)\mathbbm{1}_A \right] + \mathbb{E}[|W_{\varphi_0}^\sigma|\mathbbm{1}_A ] \nonumber \\
    & \leq \mathbb{E}\left[\sup_{\substack{\| \varphi_1 \|_\mathcal{H} \leq 1 \\ \| \varphi_2 \|_\mathcal{H} \leq 1}} (|W_{\varphi_1}^\sigma| - |W_{\varphi_2}^\sigma|) \mathbbm{1}_A \right] + \mathbb{E}[|W_{\varphi_0}^\sigma| \mathbbm{1}_A] \leq \mathbb{E}\left[\sup_{\substack{\| \varphi_1 \|_\mathcal{H} \leq 1 \\ \| \varphi_2 \|_\mathcal{H} \leq 1}} \left|W_{\varphi_1}^\sigma - W_{\varphi_2}^\sigma\right| \mathbbm{1}_A \right] + \mathbb{E}[|W_{\varphi_0}^\sigma|\mathbbm{1}_A] \nonumber \\
    & \leq 2 \mathbb{E}\left[\sup_{\| \varphi_1 - \varphi_2\|_\infty \leq \delta} \left|W_{\varphi_1}^\sigma - W_{\varphi_2}^\sigma\right|\right] + 2 \mathbb{E}\left[\max_{i \in [N]} \left|W_{\psi_1}^\sigma - W_{\psi_i}^\sigma\right|\mathbbm{1}_A\right] + \mathbb{E}[|W_{\varphi_0}^\sigma|\mathbbm{1}_A] \nonumber \\
    &\leq 2 \mathbb{E} \left[ \sup_{\|\varphi\|_\infty \leq \delta} |W_\varphi^\sigma| \right] + (4N+1) \sup_{\|\varphi\|_\mathcal{H} \leq 1} \mathbb{E}[|W_{\varphi}^\sigma|\mathbbm{1}_A ],
\end{align}
where the penultimate inequality follows from a one-step discretization argument as in \citet[][Equation 5.34]{wainwright2019high} and the final inequality follows on bounding the maximum by a sum, using the fact that $W_{\varphi_1}^\sigma - W_{\varphi_2}^\sigma = W_{\varphi_1 - \varphi_2}^\sigma$ for any $\varphi_1,\varphi_2$, and using the triangle inequality to say that $\max_{i \in [N]}\|\psi_1-\psi_i\|_\mathcal{H} \leq 2$. We will now bound each term on the right-hand side of~\eqref{Eq:ConsistencyDecompositionNew} separately. As for the first, observe that by~\eqref{Eq:EqualSumsLambda} we have
\begin{align*}
    |W_{\varphi}^\sigma| & = \frac{1}{n} \left|\sum_{i  = 1}^{n} \frac{\hat{\lambda}m r(Z_{\sigma(i)})}{n + \hat{\lambda}m r(Z_{\sigma(i)})} \varphi(Z_{\sigma(i)}) - \sum_{j  =n+1}^{n+m} \frac{n}{n + \hat{\lambda}m r(Z_{\sigma(j)})} \varphi(Z_{\sigma(j)})\right| \\
    & \leq \|\varphi\|_\infty \biggl\{ \frac{1}{n} \sum_{i=1}^n \frac{\hat{\lambda}m r(Z_{\sigma(i)})}{n + \hat{\lambda}m r(Z_{\sigma(i)})} + \sum_{j=n+1}^{n+m} \frac{1}{n + \hat{\lambda}m r(Z_{\sigma(j)})} \biggr\} \\
    & = \frac{2}{n} \|\varphi\|_\infty \min \biggl\{ \sum_{i=1}^n \frac{\hat{\lambda}m r(Z_{\sigma(i)})}{n + \hat{\lambda}m r(Z_{\sigma(i)})}, \, \sum_{j=n+1}^{n+m} \frac{n}{n + \hat{\lambda}m r(Z_{\sigma(j)})} \biggr\} \leq \frac{2\min(m,n)}{n} \|\varphi\|_\infty
\end{align*}
for any $\varphi \in \mathcal{H}$, which implies that  
\[
    \mathbb{E} \left[ \sup_{\|\varphi\|_\infty \leq \delta} |W_\varphi^\sigma| \right] \leq \frac{2\min(m,n)}{n} \delta.
\]
We now turn to the second term in~\eqref{Eq:ConsistencyDecompositionNew}. With our assumption that $\|\cdot\|_\infty \leq \gamma \|\cdot\|_\mathcal{H}$ for a constant $\gamma$, it follows from the Cauchy--Schwarz inequality,~\eqref{eq:bound(I)^2} and Lemma~\ref{lemma:convergence_in_d} that for any $\varphi$ such that $\|\varphi\|_\mathcal{H} \leq 1$ we have
\[
    \mathbb{E}[|W_{\varphi_0}^\sigma|\mathbbm{1}_A ] \leq \sqrt{ \mathbb{E}\left[ (W_\varphi^\sigma)^2 \mathbbm{1}_A \right] } \leq \gamma \biggl\{ \frac{16 \,V_0^{1/4} \min(m^{1/2},n^{1/2})}{n} + \exp\biggl( - \frac{S}{64\tau V_0^{1/2}} \biggr) \biggr\}.
\]
Putting these bounds together and using  $T(Z_\sigma) = \frac{n+m}{m}\sup_{\| \varphi \|_\mathcal{H} \leq 1} |W_\varphi^\sigma|$, we see that for any $\delta>0$ we have
\begin{align}\label{eq:PermutedGoesToZero}
    \mathbb{E} & [ T(Z_\sigma) \mathbbm{1}_A] \nonumber \\
    & \leq \frac{4\min(m,n) (n+m)}{nm} \delta + 
    (4N(\delta)+1) \, \gamma \, \biggl\{ \frac{16 \, V_0^{1/4} \sqrt{\min(n,m)} (n+m)}{nm} + \frac{n+m}{m} \exp\biggl( - \frac{S}{64\tau V_0^{1/2}} \biggr) \biggr\} \nonumber \\
    & \leq 8\delta +  
    (4N(\delta)+1) \, \gamma \, \biggl\{ \frac{32 \, V_0^{1/4}}{\sqrt{\min(n,m)}} + \frac{n+m}{m} \exp\biggl( - \frac{S}{64\tau V_0^{1/2}} \biggr) \biggr\}.
\end{align}
Since we assume that $N\left(\{\|\varphi\|_\mathcal{H} \leq 1\}, \delta, \|\cdot \|_\infty \right)$ is finite for all $\delta > 0$, and that $S \gg \tau \log(m+n)$, there exists a sequence $(\delta_n)_{n \in \mathbb{N}_+}$ such that this right-hand side converges to zero. \\

We now turn to the unpermuted statistic $T(Z)$. Recall the population quantity 
\[
T_{\mathcal{F},r} \equiv T_{\mathcal{F},r}(f,g) =  \, \sup_{\|\varphi\|_\mathcal{H} \leq 1} |I_\varphi|, \quad \text{ where }  \quad I_\varphi := \frac{n+m}{m} \int \frac{\lambda_0 r f - g}{n/m + \lambda_0 r} \varphi \, d\mu 
\]
with $\lambda_0$ such that $\int \frac{n f + m g}{n + \lambda_0 m r} d\mu = 1$. As we are working under the alternative hypothesis, we know by Lemma~\ref{lemma:TFcharacterisesNull} and Proposition~\ref{prop:limitingTF} in Appendix~\ref{app:additionalResults} that $T_{\mathcal{F},r}>0$ and we may fix $\varphi$ such that $\|\varphi\|_\mathcal{H} \leq 1$ and $I_\varphi \geq T_{\mathcal{F},r}/2$. Then we have
\begin{align*}
    T(Z) &\geq \frac{n+m}{nm} \sum_{i=1}^n \frac{\hat{\lambda}mr(X_i)}{n + \hat{\lambda} m r(X_i)} \varphi(X_i) - \frac{n+m}{nm} \sum_{j=n+1}^{n+m} \frac{n}{n + \hat{\lambda} m r(Y_j)} \varphi(Y_j) \\
    & \geq \frac{1}{2} T_{\mathcal{F},r} + \frac{n+m}{m} \biggl( \frac{1}{n} \sum_{i=1}^n \frac{\hat{\lambda} m r(X_i) \varphi(X_i)}{n + \hat{\lambda} m r(X_i)}  - \int \frac{\lambda_0 m r}{n + \lambda_0 mr} \varphi f - \frac{1}{m} \sum_{j=n+1}^{n+m} \frac{m \varphi(Y_j)}{n + \hat{\lambda} m r(Y_j)} + \int \frac{m}{n + \lambda_0 mr} \varphi g \biggr) \\
    & \geq \frac{1}{2} T_{\mathcal{F},r} - \gamma \, \biggl| \frac{\hat{\lambda}}{\lambda_0} -1 \biggr| \, \frac{n+m}{m}  \sum_{i = 1}^{n+m} \frac{\lambda_0 m r(Z_i)}{(n+ \lambda_0 m r(Z_i))(n+ \hat \lambda m r(Z_i))}  \\
    & \quad \quad - \frac{n+m}{m} \biggl| \frac{1}{n} \sum_{i=1}^n \frac{\lambda_0 m r(X_i) \varphi(X_i)}{n + \lambda_0 m r(X_i)}  - \int \frac{\lambda_0 m r}{n + \lambda_0 mr} \varphi f - \frac{1}{m} \sum_{j=n+1}^{n+m} \frac{m \varphi(Y_j)}{n + \lambda_0 m r(Y_j)} + \int \frac{m}{n + \lambda_0 mr} \varphi g \biggr| \\
    & \geq \frac{1}{2} T_{\mathcal{F},r} - \gamma \, \biggl| \frac{\hat{\lambda}}{\lambda_0} -1 \biggr| \, \frac{1}{n+m}  \sum_{i = 1}^{n+m} \frac{\max\{1, \lambda_0 r(Z_i)\}}{\min\{1, \hat \lambda r(Z_i)\}}  \\
    & \quad \quad - \frac{n+m}{m} \biggl| \frac{1}{n} \sum_{i=1}^n \frac{\lambda_0 m r(X_i) \varphi(X_i)}{n + \lambda_0 m r(X_i)}  - \int \frac{\lambda_0 m r}{n + \lambda_0 mr} \varphi f - \frac{1}{m} \sum_{j=n+1}^{n+m} \frac{m \varphi(Y_j)}{n + \lambda_0 m r(Y_j)} + \int \frac{m}{n + \lambda_0 mr} \varphi g \biggr| \\
    & =: \frac{1}{2} T_{\mathcal{F},r} - \gamma \, \biggl| \frac{\hat{\lambda}}{\lambda_0} -1 \biggr| \, \frac{1}{n+m}  \sum_{i = 1}^{n+m} \frac{\max\{1, \lambda_0 r(Z_i)\}}{\min\{1, \hat \lambda r(Z_i)\}} - R_1.
\end{align*}
The final term is a sum of independent mean-zero random variables, so that we may use Chebyshev's inequality to control the probability of this event. Indeed, using again the boundedness assumption on the second moments of $r$ we can see that 
\begin{align}\label{eq:HoeffdingUnpermuted}
        \mathbb{P}&(R_1 \geq \epsilon) \nonumber \\
        & =\mathbb{P} \biggl(\frac{n+m}{m} \biggl| \frac{1}{n} \sum_{i=1}^n \frac{\lambda_0 m r(X_i) \varphi(X_i)}{n + \lambda_0 m r(X_i)}  - \int \frac{\lambda_0 m r}{n + \lambda_0 mr} \varphi f - \frac{1}{m} \sum_{j=n+1}^{n+m} \frac{m \varphi(Y_j)}{n + \lambda_0 m r(Y_j)} + \int \frac{m}{n + \lambda_0 mr} \varphi g \biggr| \geq \epsilon \biggr) \nonumber \\
        & \leq \frac{(n+m)^2}{\epsilon^2 m^2} \left\{n^{-1} \operatorname{Var}\left(\frac{\lambda_0 m r(X) \varphi(X)}{n + \lambda_0 m r(X)} \right) +m^{-1} \operatorname{Var}\left(\frac{ m \varphi(Y)}{n + \lambda_0 m r(Y)} \right)\right\} \nonumber \\
         & \leq \frac{\gamma^2(n+m)^2}{\epsilon^2 m^2} \biggl\{n^{-1} \min \biggl( \frac{m^2}{n^2} \mathbb{E}[\lambda_0^2 r^2(X)] , \, 1 \biggr) + \frac{m}{n^2} \min \biggl(1, \frac{n^2}{m^2} \mathbb{E}[\lambda_0^{-2} r^{-2}(Y)] \biggr) \biggr\} \nonumber \\
    & \leq \frac{\gamma^2 (n+m)^2}{\epsilon^2 m^2} \biggl\{\frac{1}{n} \min \biggl( \frac{m^2(m+n)}{n^3} V_0, \, 1 \biggr) + \frac{m}{n^2} \min \biggl(1, \frac{n^2(m+n)}{m^3} V_0 \biggr) \biggr\} \nonumber \\
    & \leq \frac{\gamma^2 (n+m)^2}{\epsilon^2 m^2} V_0 \biggl\{ \mathbbm{1}_{\{n \leq m\}} \biggl( \frac{1}{n} + \frac{m+n}{m^2} \biggr) + \mathbbm{1}_{\{n > m\}} \biggl( \frac{m^2(m+n)}{n^4} + \frac{m}{n^2} \biggr) \biggr\} \leq \frac{12 \gamma^2 V_0}{\epsilon^2 \, \min(n,m)}.
\end{align}
It remains for us to control the term depending on $\hat{\lambda}/\lambda_0$ under the event $A$. We first note that our assumption on $r$ gives us bounds on $\hat{\lambda}$ and $\lambda_0$. Exactly as in the proof of Lemma~\ref{lemma:convergence_in_d}, on the event $A$ we have that $1/\{2\mathbb{E}(\bar{R})\} \leq \hat{\lambda} \leq 2 \mathbb{E}(\check{R}^{-1})$. As also claimed above, by a very similar argument we also have that $1/\mathbb{E}(\bar{R}) \leq \lambda_0 \leq \mathbb{E}(\check{R}^{-1})$. We may view $\hat{\lambda}$ as an $M$-estimator of $\lambda_0$. Indeed, writing $\phi_\lambda(z) = (n+m)/\{n+m\lambda r(z)\}-1$, we defined $\hat{\lambda}$ to be the solution to
\[
    \hat{S}_{m,n}(\lambda) := \frac{1}{n+m} \sum_{i=1}^{n+m} \phi_\lambda(Z_i) = 0.
\]
Similarly, $\lambda_0$ can be thought of as the solution to $S_{m,n}(\lambda):= \mathbb{E}\{\hat{S}_{m,n}(\lambda)\} = 0$.  By Jensen's inequality we have that
\begin{align*}
    - \hat{S}_{m,n}'(\lambda) = \sum_{i=1}^{n+m} \frac{mr(Z_i)}{\{n+\lambda mr(Z_i)\}^2} \geq \frac{(n+m)^2}{\sum_{i=1}^{n+m} \frac{\{n+\lambda mr(Z_i)\}^2}{mr(Z_i)} } = \frac{m(n+m)}{n^2\check{R}^{-1} + 2\lambda m n + \lambda^2 m^2 \bar{R}},
\end{align*}
which is decreasing in $\lambda$. Thus, for any $\epsilon \in (0,1)$ we have
\begin{align*}
    \mathbb{P}&(|\hat{\lambda}/\lambda_0 - 1| \geq \epsilon, A) = \mathbb{P}(|\hat{\lambda}-\lambda_0| \geq \epsilon \lambda_0, A) \leq \mathbb{P}\biggl(|\hat{S}_{m,n}(\hat{\lambda}) - \hat{S}_{m,n}(\lambda_0) | \geq \epsilon \lambda_0 \bigl|\hat{S}_{m,n}'\bigl((1+\epsilon)\lambda_0 \bigr) \bigr|, A \biggr) \\
    & \leq \mathbb{P}\biggl(|\hat{S}_{m,n}(\lambda_0) | \geq \epsilon \lambda_0 \bigl|\hat{S}_{m,n}'\bigl(2\lambda_0\bigr) \bigr|, A \biggr) \leq \mathbb{P}\biggl(|\hat{S}_{m,n}(\lambda_0) | \geq \frac{(m+n)m \lambda_0 \epsilon}{2\{n^2\mathbb{E}(\check{R}^{-1}) + 2\lambda_0 m n + \lambda_0^2 m^2 \mathbb{E}(\bar{R})\}} \biggr) \\
    & \leq \mathbb{P}\biggl(|\hat{S}_{m,n}(\lambda_0) | \geq \frac{(m+n)m\epsilon}{2} \min \biggl[ \frac{ 1 }{n^2\mathbb{E}(\check{R}^{-1})\mathbb{E}(\bar{R}) + 2 m n  + m^2 }, \,  \frac{ 1}{n^2+ 2 m n +  m^2 \mathbb{E}(\bar{R})\mathbb{E}(\check{R}^{-1})} \biggr] \biggr) \\
    & \leq \mathbb{P}\biggl(|\hat{S}_{m,n}(\lambda_0) | \geq \frac{m\epsilon}{2 V_0^{1/2} (m+n) } \biggr),
\end{align*}
where the last step follows from the boundedness assumption. Now $\hat{S}_{m,n}(\lambda_0)$ is a sum of independent mean-zero random variables, so that we may again use Chebyshev's inequality to control the probability of this event. Indeed, it is straightforward to see that for $\delta=m\epsilon/\{2V_0^{1/2}(m+n)\}$ we have
\begin{align*}
    \mathbb{P}\bigl(|\hat{S}_{m,n}(\lambda_0) | &\geq \delta  \bigr) \leq \delta^{-2} \mathrm{Var} \bigl( \hat{S}_{m,n}(\lambda_0) \bigr) = \frac{1}{\delta^2(m+n)^2} \bigl\{ n \mathrm{Var}\, \phi_{\lambda_0}(X) + m\mathrm{Var}\, \phi_{\lambda_0}(Y) \bigr\} \\
    & = \frac{4V_0}{\epsilon^2} \biggl\{ n \mathrm{Var} \biggl( \frac{1-\lambda_0 r(X)}{n + m \lambda_0 r(X)} \biggr)+m\mathrm{Var} \biggl( \frac{1-\lambda_0 r(Y)}{n + m \lambda_0 r(Y)} \biggr) \biggr\} \\
    & \leq \frac{4V_0}{\epsilon^2(m+n)^2} \biggl[ n \mathbb{E} \biggl\{ \frac{\{1-\lambda_0r(X)\}^2}{\min^2(1,\lambda_0 r(X))} \biggr\} + m \mathbb{E} \biggl\{ \frac{\{1-\lambda_0r(Y)\}^2}{\min^2(1,\lambda_0 r(Y))} \biggr\}  \biggr] \\
    & \leq \frac{4V_0}{\epsilon^2(m+n)^2} \biggl[ n \mathbb{E} \biggl\{ \bigl(\lambda_0 r(X)\bigr)^2 + \biggl( \frac{1}{\lambda_0 r(X) } \biggr)^2 \biggr\} + m \mathbb{E} \biggl\{ \bigl(\lambda_0 r(Y)\bigr)^2 + \biggl( \frac{1}{\lambda_0 r(Y) } \biggr)^2 \biggr\} \biggr] \\
    & \leq \frac{8 V_0^2}{\epsilon^2(m+n)}.
\end{align*}
Finally, since on the event $A$ we have $\bar{R} \leq 2 \mathbb{E}(\bar{R})$, $\check{R}^{-1} \leq 2\mathbb{E}(\check{R}^{-1})$, $1/\{2\mathbb{E}(\bar{R})\} \leq \hat{\lambda} \leq 2 \mathbb{E}(\check{R}^{-1})$, $1/\mathbb{E}(\bar{R}) \leq \lambda_0 \leq \mathbb{E}(\check{R}^{-1})$ and $\mathbb{E}(\bar{R}) \mathbb{E}(\check{R}^{-1}) \leq V_0$, it follows that
\begin{align*}
    &\frac{1}{n+m} \sum_{i = 1}^{n+m} \frac{\max\{1, \lambda_0 r(Z_i)\}}{\min\{1, \hat \lambda r(Z_i)\}} \leq  \frac{1}{n+m} \sum_{i = 1}^{n+m}  \left( 1 + \lambda_0 r(Z_i)\right) \left( 1 + \frac{1}{\hat \lambda r(Z_i)}\right) \\
    & \leq  \frac{1}{n+m} \sum_{i = 1}^{n+m} \left(1 + \lambda_0 r(Z_i) + \frac{1}{\hat \lambda r(Z_i)} + \frac{\lambda_0}{\hat \lambda} \right)\leq \frac{1}{n+m} \sum_{i = 1}^{n+m} \left(1 + \mathbb{E}(\check{R}^{-1}) r(Z_i) + \frac{2\mathbb{E}(\bar{R}) }{r(Z_i)} + 2\mathbb{E}(\check{R}^{-1}) \mathbb{E}(\bar{R}) \right)\\
    & = 1 + \mathbb{E}(\check{R}^{-1}) \, \bar R + 2\mathbb{E}(\bar{R}) \, \check R^{-1} + 2\mathbb{E}(\check{R}^{-1}) \mathbb{E}(\bar{R}) \leq  1 + 2 \mathbb{E}(\check{R}^{-1}) \, \mathbb{E}(\bar{R}) + 4\mathbb{E}(\bar{R}) \, \mathbb{E}(\check R^{-1}) + 2\mathbb{E}(\check{R}^{-1}) \mathbb{E}(\bar{R}) \\
    & \leq 1 + 8\mathbb{E}(\check{R}^{-1}) \mathbb{E}(\bar{R}) \leq 1 + 8V_0^{1/2}.
\end{align*}
This, in conjunction with~\eqref{eq:HoeffdingUnpermuted} and the previous display, gives
\begin{align*}
    \mathbb{P}&(T(Z) < \tfrac{1}{4} T_{\mathcal{F},r}, A) \\
    & \leq  \mathbb{P}\left(\frac{1}{2} T_{\mathcal{F},r} - \gamma (1+8V_0^{1/2}) \, \biggl| \frac{\hat{\lambda}}{\lambda_0} -1 \biggr| \, - R_1 < \tfrac{1}{4}  T_{\mathcal{F},r}, A  \right) \leq \mathbb{P}\left(\gamma (1+8V_0^{1/2}) \, \biggl| \frac{\hat{\lambda}}{\lambda_0} -1 \biggr| \, + R_1 > \tfrac{1}{4} T_{\mathcal{F},r}, A  \right) \\
    & \leq \mathbb{P}\left(\gamma (1+8V_0^{1/2}) \, \biggl| \frac{\hat{\lambda}}{\lambda_0} -1 \biggr|  > \tfrac{1}{8} T_{\mathcal{F},r}, A  \right) + \mathbb{P}\left(R_1 > \tfrac{1}{8} T_{\mathcal{F},r}, A  \right) \leq \frac{2^{9} V_0^2 \gamma^2 (1+8V_0^{1/2})^2}{T_{\mathcal{F},r}^2 (n+m)} + \frac{2^{10} \gamma^2 V_0}{T_{\mathcal{F},r}^2 \min(n,m)}.
\end{align*}
We can thus bound 
\begin{align*}
    \mathbb{P}(T(Z_\sigma) \geq T(Z), A) & \leq \mathbb{P}(T(Z_\sigma) \geq \tfrac{1}{4} T_{\mathcal{F},r}, A) + \mathbb{P}(T(Z) < \tfrac{1}{4} T_{\mathcal{F},r}, A) \\
    & \leq \frac{4 \mathbb{E}[T(Z_\sigma) \mathbbm{1}_A]}{T_{\mathcal{F},r}} + \frac{2^{9} V_0^2 \gamma^2 (1+8V_0^{1/2})^2}{T_{\mathcal{F},r}^2 (n+m)} + \frac{2^{10} \gamma^2 V_0}{T_{\mathcal{F},r}^2 \min(n,m)},
\end{align*}
and use~\eqref{eq:PermutedGoesToZero} to show that the last display goes to zero. This, combined with~\eqref{eq:ProbA_Complement}, concludes the proof.
\end{proof}

\begin{proof}[Proof of Lemma \ref{lemma:convergence_in_d}]
Let $(\sigma_0,\sigma_1,\ldots,\sigma_S)$ be the permutations generated through Algorithm~\ref{alg:pairwise_sampler}. For $t \in \{0,1,\ldots,S\}$ define
\[
S_n^t := \sum_{i = 1}^n \varphi(Z_{\sigma_t(i)}) - n\int \varphi d\hat{H}_{n,m},
\]
Introduce the shorthand $r^t_i := r(Z_{\sigma_{t}(i)})$ and $\varphi_i^t := \varphi(Z_{\sigma_t(i)})$ for $i \in [n+m]$ and define $q_i^t : = \hat{\lambda}m r_i^t/(n + \hat{\lambda}m r_i^t)$. Then we see that
\begin{align}\label{eq:bound(I)^2}
    \frac{S_n^t}{n} &= \frac{1}{n} \sum_{i = 1}^n \varphi_i^t - \int \varphi d\hat{H}_{n,m}  =  \frac{1}{n} \sum_{i = 1}^n \varphi_i^t - \sum_{i  =1}^{n+m} \frac{\varphi(Z_i)}{n + \hat{\lambda}m r(Z_i)} = \frac{1}{n} \sum_{i = 1}^n \varphi_i^t - \sum_{i  =1}^{n+m} \frac{\varphi_i^t}{n + \hat{\lambda}m r_i^t} \nonumber \\
    & = \frac{1}{n} \biggl( \sum_{i  = 1}^{n} \frac{\hat{\lambda}m r_i^t}{n + \hat{\lambda}m r_i^t} \varphi_i^t - \sum_{j=n+1}^{n+m} \frac{n}{n + \hat{\lambda}m r_j^t} \varphi_j^t \biggr) = \frac{1}{n} \biggl\{ \sum_{i=1}^n q_i^t \varphi_i^t - \sum_{j=n+1}^{n+m} (1-q_j^t) \varphi_j^t \biggr\}.
\end{align}
Recall that we write $K=\min(m,n)$ and, for each $t \geq 0$, we generate $\sigma_{t+1}$ by sampling $I_1,\ldots,I_K \in [n]$ and $J_1,\ldots,J_K \in \{n+1, \ldots, n+m\}$ uniformly at random without replacement, and switching $Z_{\sigma_t(I_\ell)}$ with $Z_{\sigma_t(J_\ell)}$ with probability 
\[
\tilde{p}_{I_\ell,J_\ell}^t =  \frac{\hat{\lambda}n m r_{I_\ell}^t}{(n + \hat{\lambda}m r_{I_\ell}^t)(n + \hat{\lambda}m r_{J_\ell}^t)} = q_{I_\ell}^t(1-q_{J_\ell}^t).
\]
Given $(I_\ell),(J_\ell)$, for $\ell \in [K]$ let $B_{I_\ell,J_\ell} \sim \mathrm{Ber}(\tilde{p}_{i_\ell,j_\ell}^t)$ determine whether the $\ell$th switch occurs. 

Having introduced the necessary notation, we prove the result by analysing the change in $S_n^t$ over a single step of the algorithm, showing that $S_n^t$ can be regarded as a Lyapunov function satisfying a geometric drift condition. We have that
\begin{align}
\label{Eq:TwoTermDecomp}
    \mathbb{E}\{ (S_n^{t+1})^2 | \sigma_t, Z \} &- (S_n^t)^2 = \mathbb{E} \biggl[ 
    \biggl\{ S_n^t + \sum_{\ell=1}^K B_{I_\ell,J_\ell}(\varphi_{J_\ell}^t - \varphi_{I_\ell}^t) \biggr\}^2 \biggm| \sigma_t, Z \biggr] - (S_n^t)^2  \nonumber \\
    &= 2 S_n^t \mathbb{E} \biggl\{ \sum_{\ell=1}^K B_{I_\ell,J_\ell}(\varphi_{J_\ell}^t - \varphi_{I_\ell}^t) \biggm| \sigma_t, Z \biggr\} + \mathbb{E} \biggl[ \biggl\{ \sum_{\ell=1}^K B_{I_\ell,J_\ell}(\varphi_{J_\ell}^t - \varphi_{I_\ell}^t) \biggr\}^2 \biggm| \sigma_t, Z \biggr].
\end{align}
We treat each of the two terms in this sum separately. For the first, we may calculate that
\begin{align}
\label{Eq:Term1}
    \mathbb{E}& \biggl\{ \sum_{\ell=1}^K B_{I_\ell,J_\ell}(\varphi_{J_\ell}^t - \varphi_{I_\ell}^t) \biggm| \sigma_t, Z \biggr\} = \sum_{\ell=1}^K \mathbb{E} \bigl\{ \tilde{p}_{I_\ell,J_\ell}^t (\varphi_{J_\ell}^t - \varphi_{I_\ell}^t) \bigm| \sigma_t, Z \bigr\} = K \mathbb{E} \bigl\{ \tilde{p}_{I_1,J_1}^t (\varphi_{J_1}^t - \varphi_{I_1}^t) \bigm| \sigma_t, Z \bigr\} \nonumber \\
    &= \frac{K}{mn} \sum_{i=1}^n \sum_{j=n+1}^{n+m} q_i^t(1-q_j^t)(\varphi_j^t - \varphi_i^t) = \frac{K}{mn} \biggl( \sum_{i=1}^n q_i^t \biggr) \biggl\{ \sum_{j=n+1}^{n+m} (1-q_j^t) \varphi_j^t - \sum_{i=1}^n q_i^t \varphi_i^t \biggr\} = - \frac{K}{mn} \biggl( \sum_{i=1}^n q_i^t \biggr) S_n^t,
\end{align}
where the penultimate equality follows from~\eqref{Eq:EqualSumsLambda}. The second term in~\eqref{Eq:TwoTermDecomp} is controlled by writing
\begin{align}
\label{Eq:Term2}
    \mathbb{E} &\biggl[ \biggl\{ \sum_{\ell=1}^K B_{I_\ell,J_\ell}(\varphi_{J_\ell}^t - \varphi_{I_\ell}^t) \biggr\}^2 \biggm| \sigma_t, Z \biggr] \nonumber \\
    &= \sum_{\ell=1}^K \mathbb{E} \bigl\{ B_{I_\ell,J_\ell}(\varphi_{I_\ell}^t - \varphi_{J_\ell}^t)^2 \bigm| \sigma_t, Z \bigr\} + \sum_{\ell_1 \neq \ell_2} \mathbb{E} \bigl\{ B_{I_{\ell_1},J_{\ell_1}} B_{I_{\ell_2},J_{\ell_2}}(\varphi_{I_{\ell_1}}^t - \varphi_{J_{\ell_1}}^t)(\varphi_{I_{\ell_2}}^t - \varphi_{J_{\ell_2}}^t) \bigm| \sigma_t, Z \bigr\} \nonumber \\
    & = \frac{K}{mn} \sum_{i=1}^n \sum_{j=n+1}^{n+m} \tilde{p}_{ij}^t (\varphi_i^t - \varphi_j^t)^2 + \frac{K(K-1)}{n(n-1)m(m-1)} \sum_{i_1 \neq i_2} \sum_{j_1 \neq j_2} \tilde{p}_{i_1j_1}^t \tilde{p}_{i_2j_2}^t (\varphi_{i_1}^t - \varphi_{j_1}^t) (\varphi_{i_2}^t - \varphi_{j_2}^t) \nonumber \\
    & = \frac{K}{mn} \sum_{i=1}^n \sum_{j=n+1}^{n+m} \tilde{p}_{ij}^t (\varphi_i^t - \varphi_j^t)^2 + \frac{K(K-1)}{n(n-1)m(m-1)} \biggl[ \sum_{i_1, i_2=1}^n \sum_{j_1,j_2=n+1}^{n+m} \tilde{p}_{i_1j_1}^t \tilde{p}_{i_2j_2}^t (\varphi_{i_1}^t - \varphi_{j_1}^t) (\varphi_{i_2}^t - \varphi_{j_2}^t) \nonumber \\
    & \hspace{50pt} - \sum_{i_1,i_2=1}^n \sum_{j=n+1}^{n+m} \tilde{p}_{i_1j}^t \tilde{p}_{i_2j}^t (\varphi_{i_1}^t - \varphi_{j}^t) (\varphi_{i_2}^t - \varphi_{j}^t) - \sum_{i=1}^n \sum_{j_1,j_2=n+1}^{n+m} \tilde{p}_{ij_1}^t \tilde{p}_{ij_2}^t (\varphi_{i}^t - \varphi_{j_1}^t) (\varphi_{i}^t - \varphi_{j_2}^t) \nonumber \\
    & \hspace{50pt} + \sum_{i=1}^n \sum_{j=n+1}^{n+m} (\tilde{p}_{ij}^t)^2 (\varphi_{i}^t - \varphi_{j}^t)^2\biggr] \nonumber \\
    & = \frac{K}{mn} \sum_{i=1}^n \sum_{j=n+1}^{n+m} \tilde{p}_{ij}^t \biggl\{ 1 + \frac{K-1}{(m-1)(n-1)} \tilde{p}_{ij}^t \biggr\} (\varphi_i^t - \varphi_j^t)^2 + \frac{K(K-1)}{n(n-1)m(m-1)} \biggl[ \biggl( \sum_{i=1}^n q_i^t \biggr)^2 (S_n^t)^2 \nonumber \\
    & \hspace{50pt} - \sum_{j=n+1}^{n+m} (1-q_j^t)^2 \biggl\{ \sum_{i=1}^n q_i^t(\varphi_i^t - \varphi_j^t) \biggr\}^2 - \sum_{i=1}^{n} (q_i^t)^2 \biggl\{ \sum_{j=n+1}^{n+m} (1-q_j^t)(\varphi_i^t - \varphi_j^t) \biggr\}^2 \biggr] \nonumber \\
    & \leq \biggl(\sum_{i=1}^n q_i^t \biggr)^2 \biggl\{ \frac{8K \|\varphi\|_\infty^2}{mn}  + \frac{K(K-1)}{n(n-1)m(m-1)} (S_n^t)^2 \biggr\},
\end{align}
where for the final inequality we again use~\eqref{Eq:EqualSumsLambda}. Combining~\eqref{Eq:TwoTermDecomp},~\eqref{Eq:Term1} and~\eqref{Eq:Term2}, using our assumption that $\max(m,n) \geq 3$ and using the fact that by~\eqref{Eq:EqualSumsLambda} we have $\sum_{i=1}^n q_i^t \leq K$, we see that
\begin{align*}
    \mathbb{E}\{ &(S_n^{t+1})^2 | \sigma_t, Z \} - (S_n^t)^2 \leq - \frac{2K}{mn} \biggl( \sum_{i=1}^n q_i^t \biggr) (S_n^t)^2 + \biggl(\sum_{i=1}^n q_i^t \biggr)^2 \biggl\{ \frac{8K \|\varphi\|_\infty^2}{mn}  + \frac{K(K-1)}{n(n-1)m(m-1)} (S_n^t)^2 \biggr\} \\
    & = - \frac{2K}{mn}\biggl( \sum_{i=1}^n q_i^t \biggr) \biggl\{1 - \frac{K-1}{2(m-1)(n-1)} \sum_{i=1}^n q_i^t \biggr\} (S_n^t)^2 + \frac{8K \|\varphi\|_\infty^2}{mn} \biggl(\sum_{i=1}^n q_i^t \biggr)^2 \\
    & \leq - \frac{K}{2mn}\biggl( \sum_{i=1}^n q_i^t \biggr) (S_n^t)^2 + \frac{8K \|\varphi\|_\infty^2}{mn} \biggl(\sum_{i=1}^n q_i^t \biggr)^2.
\end{align*}

We now control $\sum_i q_i^t$ using our boundedness assumption on $r$. Before doing so we establish some basic consequences of this assumption. First, we see that it implies bounds on $\hat{\lambda}$. Indeed, writing $\bar{R}=(n+m)^{-1} \sum_{i=1}^{n+m} r(Z_i)$ for the arithmetic mean of the observed $r$ values, by Jensen's inequality we have that
\[
    1= \sum_{i=1}^{n+m} \frac{1}{n+\hat{\lambda}mr(Z_i)} \geq \biggl[ \frac{1}{(n+m)^2} \sum_{i=1}^{n+m} \{n+\hat{\lambda}mr(Z_i)\} \biggr]^{-1} = \biggl( \frac{n}{n+m} + \frac{m}{n+m} \hat{\lambda} \bar{R} \biggr)^{-1}.
\]
Similarly, writing $\check{R}=\{(n+m)^{-1}\sum_{i=1}^{n+m} r(Z_i)^{-1}\}^{-1}$ for the harmonic mean of the observed $r$ values, we have that
\[
    1=\sum_{i=1}^{n+m} \frac{1}{m+ \hat{\lambda}^{-1} n r(Z_i)^{-1}} \geq \biggl( \frac{m}{n+m} + \frac{n}{m+n} \hat{\lambda}^{-1} \check{R}^{-1} \biggr)^{-1},
\]
where the first identity follows by the symmetry of our problem, as discussed in Remark~\ref{rmk:reciprocal}. These two bounds together imply that $\bar{R}^{-1} \leq \hat{\lambda} \leq \check{R}^{-1}$. Together with our assumption that $\bar{R}/\check{R} \leq \kappa$, this implies that $\max(\hat{\lambda}^{-1} \check{R}^{-1},\hat{\lambda} \bar{R}) \leq (\hat{\lambda}^{-1} \check{R}^{-1}) \cdot (\hat{\lambda} \bar{R}) = \bar{R}/\check{R} \leq \kappa$. Having bounds on $\hat{\lambda}$, we now turn to the control of $\sum_i q_i^t$ itself. Let $(\cdot)$ be a reordering of our data such that $r(Z_{(1)}) \leq r(Z_{(2)}) \leq \ldots \leq r(Z_{(m+n)})$. Then, again using Jensen's inequality, we have that
\begin{align*}
    \sum_{i=1}^n q_i^t \geq \sum_{i=1}^n \frac{m \hat{\lambda} r(Z_{(i)})}{n+m \hat{\lambda} r(Z_{(i)})} &\geq n \biggl\{ \frac{1}{n} \sum_{i=1}^n \frac{n+m \hat{\lambda} r(Z_{(i)})}{m \hat{\lambda} r(Z_{(i)})} \biggr\}^{-1} = \frac{mn}{m + \hat{\lambda}^{-1} \sum_{i=1}^n r(Z_{(i)})^{-1}} \\
    &\geq \frac{mn}{m+(n+m) \hat{\lambda}^{-1} \check{R}^{-1}} \geq \frac{mn}{m+(n+m) \kappa} \geq \frac{1}{2} \frac{mn}{(m+n)\kappa}.
\end{align*}

Having bounded $\sum_i q_i^t$ we may now conclude the proof. Writing $\tau=\max(m,n)/\min(m,n)$ and arguing inductively for the third inequality, we now see that
\begin{align*}
    n^{-2}\mathbb{E}\{ (S_n^{t+1})^2| Z \} &\leq n^{-2}\mathbb{E}\{ (S_n^{t})^2| Z \} + n^{-2}\mathbb{E} \biggl\{ - \frac{K}{2mn}\biggl( \sum_{i=1}^n q_i^t \biggr) (S_n^t)^2 + \frac{8K \|\varphi\|_\infty^2}{mn} \biggl(\sum_{i=1}^n q_i^t \biggr)^2 \biggm| Z \biggr\} \\
    & \leq n^{-2}\biggl\{1-\frac{K}{4(m+n)\kappa}\biggr\} \mathbb{E} \{(S_n^t)^2 | Z\} + \frac{8 \|\varphi\|_\infty^2 K^3}{mn^3}  \\
    & \leq \frac{8 \|\varphi\|_\infty^2 K^3}{mn^3} \sum_{s=0}^{t} \biggl(1-\frac{1}{8 \tau \kappa}\biggr)^s + \biggl(1-\frac{1}{8 \tau \kappa}\biggr)^{t+1} n^{-2} \mathbb{E}\{(S_n^0)^2 | Z\} \\
    & \leq \frac{64 \|\varphi\|_\infty^2 K^3}{mn^3}\tau\kappa + \|\varphi\|_\infty^2 \exp\biggl( - \frac{t+1}{8\tau\kappa} \biggr) \leq \frac{64 \|\varphi\|_\infty^2\kappa K}{n^2} + \|\varphi\|_\infty^2 \exp\biggl( - \frac{t+1}{8\tau\kappa} \biggr),
\end{align*}
and the result follows upon taking $t=S-1$.

\end{proof}

\subsection{Proofs for Section~\ref{sec:sMMD+optimality}}
\begin{proof}[Proof of Proposition \ref{prop:TF_explicit}]
    This proof borrows ideas from the proof of Lemma 4 in \cite{JMLR:v13:gretton12a}. Define the linear operator 
    \[
    T_{rf} :
    \begin{cases}
         \mathcal{H} \to \mathbb{R} \\
         \varphi \mapsto \int \frac{n+m }{n + \lambda_0 m  r} \, \lambda_0 r f \,  \varphi \, d\mu. 
    \end{cases}
    \]
    Write $d\mu_x = d\mu(x)$. Using the reproducing property of the reproducing kernel Hilbert space, i.e.~$\varphi(x) = \langle \varphi, k(\cdot, x) \rangle_\mathcal{H}$, we can show that this operator is bounded, since for all $\varphi \in \mathcal{H}$ we have
    \begin{align*}
        |T_{rf}\varphi| & \leq \int_{\mathcal{X}} \frac{(n+m)\lambda_0 r(x) f(x) }{n + \lambda_0 m  r(x)} |\varphi(x)| d\mu_x =  \int_{\mathcal{X}} \frac{(n+m)\lambda_0 r(x) f(x) }{n + \lambda_0 m r(x)} \left| \langle \varphi, k(\cdot, x) \rangle_\mathcal{H} \right| d\mu_x \\
        & \leq \|\varphi\|_\mathcal{H} \int_{\mathcal{X}}  \sqrt{k(x,x)} \frac{(n+m)\lambda_0  r(x) f(x) }{n + \lambda_0 m r(x)} d\mu_x,
    \end{align*}
    which shows that $|T_{rf}\varphi|/\|\varphi\|_\mathcal{H}$ is bounded uniformly in $\varphi$. The same is true for the linear operator
    \[
    T_{g} :
    \begin{cases}
         \mathcal{H} \to \mathbb{R} \\
         \varphi \mapsto \int \frac{(n+m) \, g}{n + \lambda_0 m r} \varphi \, d\mu, 
    \end{cases}
    \]
    hence the Riesz representation theorem implies that there exist $m_{rf}, m_g \in \mathcal{H}$ such that $T_{rf}\varphi = \langle m_{rf}, \varphi \rangle_\mathcal{H}$ and $T_{g}\varphi = \langle m_{g}, \varphi \rangle_\mathcal{H}$. Furthermore, using again the reproducing property of $\mathcal{H}$, we have that 
    \begin{align*}
        m_{rf}(t) &=  \langle m_{rf}, k(t, \cdot) \rangle_\mathcal{H} = T_{rf} k(t,\cdot)  = \int_{\mathcal{X}} \frac{(n+m)\lambda_0 r(x) f(x) }{n + \lambda_0 m r(x)} k(x,t) d\mu_x
    \end{align*}
    and, similarly, $m_g(t) = \int_{\mathcal{X}} \frac{(n+m) g(x)}{n + \lambda_0 m r(x)} k(x,t) d\mu_x$. This implies that
    \begin{align*}
         T_{\mathcal{F},r}^2(f,g)& = \left(\sup_{\|\varphi\|_\mathcal{H} \leq 1} \left| \int \frac{(n+m)(\lambda_0 r f - g)}{n + \lambda_0 m r} \varphi d\mu \right| \right)^2 = \left(\sup_{\|\varphi\|_\mathcal{H} \leq 1} \left| T_{rf} \varphi - T_g \varphi \right| \right)^2 \\
        & = \left(\sup_{\|\varphi\|_\mathcal{H} \leq 1} |\langle m_{rf} - m_g, \varphi  \rangle_{\mathcal{H}}| \right)^2 = \| m_{rf} - m_g \|_\mathcal{H}^2 = \langle m_{rf}, m_{rf} \rangle_\mathcal{H} + \langle m_{g}, m_{g} \rangle_\mathcal{H} - 2\langle m_{rf}, m_{g} \rangle_\mathcal{H} \\
        & = \int_{\mathcal{X}} \frac{(n+m)\lambda_0 r(t) f(t) }{n + \lambda_0 m r(t)} \left( \int_{\mathcal{X}} \frac{(n+m)\lambda_0 r(x) f(x) }{n + \lambda_0 m r(x)} k(x,t) d\mu_x \right) d\mu_t \\
        &\hspace{50pt} + \int_{\mathcal{X}} \frac{(n+m)g(t) }{n + \lambda_0 m r(t)} \left( \int_{\mathcal{X}} \frac{(n+m)g(x)}{n + \lambda_0 m r(x)} k(x,t) d\mu_x \right) d\mu_t \\
        & \hspace{50pt} -2 \int_{\mathcal{X}} \frac{(n+m)\lambda_0 r(t) f(t) }{n + \lambda_0 m r(t)} \left( \int_{\mathcal{X}} \frac{(n+m) g(x) }{n + \lambda_0 m r(x)} k(x,t) d\mu_x \right) d\mu_t \\
        & = \mathbb{E}_{X,X'}\left[\frac{\lambda_0^2 (n+m)^2 r(X)r(X')}{\{n + \lambda_0 m r(X) \} \{n + \lambda_0 m r(X')\}} k(X,X')\right]  \\
        &\hspace{50pt} + \mathbb{E}_{Y,Y'}\left[\frac{(n+m)^2}{\{n + \lambda_0 m r(Y) \} \{n + \lambda_0 m r(Y')\}} k(Y,Y')\right] \\
        & \hspace{50pt} - 2 \mathbb{E}_{X,Y}\left[\frac{\lambda_0 (n+m)^2 r(X)}{\{n + \lambda_0 m r(X) \} \{n + \lambda_0 m r(Y)\}} k(X,Y)\right],
    \end{align*}
    where $X, X^\prime \overset{\mathrm{i.i.d.}}{\sim} f$ independently of $Y, Y^\prime \overset{\mathrm{i.i.d.}}{\sim} g$, and concludes the proof.
\end{proof}

\begin{proof}[Proof of Theorem \ref{thm:UB_minimax}]
Consider a kernel function $K : \mathbb{R} \to \mathbb{R}$ that satisfies the assumptions in Subsection~\ref{sec:optimality}.  In particular, $K \in L^1(\mathbb{R}) \cap L^4(\mathbb{R})$ implies that $K \in L^2(\mathbb{R})$ using the basic inequality $t^2 \leq t + t^4$ for $t \geq 0$.  Based on this, we define the constants
\[
\kappa_j(d) := \left( \int_{\mathbb{R}} |K(u)|^j \, du \right)^d < \infty
\quad \text{ for } j \in \{1, 2, 4\},
\]
which are finite by assumption. Since the kernel is fixed in advance, we omit its dependence from the notation. We frequently use the analytical properties of the product kernel $k_\zeta$ with bandwidth $\zeta \geq 1$, which satisfies
\[
\int_{\mathbb{R}^d} |k_\zeta(x, y)|^j\, dy
=  \zeta^{(j-1)d} \left( \int_{\mathbb{R}} |K(u)|^j\, du \right)^d
= \kappa_j(d)\, \zeta^{(j-1)d} \quad \text{ for all } x \in \mathbb{R}^d \text{ and } j \in \{1,2,4 \}.
\]
As a result, using the fact that $\max(\|f\|_\infty, \|g\|_\infty) \leq M$, it follows that for any $x \in \mathbb{R}^d$ we have
\begin{align*}
    \mathbb{E}[|k_\zeta(x,X)|^j] &=  \int_{\mathbb{R}^d} |k_\zeta(x, y)|^j \, f(y) dy \leq M \int_{\mathbb{R}^d} |k_\zeta(x, y)|^j\, dy = M \kappa_j(d) \zeta^{(j-1)d} \quad \text{ for } j \in \{1,2,4\},
\end{align*}
and similarly for the expectation of $|k_\zeta(x,Y)|^j$. Furthermore, these bounds imply the same bounds on the quantities $\mathbb{E}[|k_\zeta(X_1,X_2)|^j],\, \mathbb{E}[|k_\zeta(Y_1,Y_2)|^j]$ and $\mathbb{E}[|k_\zeta(X,Y)|^j]$ for all $j \in \{1,2,4\}$. Now, we already know that Algorithm~\ref{alg:star_algo} controls the type~I error at a nominal level $\alpha$, so we can bound the minimax separation $\rho_r^*$ by controlling its type~II error. In order to do this, fix $\beta \in (0, 1-\alpha)$, choose $H \geq 2 \lceil \frac{1}{\alpha \beta}-1 \rceil$, and suppose $(f,g) \in \mathcal{S}_\theta^r(\rho)$ satisfies 
\begin{align}\label{eq:2moments}
         \mathrm{MMD}^2_{r, k_\zeta}(f,g) \geq \max \left\{2|\mathbb{E}[U(Z) - U(Z_\sigma)] - \mathrm{MMD}^2_{r, k_\zeta}(f,g)|, \left(\frac{8}{\alpha \beta} \operatorname{Var}[U(Z_\sigma) - U(Z) ] \right)^{1/2} \right\}.
\end{align}
Then, a double application of Markov's inequality shows that 
 \begin{align*}
\mathbb{P}\left\{p>\alpha \right\} & =\mathbb{P}\left(1+\sum_{h=1}^H \mathbbm{1}\left\{U(Z_{\sigma^{(h)}}) \geq U(Z) \right\}>(1+H) \alpha  \right) \leq \frac{1+H \, \mathbb{P}\left\{U(Z_\sigma) \geq U(Z) \right\} }{(1+H) \alpha} \\
& \leq \frac{1}{(1+H) \alpha}\left(1+\frac{H \, \operatorname{Var}[U(Z_\sigma) -U(Z)]}{\left\{\mathbb{E}[U(Z_\sigma) -U(Z)]\right\}^2}\right) \leq \frac{1}{(1+H) \alpha}\left(1+\frac{H \alpha \beta}{2}\right) \leq \beta.
\end{align*}
Bounding the terms on the right-hand side of \eqref{eq:2moments} is therefore sufficient to establish an upper bound on the minimax separation with respect to the squared shifted maximum mean discrepancy metric. However, since our goal is to characterize the separation in terms of the $L^2$ distance defined in \eqref{eq:l2_separation}, we must also relate $\mathrm{MMD}^2_{r, k_\zeta}$ to this $L^2$ norm. The proof proceeds in two main steps: first, we analyze the expectation and variance terms appearing on the right-hand side of \eqref{eq:2moments} and derive appropriate upper bounds. In the second step, we express the squared shifted maximum mean discrepancy as the sum of the square of the separation metric in \eqref{eq:l2_separation} and the bias term $\|\psi_r - \varphi_\zeta \ast \psi_r\|_2^2$, where we define $\varphi_{\zeta}(u)\;:=\;  \zeta^d \, \prod_{i = 1}^d K\!\bigl(\zeta\, u_i\bigr)$ for $u \in \mathbb{R}^d$, so that $k_\zeta(x, y) = \varphi_{\zeta}(x - y)$ for all $x,y \in \mathbb{R}^d$. We then control the bias term using the smoothness assumptions associated with the Sobolev class. Combining these two steps yields an upper bound on $\rho^*_r$. Throughout the following we set $\zeta = n^{\frac{2}{4s+d}}$ so that in particular $n^{-2} \zeta^d = n^{-\frac{8s}{4s+d}} \leq 1$. Also, for parameters $a_1, \ldots, a_k$, we denote by $Q(a_1, \ldots, a_k)$ a constant that depends only on these parameters. Its value may change from line to line, but it may only depend on $a_1, \ldots, a_k$. \\

\noindent \textbf{$\bullet$ Mean and variance of $U(Z_\sigma)$}\\
We begin by analyzing the second moment of the permuted statistic $U(Z_\sigma)$. Throughout this bullet point, we rescale the kernel so that $K(0) = 1$; this normalization is inessential and serves only to simplify notation, and the bound in \eqref{eq:moments_permuted} remains valid after absorbing the resulting scaling factor into $Q_0$. Furthermore, sums over~$i$'s are to be intended for $i$ varying in $[n]$, sums over $j$'s are to be intended for $j$ varying in $\{n+1, \ldots, n+m\}$, and sums over $k$'s are to be intended for $k$ varying in $[n+m]$. The strategy is to relate the distribution of $U_\sigma := U(Z_\sigma)$,
where $\sigma$ is sampled from $\eqref{eq:sampling_1}$, to the distribution of $U_{\sigma_t} \equiv U_t$, where $\sigma_t$ is the permutation at time $t \in \mathbb{N}$ of an equivalent version of  Algorithm~\ref{alg:pairwise_sampler} that is initialized at stationarity. More precisely, we consider a procedure that at each time step $t$ samples $i \in [n]$ and $j \in \{n+1, \ldots, n+m\}$ uniformly at random, and switches $Z_{\sigma_t(i)}$ with $Z_{\sigma_t(j)}$ with probability 
\[
\tilde{p}_{i,j}^t := \mathbb{P}\{\text{switch } i \text{ and } j \text{ at time } t \mid i,j \text{ are selected}\} =  \frac{\hat{\lambda}n m r_i^t}{(n + \hat{\lambda}m r_i^t)(n + \hat{\lambda}m r_j^t)},
\]
where $r^t_i := r(Z_{\sigma_{t}(i)})$ for all $i \in [n+m]$. As already outlined in Remark~\ref{rmmk:hat_lambda_alg}, this algorithm still targets the distribution \eqref{eq:sampling_1} since $\tilde{p}_{i,j}^t/\tilde{p}^t_{j,i} = r_i^t/r_j^t$. Write 
\begin{align*}
    K_{ij}^t & := k_\zeta(Z_{\sigma_t(i)}, Z_{\sigma_t(j)}) \text{ and }K_{i}^t := k_\zeta(Z_{\sigma_t(i)}, \cdot), \\
    q_i^t &:= \frac{\hat{\lambda} m r_i^t}{n + \hat{\lambda} m r_i^t} \text{ and } S_n^t := \sum_{i} q_i^t = \sum_j (1-q_j^t),
\end{align*}
where the last equality holds by~\eqref{Eq:EqualSumsLambda}. In particular we have $\tilde{p}_{i,j}^t = q_i^t (1-q_j^t)$. Observe that under the assumption that $0 < c \leq r(x) \leq C < \infty$ for all $x \in \mathcal{X}$, we have that $C^{-1} \leq \hat{\lambda}\leq c^{-1}$, which implies $mc \, (mc + nC)^{-1} \leq q_i^t \leq mC \, (mC + nc)^{-1}$ and $S_n^t \geq nmc \, (nC + mc)^{-1}$. We define the V-statistic
\[
V_t = \frac{(n+m)^2}{n^2 m^2} \left\|\sum_i q_i^t K_i^t - \sum_j (1-q_j^t)K_j^t \right\|_\mathcal{H}^2 =: \|G_t \|_\mathcal{H}^2 = \langle G_t, G_t\rangle_\mathcal{H},
\]
which coincides with~\eqref{eq:V-stat} evaluated on our data at time $t$ with kernel $k_\zeta$, and will be convenient for our analysis. We further define the U-statistic 
\begin{align*}
    U_t & = V_t - \frac{(n+m)^2 \zeta^d}{n^2 m^2} \sum_i (q_i^t)^2 - \frac{(n+m)^2 \zeta^d}{n^2 m^2} \sum_j (1- q_j^t)^2   \\
    & =V_t - \frac{(n+m)^2 \zeta^d}{n^2 m^2} \sum_i (q_i^t)^2 - \frac{(n+m)^2 \zeta^d}{n^2 m^2} \sum_j \{1 + (q_j^t)^2  -2q_j^t\} \\
    & =  V_t - \frac{(n+m)^2 \zeta^d}{n^2 m^2} \sum_k (q_k^t)^2 + \frac{(n+m)^2 \zeta^d}{n^2 m} - \frac{2(n+m)^2\zeta^d}{n^2 m^2} \sum_j (1-q_j^t) \\
    & = V_t  - \frac{2(n+m)^2\zeta^d}{n^2 m^2} S_n^t \underbrace{ - \frac{(n+m)^2 \zeta^d}{n^2 m^2} \sum_k (q_k^t)^2 + \frac{(n+m)^2 \zeta^d}{n^2 m}}_{\text{permutation independent}},
\end{align*}
which has the same distribution as $U_\sigma$ for each $t \in \mathbb{N}$. If we swap $i$ and $j$ at time $t$, then  
\begin{align*}
U_{t+1} - U_t & = 
\left\|G_t + \frac{n+m}{nm}(K_j^t - K_i^t)\right\|_\mathcal{H}^2 - \|G_t\|_\mathcal{H}^2 - \frac{2(n+m)^2\zeta^d}{n^2 m^2}(q_j^t - q_i^t) \\
& = \frac{2(n+m)}{nm} \langle G_t, K_j^t - K_i^t\rangle + \frac{(n+m)^2}{n^2 m^2}\|K_j^t - K_i^t\|_\mathcal{H}^2 - \frac{2(n+m)^2\zeta^d}{n^2 m^2}(q_j^t - q_i^t).
\end{align*}
By stationarity, we therefore have
\begin{align}
\label{Eq:UStatDiffDecomp}
     0 & = nm \, \mathbb{E}[U^2_{t+1} - U^2_t] \nonumber \\
    & = \sum_{i,j} \mathbb{E}\left[q_i^t(1-q_j^t) \left(\left\{U_t + \frac{2(n+m)}{nm} \langle G_t, K_j^t - K_i^t\rangle + \frac{(n+m)^2}{n^2 m^2}\|K_j^t - K_i^t\|_\mathcal{H}^2 \right. \right. \right. \nonumber\\
    & \left. \left. \left. \hspace{250pt}- \frac{2(n+m)^2\zeta^d}{n^2m^2}(q_j^t - q_i^t) \right\}^2 -U^2_t \right) \right] \nonumber \\
    & = \sum_{i,j} \mathbb{E}\left[q_i^t(1-q_j^t) \cdot 2 U_t \left\{\frac{2(n+m)}{n m} \langle G_t, K_j^t - K_i^t\rangle + \frac{(n+m)^2}{n^2 m^2}\|K_j^t - K_i^t\|_\mathcal{H}^2 - \frac{2(n+m)^2\zeta^d}{n^2 m^2}(q_j^t - q_i^t) \right\}\right] \nonumber \\
    & \quad \quad + \sum_{i,j} \mathbb{E}\left[q_i^t(1-q_j^t) \left\{\frac{2(n+m)}{nm} \langle G_t, K_j^t - K_i^t\rangle + \frac{(n+m)^2}{n^2m^2}\|K_j^t - K_i^t\|_\mathcal{H}^2 - \frac{2(n+m)^2\zeta^d}{n^2 m^2}(q_j^t - q_i^t) \right\}^2 \right],
\end{align}
which is the sum of a linear and a quadratic term. In order to simplify this right-hand side we now provide some useful identities. We notice that 
\[
\sum_{i,j} q_i^t(1-q_j^t) (K_j^t - K_i^t) = S_n^t\left\{\sum_j (1-q_j^t)K_j^t - \sum_i q_i^t K_i^t \right\} = -\frac{nm}{n+m}S_n^t G_t,
\]
\[
\sum_{i,j} q_i^t(1-q_j^t) \|K_j^t - K_i^t\|_\mathcal{H}^2 = \sum_{i,j} q_i^t(1-q_j^t) (2\zeta^d - 2K_{ij}^t) = 2\zeta^d \, (S_n^t)^2 -  2 \sum_{i,j} q_i^t(1-q_j^t) K_{ij}^t,
\]
and
\[
 \sum_{i,j} q_i^t(1-q_j^t)(q_j^t - q_i^t) = S_n^t \left\{\sum_j q_j^t(1-q_j^t) - \sum_i (q_i^t)^2 \right\} = S_n^t \left\{m - S_n^t - \sum_k (q_k^t)^2 \right\}.
\]
 Using these identities, the linear term in~\eqref{Eq:UStatDiffDecomp} gives 
\begin{align}\label{eq:LinearTerm}
    & \sum_{i,j} \mathbb{E}\left[q_i^t(1-q_j^t) \cdot 2 U_t \left\{\frac{2(n+m)}{n m} \langle G_t, K_j^t - K_i^t\rangle + \frac{(n+m)^2}{n^2 m^2}\|K_j^t - K_i^t\|_\mathcal{H}^2 - \frac{2(n+m)^2\zeta^d}{n^2 m^2}(q_j^t - q_i^t) \right\}\right] \nonumber \\
    &= \mathbb{E}\left[ 2 U_t \left\{\frac{2(n+m)}{n m} \langle G_t, \sum_{i,j} (K_j^t - K_i^t)q_i^t(1-q_j^t)\rangle + \frac{(n+m)^2}{n^2 m^2} \sum_{i,j}  \|K_j^t - K_i^t\|_\mathcal{H}^2 \, q_i^t(1-q_j^t) \right. \right. \nonumber\\
    & \left. \left. \hspace{250 pt} - \frac{2 (n+m)^2 \zeta^d}{n^2 m^2} \sum_{i,j} (q_j^t - q_i^t) q_i^t(1-q_j^t) \right\}\right] \nonumber \\
    & =  \mathbb{E}\left[2U_t \left\{-2S_n^t \left(U_t + \frac{2 (n+m)^2\zeta^d}{n^2 m^2} S_n^t + \frac{(n+m)^2 \zeta^d}{n^2 m^2} \sum_k (q_k^t)^2 - \frac{(n+m)^2 \zeta^d}{n^2 m} \right) \right. \right. \nonumber \\
    & \left.  \left.  \quad \quad \quad + \frac{2 (n+m)^2 \zeta^d }{n^2 m^2}(S_n^t)^2 -  \frac{2(n+m)^2}{n^2 m^2} \sum_{i,j} q_i^t(1-q_j^t) K_{ij}^t - \frac{2(n+m)^2\zeta^d}{n^2 m^2}S_n^t \left(m - S_n^t - \sum_k (q_k^t)^2 \right)\right\}\right] \nonumber \\
    & = -4 \mathbb{E}\left[S_n^t U^2_t\right] - 4 \mathbb{E}\left[U_t \frac{(n+m)^2}{n^2 m^2} \sum_{i,j}q_i^t(1-q_j^t)K_{ij}^t \right].
\end{align}
As for the quadratic term, it is useful to define $G_t^{-(i,j)}:= G_t - (n+m)n^{-1}m ^{-1}\{q_i^t K_i^t - (1-q_j^t)K_j^t\}$ and write
\begin{align}\label{eq:QuadraticTerm}
    & \frac{2(n+m)}{nm} \langle G_t, K_j^t - K_i^t\rangle + \frac{(n+m)^2}{n^2 m^2}\|K_j^t - K_i^t\|_\mathcal{H}^2 - \frac{2(n+m)^2\zeta^d}{n^2 m^2}(q_j^t - q_i^t) \nonumber \\
    & = \frac{2(n+m)}{nm} \langle G_t - \frac{n+m}{n m}\{q_i^t K_i^t - (1-q_j^t)K_j^t\}, K_j^t - K_i^t\rangle + \frac{2(n+m)^2}{n^2 m^2} \langle q_i^t K_i^t - (1-q_j^t)K_j^t, K_j^t - K_i^t\rangle \nonumber \\
    & \quad \quad  + \frac{(n+m)^2}{n^2 m^2}\|K_j^t - K_i^t\|_\mathcal{H}^2 - \frac{2(n+m)^2\zeta^d}{n^2 m^2}(q_j^t - q_i^t)\nonumber \\
    & = \frac{2 (n+m)}{nm} \langle G_t^{-(i,j)}, K_j^t - K_i^t\rangle + \frac{2(n+m)^2}{n^2 m^2} \langle q_i^t K_i^t - (1-q_j^t)K_j^t, K_j^t - K_i^t\rangle \nonumber \\
    & \quad \quad + \frac{2(n+m)^2}{n^2 m^2}(\zeta^d - K^t_{ij}) - \frac{2(n+m)^2\zeta^d}{n^2 m^2}(q_j^t - q_i^t) \nonumber \\
    & = \frac{2(n+m)}{n m} \langle G_t^{-(i,j)}, K_j^t - K_i^t\rangle + \frac{2(n+m)^2}{n^2 m^2} (q_i^t K_{ij}^t - \zeta^d q_i^t - \zeta^d (1-q_j^t) + (1-q_j^t)K^t_{ij})  \nonumber \\
    & \quad \quad + \frac{2(n+m)^2}{n^2 m^2}(\zeta^d - K^t_{ij}) - \frac{2(n+m)^2\zeta^d}{n^2 m^2}(q_j^t - q_i^t) \nonumber \\
    & = \frac{2 (n+m)}{nm} \langle G_t^{-(i,j)}, K_j^t - K_i^t\rangle + \frac{2(n+m)^2}{n^2 m^2}(q_i^t - q^t_j )K^t_{ij}.
\end{align}
Combining~\eqref{Eq:UStatDiffDecomp},~\eqref{eq:LinearTerm},~\eqref{eq:QuadraticTerm} gives
\begin{align}\label{eq:variance_U_perm}
    4 &\mathbb{E}\left[S_n^t U^2_t\right] = - 4 \mathbb{E}\left[U_t \frac{(n+m)^2}{n^2 m^2} \sum_{i,j}q_i^t(1-q_j^t)K_{ij}^t \right] \nonumber \\
    &\hspace{50pt} + \sum_{i,j} \mathbb{E}\left[q_i^t(1-q_j^t) \left\{\frac{2(n+m)}{n m} \langle G_t^{-(i,j)}, K_j^t - K_i^t\rangle + \frac{2(n+m)^2}{n^2m^2}(q_i^t - q^t_j)K^t_{ij} \right\}^2 \right] \nonumber \\
    & \leq 4 \mathbb{E}\left[|U_t| \frac{(n+m)^2}{n^2 m^2} \sum_{i,j}q_i^t(1-q_j^t)|K_{ij}^t| \right] \nonumber \\
    & \hspace{50pt}+ \sum_{i,j} \mathbb{E}\left[q_i^t(1-q_j^t) \left\{\frac{2(n+m)}{nm} \langle G_t^{-(i,j)}, K_j^t - K_i^t\rangle + \frac{2(n+m)^2}{n^2 m^2}(q_i^t - q^t_j )K^t_{ij} \right\}^2 \right] \nonumber \\
    & \leq 4 \sqrt{\mathbb{E}[U_t^2] \mathbb{E}\left[\frac{(n+m)^4}{n^4 m^4} \Big\{\sum_{i,j}q_i^t(1-q_j^t)|K_{ij}^t| \Big\}^2 \right] } + \frac{8(n+m)^2}{n^2 m^2} \sum_{i,j} \mathbb{E}\left[q_i^t(1-q_j^t) \langle G_t^{-(i,j)}, K_j^t - K_i^t\rangle^2 \right] \nonumber \\
    & \quad \quad \quad \quad \quad \quad \quad  \quad \quad  \quad \quad \quad \quad \quad \quad  \quad \quad \quad \quad \quad \quad \quad \quad + \frac{8(n+m)^4}{n^4 m^4} \sum_{i,j} \mathbb{E}\left[q_i^t(1-q_j^t) (q_i^t - q^t_j )^2(K^t_{ij})^2 \right] \nonumber \\
    & \leq 4 \sqrt{\mathbb{E}[U_t^2] \mathbb{E}\left[\frac{(n+m)^4}{n^{3}m^3} \sum_{i,j}\{q_i^t(1-q_j^t)K_{ij}^t \}^2\right] } + \frac{8(n+m)^2}{n^2 m^2} \sum_{i,j} \mathbb{E}\left[q_i^t(1-q_j^t) \langle G_t^{-(i,j)}, K_j^t - K_i^t\rangle^2 \right] \nonumber \\
    & \quad \quad \quad \quad \quad \quad  \quad \quad \quad \quad \quad \quad \quad \quad \quad \quad  \quad \quad \quad \quad \quad \quad \quad \quad  + \frac{8(n+m)^4}{n^4 m^4} \sum_{i,j} \mathbb{E}\left[q_i^t(1-q_j^t) (q_i^t - q^t_j )^2(K^t_{ij})^2 \right] \nonumber \\
    & \leq 12 \max\left\{ \sqrt{\mathbb{E}[U_t^2] \mathbb{E}\left[\frac{(n+m)^4}{n^3m^3} \sum_{i,j}\{q_i^t(1-q_j^t)K_{ij}^t \}^2 \right] }\, ,  \frac{2(n+m)^2}{n^2 m^2} \sum_{i,j} \mathbb{E}\left[q_i^t(1-q_j^t) \langle G_t^{-(i,j)}, K_j^t - K_i^t\rangle^2 \right], \right. \nonumber \\
    & \quad \quad \quad \quad  \quad \quad \quad \quad \quad \quad \quad \quad  \quad \quad \quad \quad \quad \quad  \quad \quad \quad \quad \quad \quad \quad \quad \left.  \frac{2(n+m)^4}{n^4 m^4} \sum_{i,j} \mathbb{E}\left[q_i^t(1-q_j^t) (q_i^t - q^t_j )^2(K^t_{ij})^2 \right] \right\}.
\end{align}
We will now bound each of the three quantities inside the maximum separately. If the first quantity reaches the maximum, we can use $m \leq n \leq \tau m$, $q_k^t \in [0,1]$ and $S_n^t \geq \frac{nmc}{nC + mc}$ to show that 
\begin{align}\label{eq:K1}
    \frac{nmc}{nC + mc} \mathbb{E}&\left[U^2_t\right] \leq \mathbb{E}\left[S_n^t U^2_t\right] \leq 3 \sqrt{\mathbb{E}[U_t^2] \mathbb{E}\left[\frac{(n+m)^4}{n^3 m^3} \sum_{i,j}\{q_i^t(1-q_j^t)K_{ij}^t \}^2\right] }  \nonumber \\
    & \leq  3 \sqrt{\mathbb{E}[U_t^2] \mathbb{E}\left[\frac{(n+m)^4}{n^3 m^3} \sum_{i,j}(K_{ij}^t)^2\right] } \leq  3 \sqrt{\mathbb{E}[U_t^2] \mathbb{E}\left[\frac{(n+m)^4}{n^3 m^3} \sum_{k_1 \neq k_2}(K_{k_1, k_2}^t)^2\right] }  \nonumber \\
    & = 3 \sqrt{\mathbb{E}[U_t^2] \mathbb{E}\left[\frac{(n+m)^4}{n^3 m^3} \left\{\sum_{i_1 \neq i_2} k_\zeta^2(X_{i_1}, X_{i_2}) +  \sum_{j_1 \neq j_2} k_\zeta^2(Y_{j_1}, Y_{j_2}) + 2 \sum_{i, j} k_\zeta^2(X_{i}, Y_{j}) \right\}\right]}  \nonumber \\
    &  = 3 \sqrt{\mathbb{E}[U_t^2] \mathbb{E}\left[\frac{(n+m)^4}{n^3 m^3} \left\{n(n-1) k_\zeta^2(X_{1}, X_{2}) +  m(m-1) k_\zeta^2(Y_{1}, Y_{2}) + 2 nm k_\zeta^2(X_1, Y_1) \right\}\right]}  \nonumber \\
    & \leq 3 \sqrt{\mathbb{E}[U_t^2] \, \frac{(n+m)^6}{n^3 m^3} \, M \kappa_2(d) \zeta^d } \leq  3 \sqrt{(1+\tau)^6\, \tau^{-3} \, \mathbb{E}[U_t^2] \, M \kappa_2(d) \zeta^d }. 
\end{align}
Note that in the fourth inequality we bound $\sum_{i,j} (K^t_{i, j})^2 $ by $ \sum_{k_1 \neq k_2} (K^t_{k_1, k_2})^2 $, the latter being more advantageous to analyze due to its permutation invariance. Making~\eqref{eq:K1} explicit for the expectation of $U_t^2$ shows that there exists a constant $Q_0 \equiv Q_0(c, C, d, \tau, M)$ such that $\mathbb{E}[U_t^2] \leq Q_0 \frac{\zeta^d}{n^2}$, in the case where the first quantity in the maximum in~\eqref{eq:variance_U_perm} dominates. As for the case when the third quantity dominates, we can argue exactly as in~\eqref{eq:K1} to show that 
\begin{align}\label{eq:K2}
    \frac{(n+m)^4}{n^4 m^4} & \sum_{i,j} \mathbb{E}\left[q_i^t(1-q_j^t) (q_i^t - q^t_j )^2(K^t_{ij})^2 \right] \leq \frac{4(n+m)^6}{n^4 m^4} \, M \kappa_2(d) \zeta^d,
\end{align}
which implies that $\mathbb{E}[U_t^2] \leq Q_0 \frac{\zeta^d}{n^3}$. Finally, as for the term involving $G_t^{-(i,j)}$, we have 
\[
\sum_{i,j} \mathbb{E}\left[q_i^t(1-q_j^t) \langle G_t^{-(i,j)}, K_j^t - K_i^t\rangle^2 \right] \leq 2 \sum_{i,j} \mathbb{E}\left[\langle G_t^{-(i,j)}, K_i^t\rangle^2 \right] + 2 \sum_{i,j} \mathbb{E}\left[\langle G_t^{-(i,j)}, K_j^t\rangle^2 \right],
\]
and we can analyze the evolution of the right hand side through a version of Algorithm \ref{alg:pairwise_sampler} where indices $i$ and $j$ remain fixed. In other words, the Algorithm works the same, but we are just allowed to swap indices $\tilde{i} \neq i$ with $\tilde{j} \neq j$ with probability $q_{\tilde{i}}^t (1- q^t_{\tilde{j}})$. We show in Lemma \ref{lemma:sampler_noIJ} that this procedure still preserves the stationary distribution. The two terms above can be bounded by almost identical arguments so we will restrict attention to the first. In analysing the evolution of $\langle G_t^{-(i,j)}, K_i^t\rangle^2$, it is useful to write 
\begin{align*}
    \sum_{\tilde{i} \neq i}\sum_{\tilde{j} \neq j} & q_{\tilde{i}}^t (1-q_{\tilde{j}}^t)(K_{\tilde{j}}^t - K_{\tilde{i}}^t) = \left(\sum_{\tilde{i} \neq i} q_{\tilde{i}}^t\right)\left( \sum_{\tilde{j} \neq j} (1-q_{\tilde{j}}^t) K_{\tilde{j}}^t \right) - \left(\sum_{\tilde{j} \neq j} (1-q_{\tilde{j}}^t)\right)\left( \sum_{\tilde{i} \neq i} q_{\tilde{i}}^t K_{\tilde{i}}^t \right) \\
    & = - \frac{nm}{n+m}S_n^t G_t^{-(i,j)} + (1-q_{j}^t) \sum_{\tilde{i} \neq i} q_{\tilde{i}}^t K_{\tilde{i}}^t - q_{i}^t \sum_{\tilde{j} \neq j} (1-q_{\tilde{j}}^t) K_{\tilde{j}}^t.
\end{align*}
Under stationarity, we thus have 
\begin{align*}
    & 0 = (n-1)(m-1) \, \mathbb{E}[\langle G_{t+1}^{-(i,j)}, K_i^t\rangle^2 - \langle G_t^{-(i,j)}, K_i^t\rangle^2] \\
    & = \sum_{\tilde{i} \neq i}\sum_{\tilde{j} \neq j} \mathbb{E}\left[q_{\tilde{i}}^t (1-q_{\tilde{j}}^t) \left\{ \langle  G_t^{-(i,j)} + \tfrac{n+m}{nm}(K_{\tilde{j}}^t - K_{\tilde{i}}^t ), K_i^t \rangle^2 - \langle  G_t^{-(i,j)}, K_i^t \rangle^2 \right\}\right] \\
    & = \sum_{\tilde{i} \neq i}\sum_{\tilde{j} \neq j} \mathbb{E}\left[\frac{(n+m)^2}{n^2 m^2} \, q_{\tilde{i}}^t (1-q_{\tilde{j}}^t)  \langle  K_{\tilde{j}}^t - K_{\tilde{i}}^t, K_i^t \rangle^2  \right] + \tfrac{2(n+m)}{n m} \, \mathbb{E}\left[ \langle  G_t^{-(i,j)}, K_i^t\rangle \langle \sum_{\tilde{i} \neq i}\sum_{\tilde{j} \neq j} q_{\tilde{i}}^t (1-q_{\tilde{j}}^t)(K_{\tilde{j}}^t - K_{\tilde{i}}^t), K_i^t \rangle \right] \\
    & = \sum_{\tilde{i} \neq i}\sum_{\tilde{j} \neq j} \mathbb{E}\left[\frac{(n+m)^2}{n^2 m^2} \, q_{\tilde{i}}^t (1-q_{\tilde{j}}^t)  \langle  K_{\tilde{j}}^t - K_{\tilde{i}}^t, K_i^t \rangle^2  \right] - 2 \mathbb{E}[S_n^t \langle G_t^{-(i,j)}, K_i^t \rangle^2]  \\
    & \qquad\qquad +\tfrac{2(n+m)}{n m} \, \mathbb{E}\left[ \langle  G_t^{-(i,j)}, K_i^t\rangle \langle  (1-q_{j}^t) \sum_{\tilde{i} \neq i} q_{\tilde{i}}^t K_{\tilde{i}}^t - q_{i}^t \sum_{\tilde{j} \neq j} (1-q_{\tilde{j}}^t) K_{\tilde{j}}^t, K_i^t \rangle \right] \\
    & \leq \tfrac{2(n+m)^2}{n m^2} \sum_{\tilde{j} \neq j} \mathbb{E} \left[ \langle K_{\tilde{j}}^t, K_i^t \rangle^2 \right] + \tfrac{2(n+m)^2}{n^2 m} \sum_{\tilde{i} \neq i} \mathbb{E} \left[ \langle K_{\tilde{i}}^t, K_i^t \rangle^2 \right] - 2 \mathbb{E}[S_n^t \langle G_t^{-(i,j)}, K_i^t \rangle^2]  \\
    & \qquad\qquad +\tfrac{2(n+m)}{n m} \, \mathbb{E}\left[ \langle  G_t^{-(i,j)}, K_i^t\rangle \langle  (1-q_{j}^t) \sum_{\tilde{i} \neq i} q_{\tilde{i}}^t K_{\tilde{i}}^t - q_{i}^t \sum_{\tilde{j} \neq j} (1-q_{\tilde{j}}^t) K_{\tilde{j}}^t, K_i^t \rangle \right] \\
    & \leq \tfrac{(n+m)^4}{n^2 m^2} \max_{k_1 \neq k_2} \mathbb{E}[k^2_\zeta(Z_{k_1}, Z_{k_2})] - 2 \mathbb{E}[S_n^t \langle G_t^{-(i,j)}, K_i^t \rangle^2]  \\
    & \qquad\qquad + \tfrac{2(n+m)}{n m} \, \mathbb{E}\left[ \langle  G_t^{-(i,j)}, K_i^t\rangle \langle  (1-q_{j}^t) \sum_{\tilde{i} \neq i} q_{\tilde{i}}^t K_{\tilde{i}}^t - q_{i}^t \sum_{\tilde{j} \neq j} (1-q_{\tilde{j}}^t) K_{\tilde{j}}^t, K_i^t \rangle \right] \\
    & \leq  \tfrac{(n+m)^4}{n^2 m^2} M \kappa_2(d) \zeta^d - 2 \mathbb{E}\left[ S_n^t \langle  G_t^{-(i,j)}, K_i^t\rangle^2 \right] \\
    & \quad \quad + \tfrac{2(n+m)}{n m} \sqrt{\mathbb{E}\left[ \langle  G_t^{-(i,j)}, K_i^t\rangle^2 \right] \mathbb{E}\left[ \langle  (1-q_{j}^t) \sum_{\tilde{i} \neq i} q_{\tilde{i}}^t K_{\tilde{i}}^t - q_{i}^t \sum_{\tilde{j} \neq j} (1-q_{\tilde{j}}^t) K_{\tilde{j}}^t, K_i^t \rangle^2 \right]} \\
    & \leq  \tfrac{(n+m)^4}{n^2 m^2} M \kappa_2(d) \zeta^d - 2 \mathbb{E}\left[ S_n^t \langle  G_t^{-(i,j)}, K_i^t\rangle^2 \right] \\
    & \quad \quad + 2\sqrt{2} \tfrac{n+m}{nm} \sqrt{\mathbb{E}\left[ \langle  G_t^{-(i,j)}, K_i^t\rangle^2 \right] \mathbb{E}\left[ \langle  \sum_{\tilde{i} \neq i} q_{\tilde{i}}^t K_{\tilde{i}}^t, K_i^t \rangle^2 + \langle \sum_{\tilde{j} \neq j} (1-q_{\tilde{j}}^t) K_{\tilde{j}}^t, K_i^t \rangle^2 \right]} \\
    & \leq  \tfrac{(n+m)^4}{n^2 m^2} M \kappa_2(d) \zeta^d - 2 \mathbb{E}\left[ S_n^t \langle  G_t^{-(i,j)}, K_i^t\rangle^2 \right] + 2\sqrt{2} \tfrac{(n+m)^{2}}{nm}  \sqrt{\mathbb{E}\left[ \langle  G_t^{-(i,j)}, K_i^t\rangle^2 \right] \max_{k_1 \neq k_2} \mathbb{E}[k^2_\zeta(Z_{k_1}, Z_{k_2})]} \\
    & \leq \tfrac{(n+m)^4}{n^2 m^2} M \kappa_2(d) \zeta^d - 2 \mathbb{E}\left[ S_n^t \langle  G_t^{-(i,j)}, K_i^t\rangle^2 \right] + 2\sqrt{2} \tfrac{(n+m)^{2}}{nm}  \sqrt{M \kappa_2(d) \zeta^d \, \mathbb{E}\left[ \langle  G_t^{-(i,j)}, K_i^t\rangle^2 \right]}.
\end{align*}
Using again the fact that $S_n^t \geq \frac{nmc}{nC + mc}$, we can employ the previous calculations to show that 
\begin{align*}
    \frac{2nmc}{nC + mc} \mathbb{E}\left[ \langle  G_t^{-(i,j)}, K_i^t\rangle^2 \right] & \leq 2 \mathbb{E}\left[ S_n^t \langle  G_t^{-(i,j)}, K_i^t\rangle^2 \right] \\
    & \leq \tfrac{(n+m)^4}{n^2 m^2} M \kappa_2(d) \zeta^d + 2\sqrt{2} \tfrac{(n+m)^{2}}{nm} \sqrt{M \kappa_2(d) \zeta^d \mathbb{E}\left[ \langle  G_t^{-(i,j)}, K_i^t\rangle^2 \right]} \\
    & \leq 2 (\tau + 1)^4 \tau^{-2} \, \max \left\{M \kappa_2(d) \zeta^d, \,\sqrt{M \kappa_2(d) \zeta^d \mathbb{E}\left[ \langle  G_t^{-(i,j)}, K_i^t \rangle^2 \right]} \right\}.
\end{align*}
In particular, this yields  $\mathbb{E}[ \langle  G_t^{-(i,j)}, K_i^t\rangle^2] \leq Q_0 \frac{\zeta^d}{n}$, which further implies that $\mathbb{E}[U_t^2] \leq Q_0 \frac{\zeta^d}{n^2}$ also when the second quantity in~\eqref{eq:variance_U_perm} attains the maximum. Combining this with \eqref{eq:K1} and \eqref{eq:K2} gives
\begin{align}\label{eq:moments_permuted}
    \max\left\{\mathbb{E}^2[|U_\sigma|], \operatorname{Var}[U_\sigma] \right\} \leq \mathbb{E}[U^2_\sigma] \leq Q_0 \frac{\zeta^d}{n^2},
\end{align}
and concludes the analysis for $U_\sigma$. \\

\noindent \textbf{$\bullet$ Mean and variance of $U_\mathrm{id}$}\\
In what follows, we define \(U_{\mathrm{id}} := \, U(Z)\), which
targets \(\mathrm{MMD}^2_{r, k_\zeta}(f,g)\), and we study its
mean and variance. This would be straightforward if we used the normalization factors $\{n(n-1)\}^{-1}$ and $\{m(m-1)\}^{-1}$ instead of $n^{-2}$ and $m^{-2}$, and if $\hat{\lambda}$ were not random. The former problem is easy to address. Define 
\begin{align}\label{eq:tilde_U_stat}
    & \tilde{U}_\mathrm{id} = \frac{(n+m)^2}{n(n-1)m^2} \sum_{i_1 \neq i_2 = 1}^n \frac{\hat{\lambda}^2 r(X_{i_1})r(X_{i_2})k_\zeta(X_{i_1},X_{i_2})}{\{\frac{n}{m} + \hat{\lambda} r(X_{i_1}) \} \{\frac{n}{m} + \hat{\lambda} r(X_{i_2})\}}  \nonumber \\
    & + \frac{(n+m)^2}{m^3(m-1)} \sum_{j_1 \neq j_2 = 1}^m \frac{ k_\zeta(Y_{j_1},Y_{j_2})}{\{\frac{n}{m} + \hat{\lambda} r(Y_{j_1}) \} \{\frac{n}{m} + \hat{\lambda} r(Y_{j_2})\}} - \frac{2(n+m)^2}{nm^3} \sum_{i = 1}^n \sum_{j = 1}^m \frac{\hat{\lambda} r(X_{i}) k_\zeta(X_{i},Y_j)}{\{\frac{n}{m} + \hat{\lambda} r(X_{i}) \} \{\frac{n}{m} + \hat{\lambda} r(Y_j)\}},
\end{align} 
and observe that $\tilde{U}_\mathrm{id} - U_\mathrm{id} $ equals
\begin{align*}
     \frac{(n+m)^2}{n^2(n-1) m^2} &\sum_{i_1 \neq i_2 = 1}^n \frac{\hat{\lambda}^2 r(X_{i_1})r(X_{i_2})k_\zeta(X_{i_1},X_{i_2})}{\{\frac{n}{m} + \hat{\lambda} r(X_{i_1}) \} \{\frac{n}{m} + \hat{\lambda} r(X_{i_2})\}} + \frac{(n+m)^2}{m^4(m-1)} \sum_{j_1 \neq j_2 = 1}^m \frac{ k_\zeta(Y_{j_1},Y_{j_2})}{\{\frac{n}{m} + \hat{\lambda} r(Y_{j_1}) \} \{\frac{n}{m} + \hat{\lambda} r(Y_{j_2})\}}.
\end{align*}
Due to the boundedness assumption on $r$, this implies that there exists a constant $Q_1 \equiv Q_1(c, C, \tau) > 0$ such that
\begin{align}\label{eq:mean_Uid}
    |\mathbb{E}[U_\mathrm{id}] - \mathrm{MMD}^2_{r, k_\zeta}(f,g)| &\leq  |\mathbb{E}[\tilde{U}_\mathrm{id}] - \mathrm{MMD}^2_{r, k_\zeta}(f,g)| + \mathbb{E}[|U_\mathrm{id} - \tilde{U}_\mathrm{id}|] \nonumber \\
    & \leq |\mathbb{E}[\tilde{U}_\mathrm{id}]- \mathrm{MMD}^2_{r, k_\zeta}(f,g)|  + \frac{Q_1}{n^3} \, \mathbb{E}\left[ \sum_{i_1 \neq i_2 = 1}^n |k_\zeta(X_{i_1},X_{i_2})| + \sum_{j_1 \neq j_2 = 1}^m |k_\zeta(Y_{j_1},Y_{j_2})| \right] \nonumber \\
    & \leq |\mathbb{E}[\tilde{U}_\mathrm{id}]-\mathrm{MMD}^2_{r, k_\zeta}(f,g)| + \frac{Q_1 M \kappa_1(d)}{n},
\end{align}
and
\begin{align}\label{eq:var_Uid}
    \operatorname{Var}[U_\mathrm{id}] & \leq  2 \operatorname{Var}[\tilde{U}_\mathrm{id}] + 2\operatorname{Var}[U_\mathrm{id} - \tilde{U}_\mathrm{id}] \nonumber \\
    & \leq 2\operatorname{Var}[\tilde{U}_\mathrm{id}] + \frac{Q_1}{n^2} \left(\max_{i_1 \neq i_2}\mathbb{E}[k_\zeta^2(X_{i_1},X_{i_2})] + \max_{j_1 \neq j_2}\mathbb{E}[k_\zeta^2(Y_{j_1},Y_{j_2})]\right) \leq 2\operatorname{Var}[\tilde{U}_\mathrm{id}] + \frac{Q_1 M \kappa_2(d) \zeta^d}{n^2}.
\end{align}
Since $\zeta \ge 1$, we have $1/n \leq \zeta^{d/2}/n = \sqrt{\zeta^{d}/n^{2}}$, and therefore the additional mean term in~\eqref{eq:mean_Uid} is of no larger order than the second-moment scale given in~\eqref{eq:moments_permuted}, namely $\sqrt{\mathbb{E}[U_\sigma^{2}]} \lesssim \sqrt{\zeta^{d}/n^{2}}$. Similarly, the additional variance term in~\eqref{eq:var_Uid} matches the order in~\eqref{eq:moments_permuted}. Thus, it suffices to control the moments of $\tilde{U}_{\mathrm{id}}$. \\

\noindent \textbf{$\bullet$ Mean of $\tilde{U}_\mathrm{id}$}\\
We now address the harder problem of having $\hat{\lambda}$ in $\tilde{U}_\mathrm{id}$, instead of the non-random quantity $\lambda_0$ appearing in the definition of $\mathrm{MMD}^2_{r, k_\zeta}(f,g)$. To overcome this issue, for $\lambda > 0$ define
\begin{align}\label{eq:G_lambda}
    G(\lambda) := &\frac{(n+m)^2}{n(n-1) m^2} \sum_{i_1 \neq i_2 = 1}^n \frac{\lambda^2 r(X_{i_1})r(X_{i_2})k_\zeta(X_{i_1},X_{i_2})}{\{n/m + \lambda r(X_{i_1}) \} \{n/m + \lambda r(X_{i_2})\}}  \nonumber \\
    &  + \frac{(n+m)^2}{m^3(m-1)} \sum_{j_1 \neq j_2 = 1}^m \frac{ k_\zeta(Y_{j_1},Y_{j_2})}{\{n/m + \lambda r(Y_{j_1}) \} \{n/m + \lambda r(Y_{j_2})\}} \nonumber \\
    &- \frac{2(n+m)^2}{nm^3} \sum_{i = 1}^n \sum_{j = 1}^m \frac{\lambda r(X_{i}) k_\zeta(X_{i},Y_j)}{\{n/m + \lambda r(X_{i}) \} \{n/m + \lambda r(Y_j)\}},
\end{align}
so that $\tilde{U}_\mathrm{id} = G(\hat{\lambda})$. By expanding $G$ around $\lambda_0$ using a Taylor sum up to the second order we get $G(\hat{\lambda}) = G(\lambda_0) + (\hat{\lambda} - \lambda_0) \, G^\prime(\lambda_0) +  \frac{1}{2}(\hat{\lambda} - \lambda_0)^2 \, G^{\prime\prime} (\check{\lambda})$, where the random $\check{\lambda}$ is in between $\hat{\lambda}$ and $\lambda_0$. We can use this identity to bound the mean of $\tilde{U}_\mathrm{id}$ as follows: 
\begin{align}\label{eq:meanG_hat}
    &|\mathbb{E}[G(\hat{\lambda})] -  \mathrm{MMD}^2_{r, k_\zeta}(f,g)| \nonumber \\
    &= |\mathbb{E}[G(\lambda_0)]  - \mathrm{MMD}^2_{r, k_\zeta}(f,g) + \mathbb{E}[(\hat{\lambda} - \lambda_0)G^\prime(\lambda_0)] + \frac{1}{2}\mathbb{E}[(\hat{\lambda} - \lambda_0)^2  G^{\prime\prime} (\check{\lambda})]  | \nonumber \\
    & \leq |\mathbb{E}[(\hat{\lambda} - \lambda_0)  G^\prime(\lambda_0)] | +  \left|\tfrac{1}{2}\mathbb{E}[(\hat{\lambda} - \lambda_0)^2  G^{\prime\prime}(\check{\lambda})]\right|\leq \sqrt{\mathbb{E}[(\hat{\lambda} - \lambda_0)^2]  \mathbb{E}[\{G^\prime(\lambda_0)\}^2]} + \tfrac{1}{2}\sqrt{\mathbb{E}[(\hat{\lambda} - \lambda_0)^4]  \mathbb{E}[\{G^{\prime \prime}(\check{\lambda})\}^2]} \nonumber \\
    & \leq Q_1 \sqrt{\frac{1}{n}  \mathbb{E}[\{G^\prime(\lambda_0)\}^2]} + Q_1 \sqrt{\frac{1}{n^2}  \mathbb{E}[\{G^{\prime \prime}(\check{\lambda})\}^2]} \nonumber \\
    & = Q_1 \sqrt{\frac{1}{n}  \{\mathbb{E}[G^\prime(\lambda_0)]\}^2 + \frac{1}{n} \operatorname{Var}[G^\prime(\lambda_0)]} + Q_1 \sqrt{\frac{1}{n^2}  \mathbb{E}[\{G^{\prime \prime}(\check{\lambda})\}^2]}.
\end{align}
Note that in the first bound we used the triangle inequality together with the fact that the U-statistic $G(\lambda_0)$ is unbiased for $\mathrm{MMD}^2_{r, k_\zeta}$, while in the third one we used Lemma \ref{lemma:hat_lambda}. Now, as for the last term, we can use the fact that $0 < c \leq r(\cdot) \leq C$ to show that the first and second derivatives of $\lambda \mapsto \frac{\lambda^j}{\{n/m + \lambda r(x)\} \{n/m + \lambda r(y)\}}$ are uniformly bounded for all $\lambda > 0, \, x,y \in \mathbb{R}^d$ and $j \in \{0, 1, 2\}$. We can thus argue as we did for the second term in~\eqref{eq:var_Uid} to get 
\begin{align}\label{eq:expG''}
    \mathbb{E}[\{G^{\prime \prime}(\check{\lambda})\}^2] & \leq Q_1 \max_{k_1 \neq k_2} \mathbb{E}\left[ k_\zeta^2(Z_{k_1}, Z_{k_2}) \right] \leq Q_1 M \kappa_2(d) \zeta^d.
\end{align}
For the first two terms, observe that $G^\prime(\lambda_0)$ is a two-sample second-order U-statistic with asymmetric kernel $\check{h}(x_1, x_2, y_1, y_2) = b_{XX}(x_1, x_2)k_\zeta(x_1, x_2) +  b_{YY}(y_1, y_2)k_\zeta(y_1, y_2) + b_{XY}(x_1, y_2)k_\zeta(x_1, y_2) +  b_{XY}(x_2, y_1)k_\zeta(x_2, y_1)$, where
\begin{align*}
    \begin{dcases}
    b_{XX}(x_1, x_2) := \frac{2\lambda_0(n+m)^2 r(x_1) r(x_2)}{\left(n + \lambda_0 m r(x_1)\right) \left(n + \lambda_0 m r(x_2)\right)} - \frac{\lambda_0^2 m(n+m)^2 r^2(x_1) r(x_2)}{\left(n + \lambda_0 m r(x_1)\right)^2 \left(n + \lambda_0 m r(x_2)\right)} \\
    \hspace{75pt}- \frac{\lambda_0^2 m (n+m)^2 r(x_1) r^2(x_2)}{\left(n + \lambda_0 m r(x_1)\right) \left(n + \lambda_0 m r(x_2)\right)^2} \\
    b_{XY}(x_1, y_2) := - \frac{(n+m)^2 \, r(x_1)}{\left(n + \lambda_0 m r(x_1)\right) \left(n + \lambda_0 m r(y_2)\right)} + \frac{\lambda_0 m (n+m)^2 r^2(x_1)}{\left(n + \lambda_0 m r(x_1))^2(n + \lambda_0 m r(y_2)\right)}  \\
    \hspace{75pt} + \frac{\lambda_0 m (n+m)^2 r(x_1) r(y_2)}{\left(n + \lambda_0 m r(x_1)\right) \left(n + \lambda_0 m r(y_2)\right)^2} \\
    b_{YY}(y_1, y_2) :=  - \frac{m (n+m)^2 r(y_1)}{\left(n + \lambda_0 m  r(y_1)\right)^2 \left(n + \lambda_0 m r(y_2)\right)} - \frac{m (n+m)^2 r(y_2)}{\left(n + \lambda_0 m  r(y_1)\right) \left(n + \lambda_0 m r(y_2)\right)^2}.
\end{dcases}
\end{align*}
This is to say that \[
G^\prime(\lambda_0) = \frac{1}{n(n-1)m(m-1)}
\sum_{i_1 \neq i_2}\ \sum_{j_1 \neq j_2}
\check h(X_{i_1},X_{i_2},Y_{j_1},Y_{j_2}).
\]
We now bound the mean and the variance of $G^\prime(\lambda_0)$, starting from the latter. Since $\check h$ is not symmetric in $(x_1,x_2)$ and $(y_1,y_2)$, we introduce its within-sample symmetrization
\[
\bar h(x_1,x_2,y_1,y_2)
=\frac{1}{2!\,2!}\sum_{\pi\in {\cal S}_2}\sum_{\tau\in {\cal S}_2}
\check h\bigl(x_{\pi(1)},x_{\pi(2)},y_{\tau(1)},y_{\tau(2)}\bigr),
\]
where ${\cal S}_2$ denotes the symmetric group on two elements. Because $G'(\lambda_0)$ averages $\check h(X_{i_1},X_{i_2},Y_{j_1},Y_{j_2})$ over all ordered pairs
$i_1\neq i_2$ and $j_1\neq j_2$, replacing $\check h$ by $\bar h$ leaves the statistic unchanged.
We can thus express $G'(\lambda_0)$ as a two-sample U-statistic with a kernel symmetric within each sample, so that we may apply the classical 
variance formula for two-sample U-statistics in \citet[Equation~2, pag.~38]{lee1990ustatistics}.
Following the simplification in \citet[Equation~(69)]{kim2022minimax}, we obtain
\[
\operatorname{Var}\!\bigl[G'(\lambda_0)\bigr]
\leq
Q_2\left\{
\frac{\bar\sigma^2_{10}}{n}
+
\frac{\bar\sigma^2_{01}}{m}
+
\left(\frac{1}{n}+\frac{1}{m}\right)^2\bar\sigma^2_{22}
\right\}\leq Q_2 \left\{ \frac{\check{\sigma}^2_{10}}{n} + \frac{\check{\sigma}^2_{01}}{m} + \left(\frac{1}{n} + \frac{1}{m} \right)^2 \check{\sigma}^2_{22} \right\}
\]
for a sufficiently large universal constant $Q_2>0$, where $\check{\sigma}^2_{10} = \operatorname{Var}_{X_1}[\mathbb{E}_{X_2, Y_1, Y_2} \{\check{h}(X_1, X_2, Y_1, Y_2) \}]$, $\check{\sigma}^2_{01} = \operatorname{Var}_{Y_1}[\mathbb{E}_{X_1, X_2, Y_2} \{\check{h}(X_1, X_2, Y_1, Y_2) \}]$, $\check{\sigma}^2_{22} = \operatorname{Var}_{X_1, X_2, Y_1, Y_2}[\check{h}(X_1, X_2, Y_1, Y_2)]$, and $\bar{\sigma}^2_{10}$, $\bar{\sigma}^2_{01}$, $\bar{\sigma}^2_{22}$ are defined analogously with $\bar h$ in place of $\check h$. In particular, the final inequality uses that $\bar h$ is the within-sample permutation average of $\check h$; therefore, by Jensen's inequality, the corresponding projection variances for $\bar h$ are bounded up to universal constants by those for $\check h$, i.e.~$\bar\sigma^2_{10}\lesssim \check\sigma^2_{10}$, $\bar\sigma^2_{01}\lesssim \check\sigma^2_{01}$, and $\bar\sigma^2_{22}\le \check\sigma^2_{22}$. Now, using again the boundedness assumption on $r(\cdot)$ we have that $\max\{|b_{XX}(\cdot, \cdot)|, |b_{YY}(\cdot, \cdot)|, |b_{XY}(\cdot, \cdot)|\} \leq Q_1$, which can be used to show that $\check{\sigma}^2_{22} \leq Q_1 M \kappa_2(d) \zeta^d$ arguing as in \eqref{eq:expG''}. Furtermore,  
\begin{align*}
   \check{\sigma}^2_{10} &= \operatorname{Var}_{X_1}[\mathbb{E}_{X_2}\{b_{XX}(X_1, X_2)k_\zeta(X_1, X_2)\} + \mathbb{E}_{Y_2}\{b_{XY}(X_1, Y_2)k_\zeta(X_1, Y_2)\}] \\
   & \leq 2 \mathbb{E}_{X_1}[\mathbb{E}^2_{X_2}\{b_{XX}(X_1, X_2)k_\zeta(X_1, X_2)\} + \mathbb{E}^2_{Y_2}\{b_{XY}(X_1, Y_2)k_\zeta(X_1, Y_2)\}] \\
   & \leq 2\mathbb{E}_{X_1,X_2}[b^2_{XX}(X_1, X_2)k_\zeta^2(X_1, X_2)] + 2\mathbb{E}_{X_1,Y_2}[b^2_{XY}(X_1, Y_2)k_\zeta^2(X_1, Y_2)] \leq Q_1 M \kappa_2(d) \zeta^d.
\end{align*}
The same holds true for $\check{\sigma}^2_{01}$, and shows that  $n^{-1} \, \operatorname{Var}[G^\prime(\lambda_0)] \leq Q_1 M \kappa_2(d) n^{-2} \zeta^d$. Finally, we bound the expectation of $G^\prime(\lambda_0)$ by $\|\varphi_\zeta \ast  \psi_r \|_2$ up to constants, where $\ast$ stands for the convolution operator. Recalling $\varphi_\zeta(x-y) = k_\zeta(x,y)$ and $\psi_r  = \frac{n+m}{n+\lambda_0 m r}(\lambda_0 r f - g)$,  we have
\begin{align}\label{eq:mean_Gprime}
    \mathbb{E}[G^\prime(\lambda_0)] = &\, \mathbb{E}[b_{XX}(X_1, X_2)k_\zeta(X_1, X_2)] + \mathbb{E}[b_{YY}(Y_1, Y_2)k_\zeta(Y_1, Y_2)] + 2\mathbb{E}[b_{XY}(X, Y)k_\zeta(X, Y)] \nonumber\\
    = & - \mathbb{E}\left[\frac{2 m (n+m)^2 \lambda_0^2 r^2(X_1) r(X_2) \varphi_\zeta(X_1 - X_2)}{(n + \lambda_0 m r(X_1))^2 (n + \lambda_0 m r(X_2))}\right] + \mathbb{E}\left[ \frac{2(n+m)^2\lambda_0 r(X_1) r(X_2) \varphi_\zeta(X_1 - X_2)}{(n + \lambda_0 m r(X_1)) (n + \lambda_0 m r(X_2))} \right] \nonumber \\
    & - \mathbb{E}\left[ \frac{2m(n+m)^2 r(Y_1) \varphi_\zeta(Y_1 - Y_2)}{(n + \lambda_0  m r(Y_1))^2(n + \lambda_0 m r(Y_2))} \right]  - \mathbb{E}\left[\frac{2 (n+m)^2 r(X) \varphi_\zeta(X-Y)}{(n + \lambda_0 m r(X))(n + \lambda_0 m r(Y))}\right] \nonumber \\
    & + \mathbb{E}\left[\frac{2 m (n+m)^2 \lambda_0 r^2(X) \varphi_\zeta(X-Y)}{(n + \lambda_0 m r(X))^2(n + \lambda_0 m r(Y))} \right] + \mathbb{E}\left[ \frac{2m(n+m)^2 \lambda_0 r(X) r(Y) \varphi_\zeta(X-Y)}{(n + \lambda_0 m r(X))(n + \lambda_0 m r(Y))^2}\right] \nonumber \\
     = & \int_{\mathbb{R}^d} \frac{2 (n+m) r(x) f(x)}{(n + \lambda_0 m r(x))} \left\{ \int_{\mathbb{R}^d} \varphi_\zeta(x-y) \left(\frac{\lambda_0 (n+m) r(y) f(y)}{(n + \lambda_0 m r(y))} - \frac{(n+m) g(y)}{(n + \lambda_0 m  r(y))}  \right) dy \right\} dx \nonumber \\
     & - \int_{\mathbb{R}^d} \frac{2 m (n+m) \lambda_0 r^2(x) f(x)}{(n + \lambda_0 m r(x))^2} \left\{ \int_{\mathbb{R}^d} \varphi_\zeta(x-y) \left(\frac{\lambda_0 (n+m) r(y) f(y)}{(n + \lambda_0 m r(y))} - \frac{(n+m) g(y)}{(n + \lambda_0 m r(y))}  \right) dy \right\} dx \nonumber \\
     & + \int_{\mathbb{R}^d} \frac{2 m (n+m) r(x) g(x)}{(n + \lambda_0 m r(x))^2} \left\{ \int_{\mathbb{R}^d} \varphi_\zeta(x-y) \left(\frac{\lambda_0 (n+m) r(y) f(y)}{(n + \lambda_0 m r(y))} - \frac{(n+m) g(y)}{(n + \lambda_0 m r(y))}  \right) dy \right\} dx \nonumber \\
     = & \int_{\mathbb{R}^d} \left\{\frac{2 (n+m) r(x) f(x)}{(n + \lambda_0 m r(x))} - \frac{2 m (n+m) \lambda_0 r^2(x) f(x)}{(n + \lambda_0 m r(x))^2} + \frac{2m(n+m) r(x) g(x)}{(n + \lambda_0 m r(x))^2} \right\} (\varphi_\zeta \ast  \psi_r)(x) dx \nonumber \\
     \leq & \int_{\mathbb{R}^d} \frac{2(n+m) r(x) \{n f(x) + m g(x)\}}{(n+\lambda_0 m r(x))^2}  \left|(\varphi_\zeta \ast \psi_r)(x)\right| \, dx \leq   Q_0 \|\varphi_\zeta \ast \psi_r\|_{2}.
    \end{align}
In particular the last step follows by combining the Cauchy--Schwartz ineqaulity with the fact that the densities $f,g$ satisfy $\max(\|f\|_\infty, \|g\|_\infty) \leq M$, and $n/m + \lambda_0 r$ is bounded away from zero. Combining this with \eqref{eq:mean_Uid}, \eqref{eq:meanG_hat} and \eqref{eq:expG''} enables to conclude the analysis of the first moment of ${U}_\mathrm{id}$, showing that $|\mathbb{E}[{U}_\mathrm{id}] - \mathrm{MMD}^2_{r, k_\zeta}(f,g)| \leq Q_0\sqrt{n^{-2} \zeta^d+n^{-1}\|\varphi_\zeta \ast \psi_r \|_2^2}$, and thus
\begin{equation}\label{eq:finalBoundMeanU_id}
    |\mathbb{E}[{U}(Z)] - \, \mathrm{MMD}^2_{r, k_\zeta}(f,g)| \leq Q_0\sqrt{n^{-2} \zeta^d+n^{-1}\|\varphi_\zeta \ast \psi_r \|_2^2}.
\end{equation}

\noindent \textbf{$\bullet$Variance of $\tilde{U}_\mathrm{id}$}\\
Using the second-order Taylor approximation of $G(\hat{\lambda})$ around $\lambda_0$ gives
\begin{align}\label{eq:varUtilde}
    \operatorname{Var}[\tilde{U}_\mathrm{id}] &\leq 3\operatorname{Var}[G(\lambda_0)] + 3\operatorname{Var}[(\hat{\lambda} - \lambda_0) \, G^\prime(\lambda_0)] +  \tfrac{3}{4}\operatorname{Var}[(\hat{\lambda} - \lambda_0)^2 \, G^{\prime\prime} (\check{\lambda})] \nonumber \\
    & \leq Q_0( n^{-2} \zeta^d + n^{-1} \|\varphi_\zeta \ast \psi_r \|_2^2) + 3\mathbb{E}[(\hat{\lambda} - \lambda_0)^2 \, \{G^\prime(\lambda_0)\}^2] + \tfrac{3}{4} \mathbb{E}[(\hat{\lambda} - \lambda_0)^4\,\{G^{\prime\prime} (\check{\lambda})\}^2] \nonumber \\
    & \leq Q_0( n^{-2} \zeta^d + n^{-1} \|\varphi_\zeta \ast \psi_r \|_2^2) + 3 \sqrt{\mathbb{E}[(\hat{\lambda} - \lambda_0)^4] \, \mathbb{E}[\{G^\prime(\lambda_0)\}^4]} + \tfrac{3}{4} \sqrt{\mathbb{E}[(\hat{\lambda} - \lambda_0)^8] \, \mathbb{E}[\{G^{\prime \prime}(\check{\lambda})\}^4]} \nonumber \\
    & \leq Q_0(n^{-2} \zeta^d + n^{-1} \|\varphi_\zeta \ast \psi_r \|_2^2) + Q_1 \sqrt{n^{-2} \, \mathbb{E}[\{G^\prime(\lambda_0)\}^4]} + Q_1 \sqrt{n^{-4} \, \mathbb{E}[\{G^{\prime \prime}(\check{\lambda})\}^4]}.
\end{align} 
Note that the second inequality can be proved following similar lines as in \citet[Proposition~3]{schrab2023mmd}, while the last one follows from Lemma \ref{lemma:hat_lambda}. Similarly to \eqref{eq:expG''} and \eqref{eq:mean_Gprime}, we need to control fourth-order moments of some derivatives of $G$. Starting from the term involving $G^{\prime \prime}$, we can argue similarly to \eqref{eq:expG''} to show $\mathbb{E}[\{G^{\prime \prime}(\check{\lambda})\}^4] \leq Q_1 n^{-8} \, \mathbb{E}[\{\sum_{k_1 \neq k_2} |k_\zeta(Z_{k_1}, Z_{k_2})|\}^4] \leq Q_1 n^{-4} \, \mathbb{E}[\sum_{k_1 \neq k_2} \sum_{k_3 \neq k_4} k^2_\zeta(Z_{k_1}, Z_{k_2}) k^2_\zeta(Z_{k_3}, Z_{k_4})]$. It is now just a matter of counting what is the contribution of each term in the sum, depending on how many indices are shared. In this regard, observe that we have $\mathcal{O}(n^4)$ terms like $\mathbb{E}[k^2_\zeta(Z_1, Z_2) k^2_\zeta(Z_3, Z_4)] = \mathbb{E}[k^2_\zeta(Z_1, Z_2)]\mathbb{E}[k^2_\zeta(Z_3, Z_4)] \leq M^2 \kappa_2^2(d) \zeta^{2d}$, $\mathcal{O}(n^2)$ terms like $\mathbb{E}[k^4_\zeta(Z_1, Z_2)] \leq M \kappa_4(d) \zeta^{3d}$ and $\mathcal{O}(n^3)$ terms like $\mathbb{E}[k^2_\zeta(Z_1, Z_2) k^2_\zeta(Z_1, Z_3)] = \mathbb{E}[\mathbb{E}[k^2_\zeta(Z_1, Z_2) \,| \, Z_1] \, \mathbb{E}[k^2_\zeta(Z_1, Z_3) \, | \, Z_1]] \leq M^2 \kappa^2_2(d) \zeta^{2d}$. Since $\zeta^d/n^2 = n^{-\frac{8s}{4s+d}} \leq 1$ for our specific choice of $\zeta$, this suffices to show that $\sqrt{n^{-4} \, \mathbb{E}[\{G^{\prime \prime}(\check{\lambda})\}^4] }\leq Q_0 n^{-2}\zeta^{d}$. 

As for the term involving the first derivative of $G$, we have $\mathbb{E}[\{G^{\prime}(\lambda_0)\}^4] \leq 8\mathbb{E}[\{G^{\prime}(\lambda_0) - \mathbb{E}G^{\prime}(\lambda_0) \}^4] + 8\{\mathbb{E}G^{\prime}(\lambda_0) \}^4 \leq 8\mathbb{E}[\{G^{\prime}(\lambda_0) - \mathbb{E}G^{\prime}(\lambda_0) \}^4] + Q_1 \|\varphi_\zeta \ast \psi_r \|_2^4$. Hence, it just remains to bound the first term on the right-hand side. Define $h_{XX}(X_{1}, X_{2}) := b_{XX}(X_{1}, X_{2}) k_\zeta (X_{1}, X_{2}) - \mathbb{E}[ b_{XX}(X_{1}, X_{2}) k_\zeta (X_{1}, X_{2})]$ and similarly for $h_{YY}$ and $h_{XY}$. We thus have
\begin{align}\label{eq:Gprime_4}
    &\mathbb{E}[\{G^{\prime}(\lambda_0) - \mathbb{E}G^{\prime}(\lambda_0) \}^4] \nonumber \\
    & = \mathbb{E} \left[ \left\{ \frac{1}{n(n-1)} \sum_{i_1 \neq i_2} h_{XX}(X_{i_1}, X_{i_2}) + \frac{1}{m(m-1)} \sum_{j_1 \neq j_2} h_{YY}(Y_{j_1}, Y_{j_2}) + \frac{2}{nm} \sum_{i, j} h_{XY}(X_{i}, Y_{j}) \right\}^4 \right] \nonumber \\
   & \lesssim \frac{1}{n^8} \left(\mathbb{E}\left[\left\{\sum_{i_1\neq i_2} h_{XX}(X_{i_1},X_{i_2})\right\}^4\right]
   + \mathbb{E}\left[\left\{\sum_{j_1\neq j_2} h_{YY}(Y_{j_1},Y_{j_2})\right\}^4\right] + \mathbb{E}\left[\left\{\sum_{i, j} h_{XY}(X_i,Y_j)\right\}^4\right] \right).
\end{align}
These three terms can be bounded by almost identical arguments so we focus on the first. We expand it as $n^{-8} \sum_{i_1 \neq i_2} \sum_{i_3 \neq i_4} \sum_{i_5 \neq i_6} \sum_{i_7 \neq i_8} \mathbb{E}[ h_{XX}(X_{i_1}, X_{i_2})h_{XX}(X_{i_3}, X_{i_4})h_{XX}(X_{i_5}, X_{i_6})h_{XX}(X_{i_7}, X_{i_8})]$ 
and use a combinatorial argument to derive an upper bound. In this regard, we have $\mathcal{O}(n^8)$ terms with all distinct indices and $\mathcal{O}(n^7)$ terms with seven distinct indices, but they do not contribute to the sum since their expectations are zero. This is due to the independence among the $X$'s and the fact that $\mathbb{E}[h_{XX}(X_1, X_2)] = 0$. Moreover, we have
\begin{align*}
    \mathbb{E}&[ h_{XX}(X_{i_1}, X_{i_2})h_{XX}(X_{i_3}, X_{i_4})h_{XX}(X_{i_5}, X_{i_6})h_{XX}(X_{i_7}, X_{i_8})] \leq \mathbb{E}[ h^4_{XX}(X_{1}, X_{2})] \\
    & = \mathbb{E}[ \{ \, b_{XX}(X_{1}, X_{2}) k_\zeta(X_{1}, X_{2}) - \mathbb{E}[ b_{XX}(X_{1}, X_{2}) k_\zeta(X_{1}, X_{2})] \, \}^4] \leq 16 \mathbb{E}[ b^4_{XX}(X_{1}, X_{2}) k^4_\zeta(X_{1}, X_{2})] \\
    &\leq Q_1 \mathbb{E}[k^4_\zeta(X_{1}, X_{2})] \leq Q_1 M \kappa_4(d) \zeta^{3d},
\end{align*}
where in first inequality we repeatedly used the Cauchy--Schwarz inequality, while in the second inequality we used the fact that for $X, X^\prime$ independent and identically distributed we have $\mathbb{E}[(X - \mathbb{E}X)^4]  \leq 16 \mathbb{E}[X^4]$. Based on this, the contribution of the $\mathcal{O}(n^4)$ terms with four or less distinct indices is bounded above by $n^{-4}\zeta^{3d}$. It remains to bound those terms in which there are five or six different indices. As for the latter case, the only non-zero terms are those where the product decomposes into two independent blocks, each block containing two terms sharing one index; up to relabelling, a representative is of the form
\begin{align*}
    \mathbb{E}&[h_{XX}(X_1, X_2)h_{XX}(X_1, X_3)h_{XX}(X_4, X_5)h_{XX}(X_4, X_6)] \\
    & = \mathbb{E}[h_{XX}(X_1, X_2)h_{XX}(X_1, X_3)] \, \mathbb{E}[h_{XX}(X_4, X_5)h_{XX}(X_4, X_6)] = \mathbb{E}^2[h_{XX}(X_1, X_2)h_{XX}(X_1, X_3)] \\
    & = \mathbb{E}^2[ \mathbb{E}[h_{XX}(X_1, X_2) \mid X_1] \mathbb{E}[h_{XX}(X_1, X_3) \mid X_1]] \leq \mathbb{E}^2[ \mathbb{E}^2[h_{XX}(X_1, X_2) \mid X_1]] \\
    & = \mathbb{E}^2[ \mathbb{E}^2[\, \{ \,b_{XX}(X_1, X_2) k_\zeta(X_1, X_2) - \mathbb{E}[b_{XX}(X_1, X_2) k_\zeta(X_1, X_2) ]\, \}\, \mid X_1]] \\
    & \leq 8\mathbb{E}^2[ \mathbb{E}^2[b_{XX}(X_1, X_2) k_\zeta(X_1, X_2) \mid X_1]]  + 8\mathbb{E}^4[b_{XX}(X_1, X_2) k_\zeta(X_1, X_2)] \\
    & \leq Q_1 \mathbb{E}^2[ \mathbb{E}^2[|k_\zeta(X_1, X_2)| \mid X_1]]  + Q_1 \mathbb{E}^4[|k_\zeta(X_1, X_2)|] \leq Q_1 M^4 \kappa_1^4(d),
    \end{align*}
and since there are $\mathcal{O}(n^6)$ of these terms, their contribution is of the order $  n^{-2}$. Finally, when there are exactly five distinct indices we can have three different typical terms:
\begin{enumerate}
    \item
    \hfill$\begin{aligned}
    \mathbb{E}&[h_{XX}(X_1, X_2)h_{XX}(X_1, X_3)h_{XX}(X_1, X_4)h_{XX}(X_1, X_5)] = \mathbb{E}[\mathbb{E}^4[h_{XX}(X_1, X_2) \mid X_1]] \leq Q_1 M^4 \kappa_1^4(d);
    \end{aligned}$\hfill\mbox{}
    \item 
    \hfill$\begin{aligned}
    \mathbb{E}&[h_{XX}(X_1, X_2)h_{XX}(X_1, X_3)h_{XX}(X_1, X_4)h_{XX}(X_2, X_5)] \\
    & = \mathbb{E}[ \mathbb{E}[ h_{XX}(X_1, X_2)h_{XX}(X_1, X_3)h_{XX}(X_1, X_4)h_{XX}(X_2, X_5)\mid X_1, X_2]] \\
    & \leq \mathbb{E}[|h_{XX}(X_1, X_2) | \cdot \mathbb{E}^2[|h_{XX}(X_1, X_3)| \mid X_1] \cdot \mathbb{E}[|h_{XX}(X_2, X_5)| \mid X_2]] \leq Q_1 M^4 \kappa_1^4(d);
    \end{aligned}$\hfill\mbox{}
    \item 
    \hfill$\begin{aligned}
    \mathbb{E}&[h^2_{XX}(X_1, X_2)h_{XX}(X_3, X_4)h_{XX}(X_3, X_5)]  = \mathbb{E}[h^2_{XX}(X_1, X_2)]\,  \mathbb{E}[h_{XX}(X_3, X_4)h_{XX}(X_3, X_5)]\\
    & = \mathbb{E}[h^2_{XX}(X_1, X_2)]\ \mathbb{E}[\mathbb{E}^2[h_{XX}(X_3, X_4) \mid X_3]] \leq  Q_1 M^3 \kappa^2_1(d) \kappa_2(d) \zeta^d.
    \end{aligned}$\hfill\mbox{}
\end{enumerate}
There are $\mathcal{O}(n^5)$ of these terms, hence their contribution is of the order $\zeta^d n^{-3} \leq \zeta^{2d} n^{-3} \leq \zeta^{2d} n^{-2}$. This is enough to show that $\sqrt{n^{-2} \, (n^{-8} \, \mathbb{E}[\sum_{i_1\neq i_2} h_{XX}(X_{i_1},X_{i_2})\}^4])} \leq Q_0 \left(n^{-2} \zeta^d+n^{-1}\|\varphi_\zeta \ast \psi_r \|_2^2 \right)$ when combined with $\zeta^d/n^2 \leq 1$. This concludes the analysis for $h_{XX}$; analogous bounds can be obtained for $h_{YY}$ and $h_{XY}$. Overall, we thus have that $\sqrt{n^{-2} \mathbb{E}[\{G^{\prime}(\lambda_0)\}^4]} \leq Q_0 \left(n^{-2} \zeta^d+n^{-1}\|\varphi_\zeta \ast \psi_r \|_2^2 \right)$, which further shows that 
\[
\operatorname{Var}[U_\mathrm{id}] \leq Q_0 \left\{\frac{\zeta^d}{n^2} + \frac{\|\varphi_\zeta \ast \psi_r \|_2^2}{n} \right\}
\]
when combined with \eqref{eq:var_Uid}, \eqref{eq:varUtilde}, and \eqref{eq:Gprime_4}. \\

\noindent \textbf{$\bullet$ Relating the squared shifted maximum mean discrepancy to the $L^2$ distance}\\
Applying the previous argument with~\eqref{eq:2moments},~\eqref{eq:moments_permuted} and~\eqref{eq:finalBoundMeanU_id} and using the fact that $\sqrt{x+y} \leq \sqrt{x} + \sqrt{y}$ for $x,y \geq 0$ shows that a uniform control of the type~I and type~II errors is possible whenever 
\[
\mathrm{MMD}^2_{r, k_\zeta}(f,g) \geq Q_0 \left\{\frac{\zeta^{d/2}}{n} + \frac{\|\varphi_\zeta \ast \psi_r \|_2}{\sqrt{n}}\right\}.
\]
We now conclude the proof by relating the squared shifted maximum mean discrepancy metric to the $L^2$ distance defined in~\eqref{eq:l2_separation}, thus providing an upper bound on $\rho^*_r$. We have 
\begin{align*}
    \mathrm{MMD}^2_{r,k_\zeta}(f,g) & = \int_{\mathbb{R}^d} \int_{\mathbb{R}^d} \varphi_\zeta(x-y) \psi_r(x) \psi_r(y) dx dy = \int_{\mathbb{R}^d} \psi_r(x) (\varphi_\zeta \ast \psi_r)(x) dx \\
    & = \langle \psi_r,  \varphi_\zeta \ast \psi_r \rangle_2  = \frac{1}{2}\left(\|\psi_r\|_2^2 + \|\varphi_\zeta \ast \psi_r\|_2^2 -  \|\psi_r - \varphi_\zeta \ast \psi_r\|_2^2 \right), 
\end{align*}
hence an equivalent sufficient condition to bound the total error is given by
\begin{align*}
    \|\psi_r\|_2^2 \geq  \|\psi_r - \varphi_\zeta \ast \psi_r\|_2^2 + Q_0 \frac{\zeta^{d/2}}{n} + Q_0 \frac{\|\varphi_\zeta \ast \psi_r \|_2}{\sqrt{n}} -  \|\varphi_\zeta \ast \psi_r\|_2^2.
\end{align*}
This can be further simplified to just $\|\psi_r\|_2^2 \geq  \|\psi_r - \varphi_\zeta \ast \psi_r\|_2^2 + Q_0 \, n^{-1}\zeta^{d/2}$ in light of the fact that 
\begin{align*}
    \sqrt{\frac{Q_0^2 \|\varphi_\zeta \ast \psi_r \|_2^2}{n}} - \|\varphi_\zeta \ast \psi_r \|_2^2 \leq \frac{Q_0^2}{n} + \|\varphi_\zeta \ast \psi_r \|_2^2 - \|\varphi_\zeta \ast \psi_r \|_2^2 \leq  \frac{Q_0^2 \zeta^{d/2}}{n},
\end{align*}
as $\zeta \geq 1$ and $\sqrt{xy} \leq x + y$ for $x,y \geq 0$. Furthermore, we can argue as in \citet[Theorem 6]{schrab2023mmd} and show that if $\psi_r \in \mathcal{S}_d^s(L)$ we have $\|\psi_r - \varphi_\zeta \ast \psi_r\|_2^2 \leq Q_3 \|\psi_r\|_2^2 + Q_4 \, \zeta^{-2s}$, for some constants $Q_3 \in (0,1)$ and $Q_4 \equiv Q_4(d, s, L) > 0$. This shows that for $\zeta = n^{\frac{2}{4s+d}}$ there exists a constant $C_1 \equiv C_1(c, C, d, \tau, M,s, L, \alpha, \beta)$ such that
\[
\rho^*_r \leq C_1 \sqrt{\frac{\zeta^{d/2}}{n} + \zeta^{-2s}} = C_1 \, n^{-\frac{2s}{4s+d}},
\]
and completes the proof.
\end{proof}

\begin{lemma}\label{lemma:hat_lambda}
Assume $ m \leq n \leq \tau \, m$ for $\tau \geq 1$, and $0 < c \leq r(x) \leq C$ for all $x \in \mathcal{X}$. Let $\hat{\lambda}$ be such that 
\[
\sum_{i = 1}^n \frac{\hat{\lambda}m r(X_i)}{n + \hat{\lambda}m r(X_i)} = \sum_{j = n+1}^{n+m} \frac{n}{n + \hat{\lambda}m r(Y_j)}
\]
and $\lambda_0$ such that 
\[\int \frac{n f + mg}{n + \lambda_0 mr} d\mu = 1.
\]
There exists a constant $Q_0 \equiv Q_0(p, c, C, \tau) > 0$ such that 
\[
\mathbb{E}[|\hat{\lambda} - \lambda_0|^p] \leq Q_0 \, n^{-p/2} \text{ for all } p \in \mathbb{N}. 
\]
\end{lemma}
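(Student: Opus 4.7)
The plan is to linearize $F_n(\hat{\lambda}) - F_n(\lambda_0)$ via the mean value theorem and then bound moments of the resulting centered sum by a Rosenthal-type inequality. Define
\[
F_n(\lambda) := \sum_{i=1}^{n+m} \frac{1}{n + \lambda m r(Z_i)}, \qquad F(\lambda) := \mathbb{E}[F_n(\lambda)] = \int \frac{nf + mg}{n + \lambda m r}\,d\mu,
\]
so that $F_n(\hat{\lambda}) = 1 = F(\lambda_0)$, and both functions are strictly decreasing in $\lambda$. Inserting the uniform bounds $c \leq r(\cdot) \leq C$ into the identity $F_n(\hat{\lambda}) = 1$ yields $(n+m)/(n + \hat{\lambda}mC) \leq 1 \leq (n+m)/(n + \hat{\lambda}mc)$, so that $\hat{\lambda} \in [1/C, 1/c]$ deterministically; the same argument applied to $F(\lambda_0) = 1$ shows that $\lambda_0 \in [1/C, 1/c]$ as well.

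Next, the mean value theorem produces some $\tilde{\lambda} \in [1/C, 1/c]$ such that
\[
F_n(\lambda_0) - 1 \;=\; F_n(\lambda_0) - F_n(\hat{\lambda}) \;=\; F_n'(\tilde{\lambda})\,(\lambda_0 - \hat{\lambda}).
\]
The derivative satisfies
\[
|F_n'(\lambda)| \;=\; \sum_{i=1}^{n+m} \frac{m\, r(Z_i)}{(n + \lambda m r(Z_i))^2} \;\geq\; \frac{(n+m)\,mc}{(n + mC/c)^2},
\]
which, together with the bounds $m \leq n \leq \tau m$, is bounded below by a strictly positive constant $K_1 = K_1(c, C, \tau)$. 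This gives the deterministic domination $|\hat{\lambda} - \lambda_0| \leq K_1^{-1}\,|F_n(\lambda_0) - 1|$, reducing the problem to a moment bound on $F_n(\lambda_0) - 1$.

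Writing $F_n(\lambda_0) - 1 = \sum_{i=1}^{n+m} W_i$, where $W_i := (n + \lambda_0 m r(Z_i))^{-1} - \mathbb{E}[(n + \lambda_0 m r(Z_i))^{-1}]$ are independent, centered, and bounded by $1/n$ in absolute value, Rosenthal's inequality (used for $p \geq 2$, with Jensen's inequality handling the range $1 \leq p < 2$) gives
\[
\mathbb{E}\Bigl|\sum_{i=1}^{n+m} W_i\Bigr|^p \;\leq\; C_p\Bigl\{\bigl((n+m)/n^2\bigr)^{p/2} + (n+m)/n^p\Bigr\} \;\leq\; Q_1\, n^{-p/2}
\]
for all $p \in \mathbb{N}_+$, where $m \leq n \leq \tau m$ and $p \geq 1$ are used to dominate $(n+m)/n^p \leq 2/n^{p-1} \leq 2/n^{p/2}$. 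Combining with the MVT bound yields $\mathbb{E}|\hat{\lambda} - \lambda_0|^p \leq Q_0\, n^{-p/2}$ with $Q_0 = K_1^{-p}\,Q_1$, as required. The only point requiring genuine care is establishing the deterministic containment $\hat{\lambda}, \lambda_0 \in [1/C, 1/c]$, as this is what guarantees that the derivative lower bound holds uniformly on the interval traversed by $\tilde{\lambda}$; once that is in place, the remainder of the argument is a standard moment bound for a sum of bounded independent centered random variables.
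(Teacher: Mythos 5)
Your proof is correct, and its skeleton is the same as the paper's: both arguments reduce $|\hat{\lambda}-\lambda_0|$ to the deviation of the centered estimating equation at $\lambda_0$ via a derivative (Lipschitz) lower bound, and then control a sum of bounded, independent, mean-zero terms. Where you diverge is the final step. The paper sets up a Z-estimator function $\Psi_{n,m}$, establishes the same type of uniform derivative bound $\partial\psi_\lambda/\partial\lambda \leq -a$, obtains a sub-Gaussian tail bound for $|\hat{\lambda}-\lambda_0|$ via Hoeffding's inequality applied separately to the $X$- and $Y$-sums, and then integrates the tail to recover all moments via the Gamma function. You instead pass directly to a moment bound on $F_n(\lambda_0)-1 = \sum W_i$ via Rosenthal's inequality (with Jensen handling $p<2$), avoiding the tail-to-moment integration. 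Both routes work; Hoeffding-plus-integration is more elementary and yields a uniform sub-Gaussian tail as a by-product, while Rosenthal is a one-line jump to the desired moment. One genuine improvement in your write-up is the explicit deterministic containment $\hat{\lambda},\lambda_0\in[1/C,1/c]$: the derivative lower bound fails as $\lambda\to\infty$, so this localisation is genuinely needed, and the paper relies on it only implicitly. You are also right that both the mean-zero property $\mathbb{E}[F_n(\lambda_0)]=F(\lambda_0)=1$ and the bound $|W_i|\le 1/n$ follow immediately from $0\leq (n+\lambda_0 m r(\cdot))^{-1}\leq 1/n$, so the argument is complete.
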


 \begin{proof} 
We know that $\hat{\lambda}$ is a Z-estimator, being the solution with respect to $\lambda$ of \[
\Psi_{n,m}(\lambda) := \frac{1}{n+m} \sum_{k = 1}^{n+m} \psi_\lambda(Z_k) = 0, \quad \text{ with } \quad \psi_\lambda(x) := \frac{n+m}{n+\lambda m r(x)} - 1,
\]
whereas $\lambda_0$ is a population quantity and satisfies 
\begin{align}\label{eq:tildeLambda}
    \int \frac{nf}{n + \lambda_0 m r} d\mu + \int \frac{mg}{n + \lambda_0 m r} d\mu = 1.
\end{align}
We can use the assumptions $ m \leq n \leq \tau  m$ and $0 < c \leq r(\cdot) \leq C$ to show that
\[
\frac{\partial \psi_\lambda(x)}{\partial \lambda} = - \frac{(n+m) m r(x)}{\{n+\lambda m r(x)\}^2} \leq - \frac{c\tau^{-1}(1+\tau^{-1}) n^2}{(1+C/c)^2n^2} = -\frac{c^3(1+\tau)}{\tau^2(c+C)^2} =: - a(c, C, \tau) \equiv - a < 0,
\]
which implies that for all $\epsilon > 0$ we have $ \{|\hat{\lambda} - \lambda_0| < \epsilon\} \supseteq \{|\Psi_{n,m}(\hat{\lambda}) - \Psi_{n,m}(\lambda_0)| < a \epsilon\}  = \{|\Psi_{n,m}(\lambda_0)| < a \epsilon\}$, using the fact that $\Psi_{n,m}(\hat{\lambda}) = 0$ by definition of $\hat{\lambda}$. As a result, the previous inclusion shows that the boundedness assumption on $r(\cdot)$ allows to relate how close $\hat{\lambda}$ is to $\lambda_0$ with $ |\Psi_{n,m}(\lambda_0)|$, which is easier to analyze since it is the sum of zero mean, bounded independent and identically distributed~random variables. In this regard, we have
\begin{align}\label{eq:Hoeffding_h}
    \mathbb{P}&\{|\hat{\lambda} - \lambda_0| \geq  \epsilon\} \leq \mathbb{P}\{|\Psi_{n,m}(\lambda_0)| \geq a \epsilon\} = \mathbb{P} \left\{\left| \frac{1}{n+m}\sum_{k = 1}^{n+m} \left(\frac{n+m}{n+\lambda_0mr(Z_k)} - 1\right) \right| \geq a \epsilon \right\} \nonumber \\
    & = \mathbb{P} \left\{\left| \frac{1}{n}\sum_{i = 1}^{n} \frac{n}{n+\lambda_0mr(X_i)} + \frac{1}{m}\sum_{j = 1}^{m} \frac{m}{n+\lambda_0mr(Y_j)} - 1 \right| \geq a \epsilon \right\} \nonumber \\
    & \overset{\eqref{eq:tildeLambda}}{=} \mathbb{P} \left\{\left| \frac{1}{n}\sum_{i = 1}^{n} \frac{n}{n+\lambda_0mr(X_i)} + \frac{1}{m}\sum_{j = 1}^{m} \frac{m}{n+\lambda_0mr(Y_j)} - \int \frac{nf}{n + \lambda_0mr} d\mu - \int \frac{mg}{n + \lambda_0 m r} d\mu \right| \geq a \epsilon \right\} \nonumber \\
    & \leq \mathbb{P} \left\{\left| \frac{1}{n}\sum_{i  =1}^{n} \frac{n}{n+\lambda_0mr(X_i)} - \int \frac{nf}{n + \lambda_0mr} d\mu \right| \geq \frac{a\epsilon}{2} \right\} \nonumber \\
    & \qquad\qquad\qquad+ \mathbb{P} \left\{\left|  \frac{1}{m}\sum_{j = 1}^{m} \frac{m}{n+\lambda_0mr(Y_j)} - \int \frac{mg}{n + \lambda_0 m r} d\mu \right| \geq \frac{a\epsilon}{2} \right\} \nonumber \\
    & \leq 2 \exp \left\{-\frac{n a^2 \epsilon^2}{2} \right\} + 2 \exp \left\{-\frac{m a^2 \epsilon^2}{2} \right\} \leq 2 \exp \left\{-\frac{n a^2 \epsilon^2}{2} \right\} + 2 \exp \left\{-\frac{n a^2 \epsilon^2}{2 \tau} \right\} \leq 4 \exp \left\{-\frac{n a^2 \epsilon^2}{2 \tau} \right\},
\end{align}
where in the last line we applied Hoeffding's inequality for bounded random variables to $0 \leq \frac{m}{n+\lambda_0mr(\cdot)} \leq \frac{n}{n+\lambda_0mr(\cdot)} \leq 1$, and the assumption that $m \leq n \leq \tau m$. We can thus bound the moment of order $p$ of $|\hat{\lambda} - \lambda_0|$ as 
\begin{align*}
    \mathbb{E}[|\hat{\lambda} - \lambda_0|^p] &= \int _0^\infty \mathbb{P}\{|\hat{\lambda} - \lambda_0|^p \geq \epsilon \} d\epsilon  = \int _0^\infty \mathbb{P}\{|\hat{\lambda} - \lambda_0| \geq \epsilon^{\frac{1}{p}} \} d\epsilon \\
    & = \int _0^\infty \mathbb{P}\{|\hat{\lambda} - \lambda_0| \geq \epsilon \} \,  p \, \epsilon^{p-1} d\epsilon  = \left(\frac{2\tau}{na^2} \right)^{\frac{p}{2}} \int _0^\infty \mathbb{P}\left\{|\hat{\lambda} - \lambda_0| \geq \sqrt{\frac{2\tau}{na^2}} t\right\} \,  p \, t^{p-1} dt \\
    & \leq 2 p  \left(\frac{2\tau}{na^2} \right)^{\frac{p}{2}} \int_0^\infty 2 t^{p-1} e^{-t^2} dt =  2 p  \left(\frac{2\tau}{na^2} \right)^{\frac{p}{2}} \int_0^\infty t^{\frac{p}{2}-1} e^{-t} dt = 2 p  \left(\frac{2\tau}{na^2} \right)^{\frac{p}{2}} \Gamma\Big(\frac{p}{2}\Big),
\end{align*}
where in the last equality we used the definition of the Gamma function. This completes the proof.
\end{proof}

\begin{lemma}\label{lemma:sampler_noIJ}
    Fix subsets $\bar{I} \subseteq [n]$ and $\bar{J} \subseteq \{n+1, \ldots, n+m\}$, and modify Step 3 of Algorithm~\ref{alg:pairwise_sampler} to be: Sample a vector of couples $\tau_t = \{(i_1^{t}, j_1^{t}), \ldots, (i_K^{t}, j_K^{t}) \}$ such that $(i_1^{t}, \ldots, i_K^{t})$ are sampled uniformly and without replacement from $[n]\setminus \bar{I}$, and $(j_1^{t}, \ldots, j_K^{t})$ are sampled uniformly and without replacement from $\{n+1, \ldots, n+m \}\setminus \bar{J}$. Keep all the other steps the same. Then this algorithm still has~\eqref{eq:sampling_1} as its stationary distribution.
    \end{lemma}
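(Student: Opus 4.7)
The plan is to mimic the proof of Proposition \ref{prop:algo_sampler_valid}, verifying that the detailed balance equations with respect to \eqref{eq:sampling_2} continue to hold once the sampling of index pairs is restricted to $([n]\setminus\bar{I}) \times (\{n+1,\ldots,n+m\}\setminus\bar{J})$. Denote by $\mathcal{P}'$ the collection of vectors of $K' := \min\{n-|\bar{I}|, m-|\bar{J}|\}$ couples of the required form, and again write $p \sim_\tau p'$ when $p$ and $p'$ differ only through some subset of swaps dictated by $\tau \in \mathcal{P}'$. The transition probability of the modified chain is then
\[
\mathbb{P}\{P_t = p' \mid P_{t-1} = p\} = \frac{1}{|\mathcal{P}'|} \sum_{\tau \in \mathcal{P}'} \mathbb{P}\{P_t = p' \mid P_{t-1} = p, \tau_t = \tau\},
\]
and for fixed $\tau$ with $p', p'' \sim_\tau p$, Step 5 has not changed, so the odds-ratio identity~\eqref{eq:p_no_tilde} still gives
\[
\frac{\mathbb{P}\{P_t = p' \mid P_{t-1} = p, \tau_t = \tau\}}{\mathbb{P}\{P_t = p'' \mid P_{t-1} = p, \tau_t = \tau\}} = \prod_{j \in \{j_1^t,\ldots,j_{K'}^t\}} \frac{r(Z_{(p'(j))})}{r(Z_{(p''(j))})}.
\]

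The key observation, which is the forward-looking heart of the argument, is that for any $j \in \{n+1,\ldots,n+m\} \setminus \{j_1^t,\ldots,j_{K'}^t\}$ we have $p'(j) = p''(j)$: either $j \in \bar{J}$, in which case $j$ was never proposed for a swap and hence its value is inherited unchanged from $p$; or $j \notin \bar{J}$ but $j \notin \{j_1^t,\ldots,j_{K'}^t\}$, in which case again neither $p'$ nor $p''$ perturbed index $j$. Consequently the ratio $r(Z_{(p'(j))})/r(Z_{(p''(j))})$ equals one for all such $j$, and we may extend the product to obtain, exactly as in~\eqref{eq:oddsRatio},
\[
\frac{\mathbb{P}\{P_t = p' \mid P_{t-1} = p, \tau_t = \tau\}}{\mathbb{P}\{P_t = p'' \mid P_{t-1} = p, \tau_t = \tau\}} = \prod_{j \in \{n+1,\ldots,n+m\}} \frac{r(Z_{(p'(j))})}{r(Z_{(p''(j))})} = \frac{\mathbb{P}\{P = p'\}}{\mathbb{P}\{P = p''\}}.
\]

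With this odds-ratio identity in hand, the remainder of the proof is verbatim from the chain of equalities in~\eqref{eq:detailedBalance}: writing the transition probability as a normalised sum over $\tau$-equivalence classes, we verify
\[
\mathbb{P}\{P = p\} \cdot \mathbb{P}\{P_t = p' \mid P_{t-1} = p\} = \mathbb{P}\{P = p'\} \cdot \mathbb{P}\{P_t = p \mid P_{t-1} = p'\},
\]
which is the detailed balance equation with respect to~\eqref{eq:sampling_2}. This confirms that~\eqref{eq:sampling_2} is a stationary distribution for the modified chain. The only potential obstacle is bookkeeping the indices that remain fixed throughout the run, which I expect is fully resolved by the dichotomy above; no appeal to irreducibility is required since the claim concerns stationarity only (indeed, since indices in $\bar{I}\cup\bar{J}$ are frozen, the chain is reducible in general, and~\eqref{eq:sampling_2} need not be unique among stationary distributions).
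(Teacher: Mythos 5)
Your proof is correct and follows essentially the same route as the paper's: restrict the set of proposal vectors to $\tilde{\mathcal{P}}$, observe that the odds-ratio identity~\eqref{eq:oddsRatio} still holds because unselected indices are unchanged between $p'$ and $p''$, and then repeat the detailed balance computation~\eqref{eq:detailedBalance}. Your explicit remark that irreducibility fails but is not needed also matches the paper's observation.
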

    \begin{proof}
    The key observation to make here is that all the steps that led to~\eqref{eq:detailedBalance} in the proof of Proposition~\ref{prop:algo_sampler_valid} remain valid in this other setting. The only difference lies in the fact that the Markov chain associated with this new algorithm is not irreducible, and hence it will not converge to~\eqref{eq:sampling_1} if we let it run long enough. Nonetheless,~\eqref{eq:sampling_1} is still a stationary distribution, and we will now show this by proving a detailed balance condition. Let $K = \min\{n - \#\bar{I},m - \#\bar{J}\}$ and let $\tilde{\mathcal{P}}$ be the set of all $K$ couples of the form $\{(i_1, j_1), \ldots, (i_K, j_K) \}$ such that $(i_1, \ldots, i_K)$ contains distinct elements from $[n]\setminus \bar{I}$, and $(j_1, \ldots, j_K)$ contains distinct elements from $\{n+1, \ldots, n+m \}\setminus \bar{J}$. For all $t \in \mathbb{N}_+$ and all permutations $p, p^{\prime}$, we have
$$
\mathbb{P}\left\{\sigma_t = p^{\prime} \mid \sigma_{t-1} = p \right\}=\frac{1}{|\tilde{\mathcal{P}}|} \sum_{\tau \in \tilde{\mathcal{P}}} \mathbb{P}\left\{\sigma_t = p^{\prime} \mid \sigma_{t-1} = p, \tau_t = \tau \right\},
$$
since at each time $t$ this new algorithm draws $\tau_t \in \tilde{\mathcal{P}}$ uniformly at random. Next, given $\tau_t = \tau$ and $\sigma_{t-1} = p$, it must be the case that $\sigma_t$ satisfies $\sigma_t \sim_\tau p$, since this new algorithm still uses Steps 4-5 of Algorithm \ref{alg:pairwise_sampler}. Arguing as in~\eqref{eq:oddsRatio} gives \[
\frac{\mathbb{P}\left\{\sigma_t = p^{\prime} \mid \sigma_{t-1} = p, \tau_t = \tau \right\}}{\mathbb{P}\left\{\sigma_t = p^{\prime \prime} \mid \sigma_{t-1} = p, \tau_t = \tau \right\}} = \frac{\mathbb{P}\left\{\sigma=p^{\prime}\right\}}{\mathbb{P}\left\{\sigma=p^{\prime \prime}\right\}},
\]
which implies that
\[
\mathbb{P}\left\{\sigma_t = p^{\prime} \mid \sigma_{t-1} = p \right\} = \frac{1}{|\tilde{\mathcal{P}}|} \sum_{\tau \in \tilde{\mathcal{P}}} \frac{\mathbbm{1}\left\{p^{\prime} \sim_\tau p \right\} \cdot \mathbb{P}\left\{\sigma=p^{\prime}\right\}}{\sum_{p^{\prime \prime}} \mathbbm{1}\left\{p^{\prime \prime} \sim_\tau p\right\} \cdot \mathbb{P}\left\{\sigma=p^{\prime \prime}\right\}}.
\]
This concludes the proof by analogous calculations to those in~\eqref{eq:detailedBalance}.
\end{proof}

\begin{proof}[Proof of Theorem \ref{thm:LB_minimax}]
For simplicity, we assume $n = m$ throughout the proof. The more general case $m \leq n \leq \tau m$ corresponds to a simpler problem, as it involves a larger sample size; thus, the lower bound derived here remains valid in that setting. For varying densities $p, q$ on $\mathbb{R}^d$ and a sufficiently small $\rho > 0$, define the set 
\begin{align}\label{eq:tildeS_theta}
    \tilde{\mathcal{S}}_\theta^r(\rho) := \mathcal{S}_\theta^r(\rho) \, \cap \, \left\{(f \equiv f_p, g \equiv g_q) \, : \, f_p  = \gamma_p \frac{p}{r}  \left(1 + r \, \frac{\gamma_q}{\gamma_p}\right), g_{q} = \gamma_q q  \left(1 + r \, \frac{\gamma_q}{\gamma_p}\right) \right\} \subseteq \mathcal{S}_\theta^r(\rho),
\end{align}
where $\gamma_p = \frac{\sqrt{B}}{A\sqrt{B} + \sqrt{A}}$ and  $\gamma_q = \frac{\sqrt{A}}{A\sqrt{B} + \sqrt{A}}$, with $A = \int_{\mathbb{R}^d} p(x)/r(x) dx$ and $B = \int_{\mathbb{R}^d} q(x) r(x) dx$. One can easily check that $\int_{\mathbb{R}^d} f_p(x) dx = \int_{\mathbb{R}^d} g_q(x) dx = 1$ for this specific choice of $\gamma_p$ and $\gamma_q$. As a result, for any test $\varphi \in \Psi(\alpha)$, and for all prior distributions $\mu_0, \mu_1$ supported on $H_0$ and $\tilde{\mathcal{S}}_\theta^r(\rho)$, respectively, Equation~\eqref{eq:tildeS_theta} shows that we can bound the total error probability as 
\begin{align}\label{eq:minimax_TV_reduction}
    \alpha &+ \sup_{(f,g) \in \mathcal{S}_\theta^r(\rho)} \mathbb{E}_P (1-\varphi) \geq \sup_{g \, \propto \, r f} \mathbb{E}_P \,\varphi + \sup_{(f,g) \in \mathcal{S}_\theta^r(\rho)} \mathbb{E}_P (1-\varphi) \nonumber \\
    & \geq \sup_{g \, \propto \, r f} \mathbb{E}_P \,\varphi + \sup_{(f,g) \in \tilde{\mathcal{S}}_\theta^r(\rho)} \mathbb{E}_P (1-\varphi) \geq \mathbb{E}_{\mu_0} \{\mathbb{E}_{P} \,\varphi\} + \mathbb{E}_{\mu_1} \{\mathbb{E}_{P} (1-\varphi)\} \geq 1 - \operatorname{TV}(\mathbb{E}_{\mu_0} P, \mathbb{E}_{\mu_1} P),
\end{align}
where we recall that $P = P_{f}^{\otimes n} \otimes P_{g}^{\otimes n}$. This demonstrates that controlling the total variation distance above is sufficient to obtain a lower bound on $\rho^*_r$. We proceed to do so for specific choices of $\mu_0$ and $\mu_1$, using a classical perturbation-based method originating from \cite{ingster1987minimax} and recently employed in two-sample and independence testing problems in \cite{meynaoui2019adaptive} and \cite{LiYuanJMLROptimal}. Start by considering $q_0(x) = q_1(x) = p_0(x) = \mathbbm{1}\{x \in [0,1]^d\}$ and define $A_i = \int_{\mathbb{R}^d} p_i(x)/r(x) dx$ and $B_i = \int_{\mathbb{R}^d} q_i(x) r(x) dx$ for $i \in \{0,1\}$. We can assume without loss of generality that $\int_{[0,1]^d} r(x) dx = 1$ so that $B_0 = B_1 = 1$; otherwise, since the problem is scale invariant, we might replace $r$ with $r/\int_{[0,1]^d} r(x) dx$ without affecting the minimax separation. Finally, define $p_1$ to be a perturbation of $p_0$ of the following form. Let $\tilde{M}$ be as in~\eqref{eq:maxRatioNorm}, fix
\[
B_n^{1/d}  = \left\lfloor \left(\frac{\sqrt{c}(2\pi)^{d/2}L}{2\sqrt{2(1+c)} \, \tilde{M} \rho} \right)^{1/s} \right\rfloor \qquad \text{and } \qquad \delta_n = \sqrt{\frac{2(1+c)}{c}} \rho \, B_n^{-1/2},
\]
and consider $\{\phi_1, \ldots, \phi_{B_n}\}$ as in~\eqref{eq:phi_k} in the proof of Lemma \ref{lem:smooth_bumps_measurable_r}, i.e.~an orthonormal set of functions in $L^2(\mathbb{R}^d)$ whose supports are disjoint and contained in $[0,1]^d$, and satisfy 
\begin{align}\label{eq:conditionPhi}
    \int_{\mathbb{R}^d} \phi_k(x) dx = \int_{\mathbb{R}^d} \frac{\phi_k(x)}{r(x)} dx = 0 \quad \text{ for all } k \in [B_n].
\end{align}
It is convenient to recall that the $\phi_k$'s are of the form $\phi_k(x)\;=\;
\frac{B_{n}^{1/2}}{\|\phi_{0,k}\|_{2}}\,
\phi_{0,k}\!\bigl(B_{n}^{1/d}\{x-x_k^{0}\}\bigr)$, where $x_k^0$ is the lower-left corner of their support, and the $\phi_{0,k}$'s satisfy $\max_{k \in [B_n]} \left\{\frac{\| \phi_{0,k}\|_{\mathcal{S}^s_d}}{\| \phi_{0,k}\|_2} \vee \frac{\| \phi_{0,k}\|_\infty}{\| \phi_{0,k}\|_2} \right\} \leq \tilde{M}$, with $\|\cdot\|_{\mathcal{S}^s_d}$ defined in~\eqref{eq:SobolevNorm}.  Based on this, we define 
\begin{align}\label{eq:p0_perturbed}
       p_1(x) \equiv p_{1,a}(x) = p_0(x) + \delta_n \, \sum_{k = 1}^{B_n} a_k \, \phi_k(x), 
\end{align}
where $a = (a_1, \ldots, a_{B_n})$ is a collection of independent and identically distributed~Rademacher random variables, meaning that $\mathbb{P}\{a_k = 1\} = \mathbb{P}\{a_k = -1\} = 1/2$ for all $k \in [B_n]$. 

With these definitions in mind, let $f_i := f_{p_i}$ and $g_i := g_{q_i}$ for $i \in \{0,1\}$, with $f_p, g_p$ as in \eqref{eq:tildeS_theta}.
Now, it is clear that $(f_0, g_0)$ satisfy the null, thus we just need to check that $(f_1, g_1) \in \tilde{\mathcal{S}}_\theta^r(\rho)$ for all $a \in \{\pm 1\}^{B_n}$, which is required to ensure that the distribution $\mu_1$ that assigns equal probability to each of them is indeed supported on $\tilde{\mathcal{S}}_\theta^r(\rho)$. We may assume that $\|f_1\|_\infty \vee \|g_1\|_\infty \leq M$; otherwise, it suffices to choose $p_0(x) = \mathbbm{1}\{x \in [0,u]^d\}$ for $u \geq 1$ sufficiently large and construct a perturbed version of it as in~$\eqref{eq:p0_perturbed}$. This is clearly sufficient for $g_1$, since the prefactor of $q$ in the definition of $g_q$ in~\eqref{eq:tildeS_theta} depends on $r$, which is uniformly bounded. A similar argument applies to $f_1$, taking into account that the magnitudes of the bumps are such that 
\[
\|\delta_n \sum_{k=1}^{B_n} a_k \phi_k \|_\infty \leq \delta_n \max_{k=1}^{B_n} \|\phi_k \|_\infty  \leq  \delta_n B_n^{1/2} \max_{k=1}^{B_n} \frac{\|\phi_{0,k} \|_\infty}{\|\phi_{0,k} \|_2} \leq\tilde{M} \delta_n B_n^{1/2} \leq \sqrt{\frac{2(1+c)}{c}} \tilde M \, \rho, 
\]
where $\rho$ is sufficiently small. In particular, we will set $\rho = c_1 n^{-2 s /(4 s+d)}$, where the constant $c_1 = c_1(c, C, \theta, \alpha, \beta)$ is chosen so that $\sqrt{2c^{-1}(1+c)} \tilde M \, \rho \leq 1$;
this also implies that $f_1$ is non-negative. As for the conditions involving $\psi_r$, observe that 
\[
\psi_r = 2 \frac{\lambda_0 r f_1 - g_1}{1 + \lambda_0 r} = 2\frac{1 + r \, \gamma_{q_1}/\gamma_{p_1}}{1 + \lambda_0 r}(\lambda_0 \gamma_{p_1} p_1 - \gamma_{q_1} q_1) = 2\gamma_{q_1}(p_1 - q_1),
\]
since $\lambda_0 = \gamma_{q_1}/\gamma_{p_1}$ satisfies $\int_{\mathbb{R}^d} \psi_r(x) dx = 0$. Hence we need to verify that $2\gamma_{q_1}\|p_1 - q_1 \|_2 > \rho$ and $2\gamma_{q_1}(p_1 - q_1) \in \mathcal{S}_d^s(L)$. Start by noticing that for all $a \in \{\pm 1\}^{B_n}$ we have 
\[
1 \geq \gamma_{q_1}^2 = (\sqrt{A_1}+1)^{-2} \geq (2A_1 + 2)^{-1} \geq \frac{c}{2(1+c)},
\]
as \begin{align*}
    A_1 &= \int_{\mathbb{R}^d} \frac{p_1(x)}{r(x)} dx = \int_{\mathbb{R}^d} \frac{p_0(x) + \delta_n \, \sum_{k = 1}^{B_n} a_k \, \phi_k(x)}{r(x)} dx \overset{\eqref{eq:conditionPhi}}{=} \int_{\mathbb{R}^d} \frac{p_0(x)}{r(x)} dx \leq \frac{1}{c}.
\end{align*} 
Then, as for the condition involving the $L^2$-norm of $\psi_r$, we have
\begin{align*}
    4\gamma_{q_1}^2 \|p_1 - q_1 \|_2^2 &= 4\gamma_{q_1}^2 \left\| \delta_n \, \sum_{k = 1}^{B_n} a_k \, \phi_k \right\|_2^2 = 4\gamma_{q_1}^2 \delta_n^2 \left\|  \, \sum_{k = 1}^{B_n} a_k \, \phi_k \right\|_2^2 = 4\gamma_{q_1}^2 \delta_n^2 \int_{\mathbb{R}^d} \left(  \, \sum_{k = 1}^{B_n} a_k \, \phi_k(x) \right)^2 dx \\
    & = 4\gamma_{q_1}^2 \delta_n^2 \, \sum_{k = 1}^{B_n} a_{k}^2  \, \int_{\mathbb{R}^d}\phi_{k}^2(x) dx = 4\gamma_{q_1}^2 \delta_n^2 \, B_n \geq \frac{2c}{(1+c)} \delta_n^2 \, B_n =  4\rho^2 > \rho^2 
\end{align*}
for our particular choice of $\delta_n$ and $B_n$. As for the smoothness condition, define for notational convenience the norm 
\begin{align}\label{eq:SobolevNorm}
\| p \|_{\mathcal{S}^s_d}^2 := \int_{\mathbb{R}^d}\|\xi\|_2^{2 s}|\widehat{p}(\xi)|^2 \mathrm{~d} \xi,
\end{align}
so that $\mathcal{S}^s_d(L) = \{p \in L^1(\mathbb{R}^d) \cap L^2(\mathbb{R}^d) : \| p \|_{\mathcal{S}^s_d}^2 \leq (2\pi)^d L^2 \}$. Furthermore,  since the iterated Laplacian $(-\Delta)^{s}$ of order $s \in \mathbb{N}_+$ is the Fourier multiplier with symbol $\|\xi\|^{2s}_2$, Plancherel's theorem gives
\[
\int_{\mathbb{R}^d} \|\xi\|_2^{2s} \, \widehat{\phi_1}(\xi) \,
                         \overline{\widehat{\phi_2}(\xi)}\,d\xi = \int_{\mathbb{R}^d}(-\Delta)^{s}\phi_1(x)\,
                         \overline{\phi_2(x)}\,dx = 0,
\]
since  $(-\Delta)^{s}\phi_{1}$ is a combination of derivatives of $\phi_{1}$ of order $2s$, and $\phi_1, \phi_2$ have disjoint supports. This implies that 
\begin{align*}
4 & \gamma_{q_1}^2  \,  \|  p_1 - q_1\|_{\mathcal{S}^s_d}^2 \leq \\
& 4 \,  \|  p_1 - q_1\|_{\mathcal{S}^s_d}^2 =    4 \left\| \delta_n \, \sum_{k = 1}^{B_n} a_k \, \phi_k \right\|_{\mathcal{S}^s_d}^2 = 4 \delta_n^2 \left\| \, \sum_{k = 1}^{B_n} a_k \, \phi_k \right\|_{\mathcal{S}^s_d}^2 =  4\delta_n^2\int_{\mathbb{R}^d}\|\xi\|_2^{2 s}\left|\sum_{k = 1}^{B_n} a_k \, \widehat{\phi_k}(\xi)\right|^2 \mathrm{~d} \xi \\
& \leq 4 \delta_n^2 B_n \max_{k=1}^{B_n} \int_{\mathbb{R}^d}\|\xi\|_2^{2 s}\left| \, \widehat{\phi_k}(\xi)\right|^2 \mathrm{~d} \xi = 4  \delta_n^2 B_n \max_{k=1}^{B_n} \int_{\mathbb{R}^d}\|\xi\|_2^{2 s}\left| \, \widehat{\frac{B_n^{1/2}}{\|\phi_{0,k}\|_2}\phi_{0,k}(B_n^{1/d} \{\cdot - x_k^0\})}(\xi)\right|^2 \mathrm{~d} \xi \\
& = 4  \delta_n^2 B_n \max_{k=1}^{B_n} \int_{\mathbb{R}^d}\|\xi\|_2^{2 s} \, \left|\frac{B_n^{-1/2}} {\|\phi_{0,k}\|_2} \, e^{- i \langle \xi, x_k^0  \rangle}\widehat{\phi_{0,k}}(\xi/B_n^{1/d})\right|^2 \mathrm{~d} \xi = 4\max_{k=1}^{B_n} \frac{ \delta_n^2}{\|\phi_{0,k}\|_2^2} \int_{\mathbb{R}^d}\|\xi\|_2^{2 s} \, \left|\widehat{\phi_{0,k}}(\xi/B_n^{1/d})\right|^2 \mathrm{~d} \xi \\
& = 4 \max_{k=1}^{B_n} \frac{ \delta_n^2}{\|\phi_{0,k}\|_2^2} \int_{\mathbb{R}^d}\|B_n^{1/d} \, \xi\|_2^{2 s} \, \left|\widehat{\phi_{0,k}}(\xi)\right|^2 B_n \mathrm{~d} \xi = 4 \delta_n^2 B_n^{\frac{2s + d}{d}} \max_{k=1}^{B_n} \frac{\|\phi_{0,k}\|_{\mathcal{S}^s_d}^2}{\|\phi_{0,k}\|_2^2} \leq 4\tilde{M}^2 \delta_n^2 B_n^{\frac{2s + d}{d}} \leq (2\pi)^d L^2
\end{align*}
for our particular choice of $\delta_n$ and $B_n$. 

It remains to control the total variation in~\eqref{eq:minimax_TV_reduction} for this specific choice of $\mu_1, \delta_n$ and $B_n$, and assess for which values of $\rho$ we can bound it above by $1-\alpha -\beta$. Writing $f_1 \equiv f_{1,a}$ and $g_1 \equiv g_{1,a} $ to highlight their dependence on $a \in \{\pm1\}^d$ through $p_1$, and using $\chi^2$ for the chi-square divergence, we look at 
\begin{align}\label{eq:LB_fromTVtoChisq}
    4\operatorname{TV}^2&(P_{f_0}^{\otimes n} \otimes P_{g_0}^{\otimes n}, \mathbb{E}_a\{P_{f_1}^{\otimes n} \otimes P_{g_1}^{\otimes n}\}) =  4\operatorname{TV}^2\left(P_{f_0}^{\otimes n} \otimes P_{g_0}^{\otimes n}, 2^{-B_n} \sum_{a \in \{\pm 1\}^d} P_{f_{1,a}}^{\otimes n} \otimes P_{g_{1,a}}^{\otimes n}\right)  \nonumber \\
    & \leq \chi^2\left(P_{f_0}^{\otimes n} \otimes P_{g_0}^{\otimes n}, 2^{-B_n} \sum_{a \in \{\pm 1\}^d} P_{f_{1,a}}^{\otimes n} \otimes P_{g_{1,a}}^{\otimes n} \right) \nonumber \\
    & = 2^{-2B_n} \sum_{a_1, a_2 \in \{\pm 1\}^d} \int_{\mathbb{R}^{2nd}} \left(\prod_{i=1}^n \frac{f_{1,a_1}(x_i)}{f_{0}(x_i)} \prod_{j=1}^n\frac{g_{1,a_1}(y_j)}{g_{0}(y_j)} \right) \left(\prod_{i=1}^n \frac{f_{1,a_2}(x_i)}{f_{0}(x_i)} \prod_{j=1}^n\frac{g_{1,a_2}(y_j)}{g_{0}(y_j)} \right) dP_0 \nonumber \\
    & = 2^{-2B_n} \sum_{a_1, a_2 \in \{\pm 1\}^d} \left(\prod_{i=1}^n \int_{\mathbb{R}^d} \frac{f_{1,a_1}(x_i)}{f_{0}(x_i)} \frac{f_{1,a_2}(x_i)}{f_{0}(x_i)} f_0(x_i) dx_i \right) \left(\prod_{j=1}^n \int_{\mathbb{R}^d} \frac{g_{1,a_1}(y_j)}{g_{0}(y_j)} \frac{g_{1,a_2}(y_j)}{g_{0}(y_j)} g_0(y_j) dy_j \right) \nonumber \\
    & = 2^{-2B_n} \sum_{a_1, a_2 \in \{\pm 1\}^d} \left(\int_{\mathbb{R}^d} \frac{f_{1,a_1}(x_1)}{f_{0}(x_1)} \frac{f_{1,a_2}(x_1)}{f_{0}(x_1)} f_0(x_1) dx_1 \right)^n \left( \int_{\mathbb{R}^d} \frac{g_{1,a_1}(y_1)}{g_{0}(y_1)} \frac{g_{1,a_2}(y_1)}{g_{0}(y_1)} g_0(y_1) dy_1 \right)^n
\end{align}
where we set $dP_0 = \prod_{i = 1}^n f_0(x_i) \prod_{j=1}^n g_0(y_j) dx dy $. We now focus on controlling the integrals in the last display. As for the latter, Equation~\eqref{eq:conditionPhi} implies that $A_0 = A_1$, hence $\gamma_{p_1} = \gamma_{p_0}$ and $\gamma_{q_1} = \gamma_{q_0}$. This gives $\frac{g_{1,a}(y_1)}{g_{0}(y_1)} = \mathbbm{1}\{x \in [0,1]^d\}$ for all $a \in \{\pm1\}^d$, and further shows that the second integral in~\eqref{eq:LB_fromTVtoChisq} is equal to one. As for the other one, similar calculations show that 
\begin{align}\label{eq:secondIntChisq}
    \int_{\mathbb{R}^d} & \frac{f_{1,a_1}(x)}{f_{0}(x)} \frac{f_{1,a_2}(x)}{f_{0}(x)} f_0(x) dx = \int_{\mathbb{R}^d} \left(1 + \delta_n \sum_{k=1}^{B_n} a_{1,k} \frac{\phi_k(x)}{p_0(x)} \right) \left(1 + \delta_n \sum_{k=1}^{B_n} a_{2,k} \frac{\phi_k(x)}{p_0(x)} \right)   f_0(x) dx \nonumber \\
    & = 1 + \delta_n^2 \sum_{k=1}^{B_n} a_{1,k} a_{2,k} \int_{\mathbb{R}^d}\frac{\phi_k^2(x)}{p_0^2(x)} f_0(x) dx + \delta_n  \sum_{k=1}^{B_n} (a_{1,k}+a_{2,k})\int_{\mathbb{R}^d}\frac{\phi_k(x)}{p_0(x)} f_0(x) dx\nonumber \\
    & = 1 + \delta_n^2 \sum_{k=1}^{B_n} a_{1,k} a_{2,k} \int_{\mathbb{R}^d}\frac{\phi_k^2(x)}{p_0^2(x)} f_0(x) dx + \delta_n  \sum_{k=1}^{B_n} (a_{1,k}+a_{2,k})\int_{\mathbb{R}^d}\frac{\phi_k(x)}{p_0(x)} \frac{p_0(x)}{r(x)} (\gamma_{p_0} +r \, \gamma_{q_0}) dx \nonumber \\
    & = 1 + \delta_n^2 \sum_{k=1}^{B_n} a_{1,k} a_{2,k} \int_{\mathbb{R}^d}\frac{\phi_k^2(x)}{p_0^2(x)} f_0(x) dx + \delta_n  \sum_{k=1}^{B_n} (a_{1,k}+a_{2,k})\left\{\gamma_{p_0} \int_{\mathbb{R}^d}\frac{\phi_k(x)}{r(x)} dx + \gamma_{q_0} \int_{\mathbb{R}^d} \phi_k(x) dx \right\} \nonumber \\
    & = 1 + \delta_n^2 \sum_{k=1}^{B_n} a_{1,k} a_{2,k} \int_{\mathbb{R}^d}\frac{\phi_k^2(x)}{p_0^2(x)} f_0(x) dx,
\end{align}
where the second equality follows from the fact that $\phi_{k_1}$ and $\phi_{k_2}$ have disjoint support when $k_1 \neq k_2$, and the final equality uses~\eqref{eq:conditionPhi}. Combining~\eqref{eq:LB_fromTVtoChisq} with~\eqref{eq:secondIntChisq} then shows 
\begin{align*}
    4&\operatorname{TV}^2(P_{f_0}^{\otimes n} \otimes P_{g_0}^{\otimes n}, \mathbb{E}_a\{P_{f_1}^{\otimes n} \otimes P_{g_1}^{\otimes n}\}) \leq 2^{-2B_n} \sum_{a_1, a_2 \in \{\pm 1\}^d} \left(1 + \delta_n^2 \sum_{k=1}^{B_n} a_{1,k} a_{2,k} \int_{\mathbb{R}^d}\frac{\phi_k^2(x)}{p_0^2(x)} f_0(x) dx \right)^n \\
    & = \mathbb{E}_{a_1, a_2} \left[\left(1 + \delta_n^2 \sum_{k=1}^{B_n} a_{1,k} a_{2,k} \int_{\mathbb{R}^d}\frac{\phi_k^2(x)}{p_0^2(x)} f_0(x) dx \right)^n \right] = \mathbb{E}_{a} \left[\left(1 + \delta_n^2 \sum_{k=1}^{B_n} a_k \int_{\mathbb{R}^d}\frac{\phi_k^2(x)}{p_0^2(x)} f_0(x) dx \right)^n \right] \\
    & \leq \mathbb{E}_{a} \left[\exp\left\{n \, \delta_n^2 \sum_{k=1}^{B_n} a_k \int_{\mathbb{R}^d}\frac{\phi_k^2(x)}{p_0^2(x)} f_0(x) dx \right\} \right] = \prod_{k=1}^{B_n} \cosh \left(n \, \delta_n^2 \int_{\mathbb{R}^d}\frac{\phi_k^2(x)}{p_0^2(x)} f_0(x) dx \right)  \\
    & \leq \exp \left\{\frac{1}{2} B_n n^2 \delta_n^4  \left(\max_{k  =1}^{B_n}\int_{\mathbb{R}^d}\frac{\phi_k^2(x)}{p_0^2(x)} f_0(x) dx\right)^2 \right\} = \exp \left\{\frac{1}{2} B_n n^2 \delta_n^4  \left(\max_{k  =1}^{B_n}\int_{\mathbb{R}^d}\frac{\gamma_{p_0} + r(x) \gamma_{q_0}}{r(x)} \phi_k^2(x) dx\right)^2 \right\} \\
    & \leq \exp \left\{\frac{C+c^2}{c^2} B_n n^2 \delta_n^4 \right\},
\end{align*}
where in the last step we used the fact that $c \leq r(\cdot) \leq C$ together with $\gamma_{q_0}  = (\sqrt{A_0} + 1)^{-1} \leq 1$ and $\gamma_{p_0}  = A_0^{-1/2} \gamma_{q_0} \leq A_0^{-1/2} = \{\int_{\mathbb{R}^d} p_0(x)/r(x)\}^{-1/2} \leq \sqrt{C}$. Now, being $B_n n^2 \delta_n^4$ of the order $n^2 \rho^{\frac{4s+d}{s}}$, the previous shows that there exists a constant $c_1 = c_1(c, C, \theta, \alpha, \beta)$ for which the previous display is upper bounded by $1-\alpha - \beta$ whenever $\rho \leq c_1 n^{-2 s /(4 s+d)}$. This concludes the proof.
\end{proof}

\begin{lemma}\label{lem:smooth_bumps_measurable_r}
Let $r:\mathbb{R}^{d}\to \mathbb{R}_+$ be such that $0<c\le r(x)\le C$ for all $x \in \mathbb{R}^d$. Fix an integer $B_{n}\ge 1$ such that $B_n^{1/d} \in \mathbb{N}_+$ and write
\(b_{n}:=B_{n}^{-1/d}\).
Partition $[0,1]^d$ into the $B_n$ disjoint cubes  
\[
Q_k=\prod_{j=1}^{d}[i_j b_{n},(i_j+1)b_{n}),\qquad \text{ with }
k=(i_1,\dots,i_d)\in\{0,\dots,b_{n}^{-1}-1\}^{d},
\]
and denote the lower–left corner of $Q_k$ by $x_k^{0}$. There exist functions $\{\phi_{0,1}, \ldots, \phi_{0, B_n}\}$ supported on $[0,1]^d$ satisfying the following conditions:
\begin{itemize}
    \item[(i)] For all $k \in [B_n]$
\[
\phi_k(x)\;=\;
\frac{B_{n}^{1/2}}{\|\phi_{0,k}\|_{2}}\,
\phi_{0,k}\!\bigl(B_{n}^{1/d}\{x-x_k^{0}\}\bigr)
\]
is $C^\infty(\mathbb{R}^{d})$ and satisfies
\[
\int_{\mathbb{R}^{d}}\phi_k(x)\,dx
    \;=\;
\int_{\mathbb{R}^{d}}\frac{\phi_k(x)}{r(x)}\,dx
    \;=\;0,
\qquad
\|\phi_k\bigr\|_{2}=1,
\qquad
\operatorname{supp}\phi_k\subseteq Q_k;
\]
\item[(ii)] There exist a constant $\tilde{M} > 0$ such that 
\[
\max_{k  =1}^{B_n} \left\{\frac{\| \phi_{0,k}\|_{\mathcal{S}^s_d }}{\| \phi_{0,k}\|_2} \vee \frac{\| \phi_{0,k}\|_\infty}{\| \phi_{0,k}\|_2} \right\} \leq \tilde{M}.
\]
\end{itemize}
\end{lemma}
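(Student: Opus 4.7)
The natural first attempt here is to build $\phi_{0,k}$ from a pair of disjointly-supported bumps, but this fails: two free coefficients against two integral constraints generically force the trivial solution. The fix is to use three disjointly-supported bumps, giving a $2 \times 3$ homogeneous coefficient system whose kernel is always nontrivial, and to exploit the pairwise disjointness of the supports so that all relevant norms of the resulting linear combination decouple cleanly into a common scalar factor that cancels in the ratios appearing in part (ii).

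Concretely, I would fix once and for all a single bump $\eta \in C_c^\infty(\mathbb{R}^d)$ with $\eta \not\equiv 0$ and $\operatorname{supp}\eta \subseteq [0, 1/4]^d$, and choose three translation vectors $v_1, v_2, v_3 \in \mathbb{R}^d$ so that the translates $\psi_j(y) := \eta(y - v_j)$ have pairwise disjoint supports contained in $[0,1]^d$ (for instance $v_1 = 0$, $v_2 = (1/4, 0, \ldots, 0)$, $v_3 = (1/2, 0, \ldots, 0)$). For each $k \in [B_n]$, writing $r_k(y) := r(x_k^0 + B_n^{-1/d} y)$, which still satisfies $c \le r_k \le C$, consider the homogeneous $2 \times 3$ linear system on $(A_1, A_2, A_3) \in \mathbb{R}^3$ given by
\begin{equation*}
\sum_{j=1}^3 A_j \int_{\mathbb{R}^d} \psi_j(y)\,dy \;=\; 0,
\qquad
\sum_{j=1}^3 A_j \int_{\mathbb{R}^d} \frac{\psi_j(y)}{r_k(y)}\,dy \;=\; 0.
\end{equation*}
Its kernel has dimension at least one for every measurable, bounded $r$; pick any nonzero $(A_{1,k}, A_{2,k}, A_{3,k})$ in the kernel and set $\phi_{0,k} := \sum_{j=1}^3 A_{j,k} \psi_j$. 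Then $\phi_{0,k} \in C^\infty(\mathbb{R}^d)$ with $\operatorname{supp}\phi_{0,k} \subseteq [0,1]^d$, and $\int \phi_{0,k} = \int \phi_{0,k}/r_k = 0$ by construction. The substitution $y = B_n^{1/d}(x - x_k^0)$ then transfers these into $\int \phi_k = \int \phi_k/r = 0$, $\|\phi_k\|_2 = 1$ and $\operatorname{supp}\phi_k \subseteq Q_k$, together with $\phi_k \in C^\infty(\mathbb{R}^d)$, which is exactly part (i).

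For part (ii), the disjoint-support design makes the norm computations automatic. At every $y$ at most one $\psi_j(y)$ is nonzero, hence
\begin{equation*}
\|\phi_{0,k}\|_2^2 = \Bigl(\sum_{j=1}^3 A_{j,k}^2\Bigr) \|\eta\|_2^2, \qquad \|\phi_{0,k}\|_\infty \le \max_j |A_{j,k}|\, \|\eta\|_\infty \le \Bigl(\sum_{j=1}^3 A_{j,k}^2\Bigr)^{1/2} \|\eta\|_\infty.
\end{equation*}
For the Sobolev norm, the integer assumption $s \in \mathbb{N}_+$ is used as follows: $(-\Delta)^s \psi_j$ is a linear combination of partial derivatives of $\psi_j$ of order $2s$ and is therefore still supported on $\operatorname{supp}\psi_j$, so $\langle (-\Delta)^s \psi_i, \psi_j\rangle_{L^2} = 0$ for $i \ne j$, and Plancherel gives
\begin{equation*}
\|\phi_{0,k}\|_{\mathcal{S}^s_d}^2 = \sum_{i,j} A_{i,k} A_{j,k} \langle (-\Delta)^s \psi_i, \psi_j\rangle_{L^2} = \Bigl(\sum_{j=1}^3 A_{j,k}^2\Bigr) \|\eta\|_{\mathcal{S}^s_d}^2.
\end{equation*}
The common factor $\sum_j A_{j,k}^2$ cancels in both ratios of interest, yielding $\|\phi_{0,k}\|_{\mathcal{S}^s_d}/\|\phi_{0,k}\|_2 = \|\eta\|_{\mathcal{S}^s_d}/\|\eta\|_2$ and $\|\phi_{0,k}\|_\infty/\|\phi_{0,k}\|_2 \le \|\eta\|_\infty/\|\eta\|_2$; setting $\tilde M$ equal to the maximum of these two quantities, which depends only on the fixed reference bump $\eta$ and in particular is independent of $k, r$ and $B_n$, then completes the proof. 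The main conceptual hurdle is simply the passage from two to three bumps: for irregular $r$ there is no reason the Lebesgue and $r$-weighted constraints should be linearly dependent, so only the three-coefficient version is guaranteed to produce a nonzero $\phi_{0,k}$ for every admissible $r$, while the disjoint-support geometry is what makes the uniform bounds in (ii) insensitive to which element of the kernel we picked.
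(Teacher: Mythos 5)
Your proposal is correct and takes essentially the same route as the paper's proof: three disjointly supported translates of a single smooth bump inside $[0,1]^d$, a homogeneous $2\times 3$ linear system on the coefficients to kill both the Lebesgue and the $r^{-1}$-weighted integrals, and disjoint-support orthogonality (together with translation invariance and integer $s$) so that $\|\phi_{0,k}\|_2$, $\|\phi_{0,k}\|_\infty$ and $\|\phi_{0,k}\|_{\mathcal{S}^s_d}$ all scale with the same factor $(\sum_j A_{j,k}^2)^{1/2}$, which cancels in the ratios of part (ii). The only real differences are cosmetic: the paper builds its three bumps as products of a shifted one-dimensional profile in the first coordinate times a fixed profile in the others, whereas you translate a single $d$-dimensional bump along $e_1$; and where the paper resolves the $2\times 3$ system by an explicit case analysis (distinguishing $a_k\neq b_k$, $a_k=b_k\neq c_k$, $a_k=b_k=c_k$ and pinning down a normalisation so that $u_k^2+v_k^2+w_k^2\geq 1$), you simply invoke that the kernel of a $2\times 3$ homogeneous system is always nontrivial and pick any nonzero element, relying on scale-invariance of the two ratios to make the normalisation irrelevant — which is a genuinely cleaner way to see why no case analysis is needed. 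Your motivating remark that a two-bump construction generically fails is also a useful observation that the paper omits.
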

\begin{proof}
We will start by proving part $(i)$. Consider the standard $C^{\infty}(\mathbb{R})$ bump  
\[
\eta(t):=
\begin{cases}
\exp \left(-\frac{1}{1-t^{2}} \right), & |t|<1,\\
0, & |t|\ge 1,
\end{cases}
\]
and define $f_{+}(y_1)=\eta(6y_1-1)$, $f_{-}(y_1)=\eta(6y_1-3)$ and $f_{0}(y_1)=\eta(6y_1-5)$. It is immediate to see that $f_{+}$, $f_-$ and $f_0$ are supported on $[0,\tfrac13]$, $[\tfrac13,\tfrac23]$
and $[\tfrac23,1]$, respectively. Based on this, define three bumps in $\mathbb{R}^d$ by 
\[
\phi_{+}(y)=f_{+}(y_1) \prod_{l=2}^d \eta(y_l),\quad
\phi_{-}(y)=f_{-}(y_1) \prod_{l=2}^d \eta(y_l),\quad
\phi_{0}(y)=f_{0}(y_1)\prod_{l=2}^d \eta(y_l)
\]
for all $y = (y_1, \ldots, y_d) \in \mathbb{R}^d$, and observe that all three functions are in $C^\infty(\mathbb{R}^d)$ and supported in $[0,\tfrac13]\times[0,1]^{d-1}$, $[\tfrac13,\tfrac23]\times[0,1]^{d-1}$
and $[\tfrac23,1]\times[0,1]^{d-1}$, respectively. Fix a cube $Q_k$ and write points in it as
\(x=x_k^{0}+b_{n}y\) with $y\in[0,1]^d$.
Because $r^{-1}$ is bounded on $Q_k$,
each of the numbers  
\[
a_k=\int_{\mathbb{R}^{d}}\frac{\phi_{+}(y)}{r(x_k^{0}+b_{n}y)}\,dy,\;
b_k=\int_{\mathbb{R}^{d}}\frac{\phi_{-}(y)}{r(x_k^{0}+b_{n}y)}\,dy,\;
c_k=\int_{\mathbb{R}^{d}}\frac{\phi_{0}(y)}{r(x_k^{0}+b_{n}y)}\,dy
\]
is finite and strictly positive. If $a_k \neq b_k$, set
\[
u_k = \frac{b_k - c_k}{b_k - a_k},
v_k = \frac{a_k - c_k}{b_k - a_k}, w_k = 1\qquad \text{ and } \qquad
\phi_{0,k}(y) = u_k\,\phi_{+}(y) - v_k\,\phi_{-}(y) - w_k \phi_{0}(y).
\]
If instead $a_k = b_k$ but $a_k \neq c_k$, we switch the roles of $\phi_{-}$ and $\phi_{0}$ and apply the same formula. Finally, if $a_k = b_k = c_k$, we simply set $u_k = w_k = 1$ and $v_k = 0$, so that $\phi_{0,k}(y) = \phi_{+}(y) - \phi_{0}(y)$. A straightforward calculation shows that $\phi_{0,k}$ has zero average both with respect to Lebesgue measure and with respect to the weight $r^{-1}$ evaluated at $x_k^{0}+b_n y$:
\[
\int_{\mathbb{R}^d}\phi_{0,k}(y)\,dy
      =\left(\int_{\mathbb{R}^d}\phi_{+}(y)\,dy\right)(u_k-v_k-w_k)=0,
\qquad
\int_{\mathbb{R}^d}\frac{\phi_{0,k}(y)}
                      {r(x_k^{0}+b_n y)}\,dy
      =u_k a_k - v_k b_k - w_k c_k = 0 .
\]
We now verify that the function 
\begin{align}\label{eq:phi_k}
    \phi_k(x):=\frac{B_n^{1/2}}{\|\phi_{0,k}\|_{2}}\,
           \phi_{0,k}\!\bigl(B_n^{1/d}\{x-x_k^{0}\}\bigr)
\end{align}
satisfies the desired properties. Because $dx=b_n^{d}dy=B_n^{-1}dy$, the two zero–average identities above
translate to the $x$-scale, yielding  
\[
\int_{\mathbb{R}^d}\phi_k(x)\,dx = 
\int_{\mathbb{R}^d} \frac{\phi_k(x)}{r(x)}\,dx = 0.
\]
Moreover $\|\phi_k\|_{2}^{2} =B_{n}\,\|\phi_{0,k}\|_{2}^{-2}\,B_{n}^{-1}\,\|\phi_{0,k}\|_{2}^{2} =1$. Finally, the support of $\phi_k$ is contained in $Q_k$, and different cubes do not intersect, hence for $k_1 \neq k_2$ we also have $\langle\phi_{k_1},\phi_{k_2}\rangle_{2}=0$. This completes the proof of the of first part of the statement.

As for part $(ii)$, set $J_\ell := \|\phi_+\|_\ell = \|\phi_-\|_\ell = \|\phi_0\|_\ell$ for $\ell \in \{\infty, 2, \mathcal{S}^s_d\}$, and observe that these constants depend only on the shape of the function $\eta$. The disjoint structure of the supports of $\phi_+, \phi_-, \phi_0$ gives
\[
\begin{cases}
    \|\phi_{0,k}\|_\infty = J_\infty \max\{|u_k|, |v_k|, |w_k|\}\\
    \|\phi_{0,k}\|_2 = J_2 \sqrt{u_k^2 + v_k^2 + w_k^2}\\
    \|\phi_{0,k}\|_{\mathcal{S}^s_d} = J_{\mathcal{S}^s_d}\sqrt{u_k^2 + v_k^2 + w_k^2}
\end{cases}
\]
for all $k \in [B_k]$, and implies that 
\[
\frac{\| \phi_{0,k}\|_{\mathcal{S}^s_d}}{\| \phi_{0,k}\|_2} = \frac{J_{\mathcal{S}^s_d}}{J_2} \qquad \text{and} \qquad\frac{\| \phi_{0,k}\|_{\infty}}{\| \phi_{0,k}\|_2} = \frac{J_\infty \max\{|u_k|, |v_k|, |w_k|\}}{J_2 \sqrt{u_k^2 + v_k^2 + w_k^2}} \leq \frac{J_\infty (|u_k|+|v_k|+|w_k|)}{J_2 \sqrt{u_k^2 + v_k^2 + w_k^2}} \leq \frac{\sqrt{3} J_\infty}{J_2}.
\]
Note the denominator is always well-defined for our choices of $(u_k, v_k, w_k)$ as $u_k^2 +v_k^2 + w_k^2 \geq 1$. This shows that the claim in part $(ii)$ holds with 
\begin{align}\label{eq:maxRatioNorm}
    \tilde{M}:= \frac{J_{\mathcal{S}^s_d}}{J_2} \vee \frac{\sqrt{3} J_\infty}{J_2}
\end{align}
and concludes the proof of the lemma.
\end{proof}

\subsection{Proofs for Section \ref{sec:family_shift}}\label{appendix:proof4}
\begin{proof}[Proof of Proposition \ref{prop:robustness}]
    The reason we cannot directly apply Theorem \ref{prop:alg_validity} is that, although $g \propto r_\star f$ for a certain $r \in {\cal R}$, Algorithm~\ref{alg:star_algo} now generates $Z^{(1)}, \ldots, Z^{(H)}$ using an approximation $\hat{r}$ to $r_\star$. To address this mismatch, let $\tilde{X} = (\tilde{X}_1, \ldots, \tilde{X}_n)$, $\tilde{Y} = (\tilde{Y}_1, \ldots, \tilde{Y}_m)$ be such that $\tilde{X} \perp \!\!\! \perp \tilde{Y}$, $\tilde{X}_i \overset{\mathrm{i.i.d.}}{\sim} f$ and $\tilde{Y}_i \overset{\mathrm{i.i.d.}}{\sim} \check{r} \, f$. Define $\tilde{Z} = (\tilde{X}, \tilde{Y})$ and let $\tilde{Z}^{(1)}, \ldots, \tilde{Z}^{(H)}$ be draws of Algorithm~\ref{alg:star_algo} based on $\hat{r}$ when we sample from the values of $\tilde{Z}$ instead of $Z$. That is, for every $h \in [H]$ independently we have 
    \[
    \tilde{Z}^{(h)} = \tilde{Z}_{\tilde{\sigma}^{(h)}} \quad \text { where } \quad \mathbb{P}\left\{\tilde{\sigma}^{(h)} = \sigma \mid \tilde{Z} \right\} \propto \prod_{i \in \{n+1, \ldots, n+m \}}  \hat{r}(\tilde{Z}_{\sigma(i)}),
    \]
    where $\tilde{Z}_{\sigma}$ is defined analogously to $Z_{\sigma}$. Next, by comparing to the sampling mechanism outlined in Algorithm~\ref{alg:star_algo}, we observe that the $\tilde{Z}^{(h)}$'s, conditional on $\tilde{Z}$, are generated with the same mechanism as the $Z^{(h)}$'s conditional on $Z$. That is, for every $z \in \mathcal{X}^{n+m}$ we have 
    \[
    \left((\tilde{Z}^{(1)}, \ldots, \tilde{Z}^{(H)}) \mid \tilde{Z} = z \right) \overset{d}{=} \left((Z^{(1)}, \ldots, Z^{(H)}) \mid Z = z \right).
    \]
    We can now use the fact that, if $(V \mid U = u) \overset{d}{=} (V^\prime \mid U^\prime = u)$ for all $u$, then $\operatorname{TV}((U,V), (U^\prime, V^\prime)) = \operatorname{TV}(U, U^\prime)$. It follows that 
    \begin{align}\label{eq:TV_simplified}
            \operatorname{TV}\left((\tilde{Z}, \tilde{Z}^{(1)}, \ldots, \tilde{Z}^{(H)}) \right. & \left., (Z, Z^{(1)}, \ldots, Z^{(H)}) \right) = \operatorname{TV}(\tilde{Z}, Z) = \operatorname{TV}\left((\tilde{X}, \tilde{Y}), (X,Y) \right) \nonumber \\
            & = \operatorname{TV}(\tilde{Y}, Y) = \operatorname{TV}\left(\{\check{r}\cdot f\}^{\otimes{m}}, \{\bar{r}_{\star} \cdot f\}^{\otimes{m}} \right).
    \end{align}
 Hence, if we define 
    \[
    A_{\alpha} := \left\{(z, z^{(1)}, \ldots, z^{(H)}) : \frac{1+\sum_{h = 1}^H \mathbbm{1}\{T(z^{(h)}) \geq T(z) \}}{1+H} \leq \alpha \right\},
    \]
we can bound the type~I error as 
    \begin{align*}
        \mathbb{P}\{p_{\hat{r}} \leq \alpha\} &= \mathbb{P}\{(Z, Z^{(1)}, \ldots, Z^{(H)}) \in A_\alpha\} \\
        & \leq \mathbb{P}\{(\tilde{Z}, \tilde{Z}^{(1)}, \ldots, \tilde{Z}^{(H)}) \in A_\alpha\} +  \operatorname{TV}\left((\tilde{Z}, \tilde{Z}^{(1)}, \ldots, \tilde{Z}^{(H)}), (Z, Z^{(1)}, \ldots, Z^{(H)}) \right) \\
        &  = \mathbb{P}\{(\tilde{Z}, \tilde{Z}^{(1)}, \ldots, \tilde{Z}^{(H)}) \in A_\alpha\} + \operatorname{TV}\left(\{\check{r}\cdot f\}^{\otimes{m}}, \{\bar{r}_\star \cdot f\}^{\otimes{m}} \right) \leq \alpha + \operatorname{TV}\left(\{\check{r} \cdot f\}^{\otimes{m}}, \{\bar{r}_{\star} \cdot f\}^{\otimes{m}} \right),
    \end{align*}
    where the first inequality follows from the definition of total variation distance, the second equality from~\eqref{eq:TV_simplified}, and the final inequality from the exchangeability of $(\tilde{Z}, \tilde{Z}^{(1)}, \ldots, \tilde{Z}^{(H)})$. This arises from the fact that the~$\tilde{Y}_i$'s are independent and identically distributed~from a distribution proportional to $\hat{r} \, f$, and Algorithm~\ref{alg:star_algo} copies $\tilde{Z}^{(1)}, \ldots, \tilde{Z}^{(H)}$ are generated using the same approximation $\hat{r}$.
\end{proof}

\begin{proof}[Proof of Proposition~\ref{prop:alternativeRobustness}]
    Arguing as in the proof of Proposition~\ref{prop:robustness}, if we define 
    \[
    A_{\alpha} := \left\{(z, z^{(1)}, \ldots, z^{(H)}) : \frac{1+\sum_{h = 1}^H \mathbbm{1}\{T(z^{(h)}) \geq T(z) \}}{1+H} \leq \alpha \right\},
    \]
    we can bound
    \begin{align*}
        \mathbb{P}\{p_{\hat r} > \alpha \} & = \mathbb{P}\{(Z, Z_{\hat r}^{(1)}, \ldots, Z_{\hat r}^{(H)} ) \in A_\alpha^\complement \} \\
        & \leq \mathbb{P}\{(Z, Z_{r_\star}^{(1)}, \ldots, Z_{r_\star}^{(H)} ) \in A_\alpha^\complement \} + \operatorname{TV}\{(Z, Z_{\hat r}^{(1)}, \ldots, Z_{\hat r}^{(H)} ), (Z, Z_{r_\star}^{(1)}, \ldots, Z_{r_\star}^{(H)} ) \} \\
        & = \mathbb{P}\{p_{r_\star} > \alpha \} + \operatorname{TV}\{(Z, Z_{\hat r}^{(1)}, \ldots, Z_{\hat r}^{(H)} ), (Z, Z_{r_\star}^{(1)}, \ldots, Z_{r_\star}^{(H)} ) \}, 
    \end{align*}
    which shows that it is sufficient to control the total variation in the last display to conclude the proof. In this regard, let $Q_r(\cdot \mid z)$ be the law of $Z_r^{(1)}$ conditionally on $Z = z$, and $\pi_r(\cdot \mid z)$ be the law of~$\sigma^{(1)}$ conditionally on $Z = z$ in the case where~\eqref{eq:sampling_1} is based on $r$, meaning that for all $C \subseteq \mathcal S_{n+m}$ we have
    \[
    \pi_r(C \mid z) = \frac{\sum_{\sigma \in C} \prod_{i \notin [n]} r(z_{\sigma(i)})}{\sum_{\sigma \in \mathcal{S}_{n+m}} \prod_{i \notin [n]} r(z_{\sigma(i)})}.
    \]
 Based on this, we can write
     \begin{align*}
       \operatorname{TV}\{(Z, Z_{\hat r}^{(1)}, & \ldots, Z_{\hat r}^{(H)} ), (Z, Z_{r_\star}^{(1)}, \ldots, Z_{r_\star}^{(H)} ) \} \leq \mathbb{E}\left[\operatorname{TV}\{(Z_{\hat r}^{(1)}, \ldots, Z_{\hat r}^{(H)} \mid Z), ( Z_{r_\star}^{(1)}, \ldots, Z_{r_\star}^{(H)} \mid Z) \}\right] \\
       & = \mathbb{E}\left[\operatorname{TV}\{Q_{\hat{r}}(\cdot \mid Z)^{\otimes H}, Q_{r_\star}(\cdot \mid Z)^{\otimes H} \}\right]  \leq H \, \mathbb{E}\left[\operatorname{TV}\{ Q_{\hat{r}}(\cdot \mid Z), Q_{r_\star}(\cdot \mid Z) \}\right],
    \end{align*}
    where in the penultimate step we used the fact that $(Z_{r}^{(1)}, \ldots, Z_{r}^{(H)})$ are i.i.d.~conditionally on $Z$. Now, let  \[
\operatorname{KL}(P,Q) := \begin{cases}
\int \log\bigl(\frac{dP}{dQ}\bigr)\, dP & \text{if } P \ll Q, \\[1mm]
+\infty & \text{otherwise},
\end{cases}
\] 
denote the Kullback-Leibler divergence between two distributions defined on 
the same measurable space, and for all measurable $B \subseteq \mathcal X^{n+m}$ define $C_B := \{\sigma \in {\cal S}_{n+m} : Z_\sigma \in B\}$. By the sampling rule~\eqref{eq:sampling_1}, for all measurable $B \subseteq \mathcal X^{n+m}$ we have 
    \begin{align*}
        Q_r(B \mid Z) &= \mathbb{P}\{Z_r^{(1)} \in B \mid Z\} = \sum_{\sigma : z_\sigma \in B} \mathbb{P}\{\sigma^{(1)} = \sigma \mid Z\,; r\}  = \frac{\sum_{\sigma : z_\sigma \in B} \prod_{i \notin [n]} r(Z_{\sigma(i)})}{\sum_{\sigma \in \mathcal{S}_{n+m}} \prod_{i \notin [n]} r(Z_{\sigma(i)})} = \pi_r(C_B \mid Z),
    \end{align*}
hence
\begin{align*}
    \operatorname{TV}\{ & Q_{\hat{r}}(\cdot \mid Z), Q_{r_\star}(\cdot \mid Z) \} = \sup_{B \subseteq {\mathcal X}^{n+m}} \left|Q_{\hat r}(B \mid Z) - Q_{r_\star}(B \mid Z) \right| \\
    & = \sup_{B \subseteq \mathcal X^{n+m}} \left|\pi_{\hat r}(C_B \mid Z) -\pi_{r_\star}(C_B \mid Z) \right| \leq \sup_{C \subseteq \mathcal S_{n+m}} \left|\pi_{\hat r}(C \mid Z) -\pi_{r_\star}(C \mid Z )\right| = \operatorname{TV}\{\pi_{\hat r}(\cdot \mid Z), \pi_{r_\star}(\cdot \mid Z)\} \\
    & \leq \sqrt{ 1 - \exp \left\{-\operatorname{KL}(\pi_{\hat r}(\cdot \mid Z), \pi_{r_\star}(\cdot \mid Z))\right\}} \leq \sqrt{ 1 - \exp \left\{-2m\eta \right\}}.
\end{align*}
In particular, the first inequality holds since $\{C_B : \text{measurable } B \subseteq \mathcal X^{n+m}\} \subseteq \{C : C \subseteq \mathcal S_{n+m}\}$, the second one follows from the Bretagnolle--Huber inequality \citep[][Lemma 2.1]{BretagnolleHuber1978}, while the third one uses \[
\frac{\prod_{i \notin [n]} \hat r(Z_{\sigma(i)})}{\prod_{i \notin [n]} r_\star(Z_{\sigma(i)})} \quad , \quad \frac{\sum_{\sigma \in \mathcal{S}_{n+m}}\prod_{i \notin [n]} r_\star (Z_{\sigma(i)})}{\sum_{\sigma \in \mathcal{S}_{n+m}} \prod_{i \notin [n]} \hat r(Z_{\sigma(i)})} \quad \in [e^{-m \eta}, e^{m \eta}],
\]
which follows from the definition of $\eta := \max_{i \in [n+m]} |\log \hat r(Z_i) -\log r_\star (Z_i)|$.
This concludes the proof.
\end{proof}

\section{Additional theoretical results}\label{app:additionalResults}

We first derive an explicit representation of the population discrepancy measure
\(T_{\mathcal{F},r}(f,g)\), introduced in Section~\ref{sec:integral probability metric},
in the regime of highly unbalanced sample sizes. Since $\lambda_0$ is a continuous function of $n/m$, the fact that $\lim_{t \to 0^+} \lambda_0 = \int g/r \, d\mu$ and $ \lim_{t \to \infty}\lambda_0^{-1} = \int rf \, d\mu$ also ensures that $\lambda_0$ is bounded away from zero and infinity in every regime of $n$ and $m$, provided that $\int g/r \, d\mu < \infty$ and $\int rf \, d\mu < \infty$. This applies also to $T_{\mathcal{F},r}(f,g)$ when $f,g$ are fixed alternatives.

\begin{prop}\label{prop:limitingTF}
    Recall $T_{\mathcal{F},r}(f,g) := \sup_{\varphi \in \mathcal{F}} \left| \int \frac{n/m + 1}{n/m + \lambda_0 r} \, (\lambda_0 r f - g) \, \varphi \, d\mu \right|$, and assume $\int g/r \, d\mu < \infty$, $\int rf \, d\mu < \infty$ and  $\|\varphi\|_\infty \leq \gamma$ for all $\varphi \in \mathcal{F}$. Then, we have 
    \[
    \lambda_0 \to \int \frac{g}{r} \, d\mu, \quad  T_{\mathcal{F},r}(f,g) \to \sup_{\varphi \in \mathcal{F}} \left|\int \left\{f - \left(\frac{g}{r}\right)\left(\int \frac{g}{r} d\mu\right)^{-1} \right\} \, \varphi \, d\mu\right|
    \]
    when $n/m \to 0^+$. Similarly,  
    \[
    \lambda_0^{-1} \to \int r f d\mu, \quad T_{\mathcal{F},r}(f,g) \to \sup_{\varphi \in \mathcal{F}}\left|\int \left\{(rf)\left(\int rf d\mu\right)^{-1} - g \right\} \, \varphi \, d\mu\right|
    \]
    when $n/m \to \infty$.
\end{prop}
\begin{proof}
We focus on the case \(t := n/m \to 0^+\). For each fixed \(t>0\), define
\[
F_t :
     \begin{cases}
        \mathbb{R}_+ \to [-1, \infty) \\
        \lambda \mapsto  \int \frac{tf + g}{t + \lambda r} d\mu - 1,
     \end{cases} \quad \text{ and } \quad F_0 :
     \begin{cases}
        \mathbb{R}_+ \to [-1, \infty) \\
        \lambda \mapsto  \int \frac{g}{ \lambda r} d\mu - 1. 
     \end{cases}
\]
Arguing as in Lemma~\ref{lemma:unique_h}, there exists a unique \(\lambda_0(t)>0\) such that \(F_t(\lambda_0(t)) = 0\). Similarly, $F_0$ is continuous, strictly decreasing, with
$\lim_{\lambda \to 0^+}F_0(\lambda)=\infty$, $\lim_{\lambda\to\infty}F_0(\lambda)= - 1$, so there is
a unique $\lambda_\star\in(0,\infty)$ such that $F_0(\lambda_\star)=0$, namely $\lambda_\star= \int g/r \, d\mu$. We will now show that $\lambda_0(t) \to \lambda_\star$ as $t \to 0^+$. In this regard, for $0 < a< \lambda_\star < b < \infty$ and for fixed $x \in \mathcal X$, we have 
\begin{align*}
 \sup_{\lambda\in[a,b]} \left|
\frac{t f(x) + g(x)}{t + \lambda r(x)} - \frac{g(x)}{\lambda r(x)}
\right| & = \sup_{\lambda\in[a,b]} \left| \frac{t(\lambda r(x) f(x) - g(x))}{(t+\lambda r(x))\lambda r(x)} \right|
\\
& \leq \min\left(1, \frac{t}{a r(x)} \right) \sup_{\lambda\in[a,b]}\frac{\lambda r(x) f(x) + g(x)}{\lambda r(x)} \leq  f(x) + \frac{g(x)}{a r(x)}. 
\end{align*}
This both shows that the left-hand side converges to zero as $t \to 0^+$, and that 
\begin{align*}
    \sup_{\lambda\in[a,b]} |F_t(\lambda) - F_0(\lambda)| \leq \int \sup_{\lambda\in[a,b]} \left|
\frac{t f(x) + g(x)}{t + \lambda r(x)} - \frac{g(x)}{\lambda r(x)}
\right| d\mu \leq 1 + \frac{1}{a} \int \frac{g}{r} d\mu.
\end{align*}
Taken together, the previous claims show that $F_t \to F_0$ uniformly on $[a,b]$ by the dominated convergence theorem. This is enough to show that 
$\lambda_0(t) \to \lambda_\star$ as $t \to 0^+$. Indeed, for an arbitrary $\epsilon > 0$, define $\alpha_\epsilon := F_0(\lambda_\star - \epsilon)$ and $\beta_\epsilon := F_0(\lambda_\star + \epsilon)$ with $[\lambda_\star - \epsilon, \lambda_\star + \epsilon] \subset (a,b)$. Further set $\eta_\epsilon = \frac{1}{2}\min(\alpha_\epsilon, \beta_\epsilon)$, so that $\alpha_\epsilon \geq 2\eta_\epsilon$ and $\beta_\epsilon \leq -2\eta_\epsilon$. By the uniform convergence of $F_t$ to $F_0$ on $[a,b]$, there exists $t_\epsilon >0$ such that for all $0 < t < t_\epsilon$ we have $\sup_{\lambda\in[a,b]} |F_t(\lambda) - F_0(\lambda)| \leq \eta_\epsilon$. In particular, we have
\[
F_t(\lambda_\star - \epsilon) \geq F_0(\lambda_\star - \epsilon) - |F_t(\lambda_\star - \epsilon) - F_0(\lambda_\star - \epsilon)| \geq 2\eta_\epsilon - \eta_\epsilon = \eta_\epsilon > 0, 
\]
and, similarly, $F_t(\lambda_\star + \epsilon) \leq -\eta_\epsilon < 0$.  Combined with the fact that $F_t$ is continuous and monotonically decreasing, 
this implies by the intermediate value theorem  that $\lambda_0(t) \in (\lambda_\star - \epsilon, \lambda_\star+\epsilon)$, which yields $\lambda_0(t) \to \lambda_\star$ as $t \to 0^+$ by the arbitrariness of $\epsilon$.

As for the second part of the claim, for $t>0$ define
\[
k_t(x)
:= \frac{t+1}{t+\lambda_0(t) r(x)}\big(\lambda_0(t) r(x) f(x) - g(x)\big),
\qquad \text{ and } \qquad
k(x) := f(x) - \frac{g(x)}{\lambda_\star r(x)}.
\]
In light of the previous result, $k_t(x) \to k(x)$ for all $x \in \mathcal X$ as $t \to 0^+$. Furthermore, we proved there exist $\epsilon>0$ and $t_\epsilon>0$ such that
$\lambda_0(t)\in(\lambda_\star-\varepsilon,\lambda_\star+\varepsilon)$ for all $0<t<t_\epsilon$; set
$\lambda_{\min} := \lambda_\star-\varepsilon>0$ and $\lambda_{\max} := \lambda_\star+\varepsilon<\infty$. For fixed $x \in \mathcal X$ and $0<t<t_\epsilon$, we thus have
\[
\left|\frac{t+1}{t+\lambda_0(t) r(x)}\right| \le \frac{t_\epsilon + 1}{\lambda_{\min} r(x)},
\quad \text{ and } \qquad
|\lambda_0(t) r(x) f(x) - g(x)|
\le \lambda_{\max} r(x) f(x) + g(x),
\]
so
\[
|k_t(x) - k(x)| \le |k_t(x)| +|k(x)| \leq \left\{(t_\epsilon + 1)\frac{\lambda_\mathrm{max}}{\lambda_\mathrm{min}} + 1\right\} f(x) + \frac{t_\epsilon + 2}{\lambda_{\min} r(x)} g(x).
\]
Since the right-hand side is integrable, the dominated convergence implies that 
$\int |k_t-k|\,d\mu\to 0$. 
This, together with the uniform boundedness assumption on the elements of $\cal F$, immediately yields
\begin{align*}
    \left|\sup_{\varphi \in \mathcal{F}} \left| \int k_t \varphi d\mu \right| - \sup_{\varphi \in \mathcal{F}} \left| \int k \varphi d\mu \right| \right| & \leq \sup_{\varphi \in \mathcal{F}}  \int | k_t - k| \, |\varphi| d\mu \\
    & \leq \left(\sup_{\varphi \in \mathcal{F}} \|\varphi\|_\infty \right) \int |k_t - k|  d\mu \leq \gamma \, \int |k_t - k|  d\mu,
\end{align*}
and concludes the proof for the regime $n/m \to 0^+$. 

We omit the proof for the regime $t \to \infty$, as it follows by analogous steps to the previous case.
\end{proof}

We also formally justify the claim made in Section~\ref{sec:optimality} that, on compact domains, the minimax testing rate with separation measured by $\| \psi_r\|_2$ is equivalent, up to multiplicative constants, to the minimax testing rate when the separation is measured with $\| g - rf \, (\int rf d\mu)^{-1}\|_2$. By Proposition~\ref{prop:invPsi} below, the same conclusion applies to $\| f - g/r \, (\int g/r \, d\mu)^{-1}\|_2$.

\begin{prop}\label{eq:prop:equivalenceSeparations}
    Assume that $m \leq n \leq \tau m$ for a fixed $\tau \geq 1$, $\operatorname{supp}(f) = \operatorname{supp}(g) = \operatorname{supp}(r) = [0, 1]^d$, $\|f\|_\infty \vee \|g\|_\infty \leq M$, and $c \leq r(x) \leq C$ for all $x\in [0,1]^d$. There exist positive constants $C_0 \equiv C_0(\tau, c, C, M)$ and $c_0 \equiv c_0(\tau, c, C, M)$ such that $c_0 \, \|\psi_r\|_2 \leq \|g - \bar{r} f \|_2 \leq C_0 \, \|\psi_r\|_2$, where $\bar r  = r/\int_{[0,1]^d} r f d\mu$.
\end{prop}

\begin{proof}
We can assume without loss of generality that $\int_{[0,1]^d} r f d\mu  = 1$. Otherwise, it is sufficient to replace $r$ by $r / \int_{[0,1]^d} r f d\mu  = 1$; this normalization only rescales the relevant quantities by multiplicative constants depending on $c$ and $C$ via bounds on $\int_{[0,1]^d} r f\,d\mu$, which we absorb into $c_0$ and $C_0$. Recall that $h = \frac{nf + mg}{n + \lambda_0 m  r}$, and $\lambda_0 = (\int r h d\mu)^{-1}$, which was shown in the proof of Lemma \ref{lemma:unique_h}. Define also $\tilde \psi_r := f- h = \frac{\lambda_0 m r f - mg}{n + \lambda_0 m r}$, so that $(n+m) \,\tilde \psi_r = m \, \psi_r$. We have 
\begin{align*}
    1 - \lambda_0^{-1} & = \int r f d \mu - \int r h d\mu  = \int r \tilde \psi_r d\mu,
\end{align*}
by which $\lambda_0 - 1  = \int r \tilde \psi_r d\mu/\int r h d\mu$. Based on this, we have 
\begin{align*}
    g - r f & = g - \lambda_0 r f + (\lambda_0 - 1) r f = -\frac{n + \lambda_0 m r}{m} \tilde \psi_r + \frac{\int r \tilde \psi_r d\mu}{\int r h d\mu} r f, 
\end{align*}
implying that 
\begin{align*}
     \|g - rf \|_2 &\leq \|(n/m + \lambda_0 r) \, \tilde \psi_r\|_2 + \left|\frac{\int r \tilde \psi_r d\mu}{\int r h d\mu} \right| \, \|r f \|_2 \\
     & \leq \left\{\tau + C/c  + \frac{\|r\|_2  \|r f \|_2}{\int r h d\mu} \right\} \, \|\tilde \psi_r\|_2 \leq \{\tau + C/c + C^2 M^{1/2}/c\} \, \|\tilde \psi_r\|_2, 
\end{align*}
where in the last step we used $\int_{[0, 1]^d} r^2 d\mu \leq C^2, \int_{[0, 1]^d} r^2 f^2 d\mu \leq C^2 M$, $\int_{[0, 1]^d} r h  d\mu \geq c$; in particular this follows from the fact that $\lambda_0$ is such that $\int h d \mu = 1$. 

Similarly, we can use the fact that $\lambda_0 - 1  = \int r \tilde \psi_r d\mu/\int r h d\mu$ to write
\[
1 - \lambda_0^{-1} = (1 - \lambda_0^{-1}) \int \frac{\lambda_0 m r}{n+\lambda_0 m r} \, rf \, d\mu + \int \frac{mr \, (rf - g)}{n + \lambda_0 mr} d\mu,
\]
which further implies $1 - \lambda_0^{-1} = \int \frac{mr \, (rf - g )}{n+\lambda_0 m r} \, d\mu / \int \frac{n r f}{n+\lambda_0 m r} d\mu$ upon reordering. This, together with  
\[
\tilde \psi_r = \frac{(\lambda_0 -1) mrf}{n + \lambda_0 mr} + \frac{m(rf-g)}{n + \lambda_0 m r}, 
\]
allows bounding 
\begin{align*}
    \|\tilde \psi_r \|_2 &\leq \left\|\frac{m}{n+\lambda_0 m r} (g - rf) \right\|_2 +  \left\|\frac{(\lambda_0-1)mr}{n+\lambda_0 mr} f\right\|_2 \leq \|g - rf\|_2 + \left|1-\lambda_0^{-1} \right| \|f\|_2 \\
    &\leq \|g - rf\|_2 + \left| \frac{ \int \frac{mr \, (rf - g )}{n+\lambda_0 m r} \, d\mu}{ \int \frac{n r f}{n+\lambda_0 m r} d\mu}\right| \|f\|_2  \leq \{1 + C M^{1/2} (c+C)/ c \} \|g - rf\|_2.
\end{align*}
Notice that in the third inequality we used the fact that $ \int \frac{mr}{n+\lambda_0 m r} d\mu \leq \lambda_0^{-1} \leq C$ and $ \int \frac{n r f}{n+\lambda_0 m r} d\mu \geq  \int \frac{r f}{1+\lambda_0 r} d\mu  \geq  \int \frac{r f}{1 + C/c} d\mu \geq c/(c+C)$.  This yields the reverse inequality and shows that $\|\tilde \psi_r\|_2 \asymp_{\tau, C, c, M} \|g - \bar r f\|_2$. Finally, since $m \leq n \leq \tau m$, we obtain
$\|\tilde \psi_r\|_2  = \|\frac{m}{n+m} \psi_r \|_2\asymp_{\tau} \left\| \psi_r\right\|_2$, which completes the proof.
\end{proof}

A key quantity that appears throughout the paper is the function $\psi_r : = (1+n/m) (f-h)$, which intuitively captures the discrepancy between $g$ and $rf$, as formalized in Proposition~\ref{eq:prop:equivalenceSeparations} above. We now justify the claim that $\psi_r$ is a more natural notion of separation by showing that it is invariant, up to a sign, to taking reciprocal transformations of $r$ and relabelling the samples. 
\begin{prop}\label{prop:invPsi}
Define $\psi(n,f,m,g,r) : = (n+m) \, m^{-1} \, (f-h)$, where $h \equiv h(n,f,m,g,r) = (nf + mg)/(n + \lambda_0 m r)$ for a suitable constant $\lambda_0 > 0$ such that $\int h d\mu = 1$. We have $\psi(n,f,m,g,r) = - \psi(m,g,n,f,r^{-1})$.
\end{prop}
\begin{proof}
    Arguing as in Remark~\ref{rmk:reciprocal}, we can show that the constant $\theta_0 > 0$ that makes $h^\prime \equiv h(m,g,n,f,r^{-1})  = (mg + nf)/(m + \theta_0 n r^{-1})$ a density satisfies $\theta_0  = \lambda_0^{-1}$, where $\lambda_0$ is such that  $\int h(n,f,m,g,r) d\mu = 1$. This implies that
\begin{align*}
        \psi(m,g,n,f,r^{-1}) &= (1+m/n)(g - h^\prime) = (1+m/n) \left(g  - \frac{mg + nf}{m + n\lambda_0^{-1} r^{-1}} \right) = \frac{n+m}{n} \frac{ng \lambda_0^{-1} r^{-1} - nf}{m+n\lambda_0^{-1} r^{-1}} \\
        & = \frac{n+m}{n} \frac{ng - \lambda_0 n r f}{n+\lambda_0 mr} = \frac{n+m}{m} \frac{mg - \lambda_0 m r f}{n+\lambda_0 mr} = - \psi(n,f,m,g,r),
\end{align*}
thus concluding the proof.
\end{proof}

Finally, we further discuss  Proposition~\ref{prop:robustness} in Section~\ref{sec:family_shift} and show, in the conditional two-sample setting, how to implement sample splitting so that $\operatorname{TV}(\{\check r f\}^{\otimes (n-N)}, \{\bar r_\star f\}^{\otimes (n-N)}) \overset{\mathbb P}{\to} 0$. Assume for simplicity that $n = m$ and write $Z = (X_1,\ldots,X_n,Y_1,\ldots,Y_n)$. Split $Z$ into an estimation and a testing part, $Z^{\mathrm{estim}} = (X_1,\ldots,X_N,Y_1,\ldots,Y_N)$ and $Z^{\mathrm{test}} = (X_{N+1},\ldots,X_n,Y_{N+1},\ldots,Y_n)$ with $1 \le N < n$; use $Z^{\mathrm{estim}}$ to fit $\hat r$ as an estimator of $r_\star$, then run Algorithm~\ref{alg:star_algo} based on $\hat r$ on $Z^{\mathrm{test}}$. For Proposition~\ref{prop:robustness} to be practically informative, we must identify regimes in which $N$ can be chosen so that $r_\star$ is estimated with sufficient accuracy and the resulting excess type~I error is small. As an illustrative example, we use Theorem~2 of \citet{Nguyen2010minimax} to control estimation of the marginal density ratio. Alternative density-ratio estimators, such as those analyzed in Theorem~1 and Example~1 of \citet{sugiyiamaKLIEPrkhs} or Theorem~2 of \citet{sugiyamaKuLSIF-statistical}, can yield analogous guarantees under comparable regularity conditions.

\begin{eg}\label{ex:DRE}
Consider the setting of the conditional two-sample testing problem with $ {\cal X \times Y} \subseteq \mathbb{R}^{d_X} \times \mathbb{R}^{d_Y}$, i.e.~we identify the $X$-values with $S:=[d_X] \subset [d_X + d_Y]$ and the $Y$-values with $\{d_X+1,\dots,d_X+d_Y\}$. Here, the density ratio corresponds to $f_X^{(2)}(x)/f_X^{(1)}(x)$, and we can write $r(x,y) = r_S(x)$ for a function $r_S$ acting on $\mathbb{R}^{d_X}$. We will now show that the estimation error is small provided that~$r_S$ is sufficiently smooth.  In particular, consider  $\mathcal{R} = \mathcal{S}^s_{d_X}(L)$, the Sobolev ball introduced in~\eqref{eq:Sobolev_ball}, with integer~$s$ satisfying $s > d_X/2$, and assume that $r_S$ lies in $\cal R$. Let $\hat{r}_S$ represent the first estimator introduced in \cite{Nguyen2010minimax} for $r_S$. Define also $\bar{r}_S := r_S/\int  r_S f \, d\mu_X$ and $\check{r}_S := \hat r_S/\int \hat r_S f \, d\mu_X$. If $0 < c \leq r(\cdot) \leq C$ for all $r \in \mathcal{R}$, Theorem 2 in \cite{Nguyen2010minimax} and the remarks thereafter ensure that $\int(\sqrt{\bar{r}_S f} - \sqrt{\hat{r}_S f})^2 d\mu =\mathcal{O}_\mathbb{P}(N^{-\frac{2s}{2s + d}})$. As a result, writing $\operatorname{H}^2(p,q) = 2(1 - \int \sqrt{p q} d\mu)$ for the squared Hellinger distance between densities $p$ and $q$, the Cauchy--Schwarz inequality gives
    \begin{align*}
        \operatorname{H^2}&\left(\{\check{r}_S \, f\}, \{\bar{r}_{S} \, f\} \right) = \int(\sqrt{\check{r}_S f} - \sqrt{\bar{r}_S f})^2 d\mu_X \leq 4 \int(\sqrt{\hat{r}_S f} - \sqrt{\bar{r}_S f})^2 d\mu_X = \mathcal{O}_\mathbb{P}(N^{-\frac{2s}{2s + d}}). 
    \end{align*}
 This, combined with standard inequalities, yields $\operatorname{TV}\left(\{\check{r}_S \, f\}^{\otimes{(n-N)}}, \{\bar{r}_S \, f\}^{\otimes{(n-N)}} \right) \lesssim \sqrt{(n-N) N^{-\frac{2s}{2s+d}}}$ with high probability, which implies that the  total variation vanishes for sufficiently large $N$, i.e.~$n~\ll~N^{\frac{2s}{2s+d}}$.
\end{eg}
Related results in the literature can be leveraged to derive analogous guarantees for alternative estimators of $r_\star$ across different settings. Broadly speaking, theoretical insights emphasize the importance of choosing the estimation sample size $N$ to be significantly larger than the testing sample size $n-N$ in order to keep the excess type~I error low. More broadly, Proposition~\ref{prop:robustness} together with Example~\ref{ex:DRE} shows that our method is a viable approach to conditional two-sample testing. Its key feature is permutation-based calibration, so any excess type~I error arises solely from density ratio estimation. Although our non-uniform permutation calibration is novel, the use of marginal density ratios parallels \citet{kim2024conditional}; their classifier-based and linear-time maximum mean discrepancy procedures, reweighed by an estimate of marginal importance, are asymptotically valid, and align with the spirit of Section~\ref{sec:family_shift}. The problem is well studied, with numerous alternatives \citep[][]{Hu2020conformalJASA, chattejee24,yan2024distancekernel,huangBiometrics, zheng2025generativeconditionaldistributionequality, chenLei24}; we compare Algorithm~\ref{alg:star_algo} with \citet{kim2024conditional} and several of these methods in Section~\ref{sec:simul} and Appendix~\ref{appendix:extraSimul}.

\section{Supplementary experiments and setup}\label{appendix:extraSimulSec}
\subsection{Additional simulation studies}\label{appendix:extraSimul}
We present additional simulation studies beyond those in Section~\ref{sec:simul}. First, we consider a univariate setting and compare it to the bivariate one shown in Fig.~\ref{fig:combined!}(a). We set $P_f = \mathcal{N}(0,1)$ and define $P_g = (1+\eta)^{-1} \mathcal{N}(0, 1/9) + \eta (1+\eta)^{-1} \operatorname{Exp(1)}$ for $\eta \in \{0, \ldots, 0.40\}$. For $\eta = 0$, $g$ satisfies the null hypothesis with $r(x) = e^{-4x^2}$, while larger values of $\eta$ correspond to greater departures from the null. The results are presented in Figure \ref{fig:appendix1D}. The purple curve corresponds to (E1), and we compare it against (E2), shown in green. Furthermore, we also consider a similar setting (shown by the orange and blue lines) in which $P_f = \mathcal{N}(0,1)$ and $P_g = (1+\eta)^{-1} \mathcal{N}(0, 1/3) + \eta (1+\eta)^{-1} \operatorname{Exp(1)}$ for $\eta \in \{0, \ldots, 0.40\}$, which means that the null hypothesis is satisfied for $r^\prime(x) = e^{-x^2}$. Across all four simulations we fix $n=m=150$. For each setting, each test is run $500$ times, and the mean average decision is reported as the estimated power function. Compared with Fig.~\ref{fig:combined!}(a), the four curves have approximately the same shape. Note, however, that here we consider $\eta \in \{0,\ldots,0.40\}$ rather than $\{0,\ldots,0.80\}$. Although this is not a general rule, it is interesting that with twice the dimensionality we recover the same qualitative behaviours by considering alternatives that are twice as far apart.

\begin{figure}[!ht]
    \centering
    \includegraphics[width=0.35\linewidth]{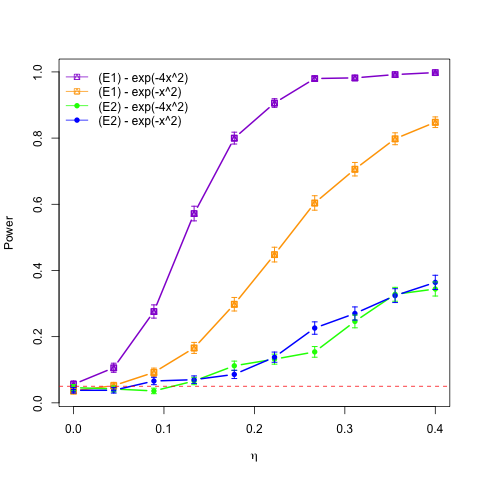}
    \caption{Purple: simulation of (E1); here $r(x) = e^{-4x^2}$. Orange: same setting but with $r^\prime(x) = e^{-x^2}$. Green and blue: alternative approaches based on (E2). Error bars show $\pm 1$ standard errors.}
    \label{fig:appendix1D}
\end{figure}

Furthermore, in Table~\ref{table:diamonds} we include a more comprehensive version of Table~\ref{table:diamonds_main}. Using the same experimental design as in Section~\ref{sec:simul_diamonds} for the diamonds dataset, we compare our method against a broader set of alternatives. As before, we report the performance of the single-split classifier-based test and the linear-time maximum mean discrepancy test from \cite{kim2024conditional}, but also include their cross-fit counterparts. Additionally, we include the conformal prediction test based on conformity scores \citep{Hu2020conformalJASA}, and the debiased conformal prediction test, which enhances the former through Neyman orthogonality and cross-fitting \citep{chenLei24}. The results are shown in Table~\ref{table:diamonds} and agree with Fig.~4 in \citet[][]{kim2024conditional}. The cross-fitted variants exhibit similar size and higher power if compared to their non–cross-fitted counterparts presented in the main body. The conformal prediction test attains the highest power in the table but shows slightly inflated type~I error. This likely reflects the added sensitivity of conformal-prediction-based ranks together with error from estimating density ratios and other nuisance functions when no orthogonalization is used.  By contrast, the debiased conformal prediction test achieves very high power, often exceeding our method, while maintaining conservative type~I error. This pattern aligns with its Neyman-orthogonal, cross-fitted construction, which mitigates first-order estimation bias.

\begin{table}[!ht]
\centering
\caption{Simulation results for the conditional two-sample testing problem on the diamonds dataset. LL stands for linear logistic regression, KLR for kernel logistic regression. (E1) is used, and compared against several alternative methods. CLF stands for the single-split classifier test, MMD-$l$ for the linear-time maximum mean discrepancy test. $\dagger$CLF and $\dagger$MMD-$l$ indicate their cross-fitted counterpart. CP indicates the conformal prediction test, and DCP its debiased version. $N$ denotes the total sampling budget.}
\label{table:diamonds}
\scalebox{0.9}{%
\begin{tabular}{lllrrrrrr}
\toprule
\centering
Estimator & Hypothesis & Test & $N = 200$ & $N = 400$ & N$= 800$ & $N = 1200$ & $N= 1600$ & $N =2000$ \\
\midrule
LL & Null & CLF & 0.0900 & 0.0650 & 0.0750 & 0.0450 & 0.0675 & 0.0425 \\
LL & Null & CP & 0.0650 & 0.0875 & 0.0925 & 0.0575 & 0.1100 & 0.0925 \\
LL & Null & $\dagger$CLF & 0.0950 & 0.0675 & 0.0850 & 0.0750 & 0.0450 & 0.0675 \\
LL & Null & $\dagger$MMD-$l$ & 0.0700 & 0.0700 & 0.0675 & 0.0550 & 0.0750 & 0.0625 \\
LL & Null & MMD-$l$ & 0.0750 & 0.0650 & 0.0750 & 0.0500 & 0.0575 & 0.0575 \\
LL & Null & DCP & 0.0375 & 0.0400 & 0.0350 & 0.0425 & 0.0325 & 0.0400 \\
LL & Null & DRPT (E1) & 0.0600 & 0.0550 & 0.0850 & 0.0600 & 0.0500 & 0.0350 \\
\midrule
LL & Alternative & CLF & 0.1575 & 0.2050 & 0.2650 & 0.3600 & 0.3925 & 0.4800 \\
LL & Alternative & CP & 0.2950 & 0.5275 & 0.6900 & 0.8700 & 0.9050 & 0.9300 \\
LL & Alternative & $\dagger$CLF & 0.2425 & 0.3675 & 0.4700 & 0.6225 & 0.6575 & 0.7575 \\
LL & Alternative & $\dagger$MMD-$l$ & 0.0975 & 0.1100 & 0.0900 & 0.1075 & 0.1125 & 0.1275 \\
LL & Alternative & MMD-$l$ & 0.0650 & 0.0675 & 0.0850 & 0.0825 & 0.0975 & 0.0850 \\
LL & Alternative & DCP & 0.1750 & 0.4150 & 0.6750 & 0.7925 & 0.8250 & 0.7950 \\
LL & Alternative & DRPT (E1) & 0.1500 & 0.1800 & 0.4150 & 0.5450 & 0.6350 & 0.7150 \\
\midrule
KLR & Null & CLF & 0.0750 & 0.0575 & 0.0675 & 0.0350 & 0.0475 & 0.0450 \\
KLR & Null & CP & 0.0450 & 0.0675 & 0.0575 & 0.0400 & 0.0675 & 0.0500 \\
KLR & Null & $\dagger$CLF & 0.0825 & 0.0375 & 0.0475 & 0.0400 & 0.0425 & 0.0500 \\
KLR & Null & $\dagger$MMD-$l$ & 0.0200 & 0.0450 & 0.0675 & 0.0650 & 0.0550 & 0.0550 \\
KLR & Null & MMD-$l$ & 0.0600 & 0.0625 & 0.0725 & 0.0575 & 0.0550 & 0.0600 \\
KLR & Null & DCP & 0.0275 & 0.0150 & 0.0275 & 0.0175 & 0.0275 & 0.0200 \\
KLR & Null & DRPT (E1) & 0.0400 & 0.0350 & 0.0600 & 0.0500 & 0.0600 & 0.0300 \\
\midrule
KLR & Alternative & CLF & 0.0975 & 0.1525 & 0.2600 & 0.3550 & 0.3675 & 0.4450 \\
KLR & Alternative & CP & 0.3100 & 0.5450 & 0.7450 & 0.9025 & 0.9600 & 0.9725 \\
KLR & Alternative & $\dagger$CLF & 0.1675 & 0.2650 & 0.3900 & 0.5750 & 0.6275 & 0.7150 \\
KLR & Alternative & $\dagger$MMD-$l$ & 0.0700 & 0.0625 & 0.0950 & 0.1000 & 0.1225 & 0.1425 \\
KLR & Alternative & MMD-$l$ & 0.0725 & 0.0675 & 0.0800 & 0.0900 & 0.0950 & 0.1050 \\
KLR & Alternative & DCP & 0.2425 & 0.4325 & 0.6850 & 0.8175 & 0.9200 & 0.9700 \\
KLR & Alternative & DRPT (E1) & 0.1250 & 0.1750 & 0.3750 & 0.5000 & 0.6300 & 0.6650 \\
\midrule
\bottomrule
\end{tabular}
}
\end{table}

\subsection{Training setup for neural ratio estimators}\label{appendix:trainingNRE}
We briefly describe the training of the neural density–ratio estimator $\hat r_{\mathrm{NRE}}$ and its balanced variant $\hat r_{\mathrm{BNRE}}$ used in Section~\ref{sec:simulation based inference}. We use the respective losses as implemented in the Python package \texttt{LAMPE}. The estimators take the concatenated input $(\theta_1, \theta_2 ,x_1, x_2)\in\mathbb{R}^4$ and output an estimate of $\log r(\theta_1, \theta_2 ,x_1, x_2)$ via a multilayer perceptron with five hidden layers, each of width $128$, and exponential linear unit activations. For each training budget $N_{\text{train}}\in\{2^4,2^7,2^9,2^{12},2^{15}\}$, we draw a fixed set of $N_{\text{train}}$ joint pairs once, shuffle each epoch, and optimize with \texttt{AdamW}, with learning rate $10^{-3}$ and gradient clipping set to $1$, for $400$ epochs using batch size $\min (256, N_{\text{train}})$.

\section{Algorithm~\ref{alg:star_algo} for discrete data}\label{appenidix:discreteDRPT}
\subsection{Methodology}\label{appenidix:discreteMethodology}
One significant drawback of permutation tests is their computational cost, which in our setting stems from choosing large values for $H$ in \eqref{eq:pvalue_DRPT} and $S$ in Algorithm~\ref{alg:star_algo}. In this subsection, we introduce an alternative method for discrete data with finite support that is computationally more efficient, since it enables direct sampling from \eqref{eq:sampling_1} without relying on the Markov Chain Monte Carlo sampler. Let $\mathcal{X} = \{0, \ldots, J\} =: \mathcal{J}$ with $J \geq 1$, and let $X_i \overset{\mathrm{i.i.d.}}{\sim} f$ and $Y_i \overset{\mathrm{i.i.d.}}{\sim} g$, where $\mathbb{P}_f\{X = j\} = f_j  $ and $\mathbb{P}_g\{Y = j\} = g_j,$
for all $j \in \mathcal{J}$, with $\sum_{j=0}^J f_j = \sum_{j=0}^J g_j= 1$. Given a sequence of positive numbers $(r_0, \ldots, r_J)$, we can write $r(x) = \sum_{j \in \mathcal{J}} r_j \mathbbm{1}\{x = j\}$, so that the testing problem becomes 
\[
H_0: g_j \propto r_j f_j \quad \text{for all } j \in \mathcal{J}. 
\]
At the sample level, the methodology relies on generating a permutation $\sigma$ given $Z$ according to the distribution~\eqref{eq:sampling_1}. Now, conditioned on the data, each permutation preserves the total number of units in each of the $J+1$ categories, where the count for category $j \in \mathcal{J}$ is denoted by $\mathrm{tot}_j$. What changes is how these values are split between the first $n$ and the last $m$ data points. We can represent this using the following table: 
\begin{center}
    \begin{tabular}{c|cc|c}
& $Z_{\sigma(1:n)}$ & $Z_{\sigma(n+1:n+m)}$ & $+$ \\
 \hline
0 & $\mathrm{tot}_0 - N_{Y,0}^\sigma$ & $N_{Y,0}^\sigma$ & $\mathrm{tot}_0$ \\
\vdots & \vdots & \vdots & \vdots \\
$J$ & $\mathrm{tot}_J - N_{Y,J}^\sigma$ & $N_{Y,J}^\sigma$ & $\mathrm{tot}_J$ \\
 \hline
$+$ & $n$ & $m$ & $m+n$ \\
    \end{tabular}
    \end{center}
where $N_{Y,j}^\sigma = \# \{i \notin [n] : Z_{\sigma(i)} = j \} = \mathbbm{1}\{Z_{\sigma(n+1)}  = j\} + \ldots + \mathbbm{1}\{Z_{\sigma(n+m)}  = j\}$ for all $j \in \mathcal{J}$. If we restrict attention to test statistics that are functions of this frequency table, which is natural given that it is a sufficient statistic in our model, we can characterize~\eqref{eq:sampling_1} through the distribution of $(N_{Y,0}^\sigma, \ldots, N_{Y,J}^\sigma) \mid Z$. In this regard, given the data and for $w=(w_0, \ldots, w_J)$ such that $\sum_{j = 0}^J w_j = m$ and $\max(0, \mathrm{tot}_j - n) \leq w_j \leq \min(m , \mathrm{tot}_j)$ for all $j \in \mathcal{J}$, we have
\begin{align}\label{eq:cond_distr_NY}
    \mathbb{P}&\{N_{Y,0}^\sigma = w_0, \ldots, N_{Y,J}^\sigma = w_J \mid Z \} = \sum_{\{\sigma \in \mathcal{S}_{n+m} : (N_{Y,0}^\sigma, \ldots, N_{Y,J}^\sigma)=w\}} \mathbb{P}\{\sigma^{(1)} = \sigma \mid Z \} \nonumber \\
    & \propto \sum_{\{\sigma \in \mathcal{S}_{n+m} : (N_{Y,0}^\sigma, \ldots, N_{Y,J}^\sigma)=w\}} \prod_{i \notin [n]}  r(Z_{\sigma(i)}) = \sum_{\{\sigma \in \mathcal{S}_{n+m} : (N_{Y,0}^\sigma, \ldots, N_{Y,J}^\sigma)=w\}} \prod_{j \in \mathcal{J}} r_j^{N_{Y,j}^\sigma} \nonumber \\
    & = \bigg(\prod_{j \in \mathcal{J}} r_j^{w_j} \bigg)\#\{\sigma \in \mathcal{S}_{n+m} : (N_{Y,0}^\sigma, \ldots, N_{Y,J}^\sigma)=w \} =  n! m!  \prod_{j \in \mathcal{J}} r_j^{w_j} \binom{\mathrm{tot}_j}{w_j} \propto  \prod_{j \in \mathcal{J}} r_j^{w_j} \binom{\mathrm{tot}_j}{w_j}.
\end{align} 
Notice that in the penultimate step we used the fact that  $\#\{\sigma \in \mathcal{S}_{n+m} : (N_{Y,0}^\sigma, \ldots, N_{Y,J}^\sigma)=w \} =  n! m! \prod_{j \in \mathcal{J}} \binom{\mathrm{tot}_j}{w_j}$, as there are $\binom{\mathrm{tot}_j}{w_j}$ ways of choosing $w_j$ many $j$'s for the last $m$ data points for all $j \in \mathcal{J}$. Considering all the possible ways in which we can further permute the first $n$ and last $m$ values gives the extra factor of $n!m!$. This shows that $(N_{Y,0}^\sigma, \ldots, N_{Y,J}^\sigma) \mid Z$ is distributed according to Fisher's multivariate noncentral hypergeometric distribution, which is a generalization of the hypergeometric distribution where sampling probabilities are adjusted by weight factors \citep[see, e.g.][Section~7]{mccullagh1989generalized}. Coming back to the testing problem \eqref{eq:testing_problem}, the previous argument shows that in the case of discrete data with finite support we can avoid sampling permutations from \eqref{eq:sampling_1}, as it is sufficient to sample $(N_{Y,0}^\sigma, \ldots, N_{Y,J}^\sigma) \mid Z$ from \eqref{eq:cond_distr_NY}. This can be done efficiently using for example the R-function \texttt{rMFNCHypergeo} from R-package \texttt{BiasedUrn}.

Compared to the Markov Chain Monte Carlo-based approach, aside from reducing computational runtime, Algorithm~\ref{alg:star_algo} copies of the table above generated through independent and identically distributed~draws from~\eqref{eq:cond_distr_NY} are conditionally independent, given the row totals, across different $h \in [H]$, while the $Z^{(h)}$'s generated by Algorithm~\ref{alg:star_algo} exhibit non-zero correlation due to their shared initialization. Nonetheless, this dependence decreases for larger and larger values of $S$. Regarding its theoretical guarantees, finite-sample validity is ensured by Theorem~\ref{prop:DRPT_validity}. 

\subsection{Power analysis}\label{appendix:powerDiscreteDRPT}
We now provide some further insights for the this version of Algorithm~\ref{alg:star_algo}, henceforth referred to as the discrete density ratio permutation test. For the case $\mathcal{X} = \{0,1\}$, i.e.~$J=1$, we recall that the testing problem \eqref{eq:testing_problem} is equivalent to  
\[
     H_0: \frac{g_1}{g_0} = \frac{r_1 f_1}{r_0 f_0},
\]
for $r_0, r_1 > 0$. As a result, we may assume without loss of generality that $r_0 = 1$, since our interest lies solely in the ratio $r_1/r_0$, and $r_1 \equiv r \geq 1$; if this condition does not hold, we can simply switch the roles of $f$ and $g$ and consider $1/r$ instead. Now, it is instructive to analyze the behaviour of the sample mean of the permuted data. In this regard, Lemma \ref{lemma:convergence_in_d} implies the following unconditional result. In what follows, with a slight abuse of notation, whenever we write an expression such as $(\tau f_1 + g_1)/(\tau + \lambda r)$ with $\tau = \infty$, it is to be interpreted as its limit as $\tau \to \infty$; for example, $\left.(\tau f_1 + g_1)/(\tau + \lambda r)\right|_{\tau = \infty} := \lim_{\tau \to \infty} (\tau f_1 + g_1)/(\tau + \lambda r) = f_1$.
\begin{cor}\label{lemma:mean_Nsigma}
Let $N_{Y,1}^\sigma = \sum_{j = n+1}^{n+m}Z_{\sigma(j)}$, where $\sigma$ is sampled according to \eqref{eq:sampling_1}. Calling $\tau := \lim_{n,m \to \infty} n/m  \in \mathbb{R}_+ \cup \{0,\infty\}$, for all $k \in \mathbb{N}$ we have $\mathbb{E}[(m^{-1}N_{Y,1}^\sigma)^k] \rightarrow \gamma_1^k$ as $n,
        m \rightarrow \infty$, where
        \begin{equation}\label{eq:gamma_1}
            \gamma_1(f_1, g_1, r, \tau) \equiv \gamma_1 = 
        \begin{cases}
            \frac{\tau f_1 + g_1}{\tau + 1} + \frac{(r-1)\frac{\tau-1}{\tau+1}(\tau f_1 + g_1)+ \tau+r-\sqrt{(\tau+r +(r-1)(\tau f_1+g_1))^2-4 (r-1) r (\tau f_1+g_1)}}{2(r-1)} \quad \text{ if } r > 1, \\
            \frac{\tau f_1 + g_1}{\tau + 1} \text{ if } r = 1.
        \end{cases}
        \end{equation}
\end{cor}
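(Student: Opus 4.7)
The plan is to deduce the result as a direct specialisation of Lemma~\ref{lemma:convergence_in_d} together with the companion statement (proved in the paragraph following Lemma~\ref{lemma:convergence_in_d}) that, for any $\varphi \in \mathcal{C}^0_b(\mathcal{X})$,
\[
\frac{1}{m} \sum_{j=n+1}^{n+m} \varphi(Z_{\sigma(j)}) \;\overset{\mathbb{P}}{\longrightarrow}\; \int \varphi \,\frac{r h_\infty}{\int r h_\infty d\mu} \, d\mu, \qquad h_\infty = \frac{\tau f + g}{\tau + \lambda_\infty r},
\]
and then upgrading convergence in probability to convergence of all moments using boundedness.

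First, I would observe that in the binary setting $\mathcal{X} = \{0,1\}$, with counting measure $\mu$, $r(0)=1$, $r(1)=r$, we may take $\varphi(x)=\mathbbm{1}\{x=1\}$, which is bounded and continuous. Since $m^{-1} N_{Y,1}^\sigma = m^{-1} \sum_{j=n+1}^{n+m} \varphi(Z_{\sigma(j)}) \in [0,1]$, the random variable is uniformly bounded, and hence the above convergence in probability combined with bounded convergence yields convergence of every moment, giving $\mathbb{E}[(m^{-1} N_{Y,1}^\sigma)^k] \to \left(\int \varphi \frac{rh_\infty}{\int r h_\infty d\mu} d\mu\right)^k$. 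The boundedness hypotheses of Lemma~\ref{lemma:convergence_in_d} (i.e.~$c \leq r \leq C$) are automatic here since $r$ takes only finitely many positive values. Using $\lambda_\infty^{-1} = \int r h_\infty d\mu$ (established in the proof of Lemma~\ref{lemma:unique_h}), the limit simplifies to
\[
\gamma_1 = \lambda_\infty \, r \, h_\infty(1) = \frac{\lambda_\infty r (\tau f_1 + g_1)}{\tau + \lambda_\infty r}.
\]

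The second step is to identify $\lambda_\infty$ explicitly and to verify that the resulting $\gamma_1$ matches the closed form in~\eqref{eq:gamma_1}. Writing $u := \tau f_1 + g_1$, the normalisation condition $h_\infty(0)+h_\infty(1)=1$ becomes
\[
\frac{(\tau+1)-u}{\tau + \lambda_\infty} + \frac{u}{\tau + \lambda_\infty r} = 1,
\]
which, after clearing denominators, reduces to the quadratic
\[
r \lambda_\infty^2 - \bigl[(r-\tau) + (1-r)u\bigr] \lambda_\infty - \tau = 0.
\]
Since $\lambda_\infty>0$, it must be the positive root
\[
\lambda_\infty = \frac{(r-\tau)+(1-r)u + \sqrt{[(r-\tau)+(1-r)u]^2 + 4 r \tau}}{2r}.
\]
Substituting this into $\gamma_1 = \lambda_\infty r u/(\tau + \lambda_\infty r)$ and simplifying (for $r>1$), while separately treating the trivial $r=1$ case where the equation is linear and $\lambda_\infty = 1$ so $\gamma_1 = u/(\tau+1)$, should yield precisely the expression in~\eqref{eq:gamma_1}.

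The main obstacle I anticipate is purely algebraic: rearranging $\lambda_\infty r u /(\tau + \lambda_\infty r)$ into the form stated in~\eqref{eq:gamma_1}. The convenient approach is to exploit the quadratic relation $r \lambda_\infty^2 = [(r-\tau)+(1-r)u]\lambda_\infty + \tau$ to eliminate higher powers of $\lambda_\infty$, then rewrite $\tau + \lambda_\infty r$ in a form that rationalises the square root. Everything else (the reduction to Lemma~\ref{lemma:convergence_in_d}, the boundedness argument, and the upgrade to all moments) is routine.
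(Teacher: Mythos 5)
Your proof is correct and follows essentially the same route as the paper: apply the consequence of Lemma~\ref{lemma:convergence_in_d} for the last $m$ permuted samples to $\varphi = \mathbbm{1}\{x=1\}$ in the binary setting, identify $\gamma_1 = \lambda_\infty r(\tau f_1 + g_1)/(\tau + \lambda_\infty r)$, and upgrade convergence in probability to convergence of all moments via boundedness. Your explicit quadratic for $\lambda_\infty$ is a small useful addition beyond what the paper writes out (the paper merely asserts that $\lambda_\infty$ can be solved for explicitly), and the deferred algebraic simplification to the closed form in~\eqref{eq:gamma_1} does indeed check out upon rationalisation.
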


\begin{proof}
Let $\mu$ be the counting measure and equip the set $\{0,1\}$ with the discrete topology. We can show as a corollary of  Lemma~\ref{lemma:convergence_in_d} in Subsection~\ref{sec:consistency} that 
\[
\frac{1}{m} \sum_{j = n+1}^{n+m} \varphi(Z_{\sigma(j)})  \overset{\mathbb{P}}{\rightarrow} \int \varphi \, \lambda_\infty r_\mathrm{func} \frac{\tau f + g}{\tau + \lambda_\infty r_\mathrm{func}} d\mu
\]
for every bounded and continuous function $\varphi$, where $r_\mathrm{func}(x) = r\,\mathbbm{1}\{x = 1\} + \mathbbm{1}\{x = 0\}$ and $\lambda_\infty > 0 $ is the positive solution of 
\[
1 = \int \frac{\tau f + g}{\tau + \lambda_\infty r_\mathrm{func}} d\mu = \frac{\tau f_1 + g_1}{\tau + \lambda_\infty r} + \frac{\tau f_0 + g_0}{\tau + \lambda_\infty} = \frac{\tau f_1 + g_1}{\tau + \lambda_\infty r} - \frac{\tau f_1 + g_1}{\tau + \lambda_\infty} + \frac{\tau + 1}{\tau + \lambda_\infty}.
\]
As $\mathcal{X} = \{0,1\}$, we can choose $\varphi = \operatorname{id}$ and obtain that $m^{-1}N_{Y,1}^\sigma \overset{\mathbb{P}}{\rightarrow} \lambda_\infty r \, (\tau f_1 + g_1)/(\tau + \lambda_\infty r) = \gamma_1$, where in the last step we simply plugged in the expression for $\lambda_\infty$, which can be found explicitly by solving the equation above. We have thus established that $m^{-1}N_{Y,1}^{\sigma} \xrightarrow{\mathbb{P}} \gamma_1$, which implies the existence of a subsequence along which convergence holds almost surely; applying the dominated convergence theorem yields $\mathbb{E}[(m^{-1}N_{Y,1}^{\sigma})^k] \to \gamma_1^k$ for all $k \in \mathbb{N}$ along this subsequence, and by the uniqueness of limits, the convergence extends to the entire sequence. This completes the proof.
\end{proof}

This result shows that $m^{-1}N_{Y,1}^\sigma \overset{\mathbb{P}}{\rightarrow} \gamma_1$, which offers a more explicit interpretation of the result in Lemma~\ref{lemma:convergence_in_d} and emphasizes the intricate dependence of the limiting distribution of the permuted data on the initial parameters $(f_1$, $g_1$, $r, \tau)$, even in simple cases. Interestingly, under the null hypothesis, $\gamma_1 = g_1$, as expected. Furthermore, if $r = 1$, we find that $\gamma_1 = (\tau f_1 + g_1)/(\tau + 1)$, reflecting the fact that Algorithm~\ref{alg:star_algo} selects permutations uniformly at random when $r = 1$, consistent with \eqref{eq:stdPermutation_is_consistent}. Similarly, we can establish an analogous result for the convergence of $n^{-1}\sum_{i = 1}^{n}Z_{\sigma(i)}$ to $\nu_1$, where $\nu_1$ satisfies $\frac{\tau}{1+\tau} f_1 + \frac{1}{1+\tau} g_1 = \frac{\tau}{1+\tau} \nu_1 + \frac{1}{1+\tau} \gamma_1$, by leveraging the constraint that the total number of ones must remain conserved. \\

As for the analysis of power, Theorem~\ref{thm:general_consistency} already guarantees that the discrete density ratio permutation test is consistent when the test statistic is an integral probability metric that depends on $(N_{Y,0}^\sigma, \ldots, N_{Y,J}^\sigma)$. We now establish a further consistency result for an alternative choice of test statistic. In this setting, the null hypothesis \eqref{eq:testing_problem} is equivalent to 
\[
H_0: \frac{g_j}{g_0} = \frac{r_j f_j}{r_0 f_0} \quad \text{ for all } j \in [J],
\]
for fixed $r = (r_0, r_1, \ldots, r_J) \in R_+^{J+1}$. As before, we can assume $r_0 = 1$ without loss of generality. This motivates the introduction of 
\begin{equation}\label{eq:Tdiscrete_theory}
    T(Z_\sigma) = \frac{1}{nm} \sum_{j = 1}^J \left| r_j^{-1/2} N_{Y,j}^\sigma(\mathrm{tot}_0 - N_{Y,0}^\sigma) - r_j^{1/2}  N_{Y,0}^\sigma (\mathrm{tot}_j -N_{Y,j}^\sigma)\right|,
\end{equation}
which serves as an estimator of $D(f,g) \equiv D^r(f,g) := \sum_{j \in [J]}\left|r_j^{-1/2} g_j f_0 - r_j^{1/2} f_j g_0 \right|$, which is a population measure of discrepancy that characterizes the null. Similarly to $T_{\mathcal F, r}$ introduced in the main body, this quantity is invariant under taking the reciprocal transformations of $r$ and relabelling the sample. We can prove the following result.
\begin{prop}\label{prop:consistency_discrete}
Let $\mathcal{X}=\{0,\ldots, J\} =: \mathcal{J}$ with $J \geq 1$, $r = (r_0 = 1, r_1, \ldots, r_J) \in \mathbb R_+^{J+1}$, and define $H_0: g_j \propto r_j f_j \text{ for all } j \in \mathcal{J}$. Provided that $H > \lceil 1/\alpha - 1\rceil$ for fixed $\alpha \in (0,1)$, the discrete density ratio permutation test using \eqref{eq:Tdiscrete_theory} as its test statistic is consistent for $H_0$. 
\end{prop}
\begin{proof}
    Let $\mu$ be the counting measure, and equip the set $\mathcal{J}$ with the discrete topology. Then, arguing as in the proof of Corollary~\ref{lemma:mean_Nsigma}, we can use Lemma \ref{lemma:convergence_in_d} to show that $m^{-1} N_{Y,j}^\sigma = m^{-1} \sum_{i = n+1}^{n+m} \mathbbm{1}\{Z_{\sigma(i)} = j \}  \overset{\mathbb{P}}{\rightarrow} \lambda_\infty r_j \, (\tau f_j + g_j)/ (\tau + \lambda_\infty r_j)$, where $\lambda_\infty$ is a positive solution of $\sum_{k = 0}^J (\tau f_k + g_k)/(\tau + \lambda_\infty r_k)  = 1$. As a result, the generic $j$-th term of the test statistic \eqref{eq:Tdiscrete_theory} satisfies
\begin{align*}
    \frac{1}{nm}&\left\{r_j^{-1/2}N_{Y,j}^\sigma(\mathrm{tot}_0 - N_{Y,0}^\sigma) - r_j^{1/2}  N_{Y,0}^\sigma (\mathrm{tot}_j -N_{Y,j}^\sigma)\right\} \\
    & 
    \overset{\mathbb{P}}{\rightarrow}  r_j^{-1/2} \, \lambda_\infty r_j \frac{\tau f_j + g_j}{\tau + \lambda_\infty r_j} \frac{\tau f_0 + g_0}{\tau + \lambda_\infty r_0} - r_j^{1/2} \, \lambda_\infty r_0 \frac{\tau f_0 + g_0}{\tau + \lambda_\infty r_0} \frac{\tau f_j + g_j}{\tau + \lambda_\infty r_j} = 0,
\end{align*}
since $r_0 = 1$. This shows that $T(Z_\sigma) \overset{\mathbb{P}}{\rightarrow} 0$. 

As for the behaviour of our test statistic under for the unpermuted sample, we can use the weak law of large numbers to show that $T(Z) \overset{\mathbb{P}}{\rightarrow} D(f,g) > 0$ under the alternative. Since we assumed  $H > \lceil 1/\alpha - 1\rceil$, we have $\alpha(1+H)-1 > 0 $ and we can therefore apply Markov's inequality to see that
 \begin{align*}
    \mathbb{P}&\{p > \
    \alpha \}  =  \mathbb{P}\left\{1+\sum_{h=1}^H \mathbbm{1}\{T(Z_{\sigma^{(h)}}) \geq T(Z)\} > \alpha(1+H) \right\} \\
     & \leq \frac{\mathbb{E}\left[ \sum_{h=1}^H \mathbbm{1}\{T(Z_{\sigma^{(h)}}) \geq T(Z)\} \right]}{\alpha(1+H)-1} = \frac{H}{\alpha(1+H)-1}\mathbb{P}\{T(Z_{\sigma^{(1)}})\geq T(Z) \} \\
     & \leq \frac{H}{\alpha(1+H)-1} \left(\mathbb{P}\{T(Z_{\sigma^{(1)}})\geq \tfrac{1}{2}D(f,g) \} + \mathbb{P}\{T(Z)\leq \tfrac{1}{2} D(f,g) \}\right) \to 0,
 \end{align*}
 where the penultimate step follows from exchangeability. This implies $\mathbb{P}\{p \leq \
    \alpha \} \to 1$ and concludes the proof.
\end{proof}

Finally, coming back again to the case of binary data, the dependence of the minimax separation on $r$ can be analyzed more effectively compared to Theorem~ \ref{thm:LB_minimax}. In this regard, let $\mathcal{X}  = \{ 0,1\}$ and $r \geq 1$, and and consider the measure of discrepancy defined above, i.e.~$D(f,g) = \left|r^{-1/2} g_1 f_0 - r^{1/2} f_1 g_0 \right|$. For fixed $r \geq 1$ and $\rho > 0$, consider
\[
H_0: \frac{g_1}{g_0} = r \frac{ f_1}{ f_0} \quad \text{ vs. } \quad H_1^r(\rho): D(f,g) \geq \rho.
\]
Write $\Psi$ for the set of all tests, that is randomized functions of $(X_1, \ldots, X_n, Y_1, \ldots, Y_m)$, and $\Psi(\alpha)$ for the set of tests of size $\alpha$, with $\alpha \in (0,1)$. For $\beta \in (0,1 - \alpha)$, we may define the minimax separation as $\rho^*_r \equiv \rho_r^*(n, m, \alpha, \beta) := \inf\left\{\rho > 0:   \inf_{\varphi \in \Psi(\alpha)} \sup_{(f,g) \in H_1^r(\rho)} \mathbb{E}_P (1-\varphi) \leq \beta \right\},$
where $P = P_{f}^{\otimes n} \otimes P_{g}^{\otimes m}$. We now prove a lower bound on $\rho^*_r$.
\begin{prop}\label{prop:binary_LB}
Let $\rho^*_r$ the minimax separation defined above and suppose that $\alpha + \beta < 1/2$.  We have that $\rho^*_r(n, m, \alpha, \beta) \geq \sqrt{\frac{r}{(n \wedge m)(1+r)^2}\{1-2(\alpha + \beta)\}}$.
\end{prop}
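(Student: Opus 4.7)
My plan is to apply Le Cam's two-point method. The setting is invariant under swapping $(X_1,\ldots,X_n) \leftrightarrow (Y_1,\ldots,Y_m)$ and replacing $r$ with $1/r$: a direct check gives $D(f,g;r) = D(g,f;1/r)$, while the null relation and the quantity $r/(1+r)^2$ are preserved. Consequently, it suffices to treat the case $m \le n$ by varying only the $Y$-distribution between null and alternative; the case $n < m$ then follows by symmetry.

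For $m \le n$, I would fix $f = (1/(1+r),\, r/(1+r))$, so that the unique Bernoulli distribution compatible with $H_0$ is $g^{(0)} = (1/(1+r^2),\, r^2/(1+r^2))$ (from $g^{(0)}_1/g^{(0)}_0 = rf_1/f_0 = r^2$). The alternative is the perturbation $g^{(1)} := g^{(0)} + \epsilon(-1,1)$ with $\epsilon \in (0, g^{(0)}_0)$. The null cancellation $r^{-1/2} g^{(0)}_1 f_0 = r^{1/2} g^{(0)}_0 f_1$ kills the unperturbed part of $D$ and gives
\[
D(f, g^{(1)}) \;=\; \epsilon\bigl(r^{-1/2}f_0 + r^{1/2}f_1\bigr) \;=\; \frac{\epsilon\,(1+r^2)}{(1+r)\sqrt{r}}.
\]
Setting $P^{(i)} := P_f^{\otimes n} \otimes P_{g^{(i)}}^{\otimes m}$ and using the common $X$-marginal, I reduce the total variation to $\operatorname{TV}(P^{(0)},P^{(1)}) = \operatorname{TV}(P_{g^{(0)}}^{\otimes m}, P_{g^{(1)}}^{\otimes m})$. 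Pinsker's inequality, tensorization of Kullback--Leibler divergence, and $\operatorname{KL} \le \chi^2$, combined with the direct computation $\chi^2(g^{(1)}, g^{(0)}) = \epsilon^2/(g^{(0)}_0 g^{(0)}_1) = \epsilon^2(1+r^2)^2/r^2$, then yield
\[
\operatorname{TV}(P^{(0)}, P^{(1)})^2 \;\le\; \frac{m\,\epsilon^2(1+r^2)^2}{2r^2}.
\]

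To conclude, I would invoke the standard two-point bound $\mathbb{E}_{P^{(0)}}\varphi + \mathbb{E}_{P^{(1)}}(1-\varphi) \ge 1 - \operatorname{TV}(P^{(0)}, P^{(1)})$: whenever $\varphi \in \Psi(\alpha)$ and $\operatorname{TV} < 1-\alpha-\beta$, it forces $\mathbb{E}_{P^{(1)}}(1-\varphi) > \beta$, so $(f, g^{(1)})$ witnesses that $\rho$ is not in the defining set of $\rho^*_r$. Combining with the $D$- and TV-bounds above, any $\rho$ satisfying $\rho^2 < 2(1-\alpha-\beta)^2 r / ((n \wedge m)(1+r)^2)$ can be ruled out in this way, so $(\rho^*_r)^2 \ge 2(1-\alpha-\beta)^2 r/((n\wedge m)(1+r)^2)$. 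The claimed form then follows from the elementary estimate $2(1-\alpha-\beta)^2 \ge 1 - 2(\alpha+\beta)$, which amounts (for $\alpha+\beta < 1/2$) to the discriminant check $2t^2 - 2t + 1 > 0$ with $t = \alpha+\beta$.

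The main obstacle is aligning the algebra so that the $r/(1+r)^2$ prefactor emerges cleanly: the factor $(1+r^2)^2/r^2$ coming from $\chi^2(g^{(1)}, g^{(0)})$ must cancel exactly the factor $(1+r^2)^2/((1+r)^2 r)$ coming from $D(f, g^{(1)})^2$, and this is what forces the asymmetric choice $f = (1/(1+r), r/(1+r))$. Once this cancellation is in place, the remaining steps are routine two-point-method bookkeeping.
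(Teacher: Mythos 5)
Your overall strategy is the same as the paper's: a Le Cam two-point argument with one pair satisfying $H_0$, a perturbed alternative, and symmetry between the two samples to get the $n\wedge m$. Your substitution of Pinsker plus KL tensorization plus $\operatorname{KL}\leq\chi^2$ for the paper's Hellinger bound is a perfectly legitimate technical variant, and your computations of $D(f,g^{(1)})$ and of $\chi^2(g^{(1)}\,\|\,g^{(0)})$ are both correct.

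However, there is a genuine gap: you never verify that the perturbation stays inside the simplex, and for your choice of $f$ it does not for all relevant $(r,m)$. You fix $f = \bigl(\tfrac{1}{1+r}, \tfrac{r}{1+r}\bigr)$, which forces $g^{(0)} = \bigl(\tfrac{1}{1+r^2},\tfrac{r^2}{1+r^2}\bigr)$, and you require $\epsilon\in(0,g^{(0)}_0)$ with $g^{(0)}_0 = \tfrac{1}{1+r^2}$. The separation achievable under this constraint is $D(f,g^{(1)}) = \tfrac{\epsilon(1+r^2)}{(1+r)\sqrt r} < \tfrac{1}{(1+r)\sqrt r}$. On the other hand, the bound you are trying to certify, $\rho^2 < \tfrac{r\{1-2(\alpha+\beta)\}}{m(1+r)^2}$, exceeds this ceiling $\tfrac{1}{r(1+r)^2}$ exactly when $r^2\{1-2(\alpha+\beta)\} > m$, which is a non‑empty regime (e.g.\ $r\geq 1$, $m=1$, small $\alpha+\beta$). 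So in that regime the alternative $g^{(1)}$ you would need has a negative coordinate, and the argument does not produce a witness.

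The paper sidesteps this by putting the \emph{perturbed} distribution at the centre of the simplex: it perturbs $f^{(0)}=(1/2,1/2)$ by $\gamma = \tfrac{1+r}{\sqrt r}\rho$, so the positivity constraint is $\gamma\le 1$, i.e.\ $\rho^2\le \tfrac{r}{(1+r)^2}$, which is automatically satisfied by the target separation since $1-2(\alpha+\beta) \le 1$ and $n\ge 1$. The natural repair of your argument along the same lines is: when $m\le n$, take $f = \bigl(\tfrac{r}{1+r},\tfrac{1}{1+r}\bigr)$ so that $g^{(0)}=(1/2,1/2)$ under $H_0$; then $D(f,g^{(1)})=\tfrac{2\sqrt r\,\epsilon}{1+r}$, $\chi^2(g^{(1)}\,\|\,g^{(0)})=4\epsilon^2$, and the positivity constraint $\epsilon<1/2$ holds automatically for the required $\epsilon$, since $\epsilon \le \tfrac{\sqrt{1-2(\alpha+\beta)}}{2\sqrt m}\le 1/2$. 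With this correction the rest of your bookkeeping goes through and indeed delivers a slightly stronger constant than stated, as you observe.
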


 \begin{proof}
     For $ 0 < \rho^2 \leq \frac{r}{(1+r)^2}$, consider \[
     (f^{(0)}, g^{(0)}) = \left(\left(\frac{1}{2}, \frac{1}{2} \right), \left(\frac{1}{1+r}, \frac{r}{1+r} \right) \right) \quad \text{ and } \quad (f^{(1)}, g^{(1)}) = \left(\left(\frac{1-\gamma}{2}, \frac{1+\gamma}{2} \right), \left(\frac{1}{1+r}, \frac{r}{1+r} \right) \right),
     \]
     with $\gamma = \frac{1+r}{\sqrt{r}}\rho$. Observe that $D(f^{(0)}, g^{(0)}) = 0$ and $D(f^{(1)}, g^{(1)}) = \rho$. We will use the well-known fact that the squared Hellinger distance between two discrete probability distributions $p, q$ supported on $[J]$ is given by $\operatorname{H}^2(p,q) = \sum_{i \in [J]}(\sqrt{p_i}- \sqrt{q_i})^2$. Writing $P_{f^{(i)}}$ for the Bernoulli distribution with parameter $f^{(i)}$ for $i \in \{0,1\}$, we thus have 
\begin{align*}
    \operatorname{H^2}(P_{f^{(0)}}, P_{f^{(1)}}) & = \frac{1}{2} \left\{\left(1  - \sqrt{1 - \gamma} \right)^2 +  \left(1  - \sqrt{1 + \gamma} \right)^2 \right\} \leq \gamma^2 = \frac{(1+r)^2}{r}\rho^2,
\end{align*}
where the last inequality relies on $(1-\sqrt{1 \pm x})^2 \leq x^2$ , and further shows that $\operatorname{H^2}(P_{f^{(0)}}, P_{f^{(1)}}) \leq 1$ since $\rho^2 \leq \frac{r}{(1+r)^2}$. We can then bound the minimax risk using a standard Le-Cam two-point argument as
     \begin{align*}
         \alpha & + \sup_{(f,g) \in H_1^r(\rho)} \mathbb{E}_P (1-\varphi ) \geq \sup_{(f,g) \in H_0} \mathbb{E}_P \, \varphi + \sup_{(f,g) \in H_1^r(\rho)} \mathbb{E}_P (1-\varphi ) \\
         & \geq 1- \operatorname{TV}\left(P_{f^{(0)}}^{\otimes n} \otimes P_{g^{(0)}}^{\otimes m}, P_{f^{(1)}}^{\otimes n} \otimes P_{g^{(1)}}^{\otimes m} \right) \geq 1 - \left\{\operatorname{TV}\left(P_{f^{(0)}}^{\otimes n}, P_{f^{(1)}}^{\otimes n} \right) + \operatorname{TV}\left(P_{g^{(0)}}^{\otimes m}, P_{g^{(1)}}^{\otimes m} \right) \right\} \\
         & = 1 - \operatorname{TV}\left(P_{f^{(0)}}^{\otimes n}, P_{f^{(1)}}^{\otimes n} \right) \geq \frac{1}{2}\left(1 - \frac{1}{2}\operatorname{H^2}(P_{f^{(0)}}, P_{f^{(1)}})\right)^{2n} \geq \frac{1}{2}\left(1-n \operatorname{H^2}(P_{f^{(0)}}, P_{f^{(1)}}) \right) \\
         & \geq \frac{1}{2} \left(1  - \frac{n(1+r)^2}{r}\rho^2 \right),
     \end{align*}
     where in the fifth inequality we used the fact that $(1 - x)^n \geq 1 - n x$ for $n \in \mathbb{N}$ and $x \leq 1$. The last display is lower bounded by $\alpha+\beta$ if and only if $\rho^2 \leq \frac{r}{n(1+r)^2}\{1-2(\alpha + \beta)\}$. Note that $0 < \rho^2 \leq \frac{r}{(1+r)^2}$ is necessarily satisfied since $n \geq 1$ and $0 \leq \alpha + \beta < 1/2$. A similar construction, where $f^{(0)} = f^{(1)}$ and $g^{(1)}$ is a perturbation of $g^{(0)}$, gives an analogous lower bound with $m$ in place of $n$, and concludes the proof.
 \end{proof}
Proposition \ref{prop:binary_LB}  suggests that the testing problem is the hardest when $r=1$. As already mentioned in Subsection~\ref{sec:optimality}, this is in accordance with the goodness-of-fit testing problem, where the goal is to test the null hypothesis $f = f_0$ for a fixed density $f_0$, based on independent and identically distributed~samples $X_1, \ldots, X_n \sim f$. In this setting, the minimax separation rate depends on the choice of $f_0$, and it has been shown that the problem is hardest when $f_0$ is the uniform distribution \citep[e.g.][]{balakrishnan2019hypothesis}. In complete analogy, and according to the simulation results in Section \ref{sec:simulSynthetic}, Proposition \ref{prop:binary_LB} seems to indicate that $r = 1$ corresponds to the harder testing problem. In other words, more extreme shifts should be easier to detect. We validate this conjecture through simulations on synthetic data in Appendix~\ref{appendix:SimulDiscereteDRPT}.

Establishing the optimality of the rate $1/\sqrt{r (n \wedge m)}$ is more delicate. While the two-moment method used in the proof of Theorem \ref{thm:UB_minimax} suffices to derive an upper bound on $\rho^*_r$, it yields a loose dependence on $r$, even though it accurately captures the scaling with $n$ and $m$. Specifically, we can show that $\operatorname{Var}(T(Z)) \lesssim r/n$ and $\operatorname{Var}(T(Z_\sigma)) \lesssim r^2/n$. The first inequality follows from the independence between the $X$'s and the $Y$'s, while the second relies on the following proposition, which we include for completeness.

\begin{prop}\label{lemma:binary_rates}
    Assume $r \geq 1$ and $\mathcal{X} = \{ 0,1\}$. Define $N_{Y,1}^\sigma = \sum_{i = n+1}^{n+m}Z_{\sigma(i)}$, where $\sigma$ is sampled according to \eqref{eq:sampling_1}, and let $\gamma_1$ be as in \eqref{eq:gamma_1}. Then
    \[
    \mathbb{E}[(N_{Y,1}^\sigma - m \gamma_1)^2] \leq (1+r) n \wedge m + 2f_1(1-f_1)n + 2g_1(1-g_1)m.
    \]
\end{prop}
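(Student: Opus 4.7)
The plan is to exploit the fact that, conditional on the order statistics $Z_{()}$—which in this binary setting are fully captured by $\mathrm{tot}_1 := \sum_{k=1}^{n+m}\mathbbm{1}\{Z_k=1\}$—the variable $N_{Y,1}^\sigma$ follows Fisher's noncentral hypergeometric distribution with parameters $(\mathrm{tot}_1,\,n+m-\mathrm{tot}_1,\,m,\,r)$, by Equation~\eqref{eq:cond_distr_NY}. Writing $\bar{W}(k)$ for the FNCH mean as a function of $\mathrm{tot}_1=k$, the law of total variance yields the orthogonal decomposition
\[
\mathbb{E}[(N_{Y,1}^\sigma - m\gamma_1)^2] \;=\; \mathbb{E}\bigl[\operatorname{Var}(N_{Y,1}^\sigma\mid Z_{()})\bigr] \;+\; \mathbb{E}\bigl[(\bar{W}(\mathrm{tot}_1) - m\gamma_1)^2\bigr],
\]
which cleanly separates the permutation-level noise from the randomness inherited from the i.i.d.\ sampling of the $X$'s and $Y$'s, so the two terms can be handled independently.

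For the permutation term, the plan is to adapt the Markov-chain argument used in the proof of Lemma~\ref{lemma:convergence_in_d}(i), applied now to the last $m$ coordinates rather than the first $n$, with $\varphi(x)=\mathbbm{1}\{x=1\}$, $c=1$, and $C=r$. In the binary case, the chain driven by Algorithm~\ref{alg:pairwise_sampler} collapses to swap proposals on the single integer-valued summary $N_{Y,1}$, so one may re-run the reversibility/stationarity computation centred directly at $\bar{W}(\mathrm{tot}_1)$ (the exact conditional FNCH mean) rather than at the auxiliary $\int\mathbbm{1}\{\cdot=1\}\,d\hat H_{n,m}$ needed in the general setting. Combined with taking the sharper of the two symmetric stationary bounds produced by that argument, this upgrade yields the uniform estimate $\operatorname{Var}(N_{Y,1}^\sigma\mid Z_{()}) \leq (1+r)(n\wedge m)$; an equivalent route proceeds via the log-concavity of FNCH and the fact that its variance is dominated by the hypergeometric variance at $r=1$, which is itself at most $(n\wedge m)/4$.

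For the data term, the key observation is that $\bar{W}(\cdot)$ is $1$-Lipschitz, which follows from an elementary coupling: increasing $k$ by one (adding an extra $1$ to the urn) changes the expected number of $1$'s drawn by at most one unit. In parallel, solving the FNCH mean equation $r(k-\bar W)(m-\bar W) = \bar W(n-k+\bar W)$ at $k=\mathbb{E}[\mathrm{tot}_1]=nf_1+mg_1$ and matching with Lemma~\ref{lemma:unique_h} shows that $\bar{W}(\mathbb{E}[\mathrm{tot}_1])$ coincides with the finite-sample analogue of $m\gamma_1$. Since $\mathrm{tot}_1$ is a sum of two independent binomials with variance $nf_1(1-f_1)+mg_1(1-g_1)$, Lipschitzness gives $\mathbb{E}[(\bar{W}(\mathrm{tot}_1)-\bar{W}(\mathbb{E}[\mathrm{tot}_1]))^2] \leq nf_1(1-f_1)+mg_1(1-g_1)$, and a single application of $(a+b)^2 \leq 2a^2 + 2b^2$ absorbs the asymptotically negligible bias between this finite-sample centring and $m\gamma_1$ itself, producing the factor of $2$ in the announced data-variance terms. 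The main obstacle is the tight permutation-variance estimate: a direct application of Lemma~\ref{lemma:convergence_in_d}(i) only delivers $2[(n\vee m)+r(n\wedge m)]$, and sharpening this to $(1+r)(n\wedge m)$ is where the binary / log-concave structure of FNCH has to be used in an essential way.
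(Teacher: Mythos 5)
There is a genuine gap, and it stems from the choice of centring in your decomposition. You apply the law of total variance and centre the conditional analysis at the exact FNCH conditional mean $\bar W(\mathrm{tot}_1) = \mathbb{E}[N_{Y,1}^\sigma\mid Z_{()}]$. The paper instead centres at $\gamma(\mathcal{T})$, defined as the root of the quadratic $p_1^x = p_2^x$, i.e.\ the solution of $r(\mathcal{T}-x)(m-x)=x(n-\mathcal{T}+x)$. These are \emph{different} quantities: the FNCH mean is not the root of this quadratic (the root approximates the mode, and the mean differs from it by an $O(1)$ term in general). This distinction matters because the paper's entire data-term argument rests on the exact algebraic identity $\gamma(nf_1+mg_1)=m\gamma_1$, which follows by matching the explicit root formula against the definition of $\gamma_1$ in~\eqref{eq:gamma_1}. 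Your claim that ``$\bar W(\mathbb{E}[\mathrm{tot}_1])$ coincides with the finite-sample analogue of $m\gamma_1$'' is therefore false as stated, and the residual bias $|\bar W(\mathbb{E}[\mathrm{tot}_1]) - m\gamma_1|$ is nonzero. You acknowledge this and propose to ``absorb'' it with the factor of $2$ from $(a+b)^2\le 2a^2+2b^2$, but the factor of $2$ in the target bound is already fully spent on the $nf_1(1-f_1)$ and $mg_1(1-g_1)$ terms; there is no slack for an additional additive bias, and you never establish a quantitative control on it.

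The permutation-variance side also has a gap. You assert $\operatorname{Var}(N_{Y,1}^\sigma\mid Z_{()})\le(1+r)(n\wedge m)$ and offer two routes, neither carried out: (a) a sharpened version of Lemma~\ref{lemma:convergence_in_d}(i) centred at $\bar W(\mathrm{tot}_1)$, and (b) log-concavity plus the claim that the FNCH variance is dominated by the central hypergeometric variance at $r=1$. Route (b) is not ``equivalent'' to (a): if true it would give the much stronger bound $(n\wedge m)/4$, and the claim itself is not proved and is not obviously correct. Route (a) is essentially what the paper does, but the paper's computation is not a trivial corollary of Lemma~\ref{lemma:convergence_in_d}(i); it requires re-running the one-swap Markov chain argument centred at the root $\gamma(\mathcal{T})$, exploiting the specific quadratic structure of $p_1^x - p_2^x$ in the binary case to close the drift inequality to $1 - 2(K_t-\gamma)^2/\{(1+r)n\}$, and then symmetrising in $n\leftrightarrow m$. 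Since the conditional variance is a minimum over centrings, the paper's estimate $\mathbb{E}[(N_{Y,1}^\sigma-\gamma(\mathcal{T}))^2\mid Z_{()}]\le\tfrac{1+r}{2}(n\wedge m)$ does imply what you need, but you flag precisely this step as ``the main obstacle'' without resolving it. In short: a law-of-total-variance route could in principle work and would even yield a slightly sharper constant, but it requires (i) an explicit bound on $\bar W(\mathbb{E}[\mathrm{tot}_1]) - m\gamma_1$ or the realisation that one should centre at the quadratic root rather than the FNCH mean, and (ii) an actual execution of the Markov-chain drift computation, neither of which your proposal supplies.
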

 \begin{proof}
     All the expectations are to be intended conditionally to $Z$. We will assume $n = \tau m$ throughout the proof to simplify the computations. Write $\mathcal{T} = \mathrm{tot_1}$, and define
     \[
     p_1^x = \frac{(n-\mathcal{T}+x)x}{(1+r)nm} \qquad\text{ and } \qquad p_2^x = \frac{r(\mathcal{T}-x)(m-x)}{(1+r)nm}.
     \]
     Let $\gamma(\mathcal{T})$ be the solution to $p_1^x = p_2^x$, i.e. 
     \[
     \gamma(\mathcal{T}) = \frac{1}{2(r-1)}\left\{(r-1)\mathcal{T}+n+rm - \sqrt{[(r-1)\mathcal{T}+n-rm]^2+4rmn} \right\}.
     \]
     We will show that $\mathbb{E}[(N_{Y,1}^\sigma - \gamma(\mathcal{T}))^2] \leq \frac{1+r}{2}n \wedge m$ and $\mathbb{E}[(m\gamma_1 - \gamma(\mathcal{T}))^2] \leq nf_1(1-f_1) +mg_1(1-g_1)$, which imply the desired result since $\mathbb{E}[(N_{Y,1}^\sigma - m\gamma_1)^2] \leq 2\mathbb{E}[\{N_{Y,1}^\sigma - \gamma(\mathcal{T})\}^2] + 2\mathbb{E}[\{m\gamma_1 - \gamma(\mathcal{T})\}^2]$. 
     Now, as for the latter, observe that $m\gamma_1 = \gamma(nf_1 + mg_1)$, and that $\gamma(\cdot)$ is $1-$Lipschitz. This is due to fact that 
     \[
     |\gamma^\prime(x)| \leq \frac{1}{2}\left(1+ \frac{|(r-1)x + n -rm|}{\sqrt{[(r-1)x + n -rm]^2 + 4rnm}} \right) \leq 1.
     \]
     Hence, 
\begin{align*}
    \mathbb{E}&[(m\gamma_1 - \gamma(\mathcal{T}))^2] = \mathbb{E}[\{\gamma(nf_1 + mg_1) - \gamma(\mathcal{T})\}^2] \leq  \mathbb{E}[\{(nf_1 + mg_1) - \mathcal{T}\}^2] \\
    & = \mathbb{E}\left[\left\{(nf_1 + mg_1) - \sum_{i=1}^n X_i - \sum_{i = 1}^m Y_i\right\}^2\right]  = \operatorname{Var}\left[ \sum_{i=1}^n X_i \right] + \operatorname{Var}\left[ \sum_{i=1}^m Y_i \right] = nf_1(1-f_1) +mg_1(1-g_1).  
\end{align*}
As for the other term, start by noticing that Algorithm \ref{alg:pairwise_sampler} works the same if at every time step $t \in \mathbb{N}$ we just choose a single couple $(i,j)$ with $i \in [n]$ and $j \in \{n+1, \ldots, n+m\}$ at random, and then switch $Z_{\sigma_t(i)}$ with $Z_{\sigma_t(j)}$ with probability equal to $r^{Z_{\sigma_t(i)}}/(r^{Z_{\sigma_t(i)}}+r^{Z_{\sigma_t(j)}})$. This is not efficient from a computational point of view, but simplifies the proof, since every time step $t$ corresponds at most to one switch. Now, let $K_t$ be the sum of the last $m$ observations after $t$ steps of this simplified algorithm, and observe that it has the same distribution of $N_{Y,1}^\sigma$ for every $t \in \mathbb{N}$ when the procedure is initialized at stationarity. Thus, for $\gamma \equiv \gamma(\mathcal{T})$ to ease notation, it follows that
\begin{align*}
    \mathbb{E}&[(K_{t+1} - \gamma)^2 | K_t] - (K_t - \gamma)^2 = p_1^{K_t} + p_2^{K_t} + 2\{(p_2^{K_t} - p_2^{\gamma}) - (p_1^{K_t} - p_1^{\gamma}) \}(K_t - \gamma) \\
    &  = p_1^{K_t} + p_2^{K_t} - \frac{2(K_t -\gamma)^2}{(1+r)nm} \sqrt{[(r-1)\mathcal{T}+n-rm]^2+4rmn} + \frac{2(r-1)}{(1+r)nm}(K_t - \gamma)^3 \\
    & \leq 1 - \frac{2(K_t -\gamma)^2}{(1+r)nm} \sqrt{[(r-1)\mathcal{T}+n-rm]^2+4rmn} + \frac{2(r-1)(m-\gamma)}{(1+r)nm}(K_t - \gamma)^2 \\
    & =  1 - \frac{2(K_t -\gamma)^2}{(1+r)nm} \left\{\sqrt{[(r-1)\mathcal{T}+n-rm]^2+4rmn} - (r-1)(m-\gamma)\right\} \\
    & = 1 - \frac{(K_t -\gamma)^2}{(1+r)nm} \left\{\sqrt{[(r-1)\mathcal{T}+n-rm]^2+4rmn} + (r-1)\mathcal{T} +n -(r-2)m \right\} \\
    & \leq 1 - \frac{(K_t -\gamma)^2}{(1+r)nm} \left\{|(r-1)\mathcal{T}+n-rm| + (r-1)\mathcal{T} +n -rm + 2m \right\} \\
    & \leq 1 - \frac{2(K_t -\gamma)^2}{(1+r)n} \leq 1 - \frac{2(K_t -\gamma)^2}{(1+r)n}.
\end{align*} 
Taking expectation with respect to $K_t$ under stationarity yields
\[
0 \leq 1- \frac{2}{(1+r)n}\mathbb{E}[\{K_t - \gamma(\mathcal{T})\}^2] = 1- \frac{2}{(1+r)n}\mathbb{E}[\{N_{Y,1}^\sigma - \gamma(\mathcal{T})\}^2],
\]
implying $\mathbb{E}[\{N_{Y,1}^\sigma - \gamma(\mathcal{T})\}^2] \leq (1+r) \, n/2.$
By symmetry, we can repeat the same computations for $N_{X,1}^\sigma := \mathcal{T} - N_{Y,1}^\sigma$ and $\nu(\mathcal{T}) := \mathcal{T} - \gamma(\mathcal{T})$ and get 
$\mathbb{E}[\{N_{Y,1}^\sigma - \gamma(\mathcal{T})\}^2] = \mathbb{E}[\{N_{X,1}^\sigma-\nu(\mathcal{T})\}^2] \leq (1+r) \, m/2$, which gives $\mathbb{E}[\{N_{Y,1}^\sigma - \gamma(\mathcal{T})\}^2] \leq \frac{1+r}{2} n \wedge m$ and concludes the proof.
 \end{proof}

\subsection{Simulation studies}\label{appendix:SimulDiscereteDRPT}
We introduce some variants of the discrete density ratio permutation test, each with $S = 50$ and $H=~99$. (E4) stands for the discrete density ratio permutation test with V-statistic~\eqref{eq:V-stat} and collision kernel $k(x,y)=\sum_{j=0}^J \mathbbm{1}\{x=j\}\mathbbm{1}\{y=j\}$. (E5) corresponds to the discrete density ratio permutation test with U-statistic~\eqref{eq:U_stat} and the same kernel. Finally,  (E6) identifies the discrete density ratio permutation test with test statistic~\eqref{eq:Tdiscrete_theory}. 

We start by replicating the setup used in the proof of Proposition \ref{prop:binary_LB}, selecting
\[
(f_r,g_r) =\left(\left(\frac{1-\gamma_r}{2}, \frac{1+\gamma_r}{2} \right), \left(\frac{1}{1+r}, \frac{r}{1+r} \right) \right) \quad \text{ with } \gamma_r = \frac{1+r}{\sqrt{r}}\eta.
\]
In Fig.~\ref{fig:combined4}(a), we assess the performance of the discrete (E6) as its test statistic over $3000$ repetitions with $n = m = 500$, $r \in \{0.1, 0.5, 1, 2, 10\}$ and plotting an estimate of the power function for varying $\eta \in \{0, \ldots,0.10\}$. The results empirically support the tightness of the lower bound of Proposition \ref{prop:binary_LB},  demonstrating that the power of the discrete density ratio permutation test increases when $r$ moves further away from $1$.

\begin{figure}[!ht]
    \centering
    \includegraphics[width=0.9\linewidth]{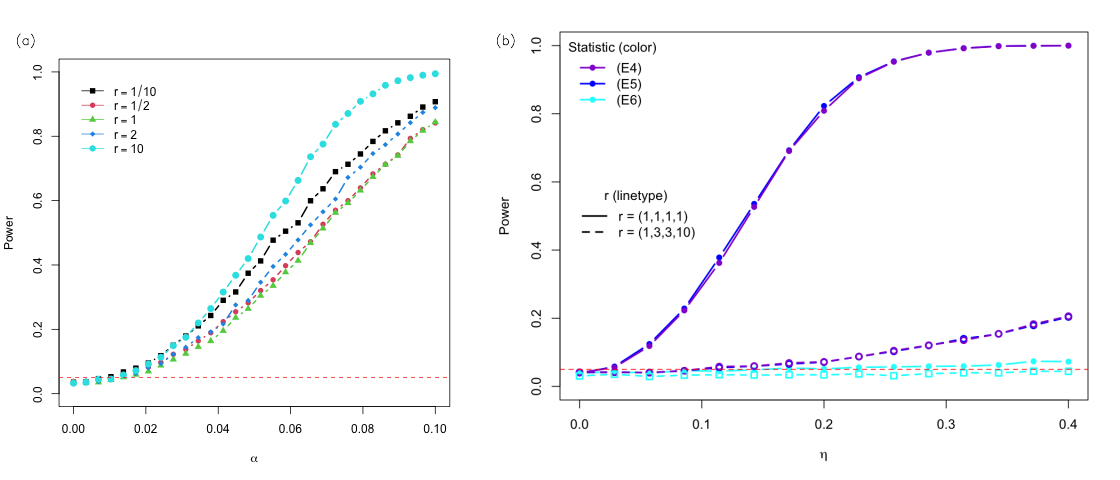}
    \caption{(a) Empirical power of (E6) in a synthetic binary data setting for varying $r \in \{0.1, 0.5, 1, 2, 10\}$. (b) Comparison of (E4), (E5), (E6) in two four-dimensional settings.}
    \label{fig:combined4}
\end{figure}

We now evaluate the performance of our reproducing kernel Hilbert space-based methodology. In this setting, when we use the kernel $k(x, y) := \sum_{j = 0}^J \mathbbm{1}\{x = j\}\,\mathbbm{1}\{y = j\}$, we can express the V-statistic~\eqref{eq:V-stat} and the U-statistic~\eqref{eq:U_stat} as
\begin{align*}
    V(x_1, \ldots, x_n, y_1, \ldots, y_m) 
    &= \sum_{j=0}^J \frac{1}{(n/m+\hat{\lambda}r_j)^2} 
    \biggl\{ \frac{\hat{\lambda}r_j}{n}\sum_{i=1}^n \mathbbm{1}_{\{x_i=j\}} 
              - \frac{1}{m} \sum_{i=1}^m \mathbbm{1}_{\{y_i=j\}} \biggr\}^2,
\end{align*}
and
\begin{align*}
    U(x_1, \ldots, x_n, y_1, \ldots, y_m) 
    &= V 
    - \sum_{j=0}^J \frac{\hat{\lambda}^2 r_j^2}{(n/m + \hat{\lambda}r_j)^2} 
      \biggl\{ \frac{1}{n^2}\sum_{i=1}^n \mathbbm{1}_{\{x_i=j\}} \biggr\}
    - \sum_{j=0}^J \frac{1}{(n/m + \hat{\lambda}r_j)^2} 
      \biggl\{ \frac{1}{m^2}\sum_{i = 1}^m \mathbbm{1}_{\{y_i=j\}} \biggr\},
\end{align*}
which shows that both $U$ and $V$ are functions of 
\((N_{Y,0}^\sigma, \ldots, N_{Y,J}^p)\), in analogy with~\eqref{eq:Tdiscrete_theory}.  In particular, for (E4) and (E5) it is sufficient to sample  \((N_{Y,0}^\sigma, \ldots, N_{Y,J}^p) \mid Z\) according to  \eqref{eq:cond_distr_NY} via the R function \texttt{rMFNCHypergeo},  without resorting to Algorithm~\ref{alg:pairwise_sampler}, which substantially  reduces the computational cost of Algorithm~\ref{alg:star_algo}. 

We apply this approach in two settings, described next. Before detailing the data-generating processes, note that our empirical findings here are consistent with Section~\ref{sec:simul}. The only practical difference is that the discrete density ratio permutation test is much faster to run, allowing substantially larger Monte Carlo sample sizes in each simulation. For the first setting, we set $J = 3$ and $n = m = 250$, where the $X_i$'s were independently drawn from a multinomial distribution over $\{0, 1, 2, 3\}$ with probabilities $p_X = (\frac{1}{8},\frac{1}{8},\frac{3}{8},\frac{3}{8})$, and the $Y_j$'s were sampled from a multinomial distribution with probabilities $p_Y = (\frac{1}{43},\frac{3}{43}, \frac{9 + 25\eta}{43}, \frac{30 - 25\eta}{43})$, with $\eta \in \{0, \ldots, 0.4\}$. When $\eta = 0$, the null hypothesis holds with $r = (1,3,3,10)$; increasing $\eta$ corresponds to larger deviations from the null. We repeated each experiment $5000$ times, comparing (E4), (E5) and (E6) in this setting. Fig.~\ref{fig:combined4}(b) presents the results in the dashed lines. Overall, the results show a notable drop in power when (E6) based on rejection sampling is used. Furthermore, the (E4) and (E5) are nearly indistinguishable, as expected, since the difference $V-U$ vanishes asymptotically as the sample size increases, while (E6) seems significantly less powerful than the reproducing kernel Hilbert space-based approach in this particular setting. For the second simulation setting, we consider a less extreme shift. Specifically, the shift vector is $r = (1, 1, 1, 1)$, so the problem reduces to the classical two-sample testing setting. The $X_i$'s are drawn independently and identically distributed~from the uniform distribution on $\{0, 1, 2, 3\}$, that is, $p_X' = \left(\frac{1}{4}, \frac{1}{4}, \frac{1}{4}, \frac{1}{4}\right)$, while the $Y_j$'s are sampled from a multinomial distribution with $p_Y^\prime = (\frac{1}{4},\frac{1}{4}, \frac{1+\gamma}{4},\frac{1-\gamma}{4})$, where $\gamma = \frac{25\eta}{43} ( \frac{1}{\sqrt{3}} + \frac{1}{\sqrt{10}})$. This specific choice of$\gamma$ ensures that $D(p_X, p_Y) = D(p_X^\prime, p_Y^\prime)$ for all values of~$\eta$, where the separation $D(f, g)$ is defined right after~\eqref{eq:Tdiscrete_theory}. The solid curve in Fig.~\ref{fig:combined4}(b) supports our earlier conjecture that near-uniform shifts are generally harder to detect.
\end{appendices}
\end{document}